\newif\iffullversion
\tikzset{
    master/.style={
        execute at end picture={
            \coordinate (lower right) at (current bounding box.south east);
            \coordinate (upper left) at (current bounding box.north west);
        }
    },
    slave/.style={
        execute at end picture={
            \pgfresetboundingbox
            \path (upper left) rectangle (lower right);
        }
    }
}
\newcommand{\mcal}[1]{\ensuremath{\mathcal {#1}}}
\newcommand{\algC}{{\color{black}\ensuremath{\mcal{C}}}\xspace}
\newcommand{\algR}{{\ensuremath{\mcal{R}}}\xspace}
\newcommand{\algA}{{\ensuremath{\mcal{A}}}\xspace}
\newcommand{\algB}{{\ensuremath{\mcal{B}}}\xspace}
\newcommand{\A}{{\cal A}}
\renewcommand{\P}{{\cal P}}
\renewcommand{\ppt}{\ensuremath{{\text{PPT}}}\xspace}
\renewcommand{\poly}{\ensuremath{{{\sf poly}}}\xspace}
\newcommand{\E}{\mathbf{E}}
\newcommand{\eps}{\epsilon}
\newcommand{\coin}{\ensuremath{{\sf coin}}\xspace}
\newcommand{\digest}{\ensuremath{{\sf digest}}\xspace}
\newcommand{\Gen}{\ensuremath{{\sf Gen}}\xspace}
\newcommand{\Prove}{\ensuremath{{\sf P}}\xspace}
\newcommand{\Ver}{\ensuremath{{\sf V}}\xspace}
\newcommand{\stmt}{\ensuremath{{\sf stmt}}\xspace}
\newcommand{\Real}{\ensuremath{{\rm real}}\xspace}
\newcommand{\Ideal}{\ensuremath{{\rm ideal}}\xspace}
\newcommand{\Hyb}{\ensuremath{{\sf Hyb}}\xspace}
\newcommand{\Comm}{\ensuremath{{\sf Comm}}\xspace}
\newcommand{\AoK}{\ensuremath{{\sf AoK}}\xspace}
\newcommand{\ek}{\ensuremath{{\sf ek}}\xspace}
\definecolor{darkgreen}{rgb}{0,0.5,0}
\definecolor{lightblue}{RGB}{0,176,240}
\definecolor{darkblue}{RGB}{0,112,192}
\definecolor{lightpurple}{RGB}{124, 66, 168}
\definecolor{grey}{RGB}{139, 137, 137}
\definecolor{maroon}{RGB}{178, 34, 34}
\definecolor{green}{RGB}{34, 139, 34}
\definecolor{types}{RGB}{72, 61, 139}
\definecolor{gold}{rgb}{0.8, 0.33, 0.0}
\definecolor{mygreen}{HTML}{588D6A}
\definecolor{myred}{HTML}{C86733}
\definecolor{myblue}{HTML}{5B68FF}
\definecolor{myshadow}{HTML}{E6C5B4}
\definecolor{darkgray}{gray}{0.3}
\newcommand{\skiptext}[1]{}
\newcommand{\msg}{\ensuremath{{{\sf msg}}}\xspace}
\newcommand{\crs}{\ensuremath{{{\sf crs}}}\xspace}
\newcommand{\getr}{\ensuremath{{\overset{\$}{\leftarrow}}}\xspace}
\newcommand{\secu}{\lambda}
\newcommand{\inDom}{\ensuremath{\mathbb{U}}\xspace}
\newcommand{\inDomSecu}{\ensuremath{\mathbb{U}_\secu}\xspace}
\definecolor{darkred}{rgb}{0.5, 0, 0}
\definecolor{darkgreen}{rgb}{0, 0.5, 0}
\definecolor{darkblue}{rgb}{0,0,0.5}
\newcommand\markx[2]{}
\newcommand{\rv}[1]{#1\xspace}
\renewcommand{\path}{\ensuremath{\mathsf{path}}\xspace}
\newcommand{\aux}{{\sf aux}}
\renewcommand{\negl}{{\sf negl}}
\newcommand{\N}{\ensuremath{\mathbb{N}}\xspace}
\newcommand{\ignore}[1]{}
\renewcommand{\part}{\ensuremath{\mcal{S}}\xspace}
\newcounter{task}
\newtheorem{thm}{Theorem}[section]      
\newtheorem{theorem}[thm]{Theorem}
\newtheorem{lemma}[thm]{Lemma}
\newtheorem{claim}[thm]{Claim}
\newtheorem{corollary}[thm]{Corollary}
\newtheorem{fact}[thm]{Fact}
\theoremstyle{definition}
\newtheorem{definition}[thm]{Definition}
\newtheorem{remark}[thm]{Remark}
\newtheorem{example}[thm]{Example}
\newtheoremstyle{boxes}
{2pt}
{0pt}
{}
{}
{\bfseries}
{}
{\newline}
{\thmname{#1}\thmnumber{ #2}:  
\thmnote{#3}}
\newcommand{\elaine}[1]{{\footnotesize\color{magenta}[Elaine: #1]}}
\newcommand{\hao}[1]{{\footnotesize\color{blue}[Hao: #1]}}
\newcommand{\ke}[1]{{\footnotesize\color{darkgreen}[Ke: #1]}}
\newcommand{\elaine}[1]{}
\newcommand{\hao}[1]{}
\newcommand{\ke}[1]{}
\newcounter{cnt:challenge}
\newcommand{\bfb}{{\bf b}}
\newcommand{\bfc}{{\bf c}}
\newcommand{\bfp}{{\bf p}}
\newcommand{\bfv}{{\bf v}}
\newcommand{\bfx}{{\bf x}}
\newcommand{\util}{{\sf util}}
\newcommand{\pbutil}{{\sf pbutil}}
\newcommand{\platform}{\mcal{P}}
\newcommand{\seller}{\mcal{S}}
\newcommand{\Fauction}{\mcal{F}_{\sf auction}}
\newcommand{\ok}{\ensuremath{{\sf ok}}\xspace}
\newcommand{\acc}{\ensuremath{{\sf acc}}\xspace}
\newcommand{\rej}{\ensuremath{{\sf rej}}\xspace}
\newcommand{\V}{\ensuremath{\mathcal{V}}}
\title{Foundations of Platform-Assisted Auctions}
\author[2]{Hao Chung}
\author[3]{Ke Wu}
\author[1]{Elaine Shi\thanks{Author ordering is randomized}}
\affil[1]{CMU, Oblivious Labs Inc., and 0xPARC}
\affil[2]{CMU}
\affil[3]{CMU and UMich}
\date{}
\author{}
\date{}
\institute{}
\begin{document}

\iffullversion
\begin{titlepage}
\maketitle
\begin{abstract}

Today, many auctions are carried out
with the help of intermediary platforms like Google and eBay.
These platforms 
serve as a 
rendezvous point for the 
buyers and sellers, and charge a fee for its service.
We refer to such auctions as \emph{platform-assisted auctions}.
Traditionally, the auction theory literature mainly focuses on designing auctions that incentivize
the buyers to bid truthfully,
assuming that the platform always faithfully implements the auction.
In practice, however, the platforms have been found to  
manipulate the auctions to earn more profit, resulting in high-profile 
anti-trust lawsuits. 

We propose a new model 
for studying platform-assisted auctions in the permissionless setting,
where anyone can register and participate in the auction.
We explore whether it is possible to design a dream auction in this
new model, such that honest behavior  
is the utility-maximizing 
strategy for each individual buyer, the platform, the seller, 
as well as platform-seller or platform-buyer coalitions.
Through a collection of feasibility and infeasibility results,
we carefully characterize the mathematical landscape of  
platform-assisted auctions. 

Interestingly, our work reveals exciting connections
between cryptography and mechanism design. 
We show how cryptography can lend to the design of an efficient 
platform-assisted auction with dream properties. 
Although a line of works have also used  
multi-party computation (MPC) or the blockchain to remove 
the reliance on a trusted auctioneer, 
our work is distinct in nature in several dimensions.
First, we initiate a systematic exploration of  
the game theoretic implications when the service providers
(e.g., nodes that implement the MPC or blockchain protocol) 
are strategic and can collude with sellers or buyers.  
Second,  
we observe that the full simulation paradigm
used in the standard MPC literature is too stringent 
and leads to high asymptotical costs.  
Specifically, 
because every player has a different private outcome
in an auction protocol, to the best of our knowledge, 
running any generic MPC 
protocol among the players would incur at least $n^2$ total cost 
where $n$ is the number of buyers.
We propose a new notion of simulation called
{\it utility-dominated emulation} that is sufficient
for guaranteeing the game-theoretic properties
needed in an auction. 
Under this new notion of simulation, we show
how to design efficient auction protocols 
with quasilinear efficiency, which gives an $n$-fold improvement over 
any generic approach. 

\ignore{
In our model, each buyer and seller 
not only can send messages to the platform through a pairwise 
private channel, but all 
players can also post messages to a broadcast channel. 
In this sense, our model strictly generalizes 
previous studies  blockchain-backed auctions 
where all messages are posted to the blockchain; and moreover,
it allows us to explore questions such as 
how to minimize the use of broadcast in protocol design. 
Our modeling explicitly separates  
the seller and the platform's roles,  
which allows us to explore the game-theoretic implications
when the platform can cheat the seller to make additional profit. 


Our second contribution is to give a complete characterization of the platform-assisted model.
We require a dream platform-assisted auction to be incentive-compatible not only for an individual buyer, but also for the platform and the seller.
Furthermore, we also want to make sure that the strategic behavior does not help the platform even if the platform colludes with some buyers or the seller.
We show that the ascending auction with the final price as the public outcome satisfies all the incentive compatibilities above.
However, we also show that the public outcome and the multiple rounds of interactions are necessary,
and the platform can only get a fixed fee independent of the auction's revenue
in order to satisfy those incentive compatibilities.
}

\end{abstract}
\thispagestyle{empty}
\end{titlepage}

\else
\maketitle
\begin{abstract}

\end{abstract}
\fi

\iffullversion
\tableofcontents
\clearpage
\fi

\section{Introduction}
\label{sec:intro}

Traditionally, auctions are conducted in physical spaces
where an auctioneer serves as a trusted
intermediary who oversees the process. 
Today, however,  many auctions 
are run by online platforms such as Google Ad Exchange
or eBay, which not only serve
as an intermediary and a rendezvous point
between sellers and buyers,  
but also provide value-added
services to both sellers and buyers. 
Typically, all players
agree on some {\it fee structure}
such that the platform gets remunerated for its service.

The new status quo of online {\it platform-assisted auctions}
raises new challenges for the auction designer.
Traditionally, the auction theory literature~\cite{myerson,agt} 
focuses on designing auctions that incentivize the 
buyers to bid honestly 
(e.g., the good old second-price auction~\cite{vickrey});
and we take it for granted that the auctioneer
will always implement the auction's rules honestly. 
With platform-assisted auctions,  however, 
this ``fully trusted auctioneer'' assumption is no longer always true.  
Specifically, the platform  
can manipulate the auction to earn more profit.   
For example, 
if we are running a second-price
auction, the platform may inflate the second price
to a buyer to make it pay more; and deflate the second price 
to the seller to withhold 
some profits from the seller. 
Such platform deviations have been observed  
in the real world.
In 2023, the US Department of Justice filed an 
antitrust suit against
Google for manipulating its Ad Exchange auctions 
which harms users and monopolizes the market~\cite{doj,google-dishonest}. 
\ignore{
Partly due to the growing concerns about platform deviations,
some online auction platforms
such as Google Ad Manager are reverting back
to first-price auctions, i.e., they give  
up on achieving {\it buyer incentive compatibility}
in exchange for {\it platform incentive compatibility}. 
This way, following the honest rules 
is the profit-maximizing strategy for the platform, 
which protects the platform from antitrust lawsuits.
}

Further, platform deviations are exacerbated by the fact that 
modern platform-assisted  
auctions are often run in a 
{\it permissionless} setting.  
Specifically, the identities of the buyers
are unknown a-priori, and anyone (including
the platform or the seller) may enter the auction 
with fake identities and inject fake bids
(also known as shill bidding~\cite{shillbid}).
By contrast, 
the vast majority of works
in the traditional auction theory literature
focus on {\it permissioned} auctions 
where the identities of the buyers are known a-priori. 
The permissionless nature of platform-assisted
auctions broadens the strategy space  
and is a new challenge we need to tackle in designing
modern platform-assisted auctions. 

\paragraph{Can we have a dream platform-assisted auction?}
In this paper, we want to rethink the design of platform-assisted 
auctions in light 
of these new requirements.  
We consider a scenario where 
a seller wants to sell a finite number of $k$ identical
items to buyers, assisted by a platform. 
We ask what is a {\it dream} platform-assisted auction, 
and whether we can design a platform-assisted auction with
such dream properties. 
Our work systematically explores what kind of fundamental tradeoffs
we face when designing platform-assisted auctions, and
illustrates how platform-assisted auctions can benefit from
cryptographic tools.

We argue that a dream platform-assisted auction 
should satisfy the following properties:

\begin{itemize}[leftmargin=5mm,itemsep=1pt]
\item 
{\it Incentive compatibility for $X \in \{\text{buyer, platform, seller}\}$.}
We want that 
following the prescribed protocol  
honestly 
maximizes the utility for any individual buyer
(or the platform, seller respectively). 
Henceforth, we
use the shorthands {\it bIC, pIC, and sIC} to refer
to buyer, platform, and seller incentive compatibility, respectively. 
\item 
{\it Strategy proof 
when the platform colludes with $X \in \{\textit{buyer,
seller}\}$}.
To make sure that 
the platform cannot become a monopoly, 
we not only need platform incentive compatibility (pIC),
we also want to make sure that strategic
deviations do not help the platform
even when it 
may collude with some of the buyers or the seller. 
We say that a platform-assisted
auction satisfies 
{\it platform-seller incentive compatibility (psIC)}
or 
{\it $c$-platform-buyer incentive compatibility ($c$-pbIC)}
iff honest behavior  
is profit-maximizing for the platform-seller coalition,
or any platform-buyer coalition that consists of 
at least one and at most $c$ buyers, 
respectively.
\end{itemize}
\elaine{what about platform-seller-buyer coalition}

Although there are some other collusion-resilience properties
one can formulate, e.g., resilience against buyer-seller coalitions,
our formulation prioritizes properties that 
mathematically capture notions of {\it anti-trust} against the {\it platform}.
Therefore, we believe that defending against coalitions centered 
around the platform is our top concern. 

\subsection*{Our Results and Contributions}

We reveal the mathematical structure
of platform-assisted  
auctions through a combination of novel feasibility
and infeasibility results. 
Our results also reveal new connections between 
mechanism design and cryptography. 
On one hand, we show how cryptography can lend to the design
of platform-assisted auctions; and on the 
other hand,
we show that since 
mechanism design asks only for game-theoretic properties, 
we can avoid using 
generic cryptographic primitives such as multi-party computation,
and instead opt for 
weaker notions of ``simulation'' that allow
us to achieve asymptotical performance improvement over
known generic approaches.  

We now give an overview of our results and contributions.

\subsection{New Model for Platform-Assisted Auctions}

We formulate a new model and lay the groundwork 
for formally reasoning about platform-assisted auctions. 
We assume that each buyer and the seller communicates
with the platform through a pairwise private channel. 
There is no buyer-buyer or buyer-seller communication. 
Besides exchanging pairwise messages between the platform
and the buyers/seller, 
all players are additionally allowed
to post messages to a broadcast channel\footnote{Note that a broadcast
channel with availability guarantees
cannot be directly realized from a star topology where
all players talk only to the platform.}, and everyone
can read messages posted to the broadcast channel. 
However, we would like our auctions to minimize
the usage of broadcast since  
it is an expensive operation (especially in a permissionless environment). 
Depending on which consensus protocol is used to realize
the broadcast, every broadcast message incurs at least quadratic
and possibly higher communication. 
Moreover, 
if the broadcast is instantiated with a blockchain, 
posting messages to the blockchain typically incurs transaction fees.

We stress that in certain auctions, 
some information about the auction (e.g., its outcomes)
naturally becomes publicly visible --- in such cases,
we may assume that a 
broadcast
channel exists for free 
for posting selected information about the auction. 
For example, consider an
ad auction for the Super Bowl --- in this case,
the winners of the auction are publicly announced 
since anyone can view the ads 
on natural television.

Henceforth, we often refer to public information 
posted on the blockchain (or revealed to the public through a social channel) 
the {\it public outcome} of the auction. 

\paragraph{Comparison with earlier models.}
Our model strictly generalizes earlier models  
such as the credible auction model~\cite{credibleauction} and 
blockchain-backed auctions~\cite{strain,riggs,crediblebroadcast,nftauction00,nftauction01}. 
\elaine{TODO: more cites, Chitra et al.?} 
Not only so, we avoid some important limitations of earlier models
as mentioned below.  
The elegant credible auction framework~\cite{credibleauction} 
also considers a setting
where buyers communicate with the platform through pairwise private channels. 
However, their model differs from ours in two important ways.
First, their model does not allow 
posting messages to a broadcast channel.
This makes some of their impossibility results 
overly pessimistic. In particular, as mentioned above, in some auctions, 
a social broadcast channel naturally presents itself, and makes  
(part of) the auction's outcome publicly visible. 
Their model and their impossibility results 
do not necessarily extend to such auctions. 
Second, they assume that the platform and the seller 
are the same entity. 
In other words, an auction that is considered incentive compatible 
in their framework
may still permit strategies 
where the platform cheats the seller to make additional profit ---
such deviations have been documented in real-life 
anti-trust lawsuits~\cite{google-dishonest}.

Various works~\cite{strain,riggs,crediblebroadcast,nftauction00,nftauction01} 
have considered blockchain-backed auctions
where all protocol messages are posted to a blockchain --- such 
auctions can be used for selling both digital goods~\cite{nftauction00,nftauction01}
as well as physical items (e.g., Taylor Swift concert tickets~\cite{taylorswift}).
Blockchain-backed auctions suffer from high cost both in terms
of bandwidth consumption (when the blockchain is actually realized
with a consensus protocol) and transaction fees. 

By separating the seller and the platform's roles, 
our modeling choices allows us to explicitly capture
platform strategies that involve cheating the seller, and 
we can ask natural questions such as what kind of fee structure
is possible between the platform and the seller. 
By basing our model on pairwise channels  
but additionally permitting broadcast messages, 
we can ask natural questions such as how to 
minimize the use of broadcast in the auction design.

\elaine{computationally sound notion of ``safe''}

\subsection{Inefficient Information-Theoretic Feasibility}
We show that the ascending auction
with reserve and fixed platform fees 
(detailed in \Cref{sec:ascending})
satisfies almost all desired
properties,
and meanwhile achieves approximate revenue optimality,
as stated in the following theorem.

\begin{theorem}[Informal: information-theoretic feasibility] 
The ascending auction with an appropriate reserve 
and fixed platform fees 
satisfies bIC, pIC, and 1-pbIC. Moreover,
if buyers' true values are drawn from a natural distribution, 
the mechanism additionally 
satisfies Bayesian sIC and Bayesian psIC, and is almost revenue-optimal. 
\label{thm:intro-ascend}
\end{theorem}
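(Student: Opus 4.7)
The plan is to analyze each claimed property separately, combining classical ascending-auction arguments with the specific structure imposed by the platform-assisted model (pairwise channels plus a broadcast channel, fixed platform fee, and a reserve price tuned to the value distribution).

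First I would dispatch the three unconditional properties. For \emph{bIC}, I would invoke the standard ascending-auction argument: from the perspective of an individual honest buyer, staying in until the standing price exceeds her private value is (weakly) dominant, since dropping out earlier forgoes positive-utility winnings and staying in longer risks paying above her value. For \emph{pIC}, I would observe that because the platform's remuneration is a fixed fee triggered whenever the mechanism completes, the platform's payoff does not depend on the allocation or on the transfers between seller and buyers; consequently every deviation (shill bids, message suppression, premature termination) yields a weakly worse outcome for the platform once we account for the cases where the deviation causes the protocol to abort and forfeit the fee.

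The main unconditional work is \emph{1-pbIC}. Here I would enumerate the deviations available to a platform colluding with a single buyer~$i^\star$: (i) injecting shill bids, (ii) suppressing or forging messages on the pairwise channel with honest buyers or with the seller, and (iii) aborting. For (i), any shill bid either has no effect (it is below the eventual clearing price) or raises the second-highest standing bid that $i^\star$ must pay if she wins, so the coalition never strictly gains; the fixed fee rules out compensating gains on the platform side. For (ii), I would argue that the broadcast channel is used to post the public outcome (and, if needed, commitments to the per-round standing price), so any fabrication of prices reported privately to an honest buyer is detectable and cannot survive as a profitable equilibrium deviation. For (iii), aborting zeros out all coalition utility. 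Putting these together shows that behaving honestly is utility-maximizing for any platform-plus-one-buyer coalition.

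For the Bayesian guarantees I would set the reserve to the monopoly reserve $r^\star = \varphi^{-1}(0)$ for the assumed (regular or MHR) distribution. \emph{Bayesian sIC} then follows because the seller's expected revenue under the honest mechanism, minus the fixed platform fee, matches the Myerson optimal revenue among all truthful single-item mechanisms with a single seller; any seller deviation (withholding items, injecting shills, rejecting valid bids at the reserve) can only weakly decrease either the allocation probability or the expected price and is therefore bounded by the same Myerson benchmark. \emph{Bayesian psIC} follows by composition: the platform's fee is fixed and the seller's expected revenue is already extremal, so the coalition has no joint slack to exploit. Finally, approximate revenue optimality follows by appealing to a standard Bulow--Klemperer / Hartline--Roughgarden style bound, which guarantees a constant-factor approximation to the optimal expected revenue for ascending auctions with the monopoly reserve under regular distributions.

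\textbf{Main Obstacle.} I expect the hardest step to be the formal 1-pbIC analysis in the permissionless setting, because shill identities are indistinguishable from real ones and the colluding buyer's ``true'' bid can be routed through any of the platform's fake personas. The key is to combine two facts in a careful way: (a) the platform's fee is fixed, so every dollar of coalition surplus must come from $i^\star$'s net buyer payoff, and (b) the ascending rule charges the winner the second-highest standing bid, which is monotone in the set of live bidders. These two facts together force every deviation to be weakly dominated by honest play. The broadcast channel plays a subtle but indispensable role in the argument --- without it, the platform could privately misreport the standing price to honest buyers and extract surplus on behalf of $i^\star$, breaking the property.
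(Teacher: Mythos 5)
Your arguments for bIC, pIC, and 1-pbIC are directionally aligned with the paper (the paper's pIC proof rests on the fact that the posted $\tau$ pins the per-item price to $\theta_\tau$ for every safe trace, so the platform's utility is at most zero, and its 1-pbIC proof then observes that the only way a platform-buyer pair can extract more is to stop the ascent early, which either makes $>k$ buyers eligible and forces an unsafe trace or strands the colluding buyer at zero utility). However, two of your steps do not work as written.

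First, your Bayesian psIC argument is circular in a way that does not hold up. You argue that because the honest mechanism already attains Myerson-optimal revenue, ``the coalition has no joint slack to exploit.'' But revenue optimality over the class of static bIC mechanisms does not bound what an adaptive platform-seller coalition can earn: the coalition can observe, round by round, which buyers remain active, and it can keep ascending past the honest stopping round $\tau$ to squeeze high-value survivors. Ruling this out requires an explicit structural argument, not an appeal to optimality. The paper's proof handles exactly this deviation: stopping below $\tau$ yields an unsafe trace, and continuing above $\tau$ is shown unprofitable because with reserve $r_\lambda = \varphi^{-1}(0)$, the function $\theta(1-F(\theta))$ is non-increasing for $\theta \geq \theta_{\tau_0}$ (a direct computation using $\varphi(\theta_i) \geq 0$). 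Your proposal is missing this lemma, and without it the psIC claim is unsupported. Relatedly, the paper derives sIC as a one-line reduction from psIC (a profitable seller strategy plus an honest platform would contradict psIC since platform revenue is zero), which is cleaner than your independent Myerson-based argument.

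Second, the revenue-optimality claim you invoke is the wrong kind of bound. Bulow--Klemperer / Hartline--Roughgarden style results give a \emph{constant-factor multiplicative} approximation, whereas the theorem asserts an \emph{additive} $k\cdot{\sf tick}(\inDom)$ approximation, which vanishes as the value grid is refined. The paper gets this by comparing the ascending auction's payments, for each realized bid vector, to the discrete Myerson-optimal second-price auction over the same grid (\Cref{thm:ideal2ndprice} together with Elkind's lemma); each confirmed buyer's payment in the ascending auction is at most one grid tick below the optimum, giving the $k\cdot{\sf tick}(\inDom)$ loss. Your approach would prove a strictly weaker statement that does not match the theorem as stated.
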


Despite these desirable properties, the ascending auction 
of \Cref{sec:ascending}
suffers from a few important limitations. 
\begin{enumerate}[leftmargin=7mm,itemsep=1pt]
\item 
The platform can only get a fixed fee
that is independent of the auction's revenue, and 
it would be nice to support other fee structures if possible.
\item 
The protocol 
requires that the platform post messages 
to a broadcast channel, which we ideally would like to avoid if possible.
\item 
The protocol satisfies only 1-pbIC, i.e., 
it provides incentive compatibility when the platform colludes
with at most one buyer, but not more. 
\item 
The protocol satisfies psIC and sIC only in the Bayesian sense, 
whereas 
we would ideally like to achieve the stronger
notion of ex post incentive compatibility if possible. 
Roughly speaking, {\it Bayesian} incentive compatibility 
assumes that the strategic players have some a-priori belief
about honest  buyers' 
true values, but they are allowed to adaptively 
adjust their actions as 
additional information potentially gets 
revealed during the protocol. 
By contrast, {\it ex post} incentive compatibility 
requires that the strategic players are still incentivized
to behave honestly even with full knowledge of honest buyers' true values.  
\item 
Finally, the protocol's round complexity is as large as 
the number of possible values in the (discretized) value domain. 
In practice, to have sufficient precision in encoding buyers' values,
we often choose the value domain to be exponentially large, thus
leading to  
an exponentially large round complexity. 
\end{enumerate}

This raises the natural question, {\it can we avoid the above limitations}?
We show that the first four limitations
are in fact inherent in some sense, 
whereas the last limitation can be avoided
through the use of cryptography.  
Below we elaborate on these results.

\subsection{Impossibility Results}
\label{sec:intro-imp}
We prove a collection of impossibility results 
that together uncover the necessary mathematical structure 
of platform-assisted auctions.

\paragraph{Necessity of fixed platform fees.}
We prove that any auction that simultaneously satisfies
bIC and 1-pbIC must 
impose a certain fee structure, i.e., the platform
must get fixed fees that are independent of the auction's revenue:

\begin{theorem}[Informal: bIC + 1-pbIC $\Longrightarrow$ fixed fee structure]
Any (possibly multi-round) auction that is bIC and 1-pbIC 
(in the ex post setting) must  
pay a fixed fee to the platform that is independent
of the auction's revenue
(even assuming the existence of a broadcast channel). 
The same fee structure restriction also holds for any mechanism that 
simulataneously satisfies 
bIC, pIC, and 1-pbIC in the Bayesian setting. 
Further, for computationally bounded players
the same fee structure restriction holds ignoring negligibly small differences. 
\label{thm:intro-0fee}
\end{theorem}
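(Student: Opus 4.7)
The plan is to prove the contrapositive: assume the platform's fee $f$ depends non-trivially on the realized revenue, and exhibit an explicit profitable deviation by the coalition of the platform together with a single buyer, contradicting 1-pbIC (under bIC). The heart of the argument is a \emph{shadow-bid} strategy: the colluding buyer misreports their value so as to steer the revenue toward an outcome where $f$ is larger, while leaving their own Myerson allocation (and hence payment) unchanged, so the entire fee change accrues to the coalition as net gain.

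More concretely, let $f$ denote the claimed fee-on-revenue map, and let $R(\mathbf{v})$ be the revenue under honest execution on true value profile $\mathbf{v}$. Suppose for contradiction that $f$ is non-constant on the image of $R$. Invoke the Myerson characterization of bIC in the single-parameter setting: each buyer's interim allocation $x_i(\cdot,\mathbf{v}_{-i})$ is monotone non-decreasing, and the payment is given by the standard integral formula. Fix a buyer $i$ and consider any ``indifference interval'' of reports on which $x_i(\cdot,\mathbf{v}_{-i})$ is constant; across such an interval the colluding buyer's own utility is unchanged, while other buyers' critical prices (which depend on $i$'s reported bid) and hence their payments can move, thereby varying the total revenue. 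If $f$ takes two distinct values $f(R_1)<f(R_2)$ on revenues reachable via such shadow bids, the coalition strictly gains $f(R_2)-f(R_1)>0$ at zero marginal cost to buyer $i$, contradicting 1-pbIC. If no single indifference interval exhibits $f$-variation, I would run a telescoping argument along a chain of one-buyer-at-a-time value perturbations between two profiles with distinct fees; some adjacent pair must exhibit an $f$-gap exceeding the colluding buyer's utility change, which by Myerson's integral formula is itself bounded by the corresponding change in the interim allocation.

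For the Bayesian statement the argument is analogous: pIC rules out platform-only revenue manipulations (which otherwise would already refute constancy of $f$), and the shadow-bid deviation is then evaluated in expectation against honest bidders' priors; any non-constant $f$ on the support of the revenue distribution yields strictly positive expected coalition gain. For computationally bounded players, the shadow-bid strategy is itself polynomial-time, and every inequality is weakened by at most a $\negl(\lambda)$ term. The main obstacle I anticipate is verifying that in a fully general multi-round, permissionless protocol the coalition retains enough revenue-manipulation power to realize the two distinct revenue values needed above; I expect this follows because (i) the permissionless model lets the platform pipe arbitrary messages through the colluding buyer's channel and even appear to inject shills, and (ii) one can always restrict attention to the induced direct-revelation mechanism associated with any fixed extensive-form equilibrium, at which point the Myerson-style reasoning above applies directly.
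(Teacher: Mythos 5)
Your overall engine matches the paper's: invoke Myerson's sandwich from bIC, use 1-pbIC to force ``platform revenue gain from the colluding buyer misreporting $b \to b'$ $\leq$ that buyer's utility loss,'' and telescope. However, the concrete mechanism you lean on --- shadow bids restricted to \emph{indifference intervals} of $x_i$ --- is strictly weaker than what the paper uses, and this matters. A randomized bIC allocation rule can be strictly increasing in $b_i$ with no non-degenerate indifference interval at all, in which case your first sub-argument is vacuous and you fall back on the telescoping step, which you leave vague. The paper does not need indifference intervals: it combines the 1-pbIC inequality $\bar{\mu}(b')-\bar{\mu}(b) \leq \bar p_i(b')-\bar p_i(b) - b(\bar x_i(b')-\bar x_i(b))$ with the upper sandwich $\bar p_i(b')-\bar p_i(b) \leq b'(\bar x_i(b')-\bar x_i(b))$ to conclude $\bar\mu(b')-\bar\mu(b) \leq (b'-b)(\bar x_i(b')-\bar x_i(b))$ for \emph{every} pair of bids, and then telescopes over adjacent ticks (Lemma~\ref{lem:revenue-tick}) to get $\bar\mu(b')-\bar\mu(0)\leq {\sf tick}(\inDom)\cdot \bar x_i(b')$. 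So the fee variation in one buyer's bid is not zero but $O({\sf tick})$; only in the continuous limit is it exactly zero. Your claim that the buyer's utility change ``is itself bounded by the corresponding change in the interim allocation'' needs this precise $(b'-b)(\Delta x)$ form to close.

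Two further gaps. First, you misstate the role of pIC in the Bayesian case: it is not to ``rule out platform-only revenue manipulations'' generically, but specifically to establish $\E_{\bfb_{-i}}[\mu(0_i,\bfb_{-i})] \leq \E_{\bfb_{-i}}[\mu(\bfb_{-i})]$, i.e., that a $0$-bid can be deleted without hurting the platform. That is what lets the paper induct by lowering one buyer to $0$ and then removing them, terminating at the all-zero base case $\mu(\emptyset)=0$. Without this ``removal'' step your telescoping over ``one-buyer-at-a-time value perturbations between two profiles'' never reaches an anchor from which you can conclude the fee is globally (near-)constant rather than constant only along each single coordinate. (In the ex post case the paper avoids pIC by carrying the $0$-bids along and ending at a profile of all zeros, which you would also need to articulate.) Second, your contrapositive setup treats $f$ as a function of the realized revenue $R(\mathbf v)$; the paper allows the platform's revenue $\mu(\mathbf b)$ to be an arbitrary function of the bid vector, which is strictly more general, and the argument bounds $\mu$ directly. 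Finally, for the discrete/computational version the conclusion is not ``fee exactly constant'' but ``fee variation at most ${\sf tick}(\inDom)\cdot k\cdot(\ln n + O(1))$,'' which is negligible only when the encoding is fine enough; your write-up should track this quantity explicitly rather than waving at ``ignoring negligibly small differences.''
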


\Cref{thm:intro-0fee}
also explains why in the ascending auction of \Cref{thm:intro-ascend}, the platform
must get a fixed fee.

\paragraph{Necessity of broadcast.}
We prove an impossibility result 
which shows the significance of the broadcast channel: 

\begin{theorem}[Informal: bIC + 1-pbIC $\Longrightarrow$ broadcast necessary]
Suppose we do not allow  
posting to a broadcast channel; and further, either there is no
public-key infrastructure or the seller is offline
and does not send any messages
during the auction. 
Then, no (possibly multi-round) auction 
can simultaneously satisfy bIC and 1-pbIC in the ex post setting. 
Moreover, no auction can simultaneously satisfy Bayesian 
notions of bIC, pIC, and 1-pbIC. 
The impossibility holds even 
when the players are computationally bounded. 
\label{thm:intro-pubout}
\end{theorem}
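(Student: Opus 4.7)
The plan is to derive a contradiction by showing that, for any protocol $\Pi$ satisfying the hypotheses, the platform-plus-buyer-1 coalition has a deviation whose view is indistinguishable, from every honest party's perspective, from an honest execution on a different value profile, yet strictly improves the coalition's joint utility. The core observation is that without a broadcast channel the platform is the sole intermediary between honest parties who cannot cross-verify what each other sees, and the added hypothesis (no PKI or offline seller) removes the last handle an honest party might have used to detect the platform's manipulation.

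For the offline-seller case, I would construct the coalition strategy as follows. Let $\vec{v}$ be a value profile on which the honest auction generates strictly positive revenue, and let $\vec{v}'$ agree with $\vec{v}$ on buyers $2,\dots,n$ but replace buyer 1's value by one that produces strictly less honest revenue (say, a value below the reserve). The platform runs $\Pi$ honestly with the real buyers on $\vec{v}$ while, in parallel, simulating toward the offline seller a complete transcript of an honest execution of $\Pi$ on $\vec{v}'$. Since the seller sends no messages during the protocol, the platform can produce this simulated transcript unilaterally. The coalition then keeps the difference between the true revenue collected from the buyers and the revenue the fake transcript commits it to remit to the seller, and redistributes it as an off-protocol kickback to buyer 1.

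The two key claims I would then prove are (i) \emph{indistinguishability}: the seller's view is distributed identically to an honest execution on $\vec{v}'$, and honest buyers $2,\dots,n$ see a view identical to an honest execution on $\vec{v}$, so no honest party rejects; and (ii) \emph{strict utility gain}: the gap between the honest revenues on $\vec{v}$ and $\vec{v}'$ is strictly positive by construction and can be redistributed freely inside the coalition, beating the honest joint utility and contradicting 1-pbIC. For the no-PKI variant, the template is symmetric: the platform impersonates buyer 1 using keys of its own choosing, minting ``buyer 1'' messages accepted by all honest parties and simulating a transcript on a modified profile, while internally delivering the good and the payments as if the auction had run on $\vec{v}$. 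The absence of PKI ensures that no honest party has any means to check that a message claimed to come from a given identity actually originated there.

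Finally, the Bayesian version, assuming bIC, pIC, and 1-pbIC, follows by integrating the pointwise gain over the prior: the construction yields a strictly positive expected kickback on a positive-measure set of profiles whenever $\Pi$ is not trivially zero-revenue, which is ruled out by the combination of pIC and non-triviality of the auction. The main obstacle I anticipate is making the indistinguishability coupling airtight for adaptive multi-round protocols, where each honest party's later messages depend on the earlier messages it received: the platform-simulator must maintain two consistent parallel transcripts (one toward the seller, one toward the buyers), and in the computational variant must additionally absorb a negligible failure probability from any authenticators set up in-band outside the formal PKI. I would address this by using the fact that the platform controls every pairwise channel and can therefore define the two transcripts independently each round, using honest buyers' genuine messages on the buyer-facing side and freshly simulated messages on the seller-facing side.
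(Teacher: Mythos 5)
There is a genuine gap in your argument, and it stems from overlooking two constraints in the model that together defeat the proposed deviation.

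First, the paper imposes a \emph{consistency-on-number-sold} constraint: even when the platform is strategic, the number of items $t$ the seller believes were sold must satisfy $t \geq \sum_i x_i$, where the sum is over items actually distributed. In your offline-seller attack, you simulate to the seller a transcript on $\vec{v}'$ in which buyer $1$ bids below reserve, so the seller believes only $t(\vec{v}') = t(\vec{v}) - 1$ items were sold. But then the coalition holds $x^* = t(\vec{v}') - \sum_{i \in \mathcal{H}} x_i(\vec{v}) = 0$ items, so the colluding buyer $1$ gets nothing. The coalition cannot both steal revenue from the seller \emph{and} keep buyer $1$'s item; the paper's utility formula already forces the trade-off. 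Once you account for this, the coalition's gain over honesty is $\bigl(\mu_{\mathcal{S}}(\vec{v}) - \mu_{\mathcal{S}}(\vec{v}')\bigr) - v_1\, x_1(\vec{v})$. Second, because bIC$\,+\,$1-pbIC already forces zero platform revenue, $\mu_{\mathcal{S}} = \sum_i p_i$, so this gain equals $\bigl(p_1(\vec{v}) - v_1 x_1(\vec{v})\bigr) + \sum_{i \neq 1} \bigl(p_i(\vec{v}) - p_i(\vec{v}')\bigr)$. The first bracket is $\leq 0$ by individual rationality, so the attack works only if the \emph{externality} buyer $1$ imposes on other buyers' payments is large. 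For a posted-price auction this externality is zero (others pay the posted price regardless of buyer $1$'s presence), so your attack yields a strictly negative gain, even though a posted-price auction is non-trivial and is exactly the kind of mechanism the theorem must rule out. Your proof therefore does not establish the theorem.

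The paper's proof sidesteps this by targeting the \emph{buyer side} rather than the seller: it never cheats the seller on revenue at all. Instead it exploits the permissionless setting. Starting from a profile $\vec{b}$ where some buyer $i$ with value $b_i$ has noticeable utility, the attack places the colluding buyer $U$ in a crowded world of many $b_i$-bidders (so $U$'s honest win probability is tiny), then has the platform run the ``good'' world $\vec{b}$ with a simulated stand-in $j^*$ in place of the real victim $V$, handing $j^*$'s item to $U$, while forking off each excess $b_i$-bidder (including $V$) into a private session populated by a huge crowd of fake identities so that, with high probability, none of them wins. The coalition's gain is essentially the victim's forgone utility, which is $\Omega(1/\mathrm{poly}(\lambda))$ by non-triviality --- a bound that holds for \emph{every} non-trivial auction, posted-price included. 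The absence of broadcast is what lets the platform present incompatible worlds to different honest parties without detection; the no-PKI (or offline-seller) assumption is used only so the platform can simulate the seller's messages inside the forked fake sessions, not so it can lie to the seller about revenue. This is the core idea your proposal is missing.
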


This impossibility also implies that posting to the broadcast channel
is necessary in the ascending auction in \Cref{thm:intro-ascend}.

\paragraph{Impossibility of 2-pbIC.}
We prove that 
it is not possible to simultaneously satisfy
bIC and 2-pbIC, which explains
why the ascending auction of \Cref{thm:intro-ascend}
satisfies only 1-pbIC:

\begin{theorem}[Informal: bIC and 2-pbIC $\Longrightarrow$ impossible]
No auction can simultaneously 
satisfy bIC and 2-pbIC. Further, this impossibility
holds even for multi-round auctions, even for Bayesian notions
of bIC and 2-pbIC, even  
when allowing a broadcast channel, and even 
for computationally bounded players.  
\label{thm:intro-2pbic}
\end{theorem}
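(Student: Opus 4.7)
The plan is to argue by contradiction. Suppose $M$ is both bIC and 2-pbIC, and exhibit a value profile together with a coalition strategy that strictly improves the joint utility of a coalition $\mathcal{C} = \{\text{platform}, A, B\}$ consisting of the platform together with two colluding buyers. There are two natural families of coalition deviations to leverage: (i) coalition buyers coordinating their reports, in particular having one member stay silent or shade its bid; and (ii) the platform, as the party computing the allocation, exploiting any slack in the prescribed protocol such as tie-breaking or internal randomness it is asked to generate.

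I would then argue a dichotomy based on how sensitive $M$ is to a coalition buyer's report, using a Myerson-style characterization of bIC auctions---applied to interim payments for the Bayesian case and to protocol transcripts for the multi-round case. In Case 1, there is a profile on which $A$'s expected payment strictly depends on $B$'s report. I would choose this profile so that $B$'s own honest winning probability is zero while $B$'s report still enters the formula determining $A$'s payment (the natural analogue of $B$ being a runner-up in a second-price scenario). The coalition deviation ``$A$ plays honestly, $B$ stays silent'' then weakly increases $A$'s allocation, strictly decreases $A$'s payment, and costs $B$ no honest utility, yielding a strict coalition gain and contradicting 2-pbIC. In Case 2, $A$'s expected payment is insensitive to $B$'s report on every profile, and applied symmetrically across every pair of coalition candidates this forces $M$ into a per-buyer structure in which each buyer's allocation and payment depend only on its own report. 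On any profile where both coalition buyers have values above their respective thresholds, feasibility of allocating at most $k$ items forces the mechanism to ration, producing a tie-breaking decision that the platform---as a coalition party computing the outcome---can resolve in favor of whichever coalition member has larger surplus $v - t$. Since the prescribed protocol must commit to some fixed rationing rule that does not use the coalition's private values, the platform's favorable resolution strictly dominates and again yields a contradiction with 2-pbIC.

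The main technical obstacle will be simultaneously handling multi-round protocols, Bayesian bIC, and computationally bounded players. For multi-round protocols, the statement ``$B$ stays silent'' must be replaced by ``$B$ plays a transcript indistinguishable from non-participation,'' which is available to the coalition because it controls $B$'s private channel to the platform and the platform's outputs, so neither honest buyers nor the seller can detect the deviation even with a broadcast channel present. For the Bayesian setting the Myerson-style characterization is applied at the interim level, and the strict coalition gain is obtained by integrating the pointwise gains over the set of profiles on which $A$'s payment genuinely responds to $B$'s report---we select the prior so that this set has positive measure. The computational extension is comparatively mild: each of the above attacks is implementable in polynomial time by the coalition, and no cryptographic hardness assumption can prevent a coalition buyer from choosing its own report or the platform from biasing a tie-break in its favor.
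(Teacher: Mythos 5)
Your high-level structure — establish a posted-price-like per-buyer structure from 2-pbIC and then derive a contradiction — parallels the paper's, and your Case~1 deviation (coalition buyer $B$ drops out) is essentially the contrapositive of the paper's \Cref{lem:2pbic-imported}, which shows that 2-pbIC forces the platform-plus-buyer-$i$ coalition utility to be invariant under removing a $0$ bid. You should also describe the symmetric deviation (platform injects a fake $0$ bid under a fresh identity), because the sensitivity of $A$'s payment to $B$'s report could have either sign; the paper's proof of \Cref{lem:2pbic-imported} explicitly handles both directions this way.

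The genuine gap is in Case~2's last step. Once you have the per-buyer structure — $x_i$ and $p_i$ depending only on $b_i$ — there is no rationing rule left for the platform to bias: the marginal $x_i(b_i)$ is a fixed function of the bid alone, independent of the number of participants, so with $K$ above-threshold bidders the expected number of items allocated is $\sum_i x_i(b_i) = K\cdot\Omega(1)$, which exceeds $k$ for $K$ large enough. That is a feasibility violation, not a deviation, and it is exactly the final contradiction in the paper's \Cref{thm:2pbic}. By positing a rationing rule and then attacking it, your argument silently concedes a mechanism-with-rationing might exist, and such a mechanism could defend itself by picking the rationing rule that is already coalition-optimal. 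Two secondary omissions: the step ``choose a profile on which $B$ has zero honest winning probability'' is not available under the paper's weak-symmetry assumption without extra work (this is what the careful bookkeeping in \Cref{lem:2pbic-imported} and \Cref{lem:2pbic-key} handles), and the argument needs the zero-platform-revenue conclusion of \Cref{thm:0platformrev} (so that coalition utility reduces to the sum of coalition buyers' utilities) plus an explicit non-triviality hypothesis (else the auction that never sells anything is a counterexample to the claim), neither of which your proposal invokes.
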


\paragraph{Impossibility of ex post psIC.}

We prove the following theorem 
which shows that simultaneously asking for (Bayesian) 
bIC and ex post psIC 
will severely constrain the design space
such that the resulting auction must have small revenue. 
This partly explains why the ascending auction of \Cref{thm:intro-ascend}
achieves only Bayesian psIC and Bayesian sIC. 

\begin{theorem}[Informal: (Bayesian) bIC + ex post psIC $\Longrightarrow$ impossible]
Any (possibly multi-round) auction 
that simultaneously satisfies information-theoretic 
(Bayesian) bIC and
ex post psIC must be revenue dominated by a posted price auction. 
Moreover, this restriction holds even when allowing broadcast.  
\label{thm:intro-expostpsic}
\end{theorem}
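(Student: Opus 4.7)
The plan is to show that ex post psIC, combined with the coalition's ability to inject shill bids in the permissionless setting, rules out any form of competitive price discovery in $M$, leaving only revenue mechanisms that are effectively posted-prices. The key lever is that the coalition can register arbitrarily many fake identities and inject bids at any values of its choosing; the real buyers never notice, because their own view of the protocol --- including their own bid on the broadcast channel --- is unaffected.

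First I would isolate a shill-invariance lemma. Fix any true profile $v = (v_1, \ldots, v_n)$, known to the coalition by the ex post assumption, and any shill profile $s = (s_1, \ldots, s_m)$. The coalition can honestly run the protocol of $M$ on the augmented input $(v, s)$, collect all real payments $\sum_i p_i(v, s)$, and absorb the shills' ``payments'' as coalition-internal transfers (retaining any item a shill ``wins''). Ex post psIC then forces $\sum_i p_i(v) \ge \sum_i p_i(v, s)$ for every $m$ and every $s$, so the real buyers' joint payment is invariant under adding any shill bids. This invariance is already incompatible with any mechanism that lets others' bids raise buyer $i$'s payment: in a second-price auction, an ascending auction, or any other price-discovery mechanism, a shill just below $v_i$ would strictly raise the winner's price, contradicting invariance. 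So the honest $M$ does not permit the payment of a real buyer to be increased by competition from other participants.

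Second, I would combine shill-invariance with Bayesian bIC to conclude a posted-price form per buyer. Because adding a shill at value $v'_j$ is informationally equivalent to replacing buyer $j$ by a second real buyer of value $v'_j$ from the vantage of real buyer $i$'s honest strategy, the induced allocation $x_i$ and payment $p_i$ of buyer $i$ depend only on $v_i$, up to a value-independent rationing rule that enforces $\sum_i x_i \le k$. Standard Myersonian single-agent analysis then gives $x_i(v_i)$ monotone and $p_i(v_i) = v_i x_i(v_i) - \int_0^{v_i} x_i(z)\,\mathrm{d}z$, so the expected revenue from buyer $i$ is at most $r_i^\star \cdot \Pr[v_i \ge r_i^\star]$ where $r_i^\star$ is the Myerson reserve for buyer $i$'s marginal distribution. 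Summing over buyers and picking a single uniform posted price $p^\star$ (with any fixed priority rule to ration the $k$ items among $\{i : v_i \ge p^\star\}$) yields a posted-price auction that revenue-dominates $M$. Because the shill-invariance argument works with broadcast rather than against it --- the coalition simply posts the augmented honest transcript --- the conclusion holds even when broadcast is allowed, as claimed by the ``moreover'' clause.

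The main obstacle I expect is making the ``absorption'' of shill payments and allocations formally watertight: strictly speaking, one must argue that a shill winning an item is a coalition-internal event that does not cost the coalition anything (the item stays with the seller for future resale) and that all monetary transfers to shills are coalition-internal. Equally delicate is arguing that in the multi-round, broadcast-enabled setting, the per-buyer reduction really gives outcomes depending only on $v_i$; this requires noting that any randomness used by the protocol can be treated as common randomness that appears identically in the honest and deviation executions, so the induced single-agent mechanism is well-defined. Once those bookkeeping points are handled, the mechanism-design heart of the proof is the short two-step reduction: ex post psIC $\Rightarrow$ shill-invariance $\Rightarrow$ per-buyer posted-price.
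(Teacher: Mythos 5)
Your ``shill-invariance lemma'' is actually only one-sided: ex post psIC gives $\sum_i p_i(v, s) \le \sum_i p_i(v)$, not equality, so the phrase ``invariant under adding any shill bids'' is not what you have proven. This matters because your subsequent claim that ``the induced allocation $x_i$ and payment $p_i$ of buyer $i$ depend only on $v_i$'' is simply false: it does not follow from the one-sided inequality, and it is not true of bIC mechanisms in general (in a second-price auction $p_i$ depends heavily on $v_{-i}$). What the shill deviation actually gives you, when specialized to a single real buyer plus $n-1$ coalition-simulated fake buyers, is the correct reduction used by the paper: for any profile $\bfv$ and any $i$, $p_i(\bfv) \le p^{\rm single}(v_i)$, the lone-buyer payment. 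That much is right; the problem is what you do next.

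The real gap is the single-buyer case, which is where the theorem's content lives and where your Myersonian step fails. bIC alone does not force the single-buyer mechanism to be a posted price --- for instance $x(v) = v$ with $p(v) = v^2/2$ is bIC and is nowhere near a posted price. The Myerson identity gives you a \emph{Bayesian} bound $\E_{v_i}[p^{\rm single}(v_i)] \le r^\star \cdot \Pr[v_i \ge r^\star]$, but the theorem's ``revenue dominated by posted price'' is an \emph{ex post} statement: the auction's expected revenue must be dominated by a posted-price auction at a fixed reserve \emph{for every value vector}, not merely in expectation over $\mcal{D}$. To get that, you need psIC to force the conditional payment-upon-winning to be a universal constant independent of the buyer's value, and this is exactly where the multi-round transcript structure enters: the paper's single-buyer analysis shows (using psIC) that the conditional revenue is determined by the transcript rather than residual platform coins, then (using bIC repeatedly, message by message) that the pay-upon-winning does not depend on the buyer's last message, its earlier coin tosses, or ultimately its value $v$. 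Your remark that ``any randomness used by the protocol can be treated as common randomness'' waves away precisely these arguments, which constitute the bulk of the paper's proof. Finally, your last step --- picking a single uniform posted price $p^\star$ and summing the per-buyer bounds --- also does not follow: if the per-buyer Myerson reserves $r_i^\star$ differ, no single posted price dominates $\sum_i r_i^\star \cdot \Pr[v_i \ge r_i^\star]$.
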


In the above, ``revenue dominated by posted price'' means 
that the auction's revenue is dominated by some posted price auction
for every value vector. 

\Cref{thm:intro-expostpsic}  
holds in the information theoretic setting, for ex post psIC, 
and for even multi-round protocols. 
The proof of this theorem can be modified to show that 
if we restrict ourselves to one-round protocols, then
the same impossibility holds even for Bayesian psIC and even for
computationally bounded players. 

\begin{corollary}[Informal: extension of \Cref{thm:intro-expostpsic}]
Any $1$-round auction that simultaneously satisfies
bIC and psIC must be revenue dominated by a posted price auction; moreover, 
this restriction holds even 
for Bayesian notions of bIC and psIC, even for computationally bounded
players, and even when allowing broadcast.  
\label{cor:intro-expostpsic}
\end{corollary}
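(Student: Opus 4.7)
The plan is to reduce \Cref{cor:intro-expostpsic} to the proof of \Cref{thm:intro-expostpsic} by showing that in a $1$-round protocol the Bayesian (and even the computational) version of psIC already delivers everything that ex post psIC gives in the multi-round setting. First I would exploit the structural simplicity of a $1$-round auction: each buyer $i$ sends exactly one message $m_i$, a (possibly randomized) function of its value $v_i$ and its private coins, and the platform-seller coalition, after seeing $\vec m = (m_1,\ldots,m_n)$ together with any broadcast transcript, computes the allocation, the payments, and the internal seller-to-platform transfer. Any coalition deviation is therefore just a (possibly randomized) function from the observed transcript to outcomes, with no adaptivity across rounds to worry about.

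The key lemma I would establish is a \emph{message-wise ex post} property: for any fixed coalition strategy $\sigma$, let $M^+_\sigma$ be the set of transcripts on which $\sigma$ strictly beats honest play for the coalition. By a standard pooling argument, if $M^+_\sigma$ has non-negligible probability under the honest joint distribution over values and coins, then the \emph{selective} strategy ``play $\sigma$ on $\vec m \in M^+_\sigma$, otherwise play honestly'' is itself a valid deviation whose expected coalition utility strictly (in fact, non-negligibly) exceeds honest play, contradicting Bayesian (resp.\ computational) psIC. Hence honest play is optimal on almost every transcript, which is the point-wise guarantee that ex post psIC would have supplied directly.

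With this property in hand I would replay the proof of \Cref{thm:intro-expostpsic} essentially verbatim: wherever that proof invokes ex post psIC at some value profile $\vec v$ to construct a specific platform-seller deviation, we instead invoke the message-wise ex post property at the transcript $\vec m$ induced by honest play on $\vec v$. Because the coalition's utility depends only on $\vec m$ and its own response, and because bIC already pins down the map from $\vec v$ to $\vec m$ up to buyer randomness, the argument pushes through and yields the same conclusion: the auction's revenue is dominated by that of some posted-price auction. The broadcast channel does not interfere since every broadcast message is part of the transcript seen by the coalition and is constrained exactly the same way as private messages.

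The main obstacle will be making the selective-strategy reduction go through \emph{efficiently} in the computational setting and with a concretely non-negligible gain: this is where the $1$-round restriction is essential, since there are no aborted continuations to simulate, no rewinding to perform, and deciding ``$\vec m \in M^+_\sigma$'' is just evaluating a polynomial-time predicate on a polynomial-length transcript. A secondary issue, randomized buyer strategies and continuous value/message spaces, I would absorb into the probability space already implicit in the definitions of Bayesian bIC/psIC, replacing the phrase ``measure-zero'' by ``negligible'' throughout, so that ``revenue dominated'' is understood up to a negligible additive term in the computational case. No new conceptual difficulty beyond those already handled in \Cref{thm:intro-expostpsic} appears.
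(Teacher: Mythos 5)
Your plan has a genuine gap, and it lies in the claim that, for $1$-round protocols, Bayesian or computational psIC ``already delivers everything that ex post psIC gives.'' It does not, and the two places where this matters are exactly where the paper's proof of \Cref{thm:psic-imp} departs from the multi-round argument rather than reusing it.

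First, your key lemma is weaker than what you need. Ex post psIC (the hypothesis in \Cref{thm:psic-multiround}) is a guarantee for \emph{every} value vector; your ``message-wise ex post'' property, obtained by pooling over the Bayesian distribution, only says that for a \emph{fixed} strategy $\sigma$, honest play wins on all but a negligible fraction of transcripts, \emph{weighted by the prior}. The bad transcripts for $\sigma$ may be concentrated on particular value vectors (the ones that are rare under $\mcal{D}$), so this does not imply per-value-vector optimality, which is what every ex post invocation inside the proof of \Cref{thm:psic-multiround} actually uses. The statement you want to import, e.g., \Cref{clm:lastbuyermsg} in the paper, reads ``for an arbitrary value of the lone buyer, with probability $1$, the revenue does not depend on ${\sf pcoin}$''; you cannot get the ``arbitrary value'' quantifier from a Bayesian pooling argument alone.

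Second, the step ``deciding $\vec m \in M^+_\sigma$ is just evaluating a polynomial-time predicate on a polynomial-length transcript'' is not correct. Membership in $M^+_\sigma$ requires comparing the \emph{conditional expected} coalition utility under $\sigma$ against that under honest play, conditioned on $\vec m$; these are expectations over the auction's remaining randomness and are not generically poly-time computable. This is precisely why the paper's proof of \Cref{thm:psic-imp} never invokes ``find the best coins'' or ``find the best last message'' the way \Cref{thm:psic-multiround} does. Instead it replaces unbounded maximization with an explicit polynomial-sampling strategy in \Cref{clm:helperpsic} (sample ${\sf coin}_\mcal{P}$ $\poly_1^2(\lambda)$ times, take the best), introduces the anti-concentration helper \Cref{fct:deviate}, and replaces the multi-round proof's exact ``Case~1 / Case~2'' dichotomy on $r({\sf tr}_{-1},{\sf bmsg}_{-1})$ with the approximate ``Type~1 / Type~2'' classification controlled by a carefully chosen $\negl'(\cdot)$, so that all the exact equalities (``$\bar{r}({\sf bcoin})=\bar{r}$'', ``$\bar{r}(v_1)=\bar{r}(v_2)$'') become tracked negligible slacks. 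None of this is a ``verbatim replay''; it is a fresh argument whose structure is tailored to the computational setting, and the $1$-round restriction is what makes the sampling trick available (because there is no rewinding of an interactive protocol to get wrong). Your proposal identifies the right high-level intuition for why the $1$-round restriction helps, but it substitutes a lemma that doesn't give the quantifier you need and elides precisely the efficiency and error-tracking work that constitutes the proof.
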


Interestingly, \Cref{cor:intro-expostpsic}
can be viewed as a strengthening of an impossibility
shown by Akbarpour and Li~\cite{credibleauction}. 
Translating their result using our terminology, they effectively 
proved that in the information theoretic setting,
any 1-round auction that simultaneously 
satisfies even Bayesian notions of bIC and psIC\footnote{Since they consider
the platform and seller as a single entity, our psIC notion is aligned
with their ``credible'' notion.}
cannot achieve revenue optimality. 
We strengthen their impossibility in multiple dimensions:
\begin{itemize}[leftmargin=6mm,itemsep=1pt]
\item 
We explicitly characterize how small the revenue is by comparing the revenue with some
posted price auction, whereas they only prove that any one-round, revenue-maximizing 
(Bayesian) bIC auction
cannot additionally satisfy Bayesian psIC. 
\item 
Our impossibility holds even for computationally bounded players whereas
their modeling and proofs are restricted to the information theoretic setting.
\end{itemize}

\subsection{Cryptography Meets Platform-Assisted Auctions}

We ask whether we can have an {\it efficient} auction 
that achieves the same incentive compatibility properties
of the ascending auction of \Cref{thm:intro-ascend}. 
We show that the answer 
is affirmative if we can employ the help of cryptography. 

As mentioned, 
one challenge that comes with platform-assisted
auctions is that 
there is no party 
entrusted to honestly implement the auction's rules.
We can mitigate this challenge and restrict 
the strategy space by using a multi-party computation (MPC) protocol
to realize a trusted party. 
A common misconception 
is that simply employing an MPC protocol would trivialize
the problem of designing a platform-assisted auction, 
and bring us back to the classical land of auction design.  
This is not true partly because 
modern platform-assisted auctions 
are permissionless by nature where strategic players can inject fake bids; 
and partly because we ask for additional collusion resilience properties 
--- both of these challenges are typically not considered in the classical
auction theory literature. 
Because of these new challenges, all the impossibility results
of \Cref{sec:intro-imp} would still apply even allowing the use of MPC.

\paragraph{Efficiency limitations of generic MPC.}
As mentioned, we want to use MPC to help restrict the strategy space. 
However, using generic MPC techniques would incur a high cost. 
In particular, in an auction, all the buyers would obtain
a different private outcome. 
To the best of our knowledge, for this setting, 
{\it it is not known how to achieve
generic MPC with subquadratic communication and computation cost}. 
For example, even with  
Threshold Fully Homomorphic Encryption 
(TFHE)~\cite{tfhe00,constrnd-mpc-fair,tfhe01}, 
we would need all players to participate in a joint decryption protocol
to help each buyer decrypt its outcome, 
resulting in at least quadratic total communication.   
Other state-of-the-art approaches~\cite{goyal-linearmpc} 
incur $O(n |C|)$
total communication
where $n$ is the number of players and $|C|$
denotes the circuit size --- but for any $n$-ary
function that must read every input, the circuit size  
is at least linear 
in $n$, making the total communication
at least quadratic again.
If we allow the use of indistinguishability obfuscation (iO)~\cite{io,newio01},
it is possible to adapt Hubacek and Wichs' techniques~\cite{comm-sfe-long} 
to get a protocol with quasilinear communication; but the 
computation would still be quadratic. 

\paragraph{Our approach: utility-dominated emulation.}
We observe that in mechanism design, since we are asking for only game-theoretic
properties, we do not need the underlying 
MPC protocol to offer full simulation-based security. 
Instead, we propose a weaker notion of simulation 
called {\it utility-dominated-emulation}
which suffices for mechanism design. 
Informally, 
let $\mcal{C}$ denote 
some strategic player or coalition; 
the utility-dominated emulation 
notion asks that 
1) for any real-world strategy 
$S$, there exists an ideal-world strategy $S'$ such that  
$\mcal{C}$'s utility 
in the ideal world  
dominates its utility in the real world; and 2)
under an honest execution, 
all players' utilities are identically distributed in the real
and ideal worlds. 
In this way, if honest behavior maximizes
the coalition $\mcal{C}$'s utility in the ideal world,
honest behavior should also maximize its utility in the real world.

The utility-dominated emulation notion
allows mechanism designers 
to focus on the game theoretic
aspects of auction design without having to worry
about the concrete cryptographic instantiation. 
Specifically, the mechanism designer will be 
designing an auction in an ideal world where there is a trusted party
that enforces the correct execution of the auction's rules. 
Because of the existence of this trusted party, the ideal auction need
not employ any cryptography. 
We provide a light-weight cryptographic compiler
such that given an ideal-world auction that satisfies
the desired game theoretic properties, 
we can compile 
it to a real-world protocol that replaces
the trusted party  
with actual cryptography, while retaining
the same game-theoretic guarantees.

Instead of using generic MPC to instantiate the trusted party,
the utility-dominated emulation paradigm allows us 
to devise a new compiler that achieves 
{\it quasi-linear} total communication and computation
cost, which represents an $n$-fold 
efficiency improvement 
relative to known generic MPC protocols (or any protocol
where all players post messages to the broadcast channel). 
Our result can be summarized 
with the following theorem: 

\begin{theorem}[Cryptographic auction with $O(1)$ rounds and quasilinear efficiency]
Assume that the strong repeated squaring assumption, 
the Decisional Diffie Hellman (DDH) and 
Decisional Composite Residuosity (DCR) assumptions (in suitable
groups) all hold against quasi-polynomial-time adversaries.
Then, assuming polynomially bounded players, 
there exists a platform-assisted auction  
that satisfies bIC, pIC, and 1-pbIC; moreover, 
assuming natural value distributions, 
the protocol additionally satisfies 
Bayesian psIC and Bayesian sIC. 
The protocol enjoys the following efficiency guarantees:
\begin{itemize}[leftmargin=6mm,itemsep=1pt]
\item the round complexity is $O(1)$; 
\item the number of bits posted to the blockchain is  
$\widetilde{O}_\lambda(1)$;
\item 
every buyer and seller's computation and 
communication are $\widetilde{O}_\lambda(1)$; and 
\item 
the platform's 
computation and communication is $\widetilde{O}_{\lambda}(n)$ 
where $n$ is the number of players. 
\end{itemize}
\end{theorem}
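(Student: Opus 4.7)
The plan is to prove the theorem in two stages: first, design an ideal-world auction in which a trusted party faithfully enforces the protocol and which satisfies all the claimed game-theoretic properties; then, construct a cryptographic protocol that \emph{utility-dominated-emulates} this ideal functionality with quasi-linear efficiency, and invoke the emulation lifting theorem to transfer the game-theoretic guarantees to the real world. This cleanly separates the mechanism-design reasoning (done once in the ideal world) from the cryptographic reasoning (done once in a game-theory-agnostic way).

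For the ideal-world auction, I would use a sealed-bid variant of the ascending mechanism of \Cref{thm:intro-ascend}: effectively a $(k{+}1)$-th price auction with the same reserve price and fixed platform fee. Each buyer sends its bid to the trusted party, which determines the $k$ winners and the uniform price (the max of the $(k{+}1)$-th bid and the reserve), posts the winners, the price, and the platform's fixed fee to the broadcast channel, and privately notifies each party of its own monetary outcome. This ideal mechanism is bIC by Vickrey-style reasoning, is pIC and $1$-pbIC because the platform's fee is fixed and the price depends only on bids outside any single-buyer-plus-platform coalition (with the reserve tuned as in \Cref{thm:intro-ascend}), and inherits Bayesian psIC, Bayesian sIC, and approximate revenue optimality from the analysis of the ascending auction under natural distributions.

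The cryptographic compiler is where the real work lies. The key observation is that utility-dominated emulation only requires that for every real-world coalition strategy there exist an ideal-world strategy inducing at least as much utility; it does not require full simulation of the coalition's view. Concretely, I would have each buyer (i) encrypt its bid under a public key of a homomorphic encryption scheme (DCR/Paillier-style) associated to the platform, (ii) commit to the same bid inside a time-lock puzzle based on the strong repeated squaring assumption so that the simulator can force-open bids in quasi-polynomial time, and (iii) attach a succinct argument of knowledge of consistency using a DDH-based SNARG. The platform homomorphically aggregates the bids and posts a single $\widetilde{O}_\lambda(1)$-size commitment to the outcome together with a succinct correctness proof to the broadcast channel; it then privately delivers to each buyer the opening of that buyer's slot over the pairwise private channel. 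The totals are $O(1)$ rounds, $\widetilde{O}_\lambda(1)$ broadcast bits, $\widetilde{O}_\lambda(1)$ work per buyer and seller, and $\widetilde{O}_\lambda(n)$ work for the platform.

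The hard part will be the utility-dominated emulation reduction itself, especially for platform-centric coalitions. For bIC, the simulator must extract honest buyers' bids from an arbitrary real-world adversarial platform in order to map its strategy to an ideal-world strategy of at least the same utility; the simulator uses quasi-polynomial time to force-open the time-lock puzzles, but the adversary is polynomial, so \emph{complexity leveraging} against the quasi-polynomial hardness of DDH/DCR preserves indistinguishability of the coalition's induced utility. For $1$-pbIC and Bayesian psIC/sIC, one must argue that a coalition learns nothing about honest buyers' bids beyond what the ideal outcome already reveals; semantic security of the DCR scheme and zero-knowledge of the SNARGs under quasi-polynomial assumptions suffice. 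The delicate wrinkle is the permissionless setting: a coalition may inject shill bids under fresh identities, and one must verify that these map to legitimate ideal-world bids, so that the $1$-pbIC and psIC guarantees of the ideal auction (which quantify over \emph{all} coalitions of the stated type) carry over without loss. Assembling these pieces and verifying that the utility-domination inequality holds uniformly over all real-world strategies, including adaptive deviations in the constant-round script, is where I expect the bulk of the technical effort to go.
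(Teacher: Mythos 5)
Your high-level plan is the same as the paper's: design an ideal auction (a $(k{+}1)$-th price with reserve and fixed platform fees), prove it IC in the ideal world, define a weaker notion of emulation that only preserves utility, build a quasilinear compiler, and lift by a transfer theorem. That structure is right. The cryptographic compiler you sketch, however, has several concrete problems that would not survive a real proof.

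First, the homomorphic-encryption plan cannot work as stated. A $(k{+}1)$-th price rule requires sorting and selection, which are not linear in the bids; Paillier/DCR-style additive homomorphism does not let the platform compute the allocation or the clearing price over ciphertexts, so "homomorphically aggregates the bids and posts a single commitment to the outcome" is vacuous. Moreover, you never say who holds the decryption key. If it is the platform, all temporal secrecy is gone (the platform reads bids immediately, before committing to anything); if it is someone else, the platform cannot compute the outcome. The paper sidesteps this entirely: bids are only inside \emph{timed} commitments; the platform eventually learns the plaintext bids (either by the buyer opening or by forced decryption), computes the outcome in the clear, and proves correctness with a succinct AoK. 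No homomorphic evaluation is needed or used.

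Second, your protocol is missing the step that makes the game-theoretic argument go through at all: the platform must commit to a fixed, identity-deduplicated set of received bid-commitments \emph{before} any bid can be opened. Without that, a strategic platform sees the (eventual) openings and adaptively decides which bids it "received." The paper enforces this via a digest of a Reed--Solomon encoding of the commitment vector, sent to all players and later posted on-chain, with a PoR-style challenge/response so that each honest verifier is statistically guaranteed the platform is committed to a well-defined codeword. That mechanism is not decoration; it is what lets the reduction reconstruct the committed bid vector from partial openings. Your sketch has no analogue.

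Third, the "complexity leveraging: the simulator force-opens the time-lock puzzles in quasi-polynomial time" plan does not respect the timing structure. Force-opening a time-lock puzzle of difficulty $T$ costs depth $\approx T$, but the hiding of the timed commitment only holds against adversaries of depth $< T^{\epsilon}$. If the reduction force-opens inside the hybrid where it is also trying to break hiding, the reduction itself is too deep to contradict anything. The paper instead uses \emph{non-malleable} timed commitments with IND-CCA security against quasi-polynomially sized but \emph{depth-bounded} adversaries and gives the reduction access to the CCA decryption oracle; the reduction never force-opens on its own. This is also why non-malleability (not a plain time-lock) is required: the coalition can submit commitments correlated with honest buyers' commitments, and CCA security is exactly what rules out such mauling. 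Your outline assumes a vanilla time-lock plus SNARG, which does not give you this, and you would need to redo the compiler around NITC and the depth-bounded reduction before the utility-domination argument for platform-centric coalitions would close.

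Minor: the ideal auction does not post anything to a broadcast channel; all communication there is with the trusted $\Fauction$, and the broadcast appears only in the compiled real-world protocol. Also, the ideal payment rule needs a tie-handling correction (a probability-weighted blend of $b_{k'+1}$ and the next tick) to make the bIC proof go through via the discrete Myerson characterization; the plain "$\max$ of $(k{+}1)$-th bid and reserve" is not quite right on a discrete value domain.
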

In the above, we use the notation $\widetilde{O}_\lambda(\cdot)$
to hide 
factors that depend on the security parameter $\lambda$
as well as polylogarithmic factors.

In our cryptographic protocol, the strong repeated squaring
is needed for ensuring an additional desirable property
called robustness, i.e., as long as the platform is honest,
all players will accept with all but negligible probability.  
\elaine{is it prob 1 or all but negl}
If robustness is not required, we can 
get the same result 
from standard assumptions including 
the existence of Non-Interactive Zero-Knowledge (NIZK) and 
collision-resistant hashing
secure against quasi-polynomially sized adversaries.
\elaine{TODO: write this in the appendix}

\subsection{Additional Definitional Contributions}
The vast majority of the mechanism design literature 
does not consider computationally bounded agents,
and their modeling choices are often incompatible
with computationally bounded players. 
We are faced with various definitional subtleties when trying to translate
classical game theoretical concepts
to a computationally bounded setting.  
We explicitly choose not to adopt  
classical modeling techniques and terminology that are incompatible
with computationally bounded players, such 
as extensive-form game, 
information sets, and static protocols.  
Instead, we model an auction as a cryptographic protocol
between players modeled as interactive Turing Machines, and we 
consider the round complexity of the protocol rather than
the number of information sets.

One notable subtlety that arises in our modeling 
is how to define a suitable notion of credible~\cite{credibleauction}
(or psIC in our terminology) that is compatible with computationally bounded
players as explained below.  

\paragraph{A computationally sound notion of ``credible''.} 
The notion of ``credible'' (or psIC in our terminology) is proposed in 
the elegant work of Akbarpour and Li~\cite{credibleauction}. 
Translated to our terminology, 
an auction is said to be 
credible if the platform-seller coalition 
cannot benefit from any {\it safe} deviation. 
Specifically, a strategy is considered safe
iff the following holds with probability $1$: 
\elaine{is it prob 1?}
for every buyer, its view in the protocol has a plausible explanation 
there exist some inputs and random coins of all other
players 
that are consistent with the buyer's view. 
However, the explanations for different buyers are allowed to be different. 

Unfortunately, Akbarpour and Li's formulation of credible  
is incompatible with the cryptographic setting where players
are computationally bounded. 
As a definitional contribution, 
we generalize the notion of credibility (or psIC) 
to the computationally 
bounded  
setting, by 
having every player explicitly output accept or reject
in the protocol's syntax definition. 
An execution trace
is considered safe if all honest players output accept.  
In \Cref{sec:defn-it-ic}, we discuss
how our psIC notion compares
with the ``credible'' notion in more detail
(see paragraph entitled ``Comparison of our psIC notion and credible''). 


\subsection{Additional Related Work}

\elaine{riggs paper cites a bunch of related work,
cite game theory meets mechanism design work, 
cite TFM work}

\paragraph{Cryptography meets auction design.}
Several works have suggested to use 
multi-party computation or related cryptographic techniques to remove
the reliance on 
a trusted auctioneer~\cite{privauction,survey-privauction,danishbeet,riggs}. 
Notably, MPC was deployed in real life in a Danish 
suger beets auction~\cite{danishbeet}.
The excellent survey of Alvarez and 
Nojoumin~\cite{survey-privauction} also provides a more comprehensive
review of this body of work.
Besides cryptography, some works
have also suggested the use of secure processors 
to remove the reliance on a trusted auctioneer~\cite{hawk}. 

We stress that even if the auction itself is based on  
a fully decentralized MPC protocol (or secure processors), 
it does not obviate the need
for a middleman platform
that provides value-added 
services such as product discovery, recommender system, and serves
as a rendezvous point 
between the sellers and the buyers. 
To the best of our knowledge, the 
aforementioned line of work focuses more on cryptographic protocol design, 
and does not  
explore the game theoretic implications when the platform 
can behave strategically to increase its revenue, 
such as injecting fake bids, or 
forming coalitions with either the seller or buyers. 

In our work, we consider a single untrusted platform. 
In comparison, some earlier works 
consider a scenario where the platform is realized
with two or three non-colluding parties (henceforth
called {\it service providers})~\cite{privauction,danishbeet}.
Even in this case, some of our impossibility
results, including \Cref{thm:intro-0fee} and \Cref{thm:intro-2pbic} 
still hold (assuming that the fees paid to the platform
are divided among the parties that jointly realize the platform). 
Here, $c$-pbIC means that any service provider colluding with at most $c$
buyers should be incentivized to behave honestly. 
Note that 
not relying on two or more non-colluding service providers
also makes our approach 
much easier to deploy in practice. 


\paragraph{Credible auction with cryptography or blockchain.}
Recent works
by 
Ferreira et al.~\cite{commit-credible-auction} and
Chitra et al.~\cite{crediblebroadcast}
showed that by employing either 
cryptographic commitments 
or a blockchain, we can circumvent the trilemma 
shown by Akbarpour and Li~\cite{credibleauction}, and get
a constant-round auction that is bIC and Bayesian psIC.
Just like the original credible auction
framework, these works treat the platform and the seller
as the same entity, and 
thus their definitions do not protect against strategies
where the platform cheats the seller. 
Further, both works suffer from 
{\it at least quadratic} total communication. 
Both Ferreira et al.~\cite{commit-credible-auction} and 
Chitra et al.~\cite{crediblebroadcast}
rely on collateral 
to prevent the auctioneer from injecting fake committed bids
and refusing to open later. 
It might be possible to replace their commitments with timed commitments   
and get rid of the collateral in their work, 
but formalizing the resulting scheme is outside the scope of our paper. 



\paragraph{Transaction fee mechanism design.}
Besides platform-assisted auctions, 
transaction fee mechanisms 
(TFMs)~\cite{roughgardeneip1559,roughgardeneip1559-ec,foundation-tfm}
are another example of decentralized mechanism design
where miners (or consensus nodes), who 
partly implement the role of the auctioneer, 
can behave strategically  and possibly form coalitions
with users. 
Transaction fee mechanisms in the plain model~\cite{foundation-tfm} 
(without cryptography)
can be viewed
as a restricted form of a platform-assisted auction, where  
1) the protocol must be a direct revelation mechanism;
2) there is no seller and the revenue to the seller is   
effectively burned on the blockchain --- as a result, the notions
of sIC and psIC are not relevant in the TFM context.  
Further, our pIC notion 
corresponds to miner incentive compatibility, our bIC notion corresponds 
to user incentive compatibility, and our $c$-pbIC notion corresponds
to $c$-side-contract-proofness (SCP) from the TFM literature. 
Because our model can be seen as a generalization of both TFMs and 
the credible auction framework, 
an extra contribution we make is to {\it unify and elucidate the connections
between 
these two previously separate lines of work}.  
We also borrow some proof techniques from the TFM literature
to prove some of our impossibility results, 
e.g., \Cref{thm:intro-0fee} and \Cref{thm:intro-2pbic}. 

Subsequent works~\cite{crypto-tfm,reasonable-tfm} 
on TFM also showed how MPC can help  
circumvent some of the impossibilities in the plain model. 
In comparison, these works~\cite{crypto-tfm,reasonable-tfm} 
simply run a generic MPC protocol
among the miners or MPC infrastructure providers, whereas our work
introduces a new notion called utility-dominated emulation
that allows us to achieve asymptotical efficiency improvements. 
While the existing works~\cite{crypto-tfm,reasonable-tfm} 
rely on the full notion of simulation
to abstract away the cryptography such that the mechanism
designer can work in an idealized model without cryptography,
our utility-dominated emulation provides the same benefits.   
Moreover, the existing works~\cite{crypto-tfm,reasonable-tfm}
need multiple MPC service providers 
among whom a threshold number must be honest, whereas
our approach  
needs only a single untrusted platform to act as the service provider. 






\section{Model}

\elaine{note: bchain or bcast cannot be realized in 
star/hub pairwise channel model}

\subsection{Platform-Assisted Auction}
\label{sec:platform-assisted-auction}
We consider a platform-assisted auction for selling
$k$ identical items. Throughout the paper, we assume that $k$ is finite.  
We assume that every buyer $i$ has a non-negative 
true value
denoted $v_i \in \mathbb{R}_{\geq 0}$ for winning an item. 
Each buyer has unit demand, i.e., winning more than one item
brings the same value as winning exactly one item.

A (possibly multi-round and randomized) platform-assisted auction 
parametrized by $k$ is a {\it protocol}
between the {\it platform}, a set of {\it buyers}, and a {\it seller}:
\begin{enumerate}[leftmargin=6mm]
\item 
{\bf Inputs}:
An honest buyer $i$ always uses its true value $v_i$ as input to the protocol.
The platform does not have any input. 
\item 
{\bf Communication}:
\elaine{mention in our practical instantiations of the crypto protocols, the seller
need not be involved}
During the protocol, 
each player $\in \{\text{buyer, seller}\}$ 
communicates with the platform
through a {\it pairwise private channel}. 
There is no buyer-buyer communication
or buyer-seller communication. 
Additionally, the 
buyers, seller, and platform
may also
post messages to a {\it broadcast channel}, henceforth also
called a {\it blockchain}.  
Messages posted to the blockchain are visible to every player.
\hao{Do we call it blockchain, or just broadcast channel?}
\item 
{\bf Private outcomes}:
We may assume that at the end of the main protocol, 
the platform
sends a single message to every buyer and seller to inform the player of its private outcome.
Specifically, the platform sends either $\bot$ indicating it wants to 
reject the execution, or it sends a private outcome of the following format: 
\elaine{we do not count this msg in round complexity}
\begin{itemize}
\item 
{\it Buyer $i$}:  its private outcome is of the form $(x_i, p_i)$ 
where $x_i$ denotes the number of items allocated to 
buyer $i$, and $p_i$ denotes its payment. 
\item 
{\it Seller}: its private outcome is 
of the form 
$(t, \mu_\mcal{S})$ 
where $t \leq k$ denotes the total number of items sold, 
and $\mu_{\mcal{S}}$ denotes the seller's revenue.  
\end{itemize}
\item 
{\bf Acceptance decision}: 
Finally, 
the seller, the platform, and 
every buyer will 
decide whether to {\it accept} or {\it reject} the execution. 
\ignore{
At the end of the protocol, 
each player outputs the following depending on its role:
\begin{itemize}
\item 
{\it Buyer $i$}: outputs a pair $(b_i, x_i, p_i)$ 
where $b_i \in \{0, 1\}$ indicates
whether the buyer {\it accepts} or {\it rejects} the result, 
$x_i$ denotes the number of items allocated to 
buyer $i$, and $p_i$ denotes its payment. 
\item 
{\it Seller}:
outputs a tuple 
$(b_{\mcal{S}}, t, \mu_\mcal{S})$ 
where $b_{\mcal{S}} \in \{0, 1\}$ indicates
whether the buyer {\it accepts} or {\it rejects} the result, 
$t \leq k$ denotes the total number of items sold, 
and $\mu_{\mcal{S}}$ denotes the seller's revenue.  
\item 
{\it Platform}:
outputs 
a bit $b_{\mcal{P}}\in \{0, 1\}$ indicating
whether to accept or reject, and a vector 
$\{(\widetilde{b}_i, \widetilde{x}_i, \widetilde{p}_i)\}_i || 
(\widetilde{b}_{\mcal{S}}, \widetilde{t}, \widetilde{\mu}_\mcal{S})$ 
that denotes its belief of all buyers' and the seller's outcomes. 
\end{itemize}
}
\end{enumerate}

We require the following natural guarantees: 
\begin{enumerate}[leftmargin=7mm]
 \item 
 {\it Correctness:}
If the platform is honest, all honest players agree on whether to accept
or reject with probability 1. 
Further, if all players are honest, then  
every one accepts with probability 1. 
\item 
{\it Individual rationality:}
If any honest buyer $i$ accepts, it must output some
 private outcome $(x_i, p_i)$ 
where 
$x_i \in \{0, 1\}$, 
and $x_i \cdot  v_i - p_i \geq 0$, i.e., its utility is non-negative.
\item 
{\it Consistency on number sold:}
As long as the platform is honest, it must be that 
$t = \sum_i x_i$, that is, the number of items sold as viewed
by the seller is the same as the number  
of items acquired by all buyers.  
If the platform is strategic (or part of a strategic coalition), then  
it must be that $t \geq \sum_i x_i$, i.e., 
a strategic platform must still 
obtain enough items from the seller to distribute to the buyers.
\item 
{\it Budget feasibility:}
If the platform is honest, it must be that 
$\mu_{\mcal{S}} \leq \sum_i p_i$.
The difference $\sum_i p_i - \mu_{\mcal{S}}$ represents
the platform's revenue.
\end{enumerate}

\ignore{
In a strategic execution, it is possible for a buyer
to get more than one item
(e.g., the buyer can take on multiple identities 
and pretend to be multiple buyers).
Moreover, we require $t \geq \sum_i x_i$ (instead
of strict equality), i.e., even a strategic  
platform must obtain enough items from the seller
to distribute to all winning buyers.  
Further, when some players are strategic, it is not guaranteed
that all honest players will accept
the result.  
Henceforth, we say that an execution trace
is {\bf safe} 
if all honest players accept. 
}


\begin{definition}[View of buyer and seller]
Fix an execution trace, \ke{trace not defined before} 
the \emph{view} of an honest buyer
or seller includes 
its input, all random coins it has consumed and all protocol
messages it has received (including messages posted to the blockchain). 
\end{definition}

\ignore{
\begin{definition}[Plausible explanation]
Fix an execution trace, 
a \emph{plausible explanation} for an honest buyer $i$
includes an input vector ${\bf b}_{-i}$
for all other buyers of arbitrary length,  
the random coins ${\bf r}_{-i}$ consumed by all other buyers, 
the random coins ${\bf r}_{\mcal{P}}$ consumed by the platform, 
such that the buyer's view is consistent with 
the tuple $({\bf b}_{-i}, {\bf r}_{-i}, {\bf r}_{\mcal{P}})$.
In the above, ``consistency'' means that if all other buyers
and the platform adopted the inputs and random coins $({\bf b}_{-i}, {\bf r}_{-i}, {\bf r}_{\mcal{P}})$, 
and buyer $i$ used the input and randomness corresponding  
to its view, then it would observe
the same protocol messages as well as private outcome 
as specified by the view. 

Plausible explanation for the seller can also be similarly defined
where ``all other buyers'' is replaced with ``all buyers.''
\end{definition}
}


\ignore{
\begin{remark}[On seller being a participant in the protocol]
For simplicity, most of the paper assumes that 
the seller's participation is not required 
in an honest execution; however a strategic
seller may participate by pretending to be fake buyers. 
Later in \elaine{fill ref}, we will
explain 
how to interpret our results when the seller also actively participates
in the protocol.
\end{remark}
}

\begin{remark}[On the use of the blockchain]
For our model to be general, we allow
the buyers/platform to post messages real-time 
to the blockchain. 
However, in practice, it would be desirable 
to minimize posting to the blockchain 
to save cost. 
For all of our feasibility results 
that require the use of the blockchain, we only
need the platform to 
post a small message to the blockchain 
at the very end of the protocol, i.e., we do not need
the capability of posting to 
blockchain real-time.
On the other hand,  
all of our infeasibility results
hold even  
when the platform and buyers are allowed to post messages 
to the blockchain real-time during the protocol. 
This makes both of our feasibility and infeasibility results
stronger. 
\elaine{double check}
\end{remark}

\paragraph{Weak symmetry.} 
In real life, the auction may look at additional auxiliary 
information (e.g., 
buyer's cryptographic identities, time-of-arrival of the bids)
besides the amount of the bids in making decisions.
Henceforth, all of this additionally information is referred
to as the {\it identity} of the buyer or bid. 
In this paper, we focus on auctions that satisfy
a weak notion of symmetry, i.e., the auction
makes use of such identity information only in tie-breaking. 
For example, if there are multiple buyers bidding at the same amount,
the auction may give preference to the bids that arrive earlier,
or whose identities are associated with higher reputation.  
The 
auction's outcome should not make use of identity information
in any other way. 
More formally, 
we define {\it weak symmetry} as follows, same 
as earlier works~\cite{oca-revisit}. 

\begin{definition}[Weak symmetry]
An auction is called weakly symmetric if it can
always be equivalently described in the following manner: 
given a value vector $\bfb$ where each bid
may carry some extra identity 
information (e.g., cryptographic keys or timestamp), the honest 
(possibly multi-round) auction protocol is a realization
of  
the following functionality. First, it 
sorts the vector $\bfb$ by the values. 
During the sorting step, if multiple values 
are the same, then arbitrary tie-breaking rules may be applied, 
and the tie-breaking can depend on 
the identity information and can be randomized. 
After this sorting step, 
the auction's algorithms 
depend only on the amount of the values and their relative
position in the sorted vector.
\label{defn:weaksym}
\end{definition}

\ignore{
Given any integer $n$, suppose there are $n$ buyers with true values $(v_1,\dots,v_n)$ participating in the auction,
and let $((X_1,P_1),\dots,(X_n,P_n))$ be a vector of random variables where $(X_i,P_i)$ represents buyer $i$'s private outcome.
An auction is said to be \emph{weakly symmetric} 
iff for any $i \neq j$ and $v_i = v_j$,
the random variables $(X_i,P_i)$ and $(X_j,P_j)$ are identically distributed
assuming everyone executes the protocol honestly.
}
\ignore{
Conceptually, a symmetric auction is one that
makes use of only buyers' bids to decide
the outcome and the auction does not use any metadata
(e.g., the pseudonym of the buyer, arrival time, etc.)
to make decisions. 
\elaine{the way you describe second price auction may not be symmetric} 
}

\subsection{Strategy Space and Utility}
We consider a {\it permissionless} model where a player (buyer, seller,
or platform) can create
one or more fake identities to participate in the auction. 
For our computationally IC auctions, we assume that 
the identity space is super-polynomial in the security
parameter $\lambda$, such that a probabilistic polynomial-time (\ppt)
adversary can make arbitrarily 
many fake identities as long as the total number is
bounded by its running time. 
For the information theoretic setting, 
we may assume that the identity space is infinitely large.

\paragraph{Buyer strategy space.}
A buyer can use an arbitrary input to the protocol rather than its true value;
it can also deviate from the honest protocol arbitrarily. 
In particular, it can take future actions in an adaptive manner
based on its view in the protocol so far. 
A buyer can also take on multiple identities 
and pretend to be multiple buyers.

\paragraph{Seller strategy space.}
A strategic seller can enter the auction pretending 
to be one or more buyers, and using arbitrary values as inputs. 
It can send arbitrary 
messages 
to the platform or the blockchain, 
and it can adapt its future actions based
on its view in the protocol so far. 
 
\paragraph{Platform and coalition strategy space.}
A strategic platform may deviate arbitrarily from the prescribed protocol,
including creating and participating under fake identities; moreover, 
it can adapt its actions based on its view in the protocol so far. 
For a platform-seller or platform-buyer coalition, the strategy
space is the union of the possible strategies 
of all players in the coalition.  

\paragraph{Safe and unsafe execution traces.}
In a strategic execution, 
it is not guaranteed that all honest players will decide to accept.
We say that an execution trace is {\it safe} if all honest players accept,
otherwise it is said to be {\it unsafe}.

\paragraph{Utility.}
If the platform is honest, then we can define utilities as follows.  
If the platform rejects, then every one's utility is 0.
If the platform accepts, then 
\begin{itemize}[leftmargin=5mm,itemsep=1pt]
\item 
a buyer $i$'s utility is 
$v_i - p_i$ if $x_i > 0$, otherwise its utility is $0$; 
\item the seller's
utility is 
its total revenue $\mu_{\mcal{S}}$;
and 
\item 
the platform's utility
is $\sum_i p_i - \mu_{\mcal{S}}$. In other words, 
the platform is remunerated 
the difference between 
the total payment and what the seller gets. 
\end{itemize}

If the platform is strategic 
(or part of a strategic coalition),  
then the platform (or the coalition) has utility $0$ 
on any unsafe execution trace.  \elaine{TODO: define}
Otherwise, if the execution trace is safe,  
we can compute the strategic platform's  (or coalition's) utility as follows:
\begin{itemize}[leftmargin=5mm,itemsep=1pt]
\item For a platform-seller coalition, 
its utility is the 
sum of all honest buyers' payments;
\item 
For a strategic platform colluding with some buyers $\mcal{B}$, 
we can compute its utility  
as follows. The number of items obtained by 
the coalition 
is $x^* = t-\sum_{i \in \mcal{H}} x_i$
where $\mcal{H}$ is the set of honest buyers. 
Let $\mcal{B}^*$ be the 
top $\min(x^*, |\mcal{B}|)$  
buyers in $\mcal{B}$
with the highest true values. 
Then, 
the coalition's utility is 
$\sum_{i \in \mcal{H}}  p_i - \mu_{\mcal{S}} + 
\sum_{j \in \mcal{B}^*} v_j$. 
In other words, the coalition's utility is computed
assuming it allocates its items to the 
colluding buyers with highest true values, and any leftover item has no value.
\item 
For a strategic platform alone, we can compute its utility in the same way as above
but assuming $\mcal{B} = \emptyset$, i.e., its utility
is simply $\sum_{i \in \mcal{H}}p_i - \mu_{\mcal{S}}$.
\end{itemize}

Note that 
the motivation for forcing the platform
or its coalition's utility 
to be zero upon an unsafe trace is aligned with  
the existing literature on 
credible auctions~\cite{credibleauction,commit-credible-auction,crediblebroadcast}.
Specifically, we consider a platform that values its reputation,
and thus would not want to risk  
taking strategies 
that can impair its reputation. 
In fact, it may also  
make sense to set the platform or its coalition's utility
to be $-\infty$ upon an 
unsafe trace --- see the paragraph 
entitled ``Comparison of our psIC notion and credible''
in \Cref{sec:defn-it-ic} for more detailed discussions.
\elaine{TODO: give some philosophical interpretation here.}


\ignore{
If the execution involves a strategic player or a strategic platform-buyer
or platform-seller coalition, then  
the execution trace may not always be safe. 
For a safe execution trace,
a strategic player's  
utility is 
calculated in the same way as above, and a strategic coalition's utility
is the sum of all colluding members' utilities.  
On the other hand, for an unsafe execution trace, we 
define the strategic player or coalition's utility 
to be $0$. 
}


\elaine{TODO: it makes sense to define  
{\bf robustness} as follows: as long as platform is honest, all honest 
players will output accept.
{\bf no false incrimination}: if a strategic user i falsely rejects, an honest platform
can prove user i's outcome to a third party.  
}

\ignore{
\subsection{Strategy Space and Incentive Compatibility}

We consider a {\it permissionless} model where a buyer can pretend
to be multiple buyers (e.g., by taking on multiple pseudonyms). 
Later in \elaine{refer}, we will comment on 
the implications on feasibility/infeasibility
when the model is permissioned. 

\paragraph{Buyer strategy space and incentive compatibility.}
A buyer can use an arbitrary input to the protocol rather than its true value;
it can also deviate from the honest protocol arbitrarily. 
In particular, it can take future actions in an adaptive manner
based on its view in the protocol so far. 
A buyer can also take on multiple identities 
and pretend to be multiple buyers.

\begin{definition}[Buyer incentive compatibility (bIC)]
We say that an auction satisfies 
\emph{dominant} buyer incentive compatibility (bIC), 
iff 
for any buyer $i$ with an arbitrary true value $v_i$, 
and for any other buyers' inputs,
the following holds: 
assuming that the platform plays honestly,
buyer $i$'s expected utility is maximized
when it plays honestly using its true value $v_i$ as input,
where the expectation is taken over all other buyers' and the platform's randomness.
\hao{Do we take over buyer $i$'s own randomness?}

We say that an auction satisfies 
\emph{Bayesian} buyer incentive compatibility (bIC) w.r.t.~some distribution $\mcal{D}^*$, 
iff 
for an arbitrary buyer $i$ with an arbitrary true value $v_i$, 
the following holds: 
assuming that the platform
plays honestly and all other buyers play honestly 
with their inputs ${\bf b}_{-i}$
drawn at random from $\mcal{D}^*$, 
then buyer $i$'s expected utility is maximized
when it plays honestly using its true value $v_i$ as input,
where the expectation is taken over all other buyers' and the platform's randomness
as well as $\mcal{D}^*$. 
Here, the distribution $\mcal{D}^*$ samples both the number of buyers $n$, 
and a vector of inputs of length $n$.
\end{definition}

\paragraph{Seller strategy space and incentive compatibility.}
A strategic seller can enter the auction pretending 
to be one or more buyers. 
It can send arbitrary protocol
messages (on behalf of the fake buyers), 
and it can adapt its future actions based
on its view in the protocol so far.

\begin{definition}[Seller incentive compatibility (sIC)]
We say that an auction satisfies 
\emph{dominant} seller incentive compatibility (sIC), 
iff
for any inputs of the buyers,
the following holds: 
assuming that the platform plays honestly,
the seller's expected utility is maximized
when it plays honestly (i.e., taking no action in the protocol)
where the expectation is taken over all buyers' and the platform's randomness.


We say that an auction satisfies 
\emph{Bayesian} seller incentive compatibility (sIC) w.r.t.~some distribution $\mcal{D}^*$, 
iff the following holds: 
assuming that the platform
plays honestly and all buyers play honestly 
with their inputs ${\bf b}$
drawn at random from $\mcal{D}^*$, 
then the seller's expected utility is maximized
when it acts honestly (i.e., taking no action in the protocol) 
where the expectation is taken 
over all buyers' and the platform's randomness
as well as $\mcal{D}^*$. 
\end{definition}

\paragraph{Platform strategy proof and coalition resilience.}
A strategic platform may deviate arbitrarily from the prescribed protocol,
and it can adapt its actions based on its view in the protocol so far. 
For a platform-seller or platform-buyer coalition, the strategy
space is the union of the possible strategies 
of all players in the coalition.  

Our incentive compatibility notion 
will require that 
conditioned on ``not getting caught,'' strategic 
deviations do not increase the platform's expected utility,
and similarly for a platform-seller or platform-buyer coalition.
Below in \Cref{defn:safe}, 
fixing an execution trace ${\it tr}$ and some colluding set $\mcal{C}$
that must include the platform, 
we use $G({\it tr})$ to denote
the event that  
the trace is {\it safe}, i.e., 
every buyer/seller outside $\mcal{C}$ accepts the auction in the trace ${\it tr}$.
We use 
${\it tr} \leftarrow {\sf Exec}^S({\bf b}_{-\mcal{C}})$ to denote
the act of sampling a random execution trace 
given that $\mcal{C}$ adopts the strategy $S$
and every player outside $\mcal{C}$ plays honestly
conditioned on using the input vector ${\bf b}_{-\mcal{C}}$.
We use $H$ to denote the honest strategy
of the coalition $\mcal{C}$, where all players in $\mcal{C}$
follow the honest protocol, and all buyers in $\mcal{C}$ use
their true values as inputs.  
We use ${\sf util}_{\mcal{C}}({\it tr})$
to denote the utility of $\mcal{C}$ given the execution trace
${\it tr}$. 

\begin{definition}[Incentive compatibility for the platform and platform-involving coalitions]
Given a colluding set $\mcal{C}$ that must include the platform $\mcal{P}$, 
we say that an auction satisfies 
{\it dominant} incentive compatibility for $\mcal{C}$, 
iff 
for any true value vector of buyers in $\mcal{C}$, 
for any input vector ${\bf b}_{-\mcal{C}}$ of all buyers outside $\mcal{C}$,  
for any strategy $S$ of $\mcal{C}$, 
if 
$\Pr\left[
{\it tr} \leftarrow {\sf Exec}^S({\bf b}_{-\mcal{C}}) : 
\ G({\it tr})\right]
> 0$, 
then it must be that  
$$\E\left[
{\it tr} \leftarrow {\sf Exec}^S({\bf b}_{-\mcal{C}}) : \ 
{\sf util}_{\mcal{C}}({\it tr})   \left| G({\it tr})\right.\right]
\leq 
\E\left[
{\it tr} \leftarrow {\sf Exec}^H({\bf b}_{-\mcal{C}}): \  
{\sf util}_{\mcal{C}}({\it tr}) \right].$$ 

We say that an auction satisfies 
{\it Bayesian} incentive compatibility w.r.t.~$\mcal{C}$
and some distribution $\mcal{D}^*$, 
iff 
for any true value vector of buyers in $\mcal{C}$, 
for any strategy $S$ of $\mcal{C}$, 
if 
$\Pr[{\bf b}_{-\mcal{C}} \leftarrow \mcal{D}^*,\ 
{\it tr} \leftarrow$ ${\sf Exec}^S({\bf b}_{-\mcal{C}}) :$ 
 $G({\it tr})] > 0$,  
then the following holds: 
\begin{multline*}
\E\left[
{\bf b}_{-\mcal{C}} \leftarrow \mcal{D}^*,\  
{\it tr} \leftarrow {\sf Exec}^S({\bf b}_{-\mcal{C}}) : \ 
{\sf util}_{\mcal{C}}({\it tr})   \left| G({\it tr})\right.\right]
\leq 
\E\left[
{\bf b}_{-\mcal{C}} \leftarrow \mcal{D}^*,\
{\it tr} \leftarrow {\sf Exec}^H({\bf b}_{-\mcal{C}}): \  
{\sf util}_{\mcal{C}}({\it tr}) \right].
\end{multline*}
\ignore{
conditioned on the execution trace 
having a plausible explanation for 
player not in $\mcal{C}$, 
the platform's expected utility 
cannot exceed 
its expected utility had it played honestly (and using true values
as inputs if the $\mcal{C}$ contains buyers). 
}
\label{defn:safe}
\end{definition}
Intuitively, \Cref{defn:safe} 
requires that 
conditioned on the execution trace
being safe, 
the platform's expected utility
cannot exceed
its expected utility had it played honestly (and using true values
as inputs if the $\mcal{C}$ contains buyers).
Informally, either 
the coalition $\mcal{C}$ will get caught by someone,  
or conditioned on not being caught, $\mcal{C}$ cannot increase  
its expected utility relative to playing honestly.

Given \Cref{defn:safe}, 
we can define the following notions:

\begin{itemize}[leftmargin=6mm]
\item 
{\it Dominant or Bayesian platform incentive compatibility (pIC)}: 
\Cref{defn:safe} must hold when 
$\mcal{C}$ contains only the platform; 
\item 
{\it Dominant or Bayesian platform-seller incentive compatibility (psIC)}:
\Cref{defn:safe} must hold when 
$\mcal{C}$ contains the platform and the seller; 
\item 
{\it Dominant or Bayesian $c$-platform-buyer incentive compatibility ($c$-pbIC)}:
\Cref{defn:safe} must hold when 
$\mcal{C}$ contains the platform and 
an arbitrary non-empty set of at most $c$ buyers. 
\end{itemize}

\begin{remark}[Comparison with the credible notion of Akbarpour and Li~\cite{credibleauction}] 
For deterministic auctions, 
Akbarpour and Li's credible notion~\cite{credibleauction} 
is the same as our Bayesian psIC notion
since they treat the platform and the seller as the same party.  
For randomized auctions, 
our Bayesian psIC notion above is stronger
than Akbarpour and Li's credible notion. 
In particular, Akbarpour and Li 
deem strategies 
that sometimes generate safe traces  
to be unsafe, so they do not need to provide any guarantees
for such strategies. 
They only need to provide guarantees for strategies
that generate safe traces with probability 1. 
\hao{Double check whether we change to w.p. 1 definition.}

In comparison, our stronger 
might be more desirable in practice: for example, there might be some
strategies that almost always generate safe traces, and 
the probability of getting caught can be 
so small that it might be worth taking the risk ---  
we believe it is important to disincentivize such strategies too.  
Therefore, for feasibility results, it is more desirable
to use our stronger notion. In \elaine{refer}, 
we extend our impossibility results to the weaker
notion of Akbarpour and Li for the information-theoretic setting.
\elaine{TODO: write it}. 
Note that Akbarpour and Li's definition is fundamentally incompatible
with 
protocols that employ cryptography. 
We discuss how to adapt our pIC, psIC, and pbIC notions  
to computationally bounded agents in \elaine{refer}.
\end{remark}

\ignore{TODO: explain why the credible auction paper's
credible notion is essentially our psIC, and what
are the differences}
}

\section{Preliminaries}

\elaine{TODO: organize}
A function $f: \N \rightarrow [0,1]$ is called \emph{negligible},
if for every positive polynomial $p(\cdot)$ and all sufficiently large $n$, 
it holds that $f(n) < 1/p(n)$.
\ignore{
Given some distribution $X$ on (a subset of) real values, we say that $X$ is 
{\it non-negative}
if $\Pr_{x \getr X}[x \geq 0] = 1$.
Similarly, a probability ensemble 
$\{X_\lambda\}_\lambda$ indexed by $\lambda$ is said to be non-negative,
iff 
$X_\lambda$ is non-negative for every $\lambda$. 
}
\elaine{deleted the non-negative part}

A possibly randomized 
machine is said to be probabilistic polynomial time (\ppt), iff
it completes in total work upper-bounded
by some polynomial function in its input length.  
Let $D(\cdot)$ be a polynomially bounded function. 
We say that a (non-uniform) \ppt machine is 
{\it depth-$D$-bounded}, iff on 
any input $a$ of length $|a|$, 
its depth is at most $D(|a|)$. 

\elaine{NOTE: with 
depth bounded, post proc lemma needs to calculate depth}  

\begin{definition}[Computational indistinguishability against depth-bounded adversaries]
Let $D(\lambda)$ be a polynomially
bounded functions in $\lambda$. 
Consider probability ensembles $\{\rv{X}_{\secu}\}_\secu$ and $\{\rv{Y}_{\secu}\}_\secu$
indexed by the security parameter $\secu \in \N$.
We say that 
$\{\rv{X}_{\secu}\}_\secu$ and $\{\rv{Y}_{\secu}\}_\secu$ are 
\emph{computationally indistinguishable 
against 
depth-$D$ adversaries},  
iff for every 
depth-$D$ bounded non-uniform \ppt machine,  
there exists a negligible function $\negl$ such 
that for any $\secu \in \N$, \[
\left|\Pr_{x \getr X_\lambda}[\mcal{A}(1^\secu, x) = 1] - \Pr_{y \getr Y_\lambda}[\mcal{A}(1^\secu, y) = 1]\right| \leq \negl(\secu).
\]

As a special case, if there is no depth-$D$ restriction
on the adversary, we simply say that  
$\{\rv{X}_{\secu}\}_\secu$ and $\{\rv{Y}_{\secu}\}_\secu$ are 
\emph{computationally indistinguishable}.
\end{definition}

\ignore{
More generally, we will also use probability ensembles indexed
by additional variables other than $\lambda$.
For example, let $f$ and $g$ be Turing machines.  We say that 
$\{f(1^\lambda, x)\}_{\secu, x \in \{0, 1\}^{\ell(\lambda)}}$
and 
$\{g(1^\lambda, x)\}_{\secu, x \in \{0, 1\}^{\ell(\lambda)}}$
are computationally indistinguishable (against depth-$D$ adversaries), 
iff for any non-uniform \ppt depth-$D$ adversary $\algA$, there exists
a negligible function $\negl(\cdot)$ such that for
any $\lambda$, any $x \in \{0, 1\}^{\ell(\lambda)}$, 
$\left|\Pr[\mcal{A}(f(1^\secu, x)) = 1] - \Pr[\mcal{A}(g(1^\secu, x)) = 1]\right| 
\leq \negl(\secu)$.
}

\ignore{
\begin{definition}[Coupling between two distributions]
Let $X$ and $X'$ be two distributions over some value domain $\inDom$.
A coupling between $X$ and $X'$ is a joint distribution over $\inDom \times \inDom$ such that the marginal distribution of the first coordinate is exactly $X$, and the marginal distribution of the second coordinate is $X'$.
\end{definition}

\begin{definition}[Stochastic dominance]
Given a set $\Omega$ with a total ordering, 
and two distributions $\rv{X}$ and $\rv{X}'$ over some value domain $\inDom$,
we say that $\rv{X}$ stochastically dominates $\rv{X}'$,
w.r.t. the total ordering, 
iff there exists a coupling between $\rv{X}$
and $\rv{X}'$
such that for any $(a, b) \in \inDom \times \inDom$
in the support of the 
coupling, 
it holds that $b \leq a$.
\label{defn:dominate-total}
\end{definition}

\begin{fact}
Let $f_x$ and $g_x$ be randomized functions parametrized by $x$. 
If $f_x$ stochastically dominates $g_x$ for any $x$, then
for any distribution $\mcal{D}$, the distribution
$\{x \getr \mcal{D}: \text{ output } f_x\}$ 
stochastically dominates 
$\{x \getr \mcal{D}: \text{ output } g_x\}$. 
\elaine{do we still use this anywhere}
\label{fct:bayesiansd}
\end{fact}

\begin{definition}[Computational dominance]
Given probability ensembles 
$\{X_\lambda\}_\lambda$ and 
$\{Y_\lambda\}_\lambda$
indexed by the security parameter $\lambda$, 
we say that 
$\{X_\lambda\}_\lambda$
computationally dominates 
$\{Y_\lambda\}_\lambda$ (for depth-$D$ adversaries), 
henceforth denoted 
\begin{align*}
&\{Y_\lambda\}_\lambda 
\displaystyle\mathop{\preceq}_{\text{depth-$*$}} \{X_\lambda\}_\lambda
& \text{\it //~if no depth constraint}\\[1pt]
\text{ or }
& \{Y_\lambda\}_\lambda 
\displaystyle\mathop{\preceq}_{\text{depth-$D$}} \{X_\lambda\}_\lambda
& \text{\it //~for depth-$D$ adversaries}
\end{align*}
iff there exist 
ensembles $\{X'_\lambda\}_\lambda$
and $\{Y'_\lambda\}_\lambda$
such that 
$\{X_\lambda\}_\lambda$ is computationally indistinguishable
from $\{X'_\lambda\}_\lambda$ (for depth-$D$ adversaries), 
$\{Y_\lambda\}_\lambda$ is computationally indistinguishable
from $\{Y'_\lambda\}_\lambda$ (for depth-$D$ adversaries), 
and moreover, 
for every $\lambda$, 
$X'_\lambda$ stochastically dominates $Y'_\lambda$. 
\end{definition}

\begin{fact}
If $\{X_\lambda\}_\lambda$ and $\{Y_\lambda\}_\lambda$ 
are computationally indistinguishable
for depth-$D$
adversaries, then 
$\{Y_\lambda\}_\lambda 
\displaystyle\mathop{\preceq}_{\text{depth-$D$}} \{X_\lambda\}_\lambda$
and 
$\{X_\lambda\}_\lambda 
\displaystyle\mathop{\preceq}_{\text{depth-$D$}} \{Y_\lambda\}_\lambda$.
\label{fct:indimpliesdom}
\end{fact}

\begin{fact}
Let $C$ be a suitably large universal constant, and let $D(\cdot) = C \log (\cdot)$.
Suppose 
$\{X_\lambda\}_\lambda$
computationally dominates
$\{Y_\lambda\}_\lambda$ for depth-$D$ adversaries,
then, 
there exists a negligible function 
$\negl(\cdot)$ such that 
for every $\lambda \in \mathbb{N}$, 
$\E[Y_\lambda]  \leq \E[X_\lambda] + \negl(\lambda)$.
\end{fact}
\begin{proof}
\elaine{FILL}
\end{proof}
}

\elaine{TODO: add the technical lemma needed for
comp ind for log depth ==> expectation close} 

\begin{fact}
Suppose that 
$\{X_\lambda\}_\lambda$
and $\{Y_\lambda\}_\lambda$
are probability ensembles taking value over $[0, 1]$. 
Suppose that 
$\{X_\lambda\}_\lambda$
and $\{Y_\lambda\}_\lambda$
are computationally indistinguishable 
for depth-$C \log(\cdot)$-bounded adversaries
where $C$ is a suitably large universal constant. 
Then, there is a negligible function $\negl(\cdot)$ such that 
for every $\lambda \in \mathbb{N}$, 
$|\E[X_\lambda] - 
\E[Y_\lambda]| \leq \negl(\lambda)$. 
\label{fct:indtoexp}
\end{fact}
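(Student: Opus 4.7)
The plan is a contrapositive layer-cake argument: if the expectations differ noticeably, I build a shallow non-uniform distinguisher based on thresholding and derive a contradiction. Suppose for contradiction that there is a polynomial $p(\cdot)$ and an infinite sequence of $\lambda$ for which $|\E[X_\lambda] - \E[Y_\lambda]| \geq 1/p(\lambda)$; without loss of generality, $\E[X_\lambda] - \E[Y_\lambda] \geq 1/p(\lambda)$.

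Using the standard ``layer-cake'' identity for a $[0,1]$-valued random variable, $\E[Z] = \int_0^1 \Pr[Z > t]\, dt$, I write
\[
\E[X_\lambda] - \E[Y_\lambda] \;=\; \int_0^1 \bigl(\Pr[X_\lambda > t] - \Pr[Y_\lambda > t]\bigr)\, dt \;\geq\; \frac{1}{p(\lambda)}.
\]
By averaging, there is some $t^\ast_\lambda \in [0,1]$ with $\Pr[X_\lambda > t^\ast_\lambda] - \Pr[Y_\lambda > t^\ast_\lambda] \geq 1/p(\lambda)$. The samples of $X_\lambda, Y_\lambda$ live in $[0,1]$ and are consumed by a \ppt distinguisher, so they are presented in some fixed-point binary representation of length $k = \poly(\lambda)$; by instead partitioning $[0,1]$ into $2p(\lambda)$ equal intervals and averaging the gap across the grid, I may take $t^\ast_\lambda$ to be a dyadic rational with $\lceil \log(2p(\lambda)) \rceil = O(\log \lambda)$ bits while still preserving a gap of at least $1/(2p(\lambda))$.

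Next I define a non-uniform distinguisher $\mcal{A}_\lambda$ that hard-codes $t^\ast_\lambda$ as advice and, on input $(1^\lambda, x)$ with $x \in \{0,1\}^k$ interpreted as a fixed-point value in $[0,1]$, outputs $1$ iff $x > t^\ast_\lambda$. Comparison of two $k$-bit numbers is a classical $NC^1$ task: a balanced prefix tree over ``equal/greater'' indicators for each bit position computes the comparison in depth $O(\log k)$. Thus $\mcal{A}_\lambda$ can be realized as a non-uniform \ppt machine of depth at most $C \log(|1^\lambda| + |x|)$ provided $C$ is chosen as a large enough universal constant, matching precisely the depth budget in the hypothesis. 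By construction $\mcal{A}_\lambda$ distinguishes $X_\lambda$ from $Y_\lambda$ with advantage at least $1/(2p(\lambda))$, contradicting computational indistinguishability against depth-$C\log(\cdot)$ adversaries.

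The main obstacle is bookkeeping the depth: both the discretization of $t^\ast_\lambda$ and the comparator must live inside the $C \log(\cdot)$ budget. The discretization step only costs a factor $2$ in the advantage and shrinks the threshold to $O(\log \lambda)$ bits, and the comparator step relies on the standard fact that integer comparison is in $NC^1$, which is where the ``suitably large universal constant $C$'' in the statement is absorbed. Everything else, averaging over $t$ and translating an expectation gap into a threshold gap, is routine.
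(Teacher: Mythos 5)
Your proof is correct and uses the same core device as the paper: threshold the complementary CDF and realize the threshold test as a shallow comparator that violates depth-bounded indistinguishability. The routing is genuinely a bit different and slightly cleaner: you argue directly in the contrapositive via the layer-cake identity $\E[Z] = \int_0^1 \Pr[Z > t]\,dt$, whereas the paper first proves that indistinguishability forces \emph{pointwise} CDF closeness (an auxiliary claim, same distinguisher), and then translates CDF closeness into expectation closeness by a perturbed-random-variable/stochastic-dominance coupling. The layer-cake step and the coupling step are two standard ways of saying the same thing; your version avoids introducing the auxiliary $X'_\lambda$ and is the more transparent of the two.

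One quibble: the step where you shrink $t^\ast_\lambda$ to an $O(\log\lambda)$-bit dyadic rational ``by averaging the gap across the grid'' is both unjustified as stated and unnecessary. As stated it is shaky because $g(t) = \Pr[X_\lambda > t] - \Pr[Y_\lambda > t]$ is not monotone, so there is no a priori reason a coarse grid point captures the gap; it can be patched using the monotonicity of each of $\Pr[X_\lambda > \cdot]$ and $\Pr[Y_\lambda > \cdot]$ separately (roughly, $\int_{j/m}^{(j+1)/m}\Pr[X>t]\,dt \le \frac{1}{m}\Pr[X>j/m]$ and symmetrically for $Y$, which after re-indexing shows the average of $g$ over the grid is $\ge 1/p - 1/m$). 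But you do not need this at all: $X_\lambda$ and $Y_\lambda$ are $\ell(\lambda)$-bit strings with $\ell(\lambda)=\poly(\lambda)$, so you may as well take $t^\ast_\lambda$ to be any $\ell(\lambda)$-bit value achieving the gap (such a value exists since the complementary CDFs are step functions with jumps only at representable points). Comparing two $\ell(\lambda)$-bit numbers is still $NC^1$ of depth $O(\log \ell(\lambda)) = O(\log \lambda)$, which already fits inside the $C\log(\cdot)$ budget. Also note that even after discretizing $t^\ast_\lambda$, the comparison must still examine all $\ell(\lambda)$ bits of $x$ (a short $t^\ast_\lambda$ prefix equal to the corresponding prefix of $x$ is resolved by the trailing bits of $x$), so the discretization buys you no depth savings anyway.
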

\begin{proof}
We first prove the following useful claim. 
\begin{claim}
Suppose 
$\{X_\lambda\}_\lambda$
and $\{Y_\lambda\}_\lambda$
are computationally indistinguishable
for depth-$C \log(\cdot)$ adversaries. Then, it must be
that there is a negligible function $\negl(\cdot)$
such that for every $\lambda$,  
for any $v \in [0, 1]$, $|{\sf CDF}_{X_\lambda}(v) - {\sf CDF}_{Y_\lambda}(v)| 
\leq \negl(\cdot)$
where ${\sf CDF}_X(\cdot)$ 
denotes the cumulative distribution function
of some random variable $X$. 
\end{claim}
\begin{proof}
Suppose the claim is not true, i.e., 
there is some $\poly(\cdot)$ such that 
for infinitely many $\lambda$'s, 
there is some $v_\lambda \in [0, 1]$
such that
$|{\sf CDF}_{X_\lambda}(v_\lambda) - {\sf CDF}_{Y_\lambda}(v_\lambda)|
> 1/\poly(\lambda)$.  
Then, we can construct a polynomially sized distinguisher $\mcal{B}$ of depth 
at most $C \log(\ell(\lambda))$ 
where $\ell(\lambda)$ is the bit-length  
of $X_\lambda$ or $Y_\lambda$, that can effectively distinguish
a random sample of $X_\lambda$ or $Y_\lambda$
with $1/\poly(\lambda)$ 
probability for infinitely many $\lambda$'s. 
Basically, $\algB(1^\lambda)$ 
receives a value that is sampled
from either $X_\lambda$ or $Y_\lambda$,  
if the value is at most $v_\lambda$, it outputs $1$; else output $0$. 
The depth of $\mcal{B}$ is upper bounded by the depth of a comparator,
which is logarithmic in the input length. 
\end{proof}

We now 
continue with the proof of \Cref{fct:indtoexp}. 
It suffices to show that 
$\E[X_\lambda]  \leq 
\E[Y_\lambda] + \negl(\lambda)$ since the other direction is symmetric. 
Consider the random variable $X'_\lambda$
whose cumulative distribution function is 
defined as follows where $\negl(\cdot)$ is the negligible function
in the above claim:
\[
{\sf CDF}_{X'_\lambda}(v) = 
\max\left({\sf CDF}_{X_\lambda}(v)  + \negl(\lambda), 1\right)
\]
Intuitively, 
${X'_\lambda}$ is obtained by 
forcing the largest negligible fraction of 
$X_\lambda$ to $0$. 
Because of the above claim, it holds that 
$X'_\lambda$
is stochastically dominated 
by $Y_\lambda$. 
Therefore, 
we have that 
\[\E[X_\lambda] \leq 
\E[X'_\lambda] + \negl(\lambda)
\leq 
\E[Y_\lambda] + \negl(\lambda)
\]
\end{proof}

\section{Incentive Compatibility Definitions}

We now define our incentive compatibility (IC) notions.
Throughout the paper, if the auction makes
use of identity information to break ties
(see also the definition of weak symmetry 
in \Cref{sec:platform-assisted-auction}), we require
that our IC notions hold for any choice of identities. 
\elaine{i added this b/c the impossibility proofs assume this}

\subsection{Computational Incentive Compatibility}
\label{sec:IC-computational}

To define computational incentive compatibility, we assume
that the execution is parameterized with a security parameter $1^\lambda$
and every player receives $1^\lambda$ as an additional input.  
The players are modeled as Turing Machines. 

\paragraph{Normalized and discretized value domain.}
Because players are computationally bounded, we cannot
have infinite 
precision to represent real-numbered values.
We assume that every player's true value
is drawn 
from some finite domain $\inDomSecu := \{0, 1\}^{\ell(\lambda)}$ 
that is a subset of the non-negative reals, 
such that the true
value can always be represented by 
$\ell(\lambda) \leq \poly(\lambda)$ bits. 
Since the value domain is finite, 
there is a maximum possible value. 
Therefore, {\it by rescaling, we 
may assume without loss of generality that the value domain 
$\inDomSecu$ encodes
a finite subset 
of $[0, 1]$}, written as $\inDomSecu \subset [0, 1]$. 
{\it Throughout the paper, we assume
that $0 \in \inDomSecu$}
for any $\lambda$. 
\elaine{TODO: comment on how to remove this assumption for LB and UB}

We use $\inDomSecu^k$ to denote a length-$k$ vector where 
each coordinate is from 
$\inDomSecu$, 
and 
we use the notation 
$\inDomSecu^*$
to denote an arbitrary-length vector where 
each coordinate is from 
$\inDomSecu$.


\begin{definition}[Computational (ex post) incentive compatibility]
\label{def:comp-IC}
Given an auction parametrized by $\lambda$ over
the value domain $\inDomSecu\subset [0, 1]$, 
we say that the auction satisfies (ex post) 
incentive compatibility w.r.t. a 
strategic player or coalition denoted $\mcal{C}$, iff
for any polynomial function $n(\cdot)$, 
for any 
probabilistic polynomial time (\ppt)\footnote{Throughout this paper,
we allow the strategic player or coalition's algorithm to be a non-uniform machine.} strategy $S$ adopted by $\mcal{C}$,
there exists a negligible function $\negl(\cdot)$ such that 
for any $\secu \in \N$,
for any $n_H \leq n(\lambda)$
for any true value 
vector ${\bf v}_{-\mcal{C}} \in  \inDomSecu^{n_H}$ corresponding to honest buyers, 
and for any true value vector 
${\bf v}_{\mcal{C}} \in  \inDomSecu^{n_C}$ of the buyers in $\mcal{C}$
where $n_C$ is the number of buyers in $\mcal{C}$, 
$$
\E\left[{\sf util}^S_{\mcal{C}}(1^\lambda, {\bf v}_{-\mcal{C}}, {\bf v}_{\mcal{C}}) \right]
\leq 
\E\left[
{\sf util}^H_{\mcal{C}}(1^\lambda, {\bf v}_{-\mcal{C}}, {\bf v}_{\mcal{C}}) 
\right]
+ \negl(\lambda)
$$
where $H$ denotes the honest strategy, and moreover,
the notation ${\sf util}^S_{\mcal{C}}(1^\lambda, {\bf v}_{-\mcal{C}}, {\bf v}_{\mcal{C}})$
represents the random variable  
corresponding to coalition $\mcal{C}$'s utility 
in the  following randomized experiment:
\begin{enumerate}[leftmargin=6mm,itemsep=1pt]
\item 
execute the protocol with security parameter $1^\lambda$, where 
$\mcal{C}$ adopts the strategy $S$, and all remaining players act honestly
with the honest buyers taking true values ${\bf v}_{-\mcal{C}}$ as input;  
\item 
output the utility of $\mcal{C}$
assuming that the buyers
in $\mcal{C}$ have the true value vector\footnote{Note
that the true value vector ${\bf v}_{C}$ of the buyers in $\mcal{C}$ is only
used to calculate $\mcal{C}$'s utility, the coalition
can adopt an arbitrary strategy which implies that it can submit
an arbitrary bid vector.} 
 ${\bf v}_{C}$. 
\end{enumerate}
\end{definition}

\begin{definition}[Computational Bayesian incentive compatibility]
\label{def:comp-bayesian-IC}
Given an auction parametrized by $\lambda$ over
the value domain $\inDomSecu\subset [0, 1]$, 
we say that the auction satisfies 
Bayesian 
incentive compatibility w.r.t. a 
strategic player or coalition denoted $\mcal{C}$ and the distribution $\mcal{D}_\lambda$, iff
for any polynomial $n(\cdot)$, 
for any $\ppt$ strategy $S$ adopted by $\mcal{C}$,  
there exists a negligible function $\negl(\cdot)$, 
such that for any $\lambda$, 
for any $n_H \leq n(\lambda)$, 
any true value vector $\bfv_{\mcal{C}} \in \inDomSecu^{n_C}$
corresponding to the buyers in $\mcal{C}$ where $n_C$ denotes
the number of buyers in $\mcal{C}$, 
$$
\E\left[{\sf util}^S_{\mcal{C}}(1^\lambda, n_H, {\bf v}_{\mcal{C}})\right] 
\leq 
\E\left[{\sf util}^H_{\mcal{C}}(1^\lambda, n_H, {\bf v}_{\mcal{C}})\right] 
+ \negl(\lambda)
$$
where $H$ denotes the honest strategy, and moreover,
the notation 
${\sf util}^S_{\mcal{C}}(1^\lambda, n_H, {\bf v}_{\mcal{C}})$
represents the random variable  
corresponding to coalition $\mcal{C}$'s utility 
under the following execution: 
\begin{enumerate}[leftmargin=6mm,itemsep=1pt]
\item 
sample a vector ${\bf v}_{-\mcal{C}}$ of length $n_H$ from $\mcal{D}_\lambda^{n_H}$;
\item 
execute the protocol with security parameter $1^\lambda$, where 
$\mcal{C}$ adopts the strategy $S$, and all remaining players act honestly
with the honest buyers taking true values 
${\bf v}_{-\mcal{C}}$;  
\item 
output the utility of $\mcal{C}$
assuming that the buyers
in $\mcal{C}$ have the true value vector ${\bf v}_{C}$. 
\end{enumerate}
\end{definition}

\ignore{
Note that in the above 
\Cref{def:comp-IC,def:comp-bayesian-IC}, we implicitly
assume that the length of ${\bf v}_{-\mcal{C}}$
and $n_H$
are bounded by some fixed polynomial function in $\lambda$. 
However, the auction may not know this polynomial in advance. 
}

Given 
\Cref{def:comp-IC,def:comp-bayesian-IC},
we can define the following notions:

\begin{itemize}[leftmargin=6mm]
\item 
{\it Computational (Bayesian) buyer incentive compatibility (bIC)}: 
\Cref{def:comp-IC} (or \Cref{def:comp-bayesian-IC}) must hold when 
$\mcal{C}$ contains only an individual buyer
assuming the platform follows the protocol honestly; 
\item 
{\it Computational (Bayesian) seller incentive compatibility (sIC)}: 
\Cref{def:comp-IC} (or \Cref{def:comp-bayesian-IC}) must hold when 
$\mcal{C}$ contains only the seller
assuming the platform follows the protocol honestly;
\item 
{\it Computational (Bayesian) platform incentive compatibility (pIC)}: 
\Cref{def:comp-IC} (or \Cref{def:comp-bayesian-IC}) must hold when 
$\mcal{C}$ contains only the platform; 
\item 
{\it Computational (Bayesian) platform-seller incentive compatibility (psIC)}:
\Cref{def:comp-IC} (or \Cref{def:comp-bayesian-IC}) must hold when 
$\mcal{C}$ contains the platform and the seller; 
\item 
{\it Computational (Bayesian) $c$-platform-buyer incentive compatibility ($c$-pbIC)}:
\Cref{def:comp-IC} (or \Cref{def:comp-bayesian-IC}) must hold when 
$\mcal{C}$ contains the platform and 
an arbitrary non-empty set of at most $c$ buyers. 
\end{itemize}

\elaine{TODO: discuss why not consider other coalitions}


\subsection{Strong Computational Incentive Compatibility}

We also define a stronger variant of our computational IC properties.
Intuitively, these stronger notions 
capture 
the idea that the only way to gain over the honest strategy is to 
break the cryptography. 
In other words, 
if the strategic player or coalition is restricted
to strategies that do 
not involve breaking cryptography, then the $\negl(\lambda)$
term in 
\Cref{def:comp-IC,def:comp-bayesian-IC}
would be forced to $0$.
More specifically, we require that any strategy 
that does nothing more than using 
non-truthful values, injecting fake bids, and dropping out
should not do better than the honest strategy. 

\paragraph{Input replacement strategies.}
We say that a coalition (or an individual buyer) $\mcal{C}$ adopts an 
{\it (extended) input replacement} strategy 
if it uses as input 
an arbitrary bid vector ${\bf b}_{\mcal{C}}$
which need not be the same as the true values ${\bf v}_\mcal{C}$
of the buyers in $\mcal{C}$. 
Otherwise, $\mcal{C}$ follows the honest protocol.
With extended input 
replacement, the length
of the strategic bid vector ${\bf b}_{\mcal{C}}$ need not 
be the same as ${\bf v}_\mcal{C}$, 
i.e., $\mcal{C}$ may inject fake bids or have some colluding buyers drop 
out. By contrast, an input replacement 
strategy is more restrictive and requires that the two lengths be the same.  
\elaine{there's some implicit assumption about the identities here}

\begin{definition}[Strong computational (Bayesian) incentive compatibility]
We say that an auction satisfies strong computational (Bayesian)
incentive compatibility 
w.r.t. the strategic player or coalition $\mcal{C}$, 
iff 
\begin{itemize}[leftmargin=6mm,itemsep=1pt]
\item it satisfies 
\Cref{def:comp-IC} (or \Cref{def:comp-bayesian-IC}), 
and 
\item 
if $\mcal{C}$ is  
restricted to \ppt extended input replacement strategies, 
then 
\Cref{def:comp-IC} (or \Cref{def:comp-bayesian-IC})
is satisfied even when 
the $\negl(\cdot)$ function is forced to $0$. 
\end{itemize}
\label{defn:strongcompic}
\end{definition}

Based on \Cref{defn:strongcompic}, we can define
strong computational (Bayesian) 
bIC, sIC, pIC, pbIC, and psIC, respectively.

\begin{remark}[Computational IC by utility-dominated emulation 
implies strong computational IC.]
Later in \Cref{sec:defnsim}, we will define
a notion of utility-dominated emulation 
which can be viewed as an alternative way to define
computational IC (see \Cref{rmk:alternate-comp-ic}). 
This utility-dominated emulation 
framework also gives a new paradigm for designing 
and reasoning about the IC properties of cryptographic auctions. 
It is not hard to see that  
any auction that is a utility-dominated 
emulation of an incentive compatible ideal auction 
satisfies our strong computational IC notion.
\end{remark}

\subsection{Information-Theoretic Incentive Compatibility}
\label{sec:defn-it-ic}

\elaine{TODO: rewrite this section, 
see whether we still need to index it with lambda.
and whether we should unroll the ensemble notation 
here 
}

Given an auction parametrized by $\lambda$ over
the the family of value domains $\inDomSecu$, 
we say that the auction 
satisfies 
{\it information-theoretic} (Bayesian) incentive compatibility
against some coalition $\mcal{C}$,
iff 
\Cref{def:comp-IC} (or \Cref{def:comp-bayesian-IC}) 
holds
but with the following modifications:
\begin{itemize}[itemsep=1pt]
\item 
it holds for not just \ppt but even unbounded strategies; 
\item 
the $\negl(\cdot)$ function in 
\Cref{def:comp-IC} (or \Cref{def:comp-bayesian-IC}) 
is forced to $0$; 
and 
\item the restriction that 
the length of ${\bf v}_{-\mcal{C}}$
and $n_H$
are polynomially bounded is removed. 
\end{itemize}

Equivalently, the definitions can also be rephrased as the following
(expanding the ensemble notations for clarity): 

\begin{definition}[Information theoretic incentive compatibility]
Given an auction parametrized by $\lambda$ over
the family of value domains $\inDomSecu$, 
we say that the auction 
satisfies 
{\it information-theoretic incentive compatibility}
against some coalition $\mcal{C}$,
iff for any $\lambda$, 
for any honest value vector $\bfb_{-\mcal{C}} \in \inDomSecu^*$, 
for any value vector $\bfv_{\mcal{C}} \in \inDomSecu^{|\mcal{C}|}$ 
of the coalition $\mcal{C}$, 
for any strategy $S$ of the coalition $\mcal{C}$,  
$\E[{\sf util}^S_{\mcal{C}}(1^\lambda, \bfb_{-\mcal{C}}, 
{\bf v}_{\mcal{C}})] \leq \E[{\sf util}^H_{\mcal{C}}(1^\lambda, \bfb_{-\mcal{C}}, 
{\bf v}_{\mcal{C}})]$
where the notations 
${\sf util}^H(1^\lambda, \bfb_{-\mcal{C}}, \bfv_{\mcal{C}})$ 
and 
${\sf util}^S(1^\lambda, \bfb_{-\mcal{C}}, \bfv_{\mcal{C}})$ 
are defined
in the same way as in \Cref{def:comp-IC}.
\label{defn:it-ic}
\end{definition}

\begin{definition}[Information theoretic Bayesian incentive compatibility]
Given an auction parametrized by $\lambda$ over
the family of value domains $\inDomSecu$, 
we say that the auction 
satisfies 
{\it information-theoretic Bayesian incentive compatibility}
against some coalition $\mcal{C}$,
iff for any $\lambda$, 
for any 
$n_H$ that corresponds to the number of honest buyers, 
for any value vector $\bfv_{\mcal{C}} \in \inDomSecu^{|\mcal{C}|}$ 
of the coalition $\mcal{C}$, 
for any strategy $S$ of the coalition $\mcal{C}$,  
$\E[{\sf util}^S_{\mcal{C}}(1^\lambda, n_H, 
{\bf v}_{\mcal{C}})] \leq \E[{\sf util}^H_{\mcal{C}}(1^\lambda, n_H, 
{\bf v}_{\mcal{C}})]$
where the notations 
${\sf util}^H(1^\lambda, n_H, \bfv_{\mcal{C}})$ 
and ${\sf util}^S(1^\lambda, n_H, \bfv_{\mcal{C}})$ are defined
in the same way as in \Cref{def:comp-bayesian-IC}.
\label{defn:it-bayesian-ic}
\end{definition}

\begin{remark}[Note about non-cryptographic auctions over reals]
In classical mechanism design (without cryptography), 
we typically consider a single auction
whose value domain is over non-negative reals,  
and the auction is 
not parametrized by $\lambda$. 
This can be viewed as a special case of 
a family of auctions parametrized by $\lambda$, 
where
 $\inDomSecu = \mathbb{R}_{\geq 0}$
for any $\lambda$, 
and further, the auction protocol does not depend on $\lambda$ 
--- note that for the information theoretic setting,
we can remove the constraint that $\inDomSecu$ must be a finite
domain where each value is 
encoded by polynomial in $\lambda$ bits. 
In this sense, 
\Cref{defn:it-ic,defn:it-bayesian-ic} 
can be interpreted for classical auctions over a real-valued domain as well. 
\label{rmk:classical}
\end{remark}

\elaine{we used the terminology IC under extended input replacement,
it means strict IC, but the strategy needs to be poly time.
we should define IC under XXX strategies.}

\elaine{TODO: if we set coalition util = -inf, IT psIC is equivalent to
credible from a feasibility perspective, do it in appendix}

\paragraph{Comparison of our psIC notion and credible.}
Our information theoretic psIC notion can be viewed
as a desirable strengthening of the credible
notion proposed by Akbarpour and Li~\cite{credibleauction}.
Specifically, their notion requires that honest behavior
be utility-maximizing
among only strategies that must be safe with probability $1$.
However, a coalition may consider adopting a strategy
that risks generating an unsafe trace with extremely small
probability, and the vast majority of times, the strategy results
in a safe execution that benefits the coalition. 
In our formulation, we simply force the coalition's utility to be $0$
on unsafe traces when computing the coalition's expected utility. 
In other words, Akbarpour and Li's formulation can also be equivalently viewed
as forcing the coalition's
utility to be $-\infty$ upon an unsafe trace in our framework. 

Moreover, we define a safe 
trace as one where all honest players accept, whereas  
Akbarpour and Li
requires a safe trace to be one where there exists a possibly different
innocent explanation to every honest player. 
In the information theoretic setting, 
for any protocol that is credible  by their notion, we can simply
augment the protocol by having the platform providing an explanation
to every honest player, and the player accepts
if the explanation is valid.  
On the other hand, in a computationally bounded setting, searching
for an innocent explanation 
may not be feasible within polynomial time. 
Therefore, our new modeling approach
is necessary for the definitions to be compatible
with computationally bounded agents.

\subsection{Preliminary: Myerson's Lemma}
We will rely on the famous Myerson's Lemma for our impossibility
proofs. 
Below we state a version of Myerson's Lemma
that holds regardless whether the value domain is continuous
or discrete. 

\elaine{comment on the input replacement strategy} 

Below, we use the notation 
$\bfx(\bfb) \in [0, 1]^{|\bfb|}$
to denote 
the probabilities that each buyer gets an item 
under the value vector $\bfb$ 
(assuming that every one acts honestly).
Similarly, we use the 
random variables 
$\bfp(\bfb) \in \mathbb{R}_{\geq 0}^{|\bfb|}$
and $\mu_\seller (\bfb) \in \mathbb{R}_{\geq 0}$
to denote everyone's expected payment  as well as the seller's revenue 
under $\bfb$.  
We use $x_i(\cdot)$ and 
$p_i(\cdot)$ to denote the 
$i$-th coordinate of $\bfx$ and $\bfp$, that is, 
the $i$-th buyer's probability of getting an item and  
its expected payment. 
Note that $\bfx$, $\bfp$, and $\mu_\seller$ are well-defined
even for multi-round auctions.

\begin{lemma}[Myerson's Lemma]
Suppose an auction satisfies bIC under input replacement strategies, 
then the following must hold:
for any buyer $i$, any 
value vector ${\bf b}_{-i}$ corresponding to all other buyers, 
\begin{itemize}[leftmargin=5mm]
\item 
{\bf Monotone allocation rule:}
suppose $b'_i > b_i$, 
it must be that 
$x_i({\bf b}_{-i}, b'_i) \geq 
x_i({\bf b}_{-i}, b_i)$. 
\item 
{\bf Payment sandwich:}
suppose $b'_i > b_i$, 
then 
$$
b_i \cdot  \left(x_i({\bf b}_{-i}, b'_i) - x_i({\bf b}_{-i}, b_i)\right)
\leq 
p_i({\bf b}_{-i}, b'_i) 
- p_i({\bf b}_{-i}, b_i) 
\leq 
b'_i \cdot  \left(x_i({\bf b}_{-i}, b'_i) - x_i({\bf b}_{-i}, b_i)\right). 
$$ 
\end{itemize} 
\label{lem:myerson}
\end{lemma}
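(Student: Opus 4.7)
\begin{proofof}{Plan for Myerson's Lemma}
The plan is to derive both the monotone allocation rule and the payment sandwich from a single standard application of bIC under input replacement strategies, invoked twice with the roles of $b_i$ and $b'_i$ as ``true value'' swapped. Fix an arbitrary buyer $i$, an arbitrary vector $\bfb_{-i}$ of other buyers' bids, and two values $b_i, b'_i \in \inDomSecu$ with $b'_i > b_i$. Because bIC under input replacement holds for any fixed $\bfb_{-i}$, I would consider two worlds: (i) buyer $i$ has true value $b'_i$ and contemplates the input replacement strategy that bids $b_i$ instead; (ii) buyer $i$ has true value $b_i$ and contemplates bidding $b'_i$ instead. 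In each case, bIC forces truthful bidding to (weakly) dominate the deviation.

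Writing buyer $i$'s expected utility at true value $v_i$ against the vector $\bfb_{-i}$ when bidding $b$ as $v_i \cdot x_i(\bfb_{-i}, b) - p_i(\bfb_{-i}, b)$, the two instances of bIC translate to
\begin{align*}
b'_i \cdot x_i(\bfb_{-i}, b'_i) - p_i(\bfb_{-i}, b'_i) &\geq b'_i \cdot x_i(\bfb_{-i}, b_i) - p_i(\bfb_{-i}, b_i), \\
b_i \cdot x_i(\bfb_{-i}, b_i) - p_i(\bfb_{-i}, b_i) &\geq b_i \cdot x_i(\bfb_{-i}, b'_i) - p_i(\bfb_{-i}, b'_i).
\end{align*}
Rearranging the first inequality yields the upper half of the payment sandwich, $p_i(\bfb_{-i}, b'_i) - p_i(\bfb_{-i}, b_i) \leq b'_i \cdot (x_i(\bfb_{-i}, b'_i) - x_i(\bfb_{-i}, b_i))$, and rearranging the second yields the lower half, $b_i \cdot (x_i(\bfb_{-i}, b'_i) - x_i(\bfb_{-i}, b_i)) \leq p_i(\bfb_{-i}, b'_i) - p_i(\bfb_{-i}, b_i)$. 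Chaining the two gives $(b'_i - b_i) \cdot (x_i(\bfb_{-i}, b'_i) - x_i(\bfb_{-i}, b_i)) \geq 0$, and since $b'_i - b_i > 0$, monotonicity of $x_i$ follows immediately.

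The only subtlety I would need to address is ensuring that both $b_i$ and $b'_i$ are admissible as true values for buyer $i$; this is immediate because $b_i, b'_i \in \inDomSecu$ by hypothesis, and bIC under (deterministic) input replacement strategies is required for every true value in $\inDomSecu$ and every fixed $\bfb_{-i}$. A secondary subtlety is that $x_i(\cdot)$ and $p_i(\cdot)$ denote expected allocations and payments taken over the randomness of the honest protocol execution (including multi-round interactions), but this is exactly the quantity against which expected utility is defined, so the two bIC inequalities hold coordinate-wise in these expectations as written. I do not anticipate any real obstacle; the lemma is a textbook consequence of bIC, and the entire argument is essentially algebraic once the two ``swap'' instances of bIC are written down.
\end{proofof}
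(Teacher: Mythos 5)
Your proof is correct and is the standard textbook derivation of Myerson's lemma: instantiate bIC twice with the roles of $b_i$ and $b'_i$ as ``true value'' swapped, rearrange each inequality to get one side of the payment sandwich, and subtract the two to get monotonicity. The paper states this lemma as a known preliminary result and does not give its own proof, so there is nothing to compare against; your argument is exactly what one would write. The one thing worth being slightly careful about, which you do flag, is that $x_i(\cdot)$ and $p_i(\cdot)$ are expectations over the honest protocol's randomness (and, for multi-round protocols, over the full interaction), so the bIC condition is being applied to the corresponding expected utilities — this is consistent with how the paper defines utility under an input replacement strategy, so no issue arises.
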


The above statement of the Myerson's Lemma  
(\Cref{lem:myerson}) 
holds even when the value domain is discrete --- in this case, 
the variables ${\bf b}_{-i}$, $b_i$, and $b'_i$
in the statement \Cref{lem:myerson}
are assumed to take values from the discrete value domain.

In the standard literature, Myerson's Lemma 
is sometimes stated using a unique payment rule in lieu of  
the payment sandwich of \Cref{lem:myerson}.  
Note that the unique payment rule does not hold
in the case of a discrete value domain.

\elaine{TODO: do the bayesian case here.}

The Myerson's Lemma can be extended to the Bayesian setting stated as follows.

\begin{lemma}[Myerson's Lemma in Bayesian setting]
Suppose an auction satisfies Bayesian bIC 
under input replacement strategies
w.r.t. some distribution $\mcal{D}$, 
then for any $n \in \N$, and for any buyer $i$, the following must hold:
\begin{itemize}[leftmargin=5mm]
\item 
{\bf Monotone allocation rule:}
suppose $b'_i > b_i$, 
it must be that 
$\displaystyle \mathop{\E}_{{\bf b}_{-i} \getr \mcal{D}^n}[x_i({\bf b}_{-i}, b'_i)] \geq 
\displaystyle \mathop{\E}_{{\bf b}_{-i} \getr \mcal{D}^n}[x_i({\bf b}_{-i}, b_i)]$.
\item 
{\bf Payment sandwich:}
suppose $b'_i > b_i$, 
then it must be
\begin{multline*}
	\underset{{\bf b}_{-i} \getr \mcal{D}^n}{\E}
	\left[b_i \cdot  \left(x_i({\bf b}_{-i}, b'_i) - x_i({\bf b}_{-i}, b_i)\right)\right]
	\leq 
	\underset{{\bf b}_{-i} \getr \mcal{D}^n}{\E}
	\left[p_i({\bf b}_{-i}, b'_i) 
	- p_i({\bf b}_{-i}, b_i)\right] \\
	\leq 
	\underset{{\bf b}_{-i} \getr \mcal{D}^n}{\E}
	\left[b'_i \cdot  \left(x_i({\bf b}_{-i}, b'_i) - x_i({\bf b}_{-i}, b_i)\right)\right]. 
\end{multline*}
\end{itemize} 
\label{lem:myerson-bayesian}
\end{lemma}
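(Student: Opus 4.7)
The plan is to reduce the Bayesian version to the structure of the classical Myerson argument by working with the \emph{interim} allocation and payment functions, obtained by averaging over the prior on other bidders' inputs. Concretely, for a fixed $n$ and a fixed buyer $i$, I would define
\[
\bar{x}_i(b) \;\eqdef\; \underset{{\bf b}_{-i} \getr \mcal{D}^n}{\E}\bigl[x_i({\bf b}_{-i}, b)\bigr], \qquad
\bar{p}_i(b) \;\eqdef\; \underset{{\bf b}_{-i} \getr \mcal{D}^n}{\E}\bigl[p_i({\bf b}_{-i}, b)\bigr].
\]
By linearity of expectation, the payment-sandwich and monotonicity conclusions stated in the lemma are equivalent to the corresponding inequalities on $\bar{x}_i$ and $\bar{p}_i$.

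Next I would unroll the hypothesis of Bayesian bIC under input replacement strategies. Taking the strategic player $\mcal{C}$ to be buyer $i$ alone, consider the input-replacement strategy that ignores the true value $v_i$ and always submits a fixed bid $b_i \in \inDomSecu$. The expected utility of this strategy, when the true value is $v$, is exactly $v \cdot \bar{x}_i(b_i) - \bar{p}_i(b_i)$, because the Bayesian execution samples the other $n$ inputs i.i.d. from $\mcal{D}$ and then runs the honest allocation/payment rule on $({\bf b}_{-i}, b_i)$. Bayesian bIC therefore yields, for every pair of values $v, b \in \inDomSecu$,
\[
v \cdot \bar{x}_i(v) - \bar{p}_i(v) \;\geq\; v \cdot \bar{x}_i(b) - \bar{p}_i(b).
\]

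Now I would apply this interim IC inequality twice, with the roles of $b_i < b'_i$ swapped, exactly as in the classical argument. Instantiating $v = b'_i, b = b_i$ gives $\bar{p}_i(b'_i) - \bar{p}_i(b_i) \leq b'_i\bigl(\bar{x}_i(b'_i) - \bar{x}_i(b_i)\bigr)$, and instantiating $v = b_i, b = b'_i$ gives $b_i\bigl(\bar{x}_i(b'_i) - \bar{x}_i(b_i)\bigr) \leq \bar{p}_i(b'_i) - \bar{p}_i(b_i)$. Chaining these yields the payment sandwich. For monotonicity, combining the two endpoints of the sandwich gives $(b'_i - b_i)\bigl(\bar{x}_i(b'_i) - \bar{x}_i(b_i)\bigr) \geq 0$, and since $b'_i > b_i$ we conclude $\bar{x}_i(b'_i) \geq \bar{x}_i(b_i)$.

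The only real subtlety — and the step I would be most careful with — is verifying that an honest buyer with true value $v$ who deviates to bid $b$ indeed produces interim quantities equal to $\bar{x}_i(b)$ and $\bar{p}_i(b)$. This uses (i) weak symmetry, so that the allocation/payment of slot $i$ depends only on the bid entered into that slot and the other bids (up to identity tie-breaking which is fixed independently of $v$), and (ii) the fact that the other $n$ buyers' values are drawn i.i.d. from $\mcal{D}$ independently of buyer $i$'s behavior, so the expectation defining $\bar{x}_i, \bar{p}_i$ is exactly the expectation that appears inside the Bayesian IC condition. Once that identification is made, the rest of the proof is the two-line algebraic manipulation above.
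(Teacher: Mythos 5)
Your proof is correct and is the standard argument; the paper states this lemma as a preliminary without giving a proof, so there is nothing to compare against, but your reduction to the interim functions $\bar{x}_i, \bar{p}_i$ followed by the two instantiations of the incentive inequality is exactly the canonical route. One small remark: the appeal to weak symmetry in your last paragraph is unnecessary. The interim quantities $\bar{x}_i(\cdot)$ and $\bar{p}_i(\cdot)$ are already defined for the \emph{fixed} identity $i$, and an input-replacement deviation by buyer $i$ keeps that same identity; the allocation and payment rules are functions of the submitted bid vector only (they never see the true value $v$), so the identification of the deviation's expected utility with $v\cdot\bar{x}_i(b)-\bar{p}_i(b)$ follows directly from the definition of Bayesian bIC and independence of $\bfb_{-i}$ from buyer $i$'s strategy, with no symmetry assumption needed.
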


\elaine{move to the lb section:
Given a computational (Bayesian) bIC mechanism, 
we say that it is {\it natural}
iff for any strategic buyer $i$, 
 any true value $v_i$ of the buyer, 
no input-replacement strategy can allow $i$ to achieve better expected  
utility than behaving honestly, 
where the expectation is taken over the randomness
of the mechanism (or both randomness of the mechanism and all other players' 
true values). 
In other words, although a computationally bIC 
allows a negligible slack in the incentive compatibility (IC)
notion for general strategies, we require that 
when restricted to 
input replacement strategies only, 
this slack must be 0. 
For example, any auction that is  
a utility-dominated emulation 
of an ideal (Bayesian) bIC auction 
must satisfy this naturalness
requirement --- 
see \Cref{defn:sim} of \Cref{sec:defnsim}
for the notion of utility-dominated emulation. 
In fact, as we point out in \Cref{rmk:alternate-comp-ic},
utility-dominated emulation 
of an ideal IC auction 
is an alternative way of defining computational IC.
}

\section{Inefficient Information-Theoretic Feasibility}
\label{sec:ascending}

In this section, we show that the ascending auction
where the platform gets a fixed fee  
satisfies bIC, pIC, 1-pbIC, as well as Bayesian sIC  
and Bayesian psIC.   \elaine{TODO: double check}
Like Akbarpour and Li~\cite{credibleauction}, 
rather than introducing a model for continuous-time auctions, 
for simplicity, 
we shall assume a value domain $\inDom$ that is discretized
and consists of the  
values $0 = \theta_1 < \theta_2 < \ldots \theta_T$.  
Since we are in the information theoretic setting, 
we do not need to 
index the auction with the security
parameter $\lambda$ (see \Cref{rmk:classical}).

We assume that honest buyers' true values
are sampled independently from the distribution 
$\mcal{D}$. 
We assume all buyers' have the same distribution
$\mcal{D}$ 
for convenience, but our results readily extend
to the setting where 
all buyers' values are sampled independently but from
different distributions. 

Given $\mcal{D}$ over $\inDom$ with the 
cumulative distribution function $F$
and 
probability density function $f$, we 
define the \emph{virtual value} 
of $\theta_i$ as 
$\phi(\theta_i) := \theta_i - \frac{1 - F(\theta_i)}{f(\theta_i)}\left(
\theta_{i+1} - \theta_i\right)$, and 
let $\phi(\theta_T) = \theta_T$. \elaine{double check}
We say that the distribution $\mcal{D}$
is {\it regular}, iff
$\phi(\cdot)$
is a strictly increasing function, i.e., $v > v' \Longrightarrow
\phi(v) > \phi(v')$. 

\begin{mdframed}
\begin{center}
{\bf Ascending auction with reserve and fixed platform fee}
\end{center}

\vspace{-8pt}
\paragraph{Input:} each buyer $i$ has a true value $v_i$.

\vspace{-5pt}
\paragraph{Auction protocol:}

\begin{itemize}[leftmargin=6mm,itemsep=1pt]
\item 
{\it Register:} 
Let $\tau_0$ be the smallest positive integer
such that $\phi(\theta_{\tau_0}) \geq 0$. 
Every buyer $i$ whose value is at least 
$\theta_{\tau_0}$
sends ${\sf register}$ to the platform. 
Let $R$ be the set of buyers that have registered. 
\item 
{\it Auction:} 
{Each round $\tau  = \tau_0, \tau_0 + 1, \ldots, T$}:
\begin{itemize}[leftmargin=6mm,itemsep=1pt]
\item 
If a buyer $i$ has not received $({\sf stop}, \_)$ from the platform, 
it checks whether $v_i > \theta_\tau$. If so, 
send ${\sf ok}$ to the platform; else 
send $\bot$ to the platform, 
and stop sending messages in future rounds. 
\item 
Let $\mcal{I}$ be the set of buyers that either sent
$\bot$ in round $\tau$ or failed to respond in round $\tau$;
the platform lets $R \leftarrow R \backslash \mcal{I}$. 
\item 
If $|R| \leq k$, the platform 
posts $\tau$ to the blockchain and 
sends the following messages to the buyers and the seller: 
\begin{enumerate}[itemsep=1pt,label=(\alph*)]
\item
It sends $({\sf stop}, 1)$ to all buyers in $R$. 
\item 
If $\tau \neq \tau_0$, 
the platform randomly chooses $k - |R|$ buyers
from the set $\mcal{I}$ and sends $({\sf stop}, 1)$
to them, and it  
$({\sf stop}, k)$ to the seller. 
Else, it sends $({\sf stop}, |R|)$ to the seller. 
\item 
To all remaining buyers, the platform sends $({\sf stop}, 0)$. 
\end{enumerate}
\end{itemize}
\hspace{1mm}
At the end of round $T$, if $|R| > k$, 
the platform posts $\bot$ to the blockchain. 
\item
{\it Acceptance:}
Any buyer or the seller accepts the auction 
iff one of the following holds:
\begin{enumerate}[leftmargin=6mm,itemsep=1pt]
\item 
It received $({\sf stop}, \_)$ from the platform in some round $\tau$,
and the platform has posted 
the same $\tau$ to the blockchain. 
\item 
It has not received $({\sf stop}, \_)$ from the platform, 
and the platform has posted $\bot$ to the blockchain. 
\end{enumerate}
\item 
{\it Outcomes:}
The platform always accepts the auction
and it gets no revenue (or a fixed fee that is independent
of the auction). 
If a buyer or the seller
accepts the auction, it can determine its 
private outcome as follows\footnote{Or equivalently, the platform
sends the private outcomes to each player and each player
rejects if the received private outcome does
not agree with what it has computed on its own.}.
\begin{itemize}[leftmargin=6mm,itemsep=1pt]
\item 
If the platform has posted $\bot$ to the blockchain, then no item is sold.
\item 
Otherwise, let $\tau$ be the number posted to the blockchain. 
A buyer that has 
received $({\sf stop}, 1)$
gets an item and pays $\theta_\tau$, 
otherwise it does not get an item.  
Let $({\sf stop}, \widetilde{k})$ 
be the message the seller received from the platform earlier, 
the seller's private outcome  
is $(\widetilde{k}, \widetilde{k} \cdot \theta_\tau)$. 
\end{itemize}
\end{itemize}
\end{mdframed}

\begin{theorem}[Ascending auction with reserve and fixed platform fee]
The above ascending auction 
with reserve and fixed platform fee 
over a discretized value domain $\inDom$ 
satisfies the following properties:
\begin{itemize}[leftmargin=6mm,itemsep=1pt]
\item 
it satisfies (information theoretic) bIC, pIC, and 1-pbIC regardless of the 
value distribution $\mcal{D}$;
\item 
suppose buyers' true values are sampled
independently from a regular distribution $\mcal{D}$,  
then, the auction additionally satisfies
(information theoretic) Bayesian sIC and Bayesian psIC; 
\item 
suppose the buyers' true values are sampled independently 
from a regular distribution $\mcal{D}$, 
then, 
the auction is 
\elaine{double check expr}
$k \cdot {\sf tick}(\inDom)$-approximately revenue maximizing
where 
${\sf tick}(\inDom) := \max_{2 \leq j \leq T} (\theta_j - \theta_{j - 1})$. 
\end{itemize}
\label{thm:ascend}
\end{theorem}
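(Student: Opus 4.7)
The plan is to handle the IC properties in two stages and then the revenue bound. First I would dispatch the distribution-free IC properties (bIC, pIC, 1-pbIC); then the Bayesian properties (sIC, psIC) under regularity of $\mcal{D}$; and finally the $k \cdot \mathsf{tick}(\inDom)$ revenue-optimality bound.

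For bIC I would apply the standard weakly-dominant-strategy analysis for English/ascending auctions: given any other bids, buyer $i$'s utility is maximized by registering iff $v_i \geq \theta_{\tau_0}$ and dropping at the first round $\tau$ with $\theta_\tau \geq v_i$, since any earlier drop forfeits a positive-utility win and any later drop risks winning at price $\theta_\tau \geq v_i$. For pIC, on every safe trace the platform's utility equals the fixed fee $F$ independently of the realized outcome, whereas on every unsafe trace it is $0$; an honest execution is safe with probability $1$, so no deviation can strictly improve it. For 1-pbIC with coalition $\mcal{C} = \{\text{platform}, c\}$, I would first expand the coalition-utility formula to observe that on any safe trace it simplifies to $F - p_c + v_c \cdot \mathbf{1}[c \text{ wins}]$, i.e., a constant plus $c$'s own utility. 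The substantive step is then to argue that any platform deviation that alters the outcome relative to a faithful run on the received bid vector is detected: an honest buyer $i$ who sent $\mathsf{ok}$ at round $\tau$ and receives $(\mathsf{stop}, 0)$ with posted round $\tau' \leq \tau$ sees a contradiction (they were in $R$ at the end of $\tau'$ and hence should have received $(\mathsf{stop}, 1)$), and rejects. Once the platform is forced to honestly simulate the auction on the received bids, bIC applied to $c$ closes the case, since any shill bid the coalition injects can only hurt (either losing to $c$'s bid, or displacing $c$ at a weakly higher effective price).

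For Bayesian sIC under regularity, I would use that $\theta_{\tau_0}$ is the discretized monopoly reserve and invoke the classical fact that with reserve set to the monopoly reserve the expected marginal revenue from any additional bidder is nonpositive, so no shill-bidding strategy strictly improves expected revenue over honest play. For Bayesian psIC, the platform additionally has real-time access to all bids, enabling adaptive shill strategies; my plan is to couple any adaptive shill with a static shill whose value equals the drop threshold the adaptive shill would have used, reducing to Bayesian sIC, while non-shill platform deviations are ruled out by the same consistency argument as in 1-pbIC. For the revenue bound, I would compare to Myerson's optimal $k$-unit mechanism for a regular distribution, which allocates to the top $k$ buyers with $\phi(v_i) \geq 0$ and charges each the critical price $\max(\theta_{\tau_0}, v_{(k+1)})$; our auction allocates essentially the same set (up to ties at a discretization boundary) and charges each winner $\theta_\tau$, the smallest discrete value at least $v_{(k+1)}$, so the per-winner revenue loss is at most $\mathsf{tick}(\inDom)$, and summing over at most $k$ winners gives the bound. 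The hardest step I expect is Bayesian psIC: rigorously coupling adaptive platform-driven shill strategies with static shills in the multi-round setting, and uniformly ruling out all non-shill platform deviations via the consistency-check argument.
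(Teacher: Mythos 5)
Your treatment of bIC, pIC, 1-pbIC, and the revenue bound is substantially aligned with the paper's proof (Appendix B). For bIC the paper likewise does the standard drop-at-the-right-threshold argument plus the observation that shills only raise a winner's payment; for pIC the paper observes the platform's utility is the constant fee on every safe trace; for 1-pbIC the paper, like you, shows the coalition utility on a safe trace collapses to $(v_c - \theta_\tau)\cdot\mathbf{1}[c\text{ wins}]$ and that lowering $\theta_\tau$ forces the platform to deny an item to some still-active honest buyer (or oversell), which is unsafe; and for revenue the paper compares the ascending auction to the discrete optimal (the 2nd-price-with-reserve auction of Section~8, whose optimality it gets from Elkind) and bounds the per-winner gap by $\mathsf{tick}(\inDom)$ over at most $k$ winners. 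So those parts are fine, modulo the shared implicit assumption that a buyer rejects if it sent $\mathsf{ok}$ through round $\tau$ yet received $(\mathsf{stop},0)$ at that round.

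The genuine gap is in the order and nature of your sIC/psIC argument. You propose to prove Bayesian sIC directly (via the ``marginal bidder is nonpositive at the monopoly reserve'' fact) and then derive Bayesian psIC by coupling adaptive platform shills to static seller shills. This reduction goes the \emph{hard} way: psIC is a strictly stronger requirement than sIC, because the platform--seller coalition observes the full auction transcript in real time (it sees exactly who drops out at each round), while in sIC the seller only ever interacts with the honest platform and observes nothing until the end. In particular, the coalition's deviations are not all shills: it can directly refuse to stop at the honest stopping round $\tau$ and keep ascending even when $|R|\le k$, with the stopping time chosen adaptively as a function of the observed drop-outs. No static (or even adaptive but un-informed) shill reproduces that. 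Your ``couple the adaptive shill with a static shill at the realized drop threshold'' step cannot work as stated, because the realized threshold is a random variable measurable with respect to precisely the information the static shill does not have; what is needed is an argument that this extra information is useless, which is essentially the whole content of the psIC claim. The paper resolves this in the opposite order: it proves Bayesian psIC directly, by showing that conditioned on any buyer $i$ still being active at the honest stopping round $\tau$, continuing to some $\tau^*>\tau$ yields expected revenue from $i$ equal to $\theta_{\tau^*}\cdot\Pr[v_i\ge\theta_{\tau^*}\mid v_i\ge\theta_\tau]$, and that regularity forces $\theta(1-F(\theta))$ to be non-increasing above $\theta_{\tau_0}$, so every such continuation is weakly unprofitable. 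Bayesian sIC then follows for free: any profitable seller strategy with an honest platform would be a profitable (safe) coalition strategy. If you want to keep your decomposition, the missing lemma you must supply is a proper optional-stopping/backward-induction argument showing that the coalition's conditional continuation value is dominated by its stopping value at every round $\tau'\ge\tau$, which is essentially the paper's direct psIC proof in disguise.
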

In the above, $\epsilon$-approximately revenue maximizing
means that assuming that buyers' values are sampled  
independently from $\mcal{D}$, the expected revenue 
is at most $\epsilon$ away from the bIC auction 
that maximizes the expected revenue. 

The proof of \Cref{thm:ascend}
is deferred to  \Cref{sec:ascend-proof}.

\section{Impossibility Results}

\subsection{bIC + 1-pbIC $\Longrightarrow$ Almost Zero Platform Revenue}
We first show  
that any auction that satisfies bIC and pbIC (even in the Bayesian sense)
must suffer from almost no platform revenue.  
In fact, in the information theoretic setting, 
the platform revenue must be $0$. 
For the computational setting, there is a small slack
related to the discretization of  
the value domain. 
Typically, 
the value domain should have super-polynomially many ticks
to have sufficient precision in the encoding scheme, 
in this case, the resulting slack is negligibly small. 

\paragraph{Interpretation.}
Recall that in our formal 
model in \Cref{sec:platform-assisted-auction}, we require that 
the total revenue of the seller and the platform 
do not exceed the total payment. 
When this requirement is relaxed, the $0$ platform revenue restriction 
should actually be interpreted as 
having {\it fixed platform revenue}
that is independent of the auction's revenue. In practice,
it is possible to charge a fixed entry fee to the seller and possibly 
each (winning) buyer on top of the revenue earned from the auction.  

The proof
techniques are inspired
by the recent transaction fee mechanism 
literature~\cite{foundation-tfm,crypto-tfm}, and we extend
the techniques to work for computationally IC and multi-round mechanisms.


\paragraph{Additional notations.}
For some countable domain $\inDom \subset \mathbb{R}$, 
we say that $a, b \in \inDom$ are adjacent, denoted
$a \overset{\text{adj}}{\sim} b$ iff $\inDom \cap (a, b) = \emptyset$. 
We define 
\[
{\sf tick}(\inDom) := 
\max_{a, b \in \inDom, \ 
a \overset{\text{adj}}{\sim} b
}
|a - b| 
\]
Henceforth we will use 
the notation 
$\bfx(\bfb) \in [0, 1]^{|\bfb|}$
to denote 
the probabilities that each buyer gets an item 
under the value vector $\bfb$ 
(assuming that every one acts honestly).
Similarly, we use the 
random variables 
$\bfp(\bfb) \in \mathbb{R}_{\geq 0}^{|\bfb|}$
and $\mu_\seller (\bfb) \in \mathbb{R}_{\geq 0}$
to denote everyone's expected payment  as well as the seller's revenue 
under $\bfb$.  
We use $x_i(\cdot)$ and 
$p_i(\cdot)$ to denote the 
$i$-th coordinate of $\bfx$ and $\bfp$, that is, 
the $i$-th buyer's probability of getting an item and  
its expected payment.

\elaine{TODO: tick should be redefined when 0 is not part of domain}

\ignore{j
In the context of cryptographic protocols, 
it is natural to let the value domain $\inDomSecu$
to be parametrized by the security parameter $\lambda$. 
Intuitively, this also allows 
players to encode their values with more bits and higher precision 
with a larger choice of $\lambda$.

\begin{example}[Natural encoding]
For example,
we can define $\inDomSecu := \{0, 1\}^\lambda$
and use it 
to encode a real-valued domain $[0, M]$ as follows: 
for 
$i \in \{0, 1, \ldots, 2^\lambda-1\}$, the $i$-th string
encodes the value $i \cdot M/(2^\lambda -1)$. 
In this case, we say that $\inDomSecu$
is the natural encoding of $[0, M]$. 
\label{eg:naturaldomain}
\end{example}
}

\begin{lemma}
\label{lem:revenue-tick}
Let $\inDom \subset \mathbb{R}^{\geq 0}$ 
be a countable subset of non-negative real numbers.  
Let $x : \inDom \rightarrow [0, 1]$ 
be a monotonically non-decreasing function.
and let 
$p: \inDom \rightarrow \mathbb{R}^{\geq 0}$ and  
$\mu: \inDom \rightarrow \mathbb{R}^{\geq 0}$
be two other functions.
Suppose that 
for any $b, b' \in \inDom$ where  
$b < b'$, 
the following sandwich inequality holds:  
$b \cdot (x(b') - x(b)) \leq  p(b') - p(b) \leq b' \cdot (x(b')- x(b))$. 
Further, suppose that for any 
$b', b \in \inDom$, it holds that
$\mu(b') - \mu(b) \leq p(b')-p(b) - (b \cdot (x(b') - x(b)))$. 
%
Then, it holds that 
for any $b, b'$, $\mu(b') - \mu(b) \leq {\sf tick}(\inDom) \cdot |x(b') - x(b)|$.  

Further, if the above conditions hold for a real-valued domain 
$\inDom = \mathbb{R}^{\geq 0}$, 
then,  
for any $b, b'$, $\mu(b') = \mu(b)$. 
\end{lemma}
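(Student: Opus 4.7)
The plan is to isolate a single ``Lipschitz-like'' per-pair estimate $|\mu(b') - \mu(b)| \leq (b' - b)(x(b') - x(b))$ for $b < b'$ in $\inDom$, and then convert it to the advertised bound by telescoping over adjacent chains (in the countable case) or by refining the partition (in the real case).

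To obtain the per-pair estimate, first I would combine the $\mu$-hypothesis with the upper half of the payment sandwich to get
\[
\mu(b') - \mu(b) \;\leq\; p(b') - p(b) - b\,(x(b') - x(b)) \;\leq\; (b' - b)(x(b') - x(b)).
\]
Next, applying the $\mu$-hypothesis with its two arguments interchanged (permitted, since the hypothesis is stated for arbitrary pairs) and using the lower half of the sandwich, which rearranges to $p(b) - p(b') \leq b\,(x(b) - x(b'))$, gives the symmetric reverse bound $\mu(b) - \mu(b') \leq (b'-b)(x(b') - x(b))$. Together these establish the per-pair estimate.

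For the countable case, I would pick a finite adjacent chain $b = b_0 < b_1 < \cdots < b_k = b'$ in $\inDom$ and telescope:
\[
\mu(b') - \mu(b) \;=\; \sum_{i=0}^{k-1}\bigl(\mu(b_{i+1}) - \mu(b_i)\bigr) \;\leq\; \sum_i (b_{i+1} - b_i)(x(b_{i+1}) - x(b_i)) \;\leq\; {\sf tick}(\inDom)\cdot (x(b') - x(b)),
\]
where the last step bounds each adjacent gap by ${\sf tick}(\inDom)$ and uses monotonicity of $x$ to collapse the non-negative increments to $x(b')-x(b) = |x(b')-x(b)|$. Running the same argument on $\mu(b)-\mu(b')$ yields the absolute-value version.

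For the real-valued case $\inDom = \mathbb{R}^{\geq 0}$ there are no adjacencies, so I would refine instead: fix $b < b'$, set $c_i = b + i(b'-b)/N$, and sum the per-pair estimate over $0 \leq i < N$ to get
\[
\mu(b') - \mu(b) \;\leq\; \sum_i (c_{i+1} - c_i)(x(c_{i+1}) - x(c_i)) \;\leq\; \frac{b'-b}{N}\,(x(b') - x(b)),
\]
after factoring out the uniform gap and telescoping via monotonicity. Letting $N \to \infty$ forces $\mu(b') - \mu(b) \leq 0$, and the symmetric estimate forces the reverse, yielding $\mu(b') = \mu(b)$. The step I expect to require the most care is the chaining in the countable case: a clustered countable $\inDom$ could in principle fail to admit a finite adjacent chain between $b$ and $b'$. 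For the discretised value domains of interest here (finite $\inDom$, or natural encodings such as $\{i/(2^{\ell}-1)\}$), such a chain is immediate, and I would make this mild discreteness hypothesis explicit where it is invoked.
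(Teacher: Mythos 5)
Your proof is correct and follows essentially the same telescoping strategy as the paper's: the per-pair Lipschitz estimate you isolate is exactly the combination of the $\mu$-hypothesis and the payment sandwich that the paper sums inline over the adjacent chain $b = b_0 < b_1 < \cdots < b_k = b'$. Your $N\to\infty$ refinement makes the continuous case rigorous where the paper merely gestures at ``taking ${\sf tick}(\inDom)$ infinitesimally small,'' and your caveat about needing a finite adjacent chain between $b$ and $b'$ is a real gap in generality that the paper glosses over, though it is moot in every application here since $\inDom$ is always taken finite.
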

\begin{proof}
We prove it for the case when $b' > b$
since the proof for the other direction when $b' < b$ is similar. 
Let $b_0 := b$ and $b_k := b'$, 
and suppose $b_0 < b_1 < b_2 < \ldots < b_{k-1} < b_k$, 
where for every $i \in [k-1]$, $b_i \in \inDom$, and 
$b_1, \ldots, b_{k-1}$ are all the values
in $\inDom$ 
between $b_0$ and $b_k$. 
For convenience, let $\mu_i = \mu(b_i)$, $p_i = p(b_i)$, and 
$x_i = x(b_i)$ for $i \in \{0, 1, \ldots, k\}$. 

\begin{align*}
\mu_k - \mu_0 
& = \mu_k - \mu_{k-1} + \mu_{k-1} - \mu_{k-2} + \ldots + \mu_1 - \mu_0 \\
 & \leq
\sum_{i \in [k]}\left(
(p_{i} - p_{i-1}) - b_{i-1}(x_i - x_{i-1})
\right)  \\
& \leq
\sum_{i \in [k]}\left(
b_i(x_i - x_{i-1}) - b_{i-1}(x_i - x_{i-1})
\right)\\
& =
\sum_{i \in [k]}
(b_i - b_{i-1})(x_i - x_{i-1})
\\
& \leq
\sum_{i \in [k]}
{\sf tick}(\inDom) \cdot (x_i - x_{i-1}) \leq {\sf tick}(\inDom) \cdot (x_k - x_0).
\end{align*}

The above concludes the proof
when $\inDom$ is a countable subset of the non-negative reals.
For the continuous case when $\inDom$ is the non-negative reals, 
the conclusion that $\mu(b') = \mu(b)$ follows directly by taking 
${\sf tick}(\inDom)$ to be infinitesimally small. 
\end{proof}


\begin{theorem}[bIC + pIC + 1-pbIC $\Longrightarrow$ Almost zero platform revenue]
Let $\Pi_\lambda$ be an auction for selling $k$ items 
satisfies strong computational Bayesian bIC,
strong computational Bayesian pIC, 
and strong computational Bayesian $1$-pbIC w.r.t~$\mcal{D}_\lambda$,
where 
$\mcal{D}_\lambda$ is a distribution over   
some finite domain $\inDom_\lambda \subset [0,1]$.
Then, for any $\lambda$, 
the platform's expected revenue under a random value vector
drawn from $\mcal{D}_\lambda^n$ is at most 
${\sf tick}(\inDom_\lambda) \cdot k \cdot (\ln n + O(1))$.

Further, if $\Pi_\lambda$ satisfies 
strong computational bIC and strong computational $1$-pbIC 
\elaine{we don't need the pic here}
both in the ex post setting, 
then,  for any $\lambda$,
for any input vector $\bfb \in \inDomSecu^n$,
the platform's expected revenue is at most 
${\sf tick}(\inDom_\lambda) \cdot k \cdot (\ln n + O(1))$.
\label{thm:0platformrev}
\end{theorem}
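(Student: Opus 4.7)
The plan is to upper-bound the marginal effect of each buyer's bid on the platform's revenue by combining bIC with $1$-pbIC, then telescope across buyers using weak symmetry to extract a $\ln n + O(1)$ factor.

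Fix a buyer $i$ and consider the coalition $\mcal{C} = \{\mcal{P}, i\}$ executing the input-replacement strategy in which $i$ has true value $a$ but bids $b$ while the platform plays honestly. Since buyer $i$'s payment to the platform is an internal transfer within $\mcal{C}$, the coalition's (expected) utility simplifies to $\mu_\mcal{P}(\bfb_{-i},b) - p_i(\bfb_{-i},b) + a\cdot x_i(\bfb_{-i},b)$, where $\mu_\mcal{P}$ denotes the platform's revenue. The ``strong'' clause of $1$-pbIC guarantees that this quantity is maximized exactly at $b = a$, so for every $a,b$ we obtain
\[
\mu_\mcal{P}(\bfb_{-i},b) - \mu_\mcal{P}(\bfb_{-i},a) \;\le\; p_i(\bfb_{-i},b) - p_i(\bfb_{-i},a) - a\bigl(x_i(\bfb_{-i},b) - x_i(\bfb_{-i},a)\bigr),
\]
and in the Bayesian version the same holds for the averages $\bar\mu,\bar p,\bar x$ taken over $\bfb_{-i}\sim \mcal{D}_\lambda^{n-1}$. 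Combined with the Myerson monotonicity and payment sandwich on $x_i,p_i$ (resp.\ $\bar x,\bar p$) supplied by strong bIC via Lemma~\ref{lem:myerson} (resp.\ \ref{lem:myerson-bayesian}), this matches the hypothesis of Lemma~\ref{lem:revenue-tick}. Specializing the lemma's conclusion to $a = 0$ yields the marginal bound
\[
\mu_\mcal{P}(\bfb_{-i}, b) - \mu_\mcal{P}(\bfb_{-i}, 0) \;\le\; {\sf tick}(\inDom_\lambda)\cdot x_i(\bfb_{-i}, b).
\]

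I then telescope across buyers. In the ex post case, weak symmetry allows assuming $b_1 \ge \cdots \ge b_n$; set $\bfb^{(j)} = (b_1,\ldots,b_j,0,\ldots,0)$, so $\bfb^{(n)} = \bfb$, $\bfb^{(0)} = \mathbf{0}$, and consecutive terms differ only in coordinate $j$. The marginal bound then gives $\mu_\mcal{P}(\bfb^{(j)}) - \mu_\mcal{P}(\bfb^{(j-1)}) \le {\sf tick}(\inDom_\lambda)\cdot x_j(\bfb^{(j)})$. The decisive structural inequality is $x_j(\bfb^{(j)}) \le k/j$: buyer $j$ bids the smallest positive value in $\bfb^{(j)}$, so by Myerson monotonicity plus weak symmetry each of the other $j-1$ positive bidders has allocation at least $x_j(\bfb^{(j)})$, while the $k$-item cap bounds their sum by $k$. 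Summing the telescoped inequality and noting $\mu_\mcal{P}(\mathbf{0}) \le 0$ (individual rationality and non-negative seller revenue force every payment to zero at the all-zero bid vector) yields $\mu_\mcal{P}(\bfb) \le {\sf tick}(\inDom_\lambda)\cdot k\cdot H_n = {\sf tick}(\inDom_\lambda)\cdot k\cdot(\ln n + O(1))$. For the Bayesian claim I sum the marginal bound over $i$, use $\sum_i \E_\bfB[x_i(\bfB)] \le k$, and apply weak symmetry plus pIC to reduce the ``one buyer fixed at $0$'' auction to the $(n-1)$-buyer auction, producing a recurrence $f(n) \le f(n-1) + {\sf tick}(\inDom_\lambda)\cdot k/n$ that telescopes to the same harmonic bound.

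The main obstacle I expect is making the Bayesian reduction precise: bounding $\E_{\bfB_{-i}}[\mu_\mcal{P}(\bfB_{-i},0)]$ by the expected revenue of the auction run with only $n-1$ honest bidders requires a careful use of weak symmetry, individual rationality, and pIC to argue that a $0$-bidder cannot inflate platform revenue relative to that buyer being absent. A secondary subtlety is the translation between strong Bayesian IC and the sharp inequalities demanded by Lemma~\ref{lem:revenue-tick}: one must verify that the input-replacement strategies used above indeed fall under the ``$\negl(\cdot)$ forced to $0$'' clause of \Cref{defn:strongcompic}, so that both the Myerson sandwich and the $1$-pbIC inequality hold without any negligible slack. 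Once these points are handled, the combinatorial $x_j(\bfb^{(j)}) \le k/j$ estimate closes the argument.
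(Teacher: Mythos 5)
Your Bayesian argument matches the paper's: the $1$-pbIC coalition-utility inequality plus the Myerson sandwich from bIC feed Lemma~\ref{lem:revenue-tick}, and strong pIC lets you drop a zero bidder and recurse on $n-1$ bidders. Two small corrections. The step you flag as the main obstacle is not actually difficult --- strong Bayesian pIC directly says that the platform injecting a fake bid of $0$ cannot increase its expected revenue, which is precisely $\E_{\bfB_{-i}}[\mu_\mcal{P}(\bfB_{-i},0)] \leq \E_{\bfB_{-i}}[\mu_\mcal{P}(\bfB_{-i})]$; no weak symmetry or individual rationality is needed for that step. And rather than ``summing the marginal bound over $i$,'' you should pick the single $i$ that minimizes $\E_{\bfb}[x_i(\bfb)]$, which is at most $k/n$ by pigeonhole on $\sum_i x_i \leq k$.

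The ex post argument has a real gap in the step $x_j(\bfb^{(j)}) \leq k/j$. Your justification --- that ``by Myerson monotonicity plus weak symmetry each of the other $j-1$ positive bidders has allocation at least $x_j(\bfb^{(j)})$'' --- is not a consequence of those hypotheses. Myerson monotonicity constrains how a single bidder's allocation responds to varying its \emph{own} bid; it says nothing about comparing two different bidders at the \emph{same} profile. In fact the positional claim can be false: for $k=1$, two bidders and bids $a>b$, the weakly symmetric rule with $A(1,(a,b))=0.1$, $A(2,(a,b))=0.5$, $A(1,(a,a))=A(2,(a,a))=0.5$, $A(1,(b,b))=A(2,(b,b))=0.1$ (uniform tie-breaking, where $A(p,\cdot)$ is the allocation at sorted position $p$) is monotone in each bidder's own bid yet gives the lower bidder at $(a,b)$ the strictly larger allocation. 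The paper sidesteps this entirely: at each stage it zeroes out whichever still-positive bidder has the \emph{smallest current allocation}, for which the bound $\leq k/(\text{number remaining})$ is immediate from $\sum_\ell x_\ell \leq k$. That greedy, data-dependent choice need not coincide with the sorted order, requires no auxiliary positional lemma and no appeal to weak symmetry, and plugs directly into your telescoping; swapping it in repairs the argument.
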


As mentioned, by rescaling, we can without loss of generality
assume that $\inDomSecu$
is an encoding of a subset of $[0, 1]$. 
If we use $\lambda$ bits  
to encode evenly spaced points in $[0, 1]$, then
${\sf tick}(\inDomSecu) = \Theta(1/2^\lambda)$. 
Therefore, intuitively, the above theorem says
that as long as the encoded subspace of $[0, 1]$ 
is reasonably dense without leaving large gaps in between, 
then the platform's revenue should be negligibly small. 
Note also that for the Bayesian case, 
the above theorem additionally needs 
Bayesian pIC, whereas for the ex post version, the theorem
holds without requiring pIC. 
\elaine{should we add the counter example to appendix?}

\ignore{
\elaine{TODO: change this text}
For example, if $\inDomSecu$ is the natural
encoding (\Cref{eg:naturaldomain}) of 
the normalized range $[0, 1]$, then 
the above theorem implies that the platform's revenue
must be negligibly small since $n$ and $k$  must be polynomially bounded
in $\lambda$. 
We assume that the 
IC notions are satisfied in a strong sense 
in the above impossibility (see \Cref{defn:strongcompic}). 
This means that extended 
input replacement strategies which do not involve breaking
the cryptography strictly cannot do better than the honest strategy
(i.e., without the negligible slack).
One interpretation is that the 
${\sf tick}(\inDom) \cdot k \cdot (\ln n + O(1))$
amount 
of platform revenue 
is an artifact of the discretization of the real domain 
rather than a slack in the IC notions.
}

\begin{proof}[Proof of \Cref{thm:0platformrev}.]
Throughout the proof, we fix an arbitrary $\lambda$ and henceforth write 
$\mcal{D}:= \mcal{D}_\lambda$
and $\inDom := \inDom_\lambda$
for convenience. 

\elaine{ke says notation not defined}

\elaine{i changed bfxi to xi, may not be consistent}

\paragraph{Bayesian setting.}
We first prove the Bayesian case. 
Let $\mu(\bfb)$ denote the platform's expected revenue under the input $\bfb$ 
in an honest execution.
Given any buyer $i$, let 
$$\bar{x}_i(\cdot) := 
\displaystyle\mathop{\E}_{\bfb_{-i} \getr \mcal{D}^{n-1}}[x_i(\cdot, \bfb_{-i})],
\quad 
\quad 
\bar{p}_i(\cdot) := 
\displaystyle\mathop{\E}_{\bfb_{-i} \getr \mcal{D}^{n-1}}[p_i(\cdot, \bfb_{-i})], 
\quad 
\quad 
\bar{\mu} (\cdot) :=
\displaystyle\mathop{\E}_{\bfb_{-i} \getr \mcal{D}^{n-1}}[\mu(\cdot, 
\bfb_{-i})]$$
Due to strong computational bIC 
and Myerson's Lemma (\cref{lem:myerson-bayesian}), 
for any buyer $i$, and for any $b' > b$,
it must be $b \cdot (\bar{x}_i(b') - \bar{x}_i(b)) \leq  \bar{p}_i(b') - \bar{p}_i(b) \leq b' \cdot (\bar{x}_i(b')- \bar{x}_i(b))$, and 
$\bar{x}_i$ is a monotonically increasing. 

Suppose the platform colludes with a buyer with true value $b$. 
Due to strong computational Bayesian 1-pbIC, the 
platform's increase in expected revenue 
when the buyer bids $b'$ instead of $b$  
is upper bounded by the buyer's loss in utility. 
Otherwise, the buyer should strategically bid $b'$ instead of its true value
$b$ to increase the coalition's joint utility.  
In other words, 
\[
\bar{\mu}(b') - \bar{\mu}(b)
\leq 
(b \cdot \bar{x}_i(b) - \bar{p}_i(b))-
  (b \cdot \bar{x}_i(b') - \bar{p}_i(b')) 
 = 
\bar{p}_i(b') - \bar{p}_i(b) - b \cdot (\bar{x}_i(b') - \bar{x}_i(b))
\]

By \cref{lem:revenue-tick}, 
we have $\bar{\mu}(b') - \bar{\mu}(0) \leq {\sf tick}(\inDom) \cdot \bar{x}_i(b')$ for any $b'$. 

Next, by strong computational Bayesian pIC, injecting a bid $0$ should not increase the platform's revenue. Thus, 
we have 
$$\displaystyle\mathop{\E}_{\bfb_{-i} \getr \mcal{D}^{n-1}}[\mu(0, \bfb_{-i})] \leq \displaystyle\mathop{\E}_{\bfb_{-i} \getr \mcal{D}^{n-1}}[\mu(\bfb_{-i})]$$

Consequently, for any $n$, we have 
\begin{align*}
	\displaystyle\mathop{\E}_{\bfb \getr \mcal{D}^{n}}[\mu(\bfb)] 
& = \displaystyle\mathop{\E}_{b_i \getr \mcal{D}} \displaystyle\mathop{\E}_{\bfb_{-i} \getr \mcal{D}^{n-1}}[\mu(b_i, \bfb_{-i})] \\
& \leq \displaystyle\mathop{\E}_{b_i \getr \mcal{D}} 
\left(\displaystyle\mathop{\E}_{\bfb_{-i} \getr \mcal{D}^{n-1}}\left[\mu(0, \bfb_{-i})\right] + {\sf tick}(\inDom)
\cdot 
\displaystyle\mathop{\E}_{\bfb_{-i} \getr \mcal{D}^{n-1}}
\left[{x}_i(b_i, \bfb_{-i})
\right]
\right)
\\
& = 
\displaystyle\mathop{\E}_{\bfb_{-i} \getr \mcal{D}^{n-1}}\left[\mu(0, \bfb_{-i})\right] + 
{\sf tick}(\inDom)
\cdot 
\displaystyle\mathop{\E}_{\bfb \getr \mcal{D}^{n}}
\left[{x}_i(\bfb) 
\right]
\qquad \qquad \qquad (\star)
\ignore{
\\
	& \leq \displaystyle\mathop{\E}_{\bfb_{-i} \getr \mcal{D}^{n-1}}[\mu(0, \bfb_{-i})] + {\sf tick}(\inDom)\\
	& \leq \displaystyle\mathop{\E}_{\bfb_{-i} \getr \mcal{D}^{n-1}}[\mu(\bfb_{-i})] + {\sf tick}(\inDom)\\
	& \leq \displaystyle\mathop{\E}_{\bfb_{-i} \getr \mcal{D}^{n-2}}[\mu(0, \bfb_{-i})] + 2 \cdot {\sf tick}(\inDom)\\
	& \leq \cdots \leq \displaystyle\mathop{\E}_{b \getr \mcal{D}}[\mu(b)] + (n-1) \cdot {\sf tick}(\inDom)\\
	& \leq n \cdot {\sf tick}(\inDom).
}
\end{align*}
The above holds for any buyer $i$. 
We will pick an $i$ such that 
$\displaystyle\mathop{\E}_{\bfb \getr \mcal{D}^{n}}
\left[{x}_i(\bfb) \right]$ 
is minimized --- it is not hard to see
that in this case, 
$\displaystyle\mathop{\E}_{\bfb \getr \mcal{D}^{n}}
\left[{x}_i(\bfb) \right] \leq k/n$. 
Therefore, we have
\begin{align*}
(\star) & \leq 
\displaystyle\mathop{\E}_{\bfb_{-i} \getr \mcal{D}^{n-1}}\left[\mu(0, \bfb_{-i})\right] + 
{\sf tick}(\inDom) \cdot k/n \\
& \leq  
\displaystyle\mathop{\E}_{\bfb_{-i} \getr \mcal{D}^{n-1}}\left[\mu(\bfb_{-i})\right] + 
{\sf tick}(\inDom) \cdot k/n \\
\end{align*}
Applying the above inductively on $\displaystyle\mathop{\E}_{\bfb_{-i} \getr \mcal{D}^{n-1}}\left[\mu(\bfb_{-i})\right]$, 
and using the base condition that $\mu(0) = 0$, 
we have that  
\[
\displaystyle\mathop{\E}_{\bfb \getr \mcal{D}^{n}}\left[\mu(\bfb)\right]
\leq 
{\sf tick}(\inDom)
\cdot \left(\frac{k}{n} + \frac{k}{n-1}
+ \ldots + \frac{k}{1}  
\right) 
\leq 
{\sf tick}(\inDom)
\cdot k \cdot (\ln n + O(1))
\]
This concludes the proof of the Bayesian case.

\paragraph{Ex post setting.}
We now consider the ex post case. 
Redefine 
$\bar{x}_i(\bfb)$, 
$\bar{p}_i(\bfb)$, 
$\bar{\mu}_i(\bfb)$, 
to be $i$'s probability of getting an item, its expected payment,
and the platform's expected revenue 
under the value vector $\bfb$.
Using the same argument as before, we can prove  
that 
for an arbitrary $\bfb_{-i}$, any $b'$,
$\bar{\mu}(\bfb_{-i}, b') - \bar{\mu}(\bfb_{-i}, 0) \leq {\sf tick}(\inDom) 
\cdot \bar{x}_i(\bfb_{-i}, b')$. 

Now, given some value vector $\bfb = (b_1, \ldots, b_n)$, we have 
\begin{align*}
\bar{\mu}(\bfb) 
& \leq 
\bar{\mu}(\bfb_{-i}, 0) + {\sf tick}(\inDom) \cdot k/n\\
& \leq 
\bar{\mu}(\bfb_{-\{i, j\}}, 0, 0) 
+ {\sf tick}(\inDom) \cdot k/n
+ {\sf tick}(\inDom) \cdot k/(n-1)
\\
& \ldots 
\end{align*}
where the first line chooses an $i$  
such that $\bar{x}_i(\bfb)$
is minimized; and the second line chooses 
a $j \neq i$ such that $\bar{x}_j(\bfb_{-i}, 0)$ is minimized. 
It must be that 
$\bar{x}_i(\bfb) \leq k/n$;
and 
$\bar{x}_j(\bfb_{-i}, 0) \leq k/(n-1)$.

Carrying out the above derivation inductively, we will eventually
arrive at a world where everyone bids $0$,
and in this case, the platform's revenue is at most $0$, and thus
the conclusion holds. 
Note that in the ex post case, we do not need the strong computational pIC 
condition because we no longer need to remove the $0$ bids.  

\ignore{
First, it is not hard to see that if we assume
not just strong computational bIC and 1-pbIC, but also 
strong computational pIC (all in the ex post setting), 
then the conclusion in the theorem holds.
To see this, 
observe that our Bayesian proof can easily be extended
to the case where honest buyers' values are sampled
independently but not necessarily identically.  
The ex post case can be viewed as a special case of this where
each honest buyer's value distribution is just
a fixed number. 
Now, since strong computational bIC, pIC, and 1-pbIC guarantee 
incentive compatibility
under an arbitrary honest bid vector, the conclusion also holds
for an arbitrary honest bid vector.

To complete the proof, it remains to argue that the same conclusion 
still holds even when we remove the pIC requirement. 
In the above proof, we used strong computational Bayesian pIC to remove a $0$ bid from
the vector. 
However, for the ex post case, we need 
not remove the $0$ bid because the incentive compatibility guarantees
now hold for an arbitrary bid vector.  
So we can easily simplify 
the proof without using 
strong computational Bayesian pIC,  
and instead just carry the $0$ bids around.  
At the end of the inductive proof, we will be left
with a world that consists of solely 0 bids which becomes our new base condition.
It is not hard to see that under this base condition, the platform's revenue must
be $0$, and which can be readily used to replace
the base condition in the above Bayesian proof. 
}
\end{proof}


The proofs in this section also directly imply
the following corollary that says the platform's revenue
must be $0$ if the value domain is continuous over the non-negative reals.

\begin{corollary}
Let $\Pi$ be an auction for selling $k$ items 
over a continuous, real-valued domain. 
Suppose that $\Pi$
satisfies Bayesian bIC,
Bayesian pIC, 
and Bayesian $1$-pbIC w.r.t. the distribution $\mcal{D}$. 
Then, the platform's expected revenue under a random value vector
drawn from $\mcal{D}^n$ is $0$. 
Further, if $\Pi$ satisfies 
bIC and $1$-pbIC both in the ex post setting, 
then,  
the platform's revenue is $0$ for any value vector.
\label{cor:0platformrev}
\end{corollary}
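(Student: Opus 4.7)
The plan is to mirror the proof of \Cref{thm:0platformrev} but invoke the second (continuous) half of \Cref{lem:revenue-tick} in place of the discrete bound. Because the domain is now $\mathbb{R}_{\geq 0}$, every derivative-like inequality sharpens to an exact equality, so the logarithmic slack $k(\ln n + O(1))\cdot{\sf tick}(\inDom)$ collapses to $0$, yielding the stated conclusion.

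First I would handle the Bayesian case. Fix a buyer $i$ and define $\bar x_i(\cdot)$, $\bar p_i(\cdot)$, $\bar\mu(\cdot)$ as the expectations of $x_i$, $p_i$, and the platform revenue $\mu$ over $\bfb_{-i} \getr \mcal{D}^{n-1}$. Bayesian bIC together with Myerson's Lemma (\Cref{lem:myerson-bayesian}) gives monotonicity of $\bar x_i$ and the payment sandwich
\[
b \cdot (\bar x_i(b') - \bar x_i(b)) \leq \bar p_i(b') - \bar p_i(b) \leq b' \cdot (\bar x_i(b') - \bar x_i(b)).
\]
Bayesian $1$-pbIC, applied to the coalition of the platform with buyer $i$ of true value $b$ who deviates to bidding $b'$, yields
\[
\bar\mu(b') - \bar\mu(b) \leq \bar p_i(b') - \bar p_i(b) - b\cdot(\bar x_i(b') - \bar x_i(b)),
\]
exactly as in the proof of \Cref{thm:0platformrev}. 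These are precisely the hypotheses of \Cref{lem:revenue-tick}, and since $\inDom = \mathbb{R}_{\geq 0}$, its second (continuous) conclusion gives $\bar\mu(b') = \bar\mu(0)$ for every $b'$.

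Next I would use Bayesian pIC to remove one bid at a time. Since $\bar\mu(b') = \bar\mu(0)$ pointwise in $b'$, taking expectation over $b_i \getr \mcal{D}$ gives $\E_{\bfb \getr \mcal{D}^n}[\mu(\bfb)] = \E_{\bfb_{-i} \getr \mcal{D}^{n-1}}[\mu(0,\bfb_{-i})]$; strong Bayesian pIC then bounds this by $\E_{\bfb_{-i}\getr\mcal{D}^{n-1}}[\mu(\bfb_{-i})]$. Inducting on $n$ with base case $\mu(\emptyset) = 0$ (an empty auction produces no revenue by individual rationality and budget feasibility) yields expected platform revenue $0$.

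For the ex post case I would argue identically but without invoking pIC. Redefine $\bar x_i, \bar p_i, \bar\mu$ as the pointwise quantities under an arbitrary fixed vector $\bfb$. The Myerson sandwich and $1$-pbIC inequality both now hold for every $\bfb_{-i}$, so \Cref{lem:revenue-tick} gives $\bar\mu(\bfb_{-i}, b') = \bar\mu(\bfb_{-i}, 0)$ for all $b'$ and all $\bfb_{-i}$. Applying this identity $n$ times, zeroing out coordinates one by one, reduces the vector to the all-zero vector, whose revenue is $0$ by individual rationality and budget feasibility; no pIC step is needed because the zero coordinates remain in the vector throughout. The main obstacle is simply the bookkeeping of which form of the IC notions I need in each case, but the continuous conclusion of \Cref{lem:revenue-tick} does essentially all the work.
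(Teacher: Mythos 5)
Your proposal is correct and takes essentially the same route as the paper: the paper remarks that the corollary follows directly from the proof of \Cref{thm:0platformrev} by substituting the continuous conclusion of \Cref{lem:revenue-tick}, and your write-out fills in exactly that reasoning, including the correct observation that the inductive pIC step is only needed in the Bayesian case while the ex post case can carry zero coordinates along. The only cosmetic slip is that you write ``strong Bayesian pIC'' where the corollary (being in the continuous, information-theoretic setting) uses plain Bayesian pIC; this is harmless since the information-theoretic notion already has zero slack.
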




\subsection{bIC + 1-pbIC $\Longrightarrow$ Public Outcome Necessary}
\label{sec:puboutcome}

\paragraph{Proof roadmap.}
We next show that any auction with non-trivial utility that satisfies
bIC and 1-pbIC (even in the Bayesian sense) 
must make use of a broadcast channel. 
Informally speaking, the intuition behind our proof
is as follows. We will use the ex post, information theoretic
setting to explain the intuition, but our actual proof
later generalizes it to even Bayesian and computational settings.   
Suppose there is a value vector $\bfb = (b_1, \ldots, b_n)$  
such that user $b_i$ gets $u > 0$ utility. 
Imagine that 
the world consists of not only $\bfb = (b_1, \ldots, b_n)$ 
but also many buyers
indexed $n+1, n+2, \ldots, n+K$ 
with the true value $b_i$. 
Since the number of buyers is much greater than the  
number of items $k$, one of the buyers with value  
$b_i$, 
denoted $U$, 
must get smaller than $u$ expected utility 
(as long as $K$ is sufficiently large).
Now, if $U$ is colluding with the platform,
the coalition can 
adopt the following strategy.  
To the buyers 
$[n]\backslash \{i\}$ as well as the seller, 
the coalition pretends that the world consists
of $\bfb$
where the $i$-coordinate is replaced with $U$; 
to every other buyer $j \in \{i, n+1, \ldots, n+K\} \backslash \{U\}$, 
the coalition pretends
that the world consists of an extremely large number of  
buyers all with the same value, such that with high probability, 
none of these buyers get an item. Note that to make this strategy
work, the coalition needs to simulate 
many fake buyers and the seller to the 
buyers $j \in \{i, n+1, \ldots, n+K\} \backslash \{U\}$. 
Because bIC and pbIC implies that the platform's revenue
must be $0$ (\Cref{cor:0platformrev}), 
\elaine{TODO: write this corollary}
this strategy 
benefits the coalition positively by ensuring that $U$ 
can get utility $u$. This violates 1-pbIC.

It is easy to make the above argument work
if we assume that 1) the mechanism satisfied {\it strong symmetry}, i.e.,
two buyers with the same true value have the same outcome distribution;
and 2) the value vector $\bfb$ consists of distinct values.  
A major technical challenge in our subsequent formal proofs 
is how to 
make the argument work 
relying only on {\it weak symmetry} (see \Cref{defn:weaksym}
of \Cref{sec:platform-assisted-auction}), and without 
assuming the distinctness of $\bfb$. 

Note that having a broadcast channel can make this impossibility goes away,
because players can use the broadcast channel to corroborate 
their views in the protocol, such that the platform-buyer coalition
is no longer able to present a different world to different honest players.

\paragraph{Ex post setting.}
We say that an auction $\Pi_\lambda$
over the value domain $\inDomSecu \subset [0, 1]$  
is non-trivial, iff  
there exists some polynomially bounded function $\poly(\cdot)$, 
such that 
for infinitely many $\lambda$'s, 
there exists some value vector $\bfb \in \inDomSecu^*$
whose length is polynomially bounded in $\lambda$, 
and the expected total buyer utility 
under $\bfb$ is at least $1/\poly(\lambda)$.  



\ignore{
An auction is said to be \emph{trivial} if 
there exists a negligible function $\negl(\cdot)$ such that 
for any $\lambda$, for any $n$ (that is polynomially bounded in $\lambda$), 
for any value vector $\bfv = (v_1, \ldots, v_n) \in \inDom^n_\secu$,  
for any $i$, 
\[
\Pr[\util_i(1^\lambda, \bfv) \leq v_i \cdot \negl(\lambda)] \geq 1 - \negl(\lambda),
\]
\elaine{todo: change the constants}
where $\util_i(1^\lambda, \bfv)$ denote buyer $i$'s expected utility 
under an honest execution using the value vector $\bfv$ as input. 
For example, if 
the value domain is normalized to $[0, 1]$, it means
all buyers get negligible utility with all but with negligible probability.
}

\begin{theorem}[bIC + pbIC $\Longrightarrow$ public outcome necessary]
Let $\Pi_\lambda$ be a non-trivial auction over the value domain
$\inDomSecu \subset [0, 1]$. 
Suppose that there is no public-key infrastructure (PKI)
and the players are not allowed to post any public outcome
to the blockchain.  Further, suppose that 
there exists some negligible function $\negl(\lambda)$ such
that for every $\lambda$, 
${\sf tick}(\inDomSecu)\leq \negl(\lambda)$. 
Then, 
$\Pi_\lambda$ cannot simultaneously satisfy 
strong computational bIC and strong computational $1$-pbIC.
\label{thm:publicoutcome}
\elaine{TODO: double check}
\end{theorem}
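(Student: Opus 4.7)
The plan is to argue by contradiction: assume $\Pi_\lambda$ satisfies both strong computational bIC and strong computational $1$-pbIC. I would first invoke the ex post form of \Cref{thm:0platformrev}, whose hypotheses are precisely the present ones, to obtain that on any polynomial-sized input vector the honest platform revenue is at most ${\sf tick}(\inDomSecu) \cdot k \cdot (\ln n + O(1))$, which under the assumption ${\sf tick}(\inDomSecu) \leq \negl(\lambda)$ is itself some negligible $\epsilon(\lambda)$. By nontriviality there exist infinitely many $\lambda$'s and a polynomial-sized $\bfb = (b_1,\ldots,b_n) \in \inDomSecu^n$ whose total honest expected buyer utility is at least $1/\poly(\lambda)$; averaging over $n$ singles out a buyer $i$ of value $b_i$ with honest expected utility $u \geq 1/\poly(\lambda)$ under $\bfb$.

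Next I would design a platform-plus-one-buyer deviation on a slightly inflated honest population. Let $K = \poly(\lambda)$ be chosen later, and consider a $1$-pbIC game whose honest inputs are $\bfb$ augmented by $K$ extra buyers each bidding $b_i$, and whose coalition consists of the platform together with one strategic buyer $U$ of true value $b_i$. Under honest play the $K+2$ identical-value buyers jointly obtain at most $k \cdot b_i \leq k$ expected total utility, so some identity has honest expected utility at most $k/(K+2)$; since in the permissionless, no-PKI regime the coalition freely picks $U$'s identity, $U$ is designated to be this minimizer. Together with the negligible honest platform revenue, the honest coalition utility is at most $\epsilon(\lambda) + k/(K+2)$.

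The strategic deviation exploits the absence of both broadcast and PKI. Over its disjoint pairwise channels the platform runs two kinds of simulated honest-protocol copies. The main copy is executed on input vector $\bfb$, in which the real buyers $[n]\setminus\{i\}$ and the real seller interact honestly with the platform while $U$ supplies the $b_i$ input in the slot of buyer $i$. Separately, for the real honest buyer $i$ and for each of the $K$ honest duplicates, the platform locally simulates an honest execution set in a world with many additional $b_i$-valued competitors, conditioned on that target player not winning, and accordingly delivers a $(0,0)$ private outcome consistent with the simulated transcript. Since no honest player reads any broadcast and there is no PKI-based cross-check, every honest player accepts and the trace is safe. Unpacking the coalition utility $\sum_{j \in \mcal{H}} p_j - \mu_{\mcal{S}} + b_i \cdot \mathbf{1}[U \text{ wins}]$: the main copy contributes $r(\bfb) - \E[p_i(\bfb)]$ with $r(\bfb) \leq \epsilon(\lambda)$; the duplicate-facing copies contribute $0$; and $U$ wins exactly when honest buyer $i$ would have won under honest $\bfb$, so $b_i \cdot \Pr[U \text{ wins}]$ equals $u + \E[p_i(\bfb)]$. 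The payment terms cancel and the strategic utility is at least $u - \epsilon(\lambda)$. Choosing $K$ so that $k/(K+2) < u/3$ yields a utility gap of $\Omega(1/\poly(\lambda))$, which contradicts ordinary, hence strong, computational $1$-pbIC.

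The main obstacle will be rigorously justifying the duplicate-facing simulations. For sufficiently large $K$ I would need to verify that the non-winning event has probability bounded away from $0$, so that the rejection-sampling simulation succeeds with overwhelming probability, and that each simulated transcript is distributed exactly as a genuine honest transcript of that player conditioned on non-winning, so the honest recipient has no statistical or protocol-level basis for rejection. A secondary subtlety is the weak-symmetry averaging: because tie-breaking may be identity-dependent, realizing the $k/(K+2)$ bound may require additional averaging over $U$'s identity drawn from a uniform pool of fresh identities, a maneuver available precisely because the setting is permissionless and PKI-free.
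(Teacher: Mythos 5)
Your overall strategy is right and matches the paper's: zero out platform revenue via the ex post case of \Cref{thm:0platformrev}, inflate the honest population with many $b_i$-bidders so some designated coalition buyer $U$ has tiny honest utility, then exploit the absence of broadcast and PKI to show the coalition different ``worlds'' through its disjoint pairwise channels. But the mechanism you give for the duplicate-facing simulations has a genuine gap. You propose that, for the victim $i$ and each of the $K$ duplicates, the platform runs a simulated execution with many extra $b_i$-bidders ``conditioned on that target not winning,'' implemented by rejection sampling. This conditioning is not realizable against a live interactive counterparty: the target contributes its own fresh randomness round by round, and once the platform has sent it a message it cannot roll the conversation back to re-draw its simulated coins. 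Rejection sampling works for sampling from a distribution offline; it does not work mid-interaction, because there is no ``reject and retry'' primitive. The paper avoids this altogether: the platform just runs each side session honestly with $K-1$ random fake $b_i$-bidders, accepts that each victim wins with probability at most $2k/K$, union-bounds the probability of an unsafe trace over the $K'$ side sessions to $O(1/q^2(\lambda))$, and then observes that the coalition's gain survives this polynomially-small probability of getting caught. The whole point of the ``safe/unsafe trace'' formalism is exactly to absorb this small deviation; you should lean on it rather than try to make the deviation undetectable.

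A second, smaller issue: your identification ``$U$ wins exactly when honest buyer $i$ would have won under honest $\bfb$'' is only valid under strong symmetry, i.e., when outcomes depend on bids alone. Under weak symmetry, tie-breaking can depend on identities, so placing $U$'s identity in buyer $i$'s slot may not reproduce buyer $i$'s utility. You flag ``additional averaging over $U$'s identity'' as a fix, but that doesn't suffice: $U$'s identity was specifically chosen to be a low-utility one (to bound the honest coalition utility), which is in tension with wanting it to capture $i$'s high utility when plugged into $i$'s slot. The paper resolves the tension by decoupling the two roles: it picks, via weak symmetry, a designated identity $j^*$ that attains the $1/q(\lambda)$ utility in the slot of buyer $i$, has the platform \emph{simulate} a fake buyer with identity $j^*$ in the main session, and then hands the won item to $U$ off-protocol. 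You should incorporate this separation; otherwise the ``payment terms cancel'' calculation does not go through under weak symmetry.
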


In the above, 
the condition 
${\sf tick}(\inDomSecu) \leq \negl(\lambda)$ is very mild. 
For example, if $\inDomSecu$ 
uses $\lambda$ bits to encode evenly spaced
points between $[0, 1]$, then 
${\sf tick}(\inDomSecu) = \Theta(1/2^\lambda)$. 
Intuitively, this condition wants the encoding
of the normalized value domain $[0, 1]$ to be sufficiently
dense, without leaving large gaps in between.

\begin{proof}(of \Cref{thm:publicoutcome}.)
Due to non-triviality, 
there exists some polynomial $q(\cdot)$, 
such that 
for infinitely many $\lambda$'s, 
there exists
some value vector $\bfb = (b_1, \ldots, b_n) \in \inDomSecu^n$ 
where $n$ is polynomially bounded in $\lambda$, 
and buyer identities ${\it id}_1, \ldots, {\it id}_n$ 
(henceforth called Scenario 1), 
and under $\bfb$ and these identities, 
the $i$-th buyer
has expected utility at least $1/q(\lambda)$. 

Recall that the number of items is $k$. Therefore,
if the world consists of 
$K = k^2 \cdot q^4(\lambda)$ 
bids 
all at $b_i$, then  for at least one of these $K$
buyers (henceforth denoted buyer $U$), 
its expected probability of getting an item
is at most $k/K$. 
The choice of this buyer $U$ may depend on auxiliary identity information
--- 
in practice, this the identity can contain information like cryptographic keys,
time-of-arrival, and any other information. 
It must be that at least half of the identities 
in the identity space (denoted the set $\mcal{J}$) satisfy the following: 
for any $j \in \mcal{J}$, 
if $j$ participates in an auction with $K-1$ other bidders
with random distinct identities, and everyone bids $b_i$,
then $j$'s probability of getting an item is at most $2k/K$. 
\elaine{TODO: this might need a proof?}

Let $n'$ be the number of bids in $\bfb = (b_1, \ldots, b_n)$ 
that are equal to $b_i$. Henceforth we use the term
$b_i$-bids to refer to the bids in $\bfb$ that are equal
to $b_i$, and the term   
non-$b_i$-bids to refer to the bids in $\bfb$ that are not equal
to $b_i$. 
Pick $n'-1$ distinct identities $\mcal{J}' \subset \mcal{J}$
and a distinct $j^* \in \mcal{J}$ that satisfy the following condition:
when the world consists 
of $(b_1, \ldots, b_n)$ where all non-$b_i$-bids 
take the corresponding identities of Scenario 1, 
and the $b_i$-bids take on 
the identities
$\mcal{J}' \cup \{j^*\}$, then $j^*$ has  
at least $1/q(\lambda)$
expected utility.
This is possible due to weak symmetry. 
\elaine{is this defined?}

Now, imagine the following scenario.  
\begin{enumerate}[leftmargin=6mm]
\item 
There are $n$ buyers 
whose true values are  
$b_1, \ldots, b_n$ respectively.
All the non-$b_i$-bids take the corresponding identities as in Scenario 1,
and the $b_i$-bids take the identities $\mcal{J}' \cup \{V\}$
where $V$ is an arbitrary fixed identity in $\mcal{J}$ distinct
from all the other identities that have been chosen.  
Henceforth $V$ is also called the victim.

\item 
Additionally, 
there are $K' = k q^2(\lambda)$ buyers
whose identities are chosen arbitrarily from $\mcal{J}$ as long
as they are distinct from all other consumed identities; and 
all of these $K'$ buyers have the true value $b_i$. 
Among these $K'$ buyers, at least one of them denoted $U$ must get an item
with probability at most $k/K'$.  
\end{enumerate}
Suppose $U$ and the platform are colluding. 
Then, the coalition, which has the advice string $j^*$ hardwired
in its algorithm, can  
adopt the following polynomial-time strategy:  
\begin{itemize}
\item 
The platform not only acts on behalf of itself,
but also simulates a buyer with identity $j^*$ using input value $b_i$.
In this way, the platform 
interacts with the first $n$ buyers except $V$ as well as the seller. 
The messages from 
$V$ and the remaining $K'-1$ honest buyers bidding $b_i$ are 
not passed into this session. 
If the simulated buyer $j^*$ gets an item, this item is actually allocated  
to the colluding user $U$. 
\item 
With each of $V$ and the remaining $K'-1$ players bidding $b_i$, 
the platform forks $K'$ sessions, and each session does 
the following: 
the platform creates $K-1$ random fake buyer identities 
and all of them use the input $b_i$. The platform
simulates (the honest behavior of) these fake buyers and the seller 
in its head, and 
interacts with the buyer. 
\end{itemize}

The execution trace is safe as long as 
\begin{enumerate}[leftmargin=7mm,itemsep=1pt]
\item 
in the sessions with $V$ and the $K'-1$ buyers bidding $b_i$, the
randomly sampled identities do not collide with honest buyers' identities
(see \Cref{rmk:stealid}) --- this happens with negligible probability
as long as the identity space is super-polynomial in $\lambda$; 
\elaine{TODO: add the assumption that id space super poly}
and 
\item $V$ and the $K'-1$ buyers bidding $b_i$ 
all fail to get an item in their respective sessions. 
\end{enumerate}
Therefore, the probability of a safe trace 
is at least 
\[
p_{\rm safe}
\geq 1 - (2k/K) \cdot K' -\negl(\lambda) \geq 1 - 3/q^2(\lambda)
\]

Under this strategy, $U$'s expected gain in utility 
is at least 
$$
\frac{1}{q(\lambda)} \cdot p_{\rm safe} - b_i \cdot \frac{k}{K'}
\geq 
\frac{1}{q(\lambda)} 
\cdot (1-3/q^2(\lambda)) - 
\frac{1}{q^2(\lambda)}
\geq 
\frac{1}{2q(\lambda)} 
$$
The platform's expected loss in 
utility is at most
${\sf tick}(\inDom) \cdot k \cdot (\ln (n + K') + O(1))$
due to \Cref{thm:0platformrev}. 
Due to the assumption on 
${\sf tick}(\inDom)$, we have that 
$1/{2q(\lambda)} - 
{\sf tick}(\inDom) 
\cdot k \cdot (\ln (n + K') + O(1))
\geq 1/\poly(\lambda)
$ 
for sufficiently large $\lambda$, 
i.e., 
for infinitely many $\lambda$'s, 
the coalition can gain at least $1/\poly(\lambda)$  
in expected utility 
through this  
strategy, which violates computational 1-pbIC. 
\elaine{do we need strong}
\end{proof}

\ignore{
\begin{remark}
We need the no PKI 
assumption in the above proof, 
since the strategic coalition needs to simulate 
the seller and the buyer $V$ when interacting with some honest buyers
--- this requires that the coalition steals
the identities of the seller and the honest buyer $V$.

It is possible to extend the proof 
to show that the same theorem holds assuming that the 
seller is does not actively send messages in the protocol 
and moreover, the strategic coalition is not allowed
to steal honest buyers' identities. 
To show this, we need to modify the proof as follows. 

Imagine a scenario where 
the first $n$ players' identities satisfy the following, and
the remaining $K'$ buyers are as before.  
Let $n' \leq n$ be the number 
of bids that are equal to $b_i$ in 
$(b_1, \ldots, b_n)$. 
Those whose bids are not equal to $b_i$ take on arbitrary and fixed
identities.  
One of the buyers bidding $b_i$, i.e., the victim $V$, takes an arbitrary
and fixed identity. 
The remaining $n'-1$ identities 
are the set $\mcal{J}'$ chosen as follows. 
Pick $n'-1$ distinct identities $\mcal{J}' \subset \mcal{J}$
and a distinct $j^*\in \mcal{J}$, 
such that when the world consists 
of $(b_1, \ldots, b_n)$ such that all non-$b_i$ buyers
take the aforementioned fixed identities, 
and the $b_i$-bidders take on 
$\mcal{J}'$ and $j^*$, $j^*$ has  
at least $M \nu(\lambda)$ expected utility.
This is possible due to weak symmetry. 

Now, the proof proceeds in the same way as before  
except
that instead of stealing $V$'s identity, 
the coalition will
replace $V$ with identity $j^*$ 
and simulate $j^*$ with input value $b_i$. 
Further, the probability $p_{\rm safe}$
is blown up by a negligible 
additive factor due to the negligibly small probability
that in any session with $V$ or the remaining $K'-1$ buyers bidding
$b_i$, 
one of the randomly sampled identities collide 
with an honest buyer's identity (since the coalition
can no longer steal any honest identity).
\end{remark}
}

\begin{remark}
We need the no PKI 
assumption in the above proof, 
since the strategic coalition needs to simulate 
the seller when interacting with some honest buyers
--- this requires that the coalition steals
the identity of the seller. 
We stress that the proof is constructed such that the coalition
need not 
steal any honest buyer's identity.
This means that the impossibility also holds
if the ``no PKI'' assumption is replaced
with the requirement that the seller does not
actively send messages in the auction.
The latter
is also a natural assumptions since the seller's intention 
is to outsource the auction to the platform. 
\elaine{refer to practical instantiation where seller need not send msg}
\label{rmk:stealid}
\end{remark}



\paragraph{Bayesian setting.}
We modify the non-trivial definition above
for the Bayesian setting. 
We say that an auction $\Pi_\lambda$ over the value domain
$\inDomSecu \subset [0, 1]$ is non-trivial
w.r.t. 
the distribution 
$\mcal{D}_\lambda$ over $\inDomSecu$, 
iff there exists some 
polynomial function $1/\poly(\lambda)$, 
such that for 
infinitely many $\lambda$'s, 
there exists $n(\lambda)$ buyer identities denoted $\mcal{I}$ where
$n(\cdot)$ is polynomially bounded in $\lambda$, 
and the expected utility of all buyers in $\mcal{I}$ is at least
$1/\poly(\lambda)$ 
when 
every one draws its  true value independently
from $\mcal{D}_\lambda$.

\begin{corollary}
Suppose some auction $\Pi_\lambda$ over the value domain
$\inDomSecu \subset [0, 1]$ is non-trivial
w.r.t. some polynomial-time samplable 
distribution $\mcal{D}_\lambda$ over $\inDomSecu$, and further, 
there is some negligible function $\negl(\cdot)$ such that for every
$\lambda$, ${\sf tick}(\inDomSecu) \leq \negl(\lambda)$.  
Suppose that there is no public-key infrastructure (PKI)
and the players are not allowed to post any public outcome
to the blockchain.  
Then, $\Pi_\lambda$ 
cannot simultaneously satisfy 
strong computational Bayesian bIC,  
strong computational Bayesian pIC,  
strong computational Bayesian pbIC w.r.t. 
$\mcal{D}_\lambda$ over $\inDomSecu$. 
\elaine{and the input replacement notions}
\end{corollary}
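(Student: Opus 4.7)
The plan is to adapt the proof of \Cref{thm:publicoutcome} to the Bayesian setting, exploiting the poly-time samplability of $\mcal{D}_\lambda$ to simulate honest-looking fake bids in lieu of the adversarially chosen value vector used in the ex post case. A key new ingredient is strong computational Bayesian pIC (absent from \Cref{thm:publicoutcome}): combined with \Cref{thm:0platformrev}, it forces the platform's expected revenue under honest execution to be negligible, which in turn bounds the ``revenue loss'' the platform incurs when it forks sessions.

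The first step is to extract from Bayesian non-triviality a specific target value $v^* \in \inDomSecu$ and a polynomial lower bound $\alpha(\lambda) = 1/\poly(\lambda)$ such that, in the honest execution of $\Pi_\lambda$ with $n(\lambda)$ buyers drawing i.i.d.\ from $\mcal{D}_\lambda$, one can designate a buyer $i^*$ with $\E[\util_{i^*} \mid v_{i^*} = v^*] \geq \alpha(\lambda)$ and $\Pr_{\mcal{D}_\lambda}[v = v^*] > 0$. This is obtained via two successive averaging arguments: first over buyer identities to find some $i^*$ with expected utility at least $1/(n(\lambda)\cdot\poly(\lambda))$, and then over the value of $v_{i^*}$, picking $v^*$ out of a ``good set'' $A \subseteq \inDomSecu$ whose total $\mcal{D}_\lambda$-mass is non-negligible and on which the conditional expected utility is at least $\alpha(\lambda)/2$.

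Next, I would set up the coalition as the platform together with a single buyer $U$ whose fixed true value is $v^*$, and fix the number of honest buyers $n_H = \poly''(\lambda)$ to be a sufficiently large polynomial so that among $n_H$ honest buyers drawn i.i.d.\ from $\mcal{D}_\lambda$, the expected number with bid at least $v^*$ exceeds $k$ by a large factor. The strategy $S$ then mirrors \Cref{thm:publicoutcome}: the platform forks into a \emph{main session} in which the real seller participates and the platform internally simulates $n(\lambda)-1$ fake buyers whose values it samples i.i.d.\ from $\mcal{D}_\lambda$, with $U$ playing honestly on value $v^*$; and for each real honest buyer $j$, a separate \emph{distraction session} in which the platform impersonates the seller (using no-PKI) and floods the session with enough fake bids drawn from $\mcal{D}_\lambda$ to drive $j$'s winning probability below $1/\poly(\lambda)$. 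The strategy is \ppt by the poly-time samplability of $\mcal{D}_\lambda$, and is safe except with negligible probability (the only failure mode is a fake identity colliding with an honest one, which has negligible chance under a super-polynomial identity space).

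For the utility analysis: $U$'s expected utility in the main session equals $\E[\util_{i^*} \mid v_{i^*} = v^*] \geq \alpha(\lambda)$ by distributional equivalence with the honest $n(\lambda)$-buyer auction; in the honest baseline with $n_H+1$ buyers, the choice of $n_H$ forces $U$'s expected utility below $\alpha(\lambda)/4$ via concentration on the number of honest buyers at or above $v^*$; and by strong computational Bayesian pIC together with \Cref{thm:0platformrev}, the platform's honest baseline revenue is at most ${\sf tick}(\inDomSecu) \cdot k \cdot O(\log n_H) = \negl(\lambda)$, so the revenue the platform foregoes in the main session is negligible. The net coalition gain is therefore at least $\alpha(\lambda)/2 - \negl(\lambda) = 1/\poly(\lambda)$, contradicting strong computational Bayesian $1$-pbIC. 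The main obstacle will be establishing the drop in $U$'s baseline utility uniformly across distributions $\mcal{D}_\lambda$, and in particular handling the degenerate case where $v^*$ is the maximum of the support of $\mcal{D}_\lambda$ --- there the crowding argument instead relies on weakly symmetric tie-breaking among the many honest buyers that also bid $v^*$.
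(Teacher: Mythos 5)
Your proposal follows the same skeleton as the paper's proof of this corollary: fork the platform into a main session plus per-honest-buyer distraction sessions, impersonate the seller via the no-PKI assumption, and invoke \Cref{thm:0platformrev} via strong Bayesian pIC to cap the platform's revenue loss. So the overall direction and the recognition that pIC is newly needed in the Bayesian case are both right. However, the averaging argument that extracts $v^*$ has a genuine gap. Picking $v^*$ from a ``good set'' $A$ of non-negligible $\mathcal{D}_\lambda$-mass with high conditional expected utility does \emph{not} guarantee that $\Pr_{\mathcal{D}_\lambda}[v \geq v^*]$ is non-negligible, yet your crowding step (``the expected number with bid at least $v^*$ exceeds $k$ by a large factor'' for a polynomial $n_H$) needs precisely that. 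Nothing prevents $A$ from lying entirely in the top negligible quantile of $\mathcal{D}_\lambda$; then no polynomial number of honest i.i.d.\ draws crowds out a buyer bidding $v^*$, and your bound that $U$'s baseline utility is below $\alpha/4$ fails. You can patch this by one more step (since $\Pr[V \in A] \geq \alpha/2$, some $v^* \in A$ has $\Pr[V \geq v^*] \geq \alpha/4$), but even then the tie-breaking case you flag at the very end is not a degenerate corner: it needs the full ``bad half of identities $\mathcal{J}$'' machinery from the ex post proof of \Cref{thm:publicoutcome}, which your proposal replaces with a single appeal to distributional equivalence without saying how $U$'s identity is assigned in a way that simultaneously reproduces the main-session distribution and puts $U$ on the losing side of ties in the baseline.

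The cleaner route, which is what the paper's proof implicitly does, is to avoid fixing $v^*$ by a two-step selection and instead average the \emph{gain} in one shot. Let $g(v)$ denote the coalition buyer's main-session expected utility when it plays $j^*$'s role at true value $v$ with $n(\lambda)-1$ fake bidders drawn i.i.d.\ from $\mathcal{D}_\lambda$, and let $h(v)$ denote its honest baseline expected utility at value $v$ in the crowded world of $n(\lambda)+K'$ players. Non-triviality gives $\E_{v \gets \mathcal{D}_\lambda}[g(v)] \geq 1/\poly(\lambda)$. On the other side, the total number of items is at most $k$, so summing the (value-averaged) winning probabilities over the $K'$ candidate identities and taking the smallest yields $\E_{v \gets \mathcal{D}_\lambda}[h(v)] \leq k/K'$ for some choice of $U$'s identity --- and this holds uniformly over $\mathcal{D}_\lambda$ with no concentration bound, no tail-mass lower bound, and no tie-breaking analysis. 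Choosing $K'$ a sufficiently large polynomial makes $\E_v[g(v) - h(v)]$ strictly positive, hence some specific $v^*$ has a positive gain, and hardcoding $v^*$ as $U$'s true value (via non-uniform advice in $S$) produces the violation of strong computational Bayesian $1$-pbIC. This sidesteps both of the obstacles you describe at the end of your write-up.
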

\begin{proof}
The proof is similar to that of \Cref{thm:publicoutcome}. 
Let $\mcal{J}$ be a set of superpolynomially many identities
such that for $i \in \mcal{J}$, when we sample $K$ 
distinct identities at random including $i$, $i$'s probability
of getting an item is at most $2k/K$ assuming everyone samples
their true value from $\mcal{D}_\lambda$.

Now, sample $n(\lambda) -1$ identities $\mcal{J}' \subset \mcal{J}$
and some fixed $j^*$ such that when 
$\mcal{J}' \cup \{j^*\}$ participate in an auction together
using input values independently sampled from $\mcal{D}_\lambda$, 
$j^*$ has expected utility at least
$1/(\poly(\lambda)\cdot n(\lambda))$. 

Imagine a world consisting of the identities
$\mcal{J}'$,  
some victim $V$
with an arbitrary distinct identity in $\mcal{J}$, 
and moreover $K'$ additional buyers  
with arbitrary distinct identities from $\mcal{J}$.
Among the last  $K'$ buyers, some buyer (henceforth called $U$)
must 
have at most $k/K'$ probability of getting an item if everyone
draws their value at random from $\mcal{D}_\lambda$.  

Now, imagine that the platform colludes with $U$, and they adopt
the following polynomial-time strategy:
\begin{itemize}[leftmargin=6mm,itemsep=1pt]
\item 
The platform simulates itself and a player $j^*$, and interacts
with $\mcal{J}'$; all real and simulated buyers draw  
their 
true value independently from 
$\mcal{D}_\lambda$; 
\item 
With each of the victim $V$, as well as the $K'-1$ remaining buyers,
the platform forks 
$K'$ sessions 
and each session does 
the following: 
the platform creates $K-1$ random fake buyer identities 
and all of them use a random input from $\mcal{D}_\lambda$. The platform
simulates (the honest behavior of) these fake buyers and the seller 
in its head, and interacts with the buyer. 
\end{itemize}
The remaining calculations are similar to that of \Cref{thm:publicoutcome}.
For the Bayesian version, we additionally need
strong computational Bayesian pIC because
\Cref{thm:0platformrev} which we rely on needs it for the Bayesian setting. 
\end{proof}

\subsection{bIC + 2-pbIC $\Longrightarrow$ Impossibility}

\paragraph{Proof roadmap.}
We want to show that it is not possible to ask for bIC and 2-pbIC
at the same time.  
Our proof is inspired by the techniques of 
Shi et al.~\cite{crypto-tfm}.
We describe the informal intuition
assuming the information theoretic and ex post settings, and assuming
strong symmetry. 
Our actual proof later generalizes it to computational  
and Bayesian settings, and replaces
the strong symmetry assumption with weak symmetry (see \Cref{defn:weaksym}).

Due to \Cref{cor:0platformrev}, 2-pbIC (which implies 1-pbIC) plus
bIC implies that the platform's revenue is always $0$. 
This fact, plus 2-pbIC, implies 2-bIC, that is, 
honest behavior maximizes
the utility for any coalition of $2$ buyers. 
Due to the elegant work of Goldberg and Hartline~\cite{goldberghartline}, 
we know that any 2-bIC 
auction must be utility-equivalent to posted price, that is, what buyer $j$'s
true value is has no effect on buyer $i$'s 
utility (assuming that every one
acts honestly). 
Therefore, we can lower buyer $j$'s true 
value to $0$ without affect buyer $i$'s utility. 
We then argue that because of 2-pbIC, if we drop a buyer $j$ whose 
true value is $0$, user $i$'s utility must be unaffected too
(\Cref{lem:2pbic-imported}). 
If we repeat this argument and lower every buyer's true value
to $0$, and then make the buyer drop, we conclude
that buyer $i$'s utility is the same in a world
with only buyer $i$, and in 
a world with other buyers with arbitrary values. 
In particular, in a crowded world with many buyers all having the
same value as $i$, 
due to strong symmetry, buyer $i$'s utility goes to $0$. 
This means that any buyer's utility must be $0$ in any scenario. 

We now present the formal proofs where we remove the strong symmetry
assumption and generalize the argument to computational and Bayesian settings.

\begin{lemma}
Suppose an auction parametrized by the parameter $\lambda$ satisfies 
strong computational Bayesian $2$-pbIC w.r.t~$\mcal{D}_\lambda$,
where 
$\mcal{D}_\lambda$ is a distribution over   
some finite domain $\inDom_\lambda \subset \mathbb{R}_{\geq 0}$.
Then, for any $\lambda$, 
for any $n$ that is polynomially bounded in $\lambda$,
for any set $\mcal{I}$ of $n+2$ identities, 
for any distinct $i, j \in \mcal{I}$
it holds that 
\begin{equation}
\displaystyle\mathop{\E}_{\bfb_{-j}\getr \mcal{D}_\lambda^{n+1}}
\left[\pbutil_i(0_j, \bfb_{-j})
 \right]
       = \displaystyle\mathop{\E}_{\bfb_{-j}\getr \mcal{D}_\lambda^{n+1}}
\left[\pbutil_i(\bfb_{-j})
 \right]
 \end{equation}
where $\pbutil_i(\bfb'_{\mcal{J}})$
 denotes the expected joint utility of the platform and buyer $i$
when bidders with identities from $\mcal{J}$
take on the value vector $\bfb'_{\mcal{J}}$.
\label{lem:2pbic-imported}
\end{lemma}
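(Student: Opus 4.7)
The plan is to sandwich $\E[\pbutil_i(\bfb_{-j})]$ and $\E[\pbutil_i(0_j,\bfb_{-j})]$ on both sides using two complementary applications of the strong computational Bayesian $2$-pbIC hypothesis, once to a three-member coalition and once to the embedded two-member coalition.

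\textbf{Step 1: $\E[\pbutil_i(\bfb_{-j})] \leq \E[\pbutil_i(0_j,\bfb_{-j})]$.} Fix any $v_i \in \inDomSecu$ and invoke strong Bayesian $2$-pbIC for the coalition $\mcal{C} = \{\mcal{P}, i, j\}$ with true values $(v_i, 0)$ and $n$ honest buyers drawn from $\mcal{D}_\lambda^n$. The coalition's honest strategy has $i$ bid $v_i$ and $j$ bid $0$, realizing the bid vector $(0_j,\bfb_{-j})$; the deviation in which $j$ silently refrains from participating realizes the bid vector $\bfb_{-j}$. A direct expansion of the coalition-utility formula $\sum_{\mcal{H}} p_k - \mu_{\mcal{S}} + \sum_{\mcal{B}^*} v_b$ shows that, because $v_j = 0$ and individual rationality forces $p_j(0_j,\bfb_{-j}) = 0$, the coalition's joint utility in each of the two scenarios equals $\pbutil_i$ of the corresponding bid vector. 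The $2$-pbIC inequality then reads, fiberwise in $v_i$, as the desired inequality; averaging over $v_i \sim \mcal{D}_\lambda$ (taken to be the $i$-th coordinate of $\bfb_{-j}$) yields this direction.

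\textbf{Step 2: $\E[\pbutil_i(0_j,\bfb_{-j})] \leq \E[\pbutil_i(\bfb_{-j})]$.} Apply strong Bayesian $1$-pbIC, which is implied by $2$-pbIC, to the smaller coalition $\{\mcal{P}, i\}$ with true value $v_i$ and $n$ honest buyers from $\mcal{D}_\lambda^n$ carrying identities $\mcal{I} \setminus \{i,j\}$. Consider the platform's deviation of injecting a fake bidder that carries identity $j$ and submits bid $0$; the resulting transcript is identically distributed to an honest execution on bid vector $(0_j,\bfb_{-j})$, while the honest coalition strategy yields the transcript for $\bfb_{-j}$. Verifying that the coalition utilities again agree with $\pbutil_i(\cdot)$ on each side, invoking $1$-pbIC, and averaging over $v_i \sim \mcal{D}_\lambda$ gives the reverse inequality.

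\textbf{Main obstacle.} The delicate point is the coalition-utility reallocation rule: whenever $j$ is in the coalition (as a real member with $v_j = 0$ in Step 1, or as a fake injected bidder in Step 2), the model lets the coalition reassign items that $j$ wins to its highest-true-value member $i$, which could in principle contribute a positive slack of the form $v_i(1 - x_i(0_j,\bfb_{-j}))\,x_j(0_j,\bfb_{-j})$ on top of $\pbutil_i(0_j,\bfb_{-j})$. The cleanest way around this is to carry this slack symbolically through both inequalities above: Step 1 then gives $\E[\pbutil_i(\bfb_{-j})] \leq \E[\pbutil_i(0_j,\bfb_{-j})] + \E[v_i(1-x_i)x_j]$, while Step 2 gives $\E[\pbutil_i(0_j,\bfb_{-j})] + \E[v_i(1-x_i)x_j] \leq \E[\pbutil_i(\bfb_{-j})]$, and chaining the two forces both the slack expectation and the gap between $\E[\pbutil_i(\bfb_{-j})]$ and $\E[\pbutil_i(0_j,\bfb_{-j})]$ to coincide; the proof is completed by showing, via a further appeal to IR and the monotonicity implicit in $2$-pbIC, that this common value is $0$.
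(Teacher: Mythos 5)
Your two-direction decomposition matches the paper's, and your Step 1 (the drop-$j$ deviation for the three-member coalition $\{\mcal{P}, i, j\}$) is essentially the paper's ``$\geq$'' direction. However, your Step 2 covers only what the paper explicitly labels the easy case. You have the platform inject a fake bidder \emph{carrying identity $j$ itself}; the paper notes that this works ``if the coalition can hijack the identity $j$,'' and then spends the bulk of the proof on the case where the coalition cannot control $j$ (for instance because $j$ is a cryptographic identity bound to an honest player, or simply because the lemma must hold for every choice of $j \in \mcal{I}$ including non-usable ones). In that harder case the paper introduces a fresh identity $m$ under the coalition's control, establishes $\E[\pbutil_i(0_m, \bfb_{-j})] \leq \E[\pbutil_i(\bfb_{-j})] < \E[\pbutil_i(0_j, \bfb_{-j})]$, and then invokes weak symmetry twice: once to conclude that the platform's expected revenue is unchanged when a $0$-bid moves its identity from $j$ to $m$ (converting the strict $\pbutil_i$ inequality into one about $\util_i$), and once to conclude that total honest-buyer utility is preserved, so some other honest $i'$ must have $\E[\util_{i'}(0_m, \cdot)] > \E[\util_{i'}(0_j, \cdot)]$, after which the three-member coalition $\{\mcal{P}, i', j\}$ gains by having $j$ drop and simulating a fresh $m$ with bid $0$. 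This entire weak-symmetry argument is absent from your proposal, and it is the substance of the lemma. Without it you have only proven the statement under an additional assumption that the coalition can impersonate $j$, which is not implied by the hypotheses.

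On the secondary point: you are right that the coalition-utility reallocation rule introduces a slack of the form $v_i \cdot \mathbb{1}[x_i = 0 \wedge x_j \geq 1]$ whenever $j$ is inside the coalition or injected as a fake bidder; the paper passes over this silently. But your proposed fix does not close the gap. Chaining your two inequalities gives only $\E[\pbutil_i(\bfb_{-j})] = \E[\pbutil_i(0_j, \bfb_{-j})] + \E[\mathrm{slack}]$, which reduces the lemma to showing $\E[\mathrm{slack}] = 0$. Your appeal to ``IR and the monotonicity implicit in $2$-pbIC'' does not obviously supply this: IR only forces $p_j = 0$, not that the event $\{x_i = 0, x_j \geq 1\}$ has zero probability, and $2$-pbIC by itself does not give Myerson-style monotonicity of $\bfx$. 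Since the paper also does not address this term, I would not treat it as a defect unique to your argument, but you should not assert that the chaining ``forces'' the slack to vanish, because it does not.
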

In the above lemma, when we write $v_i$ or $0_j$, the subscripts
indicates the identity 
of the buyer. We use $-j$ to denote $\mcal{I}\backslash \{j\}$,
and the notation $\bfb_{-j}$ 
means the bids in this vector take on the identities in the 
set $\mcal{I}\backslash \{j\}$.
\begin{proof}(of \Cref{lem:2pbic-imported})
The proof is similar to the proof of Lemma 5.3 in Appendix C.3 
of Shi et al.\cite{crypto-tfm}.
Below we provide a simpler version of the proof adapted to our context.

\paragraph{The $\geq$ direction.} 
It is easy to show that the left-hand 
side is greater than or equal to the right-hand side. 
If the right-hand side is strictly larger, 
the following holds for some specific
$v_i$:
$\displaystyle\mathop{\E}_{\bfb_{-\{i,j\}}\getr \mcal{D}_\lambda^{n}}
\left[\pbutil_i(v_i, 0_j, \bfb_{-\{i,j\}})
 \right]
      < \displaystyle\mathop{\E}_{\bfb_{-\{i,j\}}\getr \mcal{D}_\lambda^{n}}
\left[\pbutil_i(v_i, \bfb_{-\{i,j\}})
 \right]$. This means that if buyer $i$ with true value $v_i$
and buyer $j$ with true value $j$ both 
collude with the platform, the coalition  
is better off having $j$ drop off, which violates 2-pbIC.  

\paragraph{The $\leq$ direction.} 
Below, we focus on proving that 
the left-hand side is smaller than or equal
to the right-hand side.
Suppose the lemma is not true, i.e., there exists
a $\lambda$, 
a set of $n + 2$ identities $\mcal{I}$ 
such that for some $i, j \in \mcal{I}$,  
the following holds --- henceforth, for convenience
let $\mcal{D} := \mcal{D}_\lambda$: 
\begin{equation}
\displaystyle\mathop{\E}_{\bfb_{-j} \getr \mcal{D}^{n+1}}
\left[\pbutil_i(0_j, \bfb_{-j})
\right]
> 
\displaystyle\mathop{\E}_{\bfb_{-j}\getr \mcal{D}^{n+1}}
\left[\pbutil_i(\bfb_{-j})
\right].
\label{eqn:inject0}
\end{equation}
If the coalition can hijack the identity $j$, then 
we can easily show that the above 
\Cref{eqn:inject0} is not possible as follows. 
Given \Cref{eqn:inject0}, it must be that there exists
some specific $v_i$
such that 
\begin{equation}
\displaystyle\mathop{\E}_{\bfb_{-\{i,j\}}\getr \mcal{D}^{n}}
\left[\pbutil_i(v_i, 0_j, \bfb_{-\{i,j\}})
\right]
> 
	\displaystyle\mathop{\E}_{\bfb_{-\{i,j\}}\getr \mcal{D}^{n}}
\left[\pbutil_i(v_i, \bfb_{-\{i,j\}})
\right].
\label{eqn:inject0-fixvi}
\end{equation}
This  means that if the buyer $i$ 
with value $v_i$ colludes with the platform, and the
rest of the world consists of 
random bids from $\mcal{D}$ under the identities 
$- \{i, j\}$, 
then the coalition should simulate a fake identity $j$
whose true value is $0$, 
which improves their joint utility. This violates strong computational $2$-pbIC. 

In the remainder of the proof, 
we will show that the lemma holds even when 
the strategic coalition cannot hijack the fixed identity $j$. 
Suppose that 
some identity $m$ is under the control of the strategic platform. 
It must be that 
\begin{align*}
\displaystyle\mathop{\E}_{\bfb_{-j}\getr \mcal{D}^{n+1}}
\left[\pbutil_i(0_m, \bfb_{-\{i,j\}})
\right]
 &\leq
\displaystyle\mathop{\E}_{\bfb_{-j}\getr \mcal{D}^{n+1}}
\left[\pbutil_i(\bfb_{-j})
\right]
< 
\displaystyle\mathop{\E}_{\bfb_{-j}\getr \mcal{D}^{n+1}}
\left[\pbutil_i(0_j, \bfb_{-j})
\right]
\end{align*}
where the second inequality 
is due to \Cref{eqn:inject0}, and the first inequality can be 
shown 
using the same argument as above: if not, we can reach a contradiction
by showing that the platform-buyer $i$ coalition can improve their
utility by injecting a fake bid $0_m$.

Due to weak symmetry, 
the platform's expected revenue does not change when one $0$ bid changes
its identity from $j$ to $m$. Thus, 
we have that 
$
\displaystyle\mathop{\E}_{\bfb_{-j}\getr \mcal{D}^{n+1}}
\left[{\sf util}_i(0_m, \bfb_{-j})
\right]
< 
\displaystyle\mathop{\E}_{\bfb_{-j}\getr \mcal{D}^{n+1}}
\left[{\sf util}_i(0_j, \bfb_{-j})
\right]$
where ${\sf util}_i(\bfb'_\mcal{J})$
denotes buyer $i$'s expected utility when 
the buyers with identities $\mcal{J}$ adopt 
the value vector $\bfb'_{\mcal{J}}$.
Due to weak symmetry, we claim 
that there exists some $i' \in \mcal{I}\backslash \{i, j\}$, 
such that 
$
\displaystyle\mathop{\E}_{\bfb_{-j}\getr \mcal{D}^{n+1}}
\left[\util_{i'}(0_m, \bfb_{-j})
\right]
> 
\displaystyle\mathop{\E}_{\bfb_{-j}\getr \mcal{D}^{n+1}}
\left[\util_{i'}(0_j, \bfb_{-j})
\right]$. 
This is because by weak symmetry, 
the expected total buyer utility should be the same when one $0$-bid changes
its identity from $j$ to $m$.  
\ignore{
Therefore, we have 
$$
\displaystyle\mathop{\E}_{\bfb_{-j})\getr \mcal{D}^{n+1}}
\left[\pbutil_{i'}(0_m, \bfb_{-j})
\right]
> \displaystyle\mathop{\E}_{\bfb_{-j})\getr \mcal{D}^{n+1}}
\left[\pbutil_{i'}(0_j, \bfb_{-j})
\right]$$
}
This implies that there exists some specific choice of $v_{i'}$, such that 
$$\displaystyle\mathop{\E}_{\bfb_{-\{i',j\}}\getr \mcal{D}^{n}}
\left[\pbutil_{i'}(v_{i'}, 0_m, \bfb_{-\{i',j\}})
\right]
	> \displaystyle\mathop{\E}_{\bfb_{-\{i',j\}}\getr \mcal{D}^{n}}
\left[\pbutil_{i'}(v_{i'}, 0_j, \bfb_{-\{i',j\}})
\right]$$
Now, if the buyer $i'$
has true value $v_{i'}$, the buyer $j$
has true value $0$, and both buyers
collude with the platform,
and the rest of the world consists
of identities $\mcal{I}\backslash\{i', j\}$
and their true values  
are randomly drawn from $\mcal{D}$, 
then the coalition is better off if $j$ simply drops offline,
and the coalition simulates a fake identity $m$ with true value $0$, 
which
violates strong computational 2-pbIC.
\end{proof}

\begin{lemma}
\label{lem:2pbic-revenue-gap}
Let $\Pi_\lambda$ be an auction over the value domain $\inDomSecu$
which satisfies strong computational Bayesian bIC 
and strong computational Bayesian $2$-pbIC w.r.t~$\mcal{D}_\lambda$,
which is a distribution over  $\inDomSecu$.
Then, for any $\lambda \in \N$, 
for any set $\mcal{I}$ of $n$ identities
where
$n$ is polynomially bounded in $\lambda$, 
and for any buyer $i \in \mcal{I}$, 
it must be that \[
\left|
\mathop{\E}_{\bfb \getr \mcal{D}_\lambda^{n}}[\util_i(\bfb)]
-
\mathop{\E}_{v_i \getr \mcal{D}_\lambda}[\util_i(v_i)]
\right|
\leq 2 \cdot {\sf tick}(\inDomSecu) \cdot k \cdot (\ln n + O(1)).
\]
\label{lem:2pbic-key}
\end{lemma}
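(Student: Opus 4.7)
My strategy is to decompose $\util_i(\bfb) = \pbutil_i(\bfb) - \mu(\bfb)$, bound the change in $\pbutil_i$ and in $\mu$ separately as we pass from $\bfb \getr \mcal{D}_\lambda^n$ to a single truthful bid $v_i \getr \mcal{D}_\lambda$, and combine the two bounds via the triangle inequality. Throughout I would write $\mcal{D} := \mcal{D}_\lambda$ and ${\sf tick} := {\sf tick}(\inDomSecu)$.

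First I would use \Cref{lem:2pbic-imported} iteratively, each time with a \emph{fresh} identity $j_t \in \mcal{I} \setminus \{i\}$, to obtain $\E_{v_i \getr \mcal{D}}[\pbutil_i(v_i)] = \E_{v_i \getr \mcal{D}}[\pbutil_i(v_i, 0_{j_1}, \ldots, 0_{j_{n-1}})]$. So the task reduces to comparing $\E_{\bfb}[\pbutil_i(\bfb)]$ (with $n-1$ random non-$i$ bids) against $\E_{v_i}[\pbutil_i(v_i, 0, \ldots, 0)]$ (with $n-1$ zero non-$i$ bids), both with $n$ bidders.

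Next, I would establish a \emph{tick-level Lipschitz bound}: fix $j \neq i$ and two adjacent points $\theta_{l-1} < \theta_l$ in $\inDomSecu$; write $\bfb = (v_i, \theta_l, \bfb_{-\{i,j\}})$ and $\bfb' = (v_i, \theta_{l-1}, \bfb_{-\{i,j\}})$. Applying strong Bayesian $2$-pbIC to the coalition $(i, j, \textrm{platform})$ in two mirrored scenarios --- (a) $v_j = \theta_l$ with $j$ strategically bidding $\theta_{l-1}$, and (b) $v_j = \theta_{l-1}$ with $j$ strategically bidding $\theta_l$ --- gives two inequalities about $\pbutil_i(\bfb) - \pbutil_i(\bfb')$ in which $\util_j$ terms appear as $\theta_l$ or $\theta_{l-1}$ weighted differences of $x_j$ and $p_j$. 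Pairing each with Myerson's payment sandwich (\Cref{lem:myerson}) for buyer $j$, the $p_j$ terms get squeezed between $\theta_{l-1}(x_j(\bfb) - x_j(\bfb'))$ and $\theta_l (x_j(\bfb) - x_j(\bfb'))$, and the cancellations yield
\[
|\pbutil_i(\bfb) - \pbutil_i(\bfb')| \le {\sf tick} \cdot \bigl(x_j(\bfb) - x_j(\bfb')\bigr).
\]
Telescoping along the ticks from $0$ to any $b_j$ then gives $|\pbutil_i(v_i, b_j, \bfb_{-\{i,j\}}) - \pbutil_i(v_i, 0_j, \bfb_{-\{i,j\}})| \le {\sf tick} \cdot x_j(v_i, b_j, \bfb_{-\{i,j\}})$.

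I would then telescope across the $n-1$ non-$i$ coordinates through a sequence of hybrids $\bfb^{(0)} = \bfb, \bfb^{(1)}, \ldots, \bfb^{(n-1)} = (v_i, 0, \ldots, 0)$ where step $t$ replaces the bid at identity $j_t$ by $0$. The per-step bound yields $|\E[\pbutil_i(\bfb)] - \E[\pbutil_i(v_i, 0, \ldots, 0)]| \le {\sf tick} \cdot \sum_{t=1}^{n-1} \E[x_{j_t}(\bfb^{(t-1)})]$. To turn this into a harmonic sum I would, at each step, choose $j_t$ among the remaining non-$i$ identities to minimize $\E[x_{j_t}(\bfb^{(t-1)})]$; by weak symmetry (\Cref{defn:weaksym}) and the fact that the total expected allocation across the $n-t+1$ remaining random-bid identities is at most $k$, a pigeonhole argument gives $\E[x_{j_t}(\bfb^{(t-1)})] \le k/(n-t)$. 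Summing yields $\sum_{t=1}^{n-1} k/(n-t) \le k(\ln n + O(1))$, so combining with the first step, $|\E[\pbutil_i(\bfb)] - \E[\pbutil_i(v_i)]| \le {\sf tick} \cdot k \cdot (\ln n + O(1))$. Finally, \Cref{thm:0platformrev} bounds both $\E[\mu(\bfb)]$ and $\E[\mu(v_i)]$ by ${\sf tick} \cdot k \cdot (\ln n + O(1))$ (noting $2$-pbIC implies $1$-pbIC), so $|\E[\mu(\bfb)] - \E[\mu(v_i)]|$ is bounded by the same quantity, and the triangle inequality on $\util_i = \pbutil_i - \mu$ delivers the claimed factor of $2$.

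The main obstacle is the tick-level Lipschitz bound: $2$-pbIC alone only controls $\pbutil_i$ changes up to the magnitude of $j$'s allocation or payment, which could be of order $1$; it is only when each inequality is paired with Myerson's sandwich at \emph{adjacent} discretization points that the two cancellations conspire to surface the ${\sf tick}$ factor. A secondary subtlety is choosing the telescoping order via weak symmetry so that the per-step allocations add to the harmonic $k \log n$ rather than a trivial linear $k \cdot n$.
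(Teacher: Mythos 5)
Your high-level strategy---decomposing $\util_i = \pbutil_i - \mu$, deriving per-coordinate ${\sf tick}$-Lipschitz bounds by pairing $2$-pbIC with Myerson's sandwich at adjacent ticks, and harmonic-summing via a pigeonhole choice of which coordinate to zero out next---tracks the same ideas as the paper. The difficulty lies in how you telescope.

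Your hybrids $\bfb^{(0)},\ldots,\bfb^{(n-1)}$ keep all $n$ identities present and deterministically pin the bids of $j_1,\ldots,j_{t-1}$ to $0$. At step $t$ you want to apply strong Bayesian $2$-pbIC (with coalition $\{\text{platform}, i, j_t\}$) to obtain $\bigl|\E[\pbutil_i(\bfb^{(t-1)})] - \E[\pbutil_i(\bfb^{(t)})]\bigr| \le {\sf tick}\cdot\E[x_{j_t}(\bfb^{(t-1)})]$. But \Cref{def:comp-bayesian-IC} only guarantees incentive compatibility when every honest, non-colluding buyer's bid is drawn from $\mcal{D}_\lambda$; in $\bfb^{(t-1)}$ the identities $j_1,\ldots,j_{t-1}$ are outside the coalition yet have their bids frozen at $0$, which is not the same probability space. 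You cannot in general condition a Bayesian-average inequality on a positive-probability event (those $t-1$ bids equalling $0$) and retain the bound, and if $\mcal{D}_\lambda$ does not put mass on $0$ the world you are reasoning about is not even reachable. The same objection sinks your step~1: \Cref{lem:2pbic-imported} states $\E_{\bfb_{-j}\getr\mcal{D}_\lambda^{n+1}}[\pbutil_i(0_j,\bfb_{-j})] = \E_{\bfb_{-j}\getr\mcal{D}_\lambda^{n+1}}[\pbutil_i(\bfb_{-j})]$, i.e.\ the non-$j$ coordinates are random; iterating it after you have already pinned some of those coordinates to $0$ is using the lemma outside its hypotheses.

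The paper avoids this by telescoping on the \emph{number of identities present} rather than on bid values: each step first replaces one random bid $b_j$ by $0_j$ (\Cref{clm:2pbic-helper}) and then \emph{deletes} identity $j$ entirely (\Cref{lem:2pbic-imported}), so that after every step the non-$i$ bids are still a fresh i.i.d.\ sample from $\mcal{D}_\lambda$ and the Bayesian hypotheses keep applying down the induction. This also explains why the factor of $2$ appears per-coordinate in the paper rather than globally: the $\util_i = \pbutil_i - \mu$ decomposition is applied inside \Cref{clm:2pbic-helper}, not once at the end. A secondary issue is your step~4: you invoke \Cref{thm:0platformrev} to bound $\E[\mu(\bfb)]$ and $\E[\mu(v_i)]$, but the Bayesian branch of that theorem additionally assumes strong Bayesian pIC, which is not among the hypotheses of \Cref{lem:2pbic-key} and is not implied by $2$-pbIC (a $c$-pbIC coalition must contain at least one buyer, so it never covers the platform acting alone). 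The paper extracts only the bIC$\,+\,$1-pbIC portion of that argument as a per-$b_j$ inequality on $\bar{\mu}$ and never needs a global revenue bound.
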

\begin{proof}
Throughout the proof, we fix an arbitrary $\lambda$, $n$, and 
an arbitrary buyer $i$.  
For simplicity, we write 
$\mcal{D} = \mcal{D}_\lambda$, $\mcal{\inDom} = \inDomSecu$,
and $-j = \mcal{I}\backslash \{j\}$ 
below in this proof.
Given any buyer $j$, and any bid $b_j$,
we define
\begin{align*}
	\bar{x}_{j}(b_j) &\coloneqq 
\underset{\bfb_{-j} \getr \mcal{D}^{n-1}}{\E}[{x}_j(b_j, \bfb_{-j})]\\
	\bar{p}_{j}(b_j) &\coloneqq \underset{\bfb_{-j} \getr \mcal{D}^{n-1}}{\E}[{p}_j(b_j, \bfb_{-j})]\\
	\bar{\mu}(b_j)   &\coloneqq \underset{\bfb_{-j} \getr \mcal{D}^{n-1}}{\E}[\mu(b_j, \bfb_{-j})]
\end{align*}

\begin{claim}
For any $i$, $j$ and $b_j$, 
$$
\left|\mathop{\E}_{\bfb_{-j}
\getr \mcal{D}^{n-1}}[\util_i(b_j, \bfb_{-j})]
-
\mathop{\E}_{\bfb_{-j}
\getr \mcal{D}^{n-1}}[\util_i(0_j, \bfb_{-j})]
\right|
\leq 2 \cdot {\sf tick}(\inDom) \cdot \bar{x}_j(b_j)$$
\label{clm:2pbic-helper}
\end{claim}
\begin{proof}
\elaine{0platform rev requires pic, use proof non black box}
Using the same proof as that of \Cref{thm:0platformrev}, 
due to strong computational bIC and 1-pbIC (which is implied by 2-pbIC), 
we have that for any $b_j$, 
\begin{equation}
|\bar{\mu}(b_j) - \bar{\mu}(0_j)| \leq 
{\sf tick}(\inDom) \cdot \bar{x}_i(b_j)
\label{eqn:prevdiff}
\end{equation}
Similarly, using the 
technique as the proof of \Cref{thm:0platformrev}, 
when the platform, $i$, and $j$ form a coalition,
the increase in the joint utility of the platform and $i$ should
not exceed the loss in $j$'s utility 
when $j$ bids strategically. 
Therefore, we can show that 
for any $b_i, b_j$, 
\[
\begin{array}{ll}
\left|
\displaystyle\mathop{\E}_{\bfb_{-\{i, j\}} \getr \mcal{D}^{n-2}} 
\left[
\pbutil_i(b_i, 0_j, \bfb_{-\{i, j\}}) 
\right]
- 
\displaystyle\mathop{\E}_{\bfb_{-\{i, j\}} \getr \mcal{D}^{n-2}} 
\left[
\pbutil_i(b_i, b_j, \bfb_{-\{i, j\}})
\right]
\right| 
\\[8pt]
\qquad \qquad \qquad \qquad \qquad \qquad \qquad \qquad \qquad \qquad
\leq 
{\sf tick}(\inDom) \cdot 
\displaystyle\mathop{\E}_{\bfb_{-\{i, j\}} \getr \mcal{D}^{n-2}} 
\left[x_j(b_i, b_j, \bfb_{-\{i, j\}})
\right]
\end{array}
\]
where $\pbutil_i(\bfb)$ denotes the joint utility of the platform and  
buyer $i$ under $\bfb$.
Thus,  taking expectation over $b_i$,  
we have 
\begin{equation}
\left|
\mathop{\E}_{\bfb_{-j} \getr \mcal{D}^{n-1}} 
\left[
\pbutil_i(0_j, \bfb_{-\{i, j\}}) 
\right]
- 
\mathop{\E}_{\bfb_{-j} \getr \mcal{D}^{n-1}} 
\left[
\pbutil_i(b_j, \bfb_{-\{i, j\}})
\right]
\right| \leq 
{\sf tick}(\inDom) \cdot \bar{x}_i(b_j)
\label{eqn:pbutildiff}
\end{equation}

The claim follows by combining \Cref{eqn:prevdiff,eqn:pbutildiff}. 

\ignore{
Suppose that the claim is not true
for the sake of contradiction, that is,
$
\left|
\mathop{\E}_{\bfb_{-j}
\getr \mcal{D}^{n-1}}[\util_i(b_j, \bfb_{-j})]
-
\mathop{\E}_{\bfb_{-j}
\getr \mcal{D}^{n-1}}[\util_i(0_j, \bfb_{-j})]\right|
> 2 \cdot {\sf tick}(\inDom) \cdot \bar{x}_j(b_j, \bfb_{-j})$. 
Without loss of generality, 
suppose 
$\mathop{\E}_{\bfb_{-j}
\getr \mcal{D}^{n-1}}[\util_i(b_j, \bfb_{-j})]
> \mathop{\E}_{\bfb_{-j}
\getr \mcal{D}^{n-1}}[\util_i(0_j, \bfb_{-j})]$ since the other way
has a symmetric proof. 
In this case, if buyer  $j$ has true value $0$
and colludes with the platform, the coalition should have
buyer $j$ use the untruthful input $b_j$ instead, and the coalition
is better off under this input replacement strategy.
}
\end{proof}

We  now continue with the proof of 
\Cref{lem:2pbic-revenue-gap}.
By ~\Cref{lem:2pbic-imported} and \Cref{clm:2pbic-helper}, 
we have 
that  for all $b_j$, 
$$
\left|\mathop{\E}_{\bfb_{-j}
\getr \mcal{D}^{n-1}}[\util_i(b_j, \bfb_{-j})]
-\mathop{\E}_{\bfb_{-j}
\getr \mcal{D}^{n-1}}[\util_i(\bfb_{-j})]\right| \leq 
2 \cdot {\sf tick}(\inDom) \cdot \bar{x}_j(b_j) 
$$
Therefore, 
$$
\left|\mathop{\E}_{\bfb
\getr \mcal{D}^{n}}[\util_i(\bfb)]
-\mathop{\E}_{\bfb_{-j}
\getr \mcal{D}^{n-1}}[\util_i(\bfb_{-j})]\right|
\leq 2 \cdot {\sf tick}(\inDom) \cdot 
\mathop{\E}_{\bfb
\getr \mcal{D}^{n}}[{x}_j(\bfb)]
$$
If we pick $j \neq i$ to be the one such that 
$\mathop{\E}_{\bfb
\getr \mcal{D}^{n}}[{x}_j(\bfb)]$ is minimized, we
have that 
$$
\left|\mathop{\E}_{\bfb
\getr \mcal{D}^{n}}[\util_i(\bfb)]
- \mathop{\E}_{\bfb_{-j}
\getr \mcal{D}^{n-1}}[\util_i(\bfb_{-j})] \right| 
\leq 
2 \cdot {\sf tick}(\inDom) \cdot  
k/(n-1)
$$
Now, using the same reasoning inductively on 
$\mathop{\E}_{\bfb_{-j}
\getr \mcal{D}^{n-1}}[\util_i(\bfb_{-j})]$, we get 
that 
\begin{align*}
\left|
\mathop{\E}_{\bfb
\getr \mcal{D}^{n}}[\util_i(\bfb)]
- \mathop{\E}_{v_i
\getr \mcal{D}}[\util_i(v_i)] \right| 
& \leq  
2 \cdot {\sf tick}(\inDom) \cdot  
\left(\frac{k}{n-1} + \frac{k}{n-1} + \ldots +  \frac{k}{1} \right)
\\
& \leq 
2 \cdot {\sf tick}(\inDom) \cdot k 
\cdot (\ln n + O(1))
\end{align*}
\end{proof}

\begin{theorem}[2pbIC + bIC $\Longrightarrow$ impossible]
\label{thm:2pbic-impossible}
Let $\Pi_\lambda$ be an 
auction for selling $k$ items over the value
domain 
$\inDomSecu \subset [0, 1]$.  
Suppose there exists a negligible function $\negl$ such that
${\sf tick}(\inDomSecu) \leq \negl(\lambda)$ for 
all sufficiently large $\lambda$. 
We have that 
for any distribution $\mcal{D}_\lambda$ over $\inDomSecu$,
if $\Pi_\lambda$ is non-trivial w.r.t. $\mcal{D}_\lambda$,  
then $\Pi_\lambda$
cannot simultaneously satisfy strong computational Bayesian bIC 
and strong computational 
Bayesian $2$-pbIC simultaneously w.r.t.
$\mcal{D}_\lambda$ over $\inDomSecu$. 
\label{thm:2pbic}
\end{theorem}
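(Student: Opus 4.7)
The plan is to derive a contradiction from the non-triviality hypothesis by combining \Cref{lem:2pbic-key} with weak symmetry and the trivial revenue-free total utility bound of $k$ (since values lie in $[0,1]$ and at most $k$ items are sold). First, I would use non-triviality to extract, for infinitely many $\lambda$, a set $\mcal{I}$ of polynomially many identities whose total expected buyer utility under $\mcal{D}_\lambda$ is at least $1/\poly(\lambda)$. By averaging, some identity $i^{\star} \in \mcal{I}$ satisfies $\E_{\bfb \getr \mcal{D}_\lambda^{|\mcal{I}|}}[\util_{i^{\star}}(\bfb)] \geq 1/\poly(\lambda)$. Applying \Cref{lem:2pbic-key} to this $i^{\star}$, and using that ${\sf tick}(\inDomSecu) \leq \negl(\lambda)$ with $|\mcal{I}|, k$ polynomial in $\lambda$, I would conclude that the singleton expected utility $u_{i^{\star}} \coloneqq \E_{v \getr \mcal{D}_\lambda}[\util_{i^{\star}}(v)]$ already satisfies $u_{i^{\star}} \geq 1/\poly(\lambda)$.

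Second, I would invoke weak symmetry (\Cref{defn:weaksym}) to argue that in a singleton auction there are no ties to break, so the outcome distribution depends only on the value and not on the bidder's identity. Consequently, $u_j \coloneqq \E_{v \getr \mcal{D}_\lambda}[\util_j(v)]$ is the same for every identity $j$; write this common value as $u \geq 1/\poly(\lambda)$.

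Third, I would instantiate \Cref{lem:2pbic-key} on a much larger but still polynomial-sized identity set. Fix a polynomial $p(\cdot)$ with $u \geq 1/p(\lambda)$, and choose $N(\lambda) \coloneqq 2k \cdot p(\lambda)$. Let $\mcal{I}_N$ be any set of $N$ distinct identities. For every $j \in \mcal{I}_N$, \Cref{lem:2pbic-key} gives $\E_{\bfb \getr \mcal{D}_\lambda^N}[\util_j(\bfb)] \geq u - \negl(\lambda)$. Summing over $j \in \mcal{I}_N$ yields $\sum_{j \in \mcal{I}_N} \E_{\bfb \getr \mcal{D}_\lambda^N}[\util_j(\bfb)] \geq N \cdot (u - \negl(\lambda))$. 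On the other hand, in any honest execution the auction sells at most $k$ items, and each buyer's utility is at most its bid, which is at most $1$; hence the total expected buyer utility is at most $k$. Combining, $N(u - \negl(\lambda)) \leq k$, so $u \leq k/N + \negl(\lambda) = 1/(2p(\lambda)) + \negl(\lambda)$, which contradicts $u \geq 1/p(\lambda)$ for all sufficiently large $\lambda$.

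The main obstacle I expect is the second step: making sure weak symmetry really does force $u_j$ to be identity-independent in the singleton case, in a way that is compatible with how \Cref{lem:2pbic-key} handles identities. Since \Cref{defn:weaksym} allows identity information to influence only tie-breaking after sorting, and with a single bidder there is nothing to tie-break, this should go through cleanly, but it is the one place where I would need to be pedantic to ensure that the $u$ used to bound the sum in step three is the same $u$ witnessed by $i^{\star}$ in step one.
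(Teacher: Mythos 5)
Your proof is correct and follows essentially the same route as the paper: extract a lone-buyer utility lower bound from non-triviality via \Cref{lem:2pbic-key}, use weak symmetry to make it identity-independent, then transfer it to a crowded world with polynomially many buyers and derive a contradiction with the cap of $k$ items. The only cosmetic difference is the final accounting step, where you bound the total expected buyer utility directly by $k$ (via individual rationality and $\inDomSecu \subset [0,1]$), whereas the paper first converts per-buyer utility into a per-buyer winning probability and then bounds the expected number of items allocated; both steps are equivalent.
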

In the above, the non-trivial condition
is defined in the same way as 
in \Cref{sec:puboutcome}. 

\begin{proof}(of \Cref{thm:2pbic})
By weak symmetry, 
$\mathop{\E}_{v_i \getr \mcal{D}_\lambda}[\util_i(v_i)]$ 
is the same 
no matter what the identity of the lone buyer takes. 
Due to \Cref{lem:2pbic-key}, 
the fact that the auction is non-trivial, 
and the fact that ${\sf tick}(\inDomSecu)$
is negligibly small, 
it must be that infinitely many $\lambda$'s, 
$\mathop{\E}_{v_i \getr \mcal{D}_\lambda}[\util_i(v_i)] \geq 1/\poly(\lambda)$. 
Now, consider a crowded world with 
$K$ buyers where $K$ is sufficiently large. 
Due to \Cref{lem:2pbic-key}, 
everyone's expected utility is at least $1/\poly(\lambda) - \negl(\lambda) 
\geq 1/\poly_1(\lambda)$. 
This means that every one has more than $1/\poly_2(\lambda)$
probability of getting an item.  
By setting $K$ to be greater than $k \cdot \poly_2(\lambda)$,
we get that the expected number of items allocated is strictly greater
than $k$, which gives the contradiction. 
\end{proof}


\begin{corollary}
\label{cor:2pbic-impossible}
Let $\Pi_\lambda$ be a non-trivial auction for selling $k$ items over the value
domain 
$\inDomSecu \subset [0, 1]$.  
Suppose there exists a negligible function $\negl$ such that
${\sf tick}(\inDomSecu) \leq \negl(\lambda)$ for all sufficiently 
large $\lambda$. 
Then, 
$\Pi_\lambda$
cannot simultaneously satisfy strong computational bIC 
and strong computational 
$2$-pbIC simultaneously (in the ex post setting).
\end{corollary}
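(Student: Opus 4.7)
The plan is to mirror the proof of \Cref{thm:2pbic-impossible}, translating each Bayesian step into its ex-post pointwise analog. First I would establish an ex-post version of \Cref{lem:2pbic-imported}: for any $\lambda$, any identity set $\mcal{I}$ of polynomial size $n+2$, any distinct $i, j \in \mcal{I}$, and any value vector $\bfb_{-j}$ on $\mcal{I} \setminus \{j\}$, the joint platform-plus-$i$ utility satisfies $\pbutil_i(0_j, \bfb_{-j}) = \pbutil_i(\bfb_{-j})$. The $\geq$ direction is immediate from ex-post 2-pbIC applied to the coalition $(i, j, \mathrm{platform})$ when $j$'s true value is $0$ (dropping $j$ should not increase the coalition's joint utility). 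The $\leq$ direction reuses the identity-hijacking-avoidance trick of the Bayesian proof: inject a fake $0$-bid under a fresh platform-controlled identity $m$ rather than under the fixed identity $j$, invoke weak symmetry to swap $m$ and $j$, and conclude the needed coalition deviation would violate ex-post 2-pbIC.

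Next I would establish an ex-post version of \Cref{lem:2pbic-key}: for any $\lambda$, any value vector $\bfb = (b_1, \ldots, b_n)$ with $n$ polynomially bounded in $\lambda$, and any buyer $i \in \mcal{I}$,
\[
\left|\util_i(\bfb) - \util_i(b_i)\right| \leq 2 \cdot {\sf tick}(\inDomSecu) \cdot k \cdot (\ln n + O(1)).
\]
The derivation follows the same skeleton as the proof of \Cref{lem:2pbic-key}: combine the ex-post version of \Cref{thm:0platformrev} with the previous ex-post lemma to obtain a pointwise analog of \Cref{clm:2pbic-helper}, then iteratively drop buyers from the world, each time selecting the buyer whose allocation probability is at most $k/(n-i)$ so that the accumulated tick-slack telescopes into the harmonic sum.

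For the final contradiction, ex-post non-triviality yields some value vector $\bfb = (b_1, \ldots, b_n)$ and buyer $i$ with $\util_i(\bfb) \geq 1/\poly(\lambda)$ for infinitely many $\lambda$. By the ex-post version of \Cref{lem:2pbic-key} and the hypothesis ${\sf tick}(\inDomSecu) \leq \negl(\lambda)$, a lone buyer $i$ with value $b_i$ enjoys $\util_i(b_i) \geq 1/\poly_1(\lambda)$. I would then populate a crowded world with $K$ buyers on fresh distinct identities, all bidding $b_i$, where $K = k \cdot \poly_2(\lambda) + 1$ for an appropriate polynomial. Applying the ex-post version of \Cref{lem:2pbic-key} to each of the $K$ buyers shows each retains utility $\geq 1/\poly_1(\lambda) - \negl(\lambda) \geq 1/\poly_2(\lambda)$, hence winning probability $\geq 1/\poly_2(\lambda)$ since $v_i - p_i \leq 1$. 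Summing these probabilities forces the expected number of allocated items to exceed $k$, contradicting the $k$-item supply.

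The main obstacle is handling weak symmetry correctly in the crowded-world step: tie-breaking may depend on identities, so I cannot a priori claim every buyer is treated equally on a fixed identity assignment. I would address this by sampling the $K$ identities uniformly from the super-polynomial identity space (so collisions occur only with negligible probability) and applying the averaging-over-random-identities trick used in the proof of \Cref{thm:publicoutcome}; alternatively, one can derive the contradiction in expectation over a random identity permutation while relying on the fact that at most $k$ items are allocated on every trace.
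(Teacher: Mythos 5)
Your proposal is mathematically sound, but it takes a substantially more labor-intensive route than the paper does. The paper proves the ex-post corollary with a one-paragraph observation: the entire chain of arguments behind \Cref{thm:2pbic}, including \Cref{lem:2pbic-imported} and \Cref{lem:2pbic-key}, only uses the fact that honest buyers' values are sampled \emph{independently}, never that they are \emph{identically} distributed. Since fixing an honest value vector $\bfb$ is equivalent to sampling each coordinate independently from a point-mass distribution, ex-post 2-pbIC and ex-post bIC supply exactly the needed incentive-compatibility guarantees for these degenerate product distributions, and the Bayesian proof runs unchanged. You, by contrast, re-derive pointwise ex-post analogs of each lemma from scratch. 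This buys explicitness at the cost of duplicating the weak-symmetry identity-swap machinery, the injected-$0_m$-bid argument, and the harmonic telescoping argument, all of which you would have to re-verify line by line in the ex-post setting (and indeed, you correctly trace through how each piece transfers). Both routes reach the same conclusion.

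One point worth flagging: the ``main obstacle'' you identify at the end, namely handling weak symmetry when populating the crowded world with $K$ buyers at value $b_i$, is not actually an obstacle in either approach. \Cref{lem:2pbic-key} (in its Bayesian or ex-post form) is already quantified over an arbitrary identity set $\mcal{I}$ and an arbitrary $i \in \mcal{I}$, so it directly bounds every buyer's utility gap in the crowded world under any fixed identity assignment. Combined with the weak-symmetry fact that a lone buyer's expected utility is identity-independent, you immediately get that each of the $K$ crowded-world buyers has expected utility at least $1/\poly(\lambda) - \negl(\lambda)$, hence allocation probability at least $1/\poly(\lambda)$, and the supply contradiction follows. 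No averaging over random identities or random identity permutations is needed. The paper's proof of \Cref{thm:2pbic} makes exactly this move, and the corresponding ex-post step is no harder.
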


\begin{proof}
Observe that in the proof of \Cref{thm:2pbic}, 
we in fact only need that all honest buyers' values are sampled 
{\it independently}, 
but not necessarily from the same distribution. 
The ex post case is directly implied 
when honest buyers' bids 
are sampled independently but not necessarily from the same distribution,
since when we fix a value vector of the honest buyers, it is
the same as sampling the value vector from a deterministic distribution. 
\end{proof}

\ignore{
\subsection{Old}
\begin{theorem}
\label{thm:2pbic-impossible}
For any $\lambda \in \N,$ let $\inDomSecu$ be a subset of the real-valued domain $[0, 1]$.  
Suppose there exists a negligible function $\negl$ such that
${\sf tick}(\inDomSecu) \leq \negl(\lambda)$ for all sufficiently large $\lambda$. 
Then, no non-trivial auction can satisfy strong computational Bayesian bIC 
and strong computational Bayesian $2$-pbIC simultaneously w.r.t.~any distribution $\mcal{D}_\lambda$ over $\inDomSecu$. 
\end{theorem}

The rest of this section is dedicated to proving \cref{thm:2pbic-impossible}.
First, we introduce some notations and useful lemmas.
Given any buyer $i$ with the true value $b_i$, and any bid vector $\bfb_{-i}$,
let $\util_i(b_i, \bfb_{-i})$ and $\pbutil_i(b_i, \bfb_{-i})$ be the random variables representing 
buyer $i$'s utility and the joint utility of the platform and buyer $i$ 
when other buyers use $\bfb_{-i}$ as their inputs
in the honest execution, respectively.

\paragraph{Proof of \cref{thm:2pbic-impossible}.}
For the sake of contradiction, suppose that there exists a non-trivial auction that satisfies strong computational Bayesian bIC and strong computational Bayesian $2$-pbIC simultaneously.
Because the auction is non-trivial, there exist a noticeable function $\nu(\cdot)$ and a polynomial $n(\cdot)$,
such that for every $\lambda$, 
there exists $n(\lambda)$ buyer identities denoted $\mcal{I}$, and
some buyer $i^* \in \mcal{I}$ has at least $\nu(\lambda)$
expected utility when all buyers in $\mcal{I}$ draw their true values independently
from $\mcal{D}_\lambda$.

Let $N(\lambda) = 2k / \nu(\lambda)$.
Now, consider a crowded world where there are $N(\lambda)$ buyers whose true values are sampled i.i.d.~from $\mcal{D}$.
There must exist a buyer $j^*$ whose allocation probability is at most $k/N(\lambda)$,
so its expected utility is at most $k /N(\lambda)$, 
i.e.~$\mathop{\E}_{(v_{j^*},\bfb_{-j^*})\getr \mcal{D}^{N(\lambda)}}[\util_{j^*}(v_{j^*}, \bfb_{-j^*})] \leq k / N(\lambda)$.
By \cref{lem:2pbic-revenue-gap}, when there is only buyer $j^*$, its expected utility is bounded by
\begin{equation}
	\mathop{\E}_{v_{j^*} \getr \mcal{D}}[\util_{j^*}(v_{j^*})] \leq \frac{k}{N(\lambda)} + 2k\cdot {\sf tick}(\inDom_\lambda) \cdot (\ln N(\lambda) + O(1)).
	\label{eq:2pbic-3}
\end{equation}
By weak symmetry, when there is only one buyer,
its expected utility is independent of its identity.
Thus, for any buyer $i$, it must be $\mathop{\E}_{v_{i} \getr \mcal{D}}[\util_{i}(v_{i})] = \mathop{\E}_{v_{j^*} \getr \mcal{D}}[\util_{j^*}(v_{j^*})]$.
By applying \cref{lem:2pbic-revenue-gap} again, 
for any buyer $i \in \mcal{I}$, we have
\begin{align}
	\mathop{\E}_{(v_{i},\bfb_{-i})\getr \mcal{D}^{n(\lambda)}}[\util_{i}(v_{i}, \bfb_{-i})] 
	&\leq \mathop{\E}_{v_{i} \getr \mcal{D}}[\util_{i}(v_{i})] + 2k\cdot {\sf tick}(\inDom_\lambda) \cdot (\ln n(\lambda) + O(1))\nonumber\\
	&= \mathop{\E}_{v_{j^*} \getr \mcal{D}}[\util_{j^*}(v_{j^*})] + 2k\cdot {\sf tick}(\inDom_\lambda) \cdot (\ln n(\lambda) + O(1)).
	\label{eq:2pbic-4}
\end{align}
Because ${\sf tick}(\inDom_\lambda) \leq \negl(\lambda)$ for all sufficiently large $\lambda$,
and $n(\lambda)$ and $N(\lambda)$ are polynomially bounded in $\lambda$,
it holds that $\Delta(\lambda) \coloneqq 2k\cdot {\sf tick}(\inDom_\lambda) \cdot (\ln N(\lambda) + \ln n(\lambda) + O(1))$ is negligible.
Thus, we have $\Delta(\lambda) < \nu(\lambda)/2$ for all sufficiently large $\lambda$.
\hao{Here we need sufficiently large.}
Since we choose $N(\lambda) = 2k / \nu(\lambda)$, we have $k / N(\lambda) = \nu(\lambda)/2$.

Combining \cref{eq:2pbic-3} and \cref{eq:2pbic-4}, for any buyer $i \in \mcal{I}$, we have \[
	\mathop{\E}_{(v_{i},\bfb_{-i})\getr \mcal{D}^{n(\lambda)}}[\util_{i}(v_{i}, \bfb_{-i})] < \nu(\lambda),
\]
which contradicts the fact that there exists a buyer $i^*$ whose expected utility is at least $\nu(\lambda)$
when all buyers in $\mcal{I}$ draw their true values independently from $\mcal{D}_\lambda$.

}

\subsection{bIC + psIC + 1-Message $\Longrightarrow$ Small Revenue}
\label{sec:psic}

We say that an auction is 1-message, 
iff the only communication in the protocol is for each buyer to send
a single message to the platform (not 
counting the 
round where the platform informs each player of their private outcome). 
A 1-message auction need
not be a direct revelation mechanism
where the buyer reveals its true value, 
since the message can contain arbitrary information and randomness
that may depend on the buyer's true value.

\paragraph{Proof roadmap.}
We want to show that any 1-message auction
that is bIC and psIC must be revenue dominated by posted price. 
We now describe the 
informal proof roadmap. For simplicity, we assume
the information theoretic and ex post settings for describing
the intuition, but our  
formal proofs later generalize the argument to computational and Bayesian 
settings. 
First, relying on bIC and psIC, we prove that if the world
has only one buyer, then the auction is revenue dominated by posted price. 
Then, we argue that 
under the value vector $\bfb = (b_1, \ldots, b_n)$, 
every buyer $i$'s expected pay must not exceed
its expected pay if the world 
has only buyer $i$ and no other buyers. 
Suppose this is not true, that is, under $\bfb$, there is some buyer $i$
who in expectation pays more than when the world has only buyer $i$. 
Then, 
the strategic platform-seller coalition   
can adopt the following strategy if the world consists of only buyer $i$.
The coalition
can simulate $n-1$ fake buyers with values $\bfb_{-i}$,
and pretend to $i$ that the world is $\bfb$. 
This way, the coalition will get more revenue in expectation
than honest behavior, which violates psIC. 
We now present the formal proofs. 

\subsubsection{The Ex Post Setting}
We begin by proving a limit on revenue for the ex post setting.  

\begin{definition}[Revenue dominated by posted price]
Let $\Pi_\lambda$ be an auction parametrized by $\lambda$ 
over the value domain $\inDomSecu$. 
We say that $\Pi_\lambda$ 
is $\delta(\cdot)$-revenue dominated by posted
price 
iff 
 every $\lambda$, 
there exist some $r_\lambda$ such that 
for any value vector ${\bf b} \in \inDomSecu^*$, 
$\Pi_\lambda$'s 
expected total payment under ${\bf b}$ 
is at most $\delta(\lambda)$ more than 
``the posted price auction
with reserve $r_\lambda$'' under ${\bf b}$. 
For the special case where $\delta(\lambda) = 0$ for all $\lambda$,
 we also simply say
that the 
auction is (strictly) revenue dominated by posted price. 
\end{definition}

\begin{theorem}[bIC + psIC + 1-message $\Longrightarrow$ small revenue]
Consider 
any 1-message auction $\Pi_\lambda$ parametrized by $\lambda$,
over the value domain $\inDomSecu \subset [0, 1]$. 
Suppose $\Pi_\lambda$ 
satisfies computational bIC and computational psIC
simultaneously.
Then, 
there exists a negligible function $\negl(\cdot)$, such that 
for every $\lambda$, 
$\Pi_\lambda$
is $\negl(\lambda)$-revenue dominated by posted price. 

Further, if ``computational'' is replaced with ``information theoretic'' above,
then the auction  
must be strictly revenue dominated by posted price. 
\elaine{fix this after fixing the IT defn}
\label{thm:psic-imp}
\end{theorem}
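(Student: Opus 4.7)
My plan is to follow the two-step roadmap sketched just before the theorem: first I would pin down the shape of the auction when only a single buyer participates, and then I would reduce the general multi-buyer case to that single-buyer case via a \psIC{} deviation in which the platform-seller coalition simulates all other buyers. Throughout, the $1$-message hypothesis is essential because it means the coalition sees the real buyer's entire protocol behavior before committing to an outcome, while still being forced to present an outcome that lies in the support of some honest execution consistent with the buyer's transcript.

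\paragraph{Step 1: single-buyer case.}
For a lone buyer with value $b$, let $x(b)$ and $p(b)$ denote the honest allocation probability and expected payment. By computational bIC and Myerson's Lemma (\Cref{lem:myerson}), $x(\cdot)$ is monotone and $p(\cdot)$ satisfies the payment sandwich, so I can set $r^* := \inf\{b \in \inDomSecu : x(b) > 0\}$ and immediately deduce (from individual rationality plus the sandwich at $b=0$) that $p(b)=0$ for every $b < r^*$. The substantive claim is that $p(b) \le r^*$ for $b \ge r^*$, up to $\negl(\lambda)$. To prove it I would argue by contradiction using \psIC: if $p(b^\star) > r^* + 1/\poly(\lambda)$ for some $b^\star \ge r^*$, then in a world containing a single real buyer with value in a suitable small neighborhood, the platform-seller coalition can draw the buyer's one-message transcript honestly and then, using the $1$-message structure, produce an outcome that is distributed as the honest outcome of an augmented execution with internally simulated fake buyers chosen so that the induced outcome for the real buyer is $(1, p(b^\star))$. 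Because that outcome is in the support of an honest $n'$-buyer run consistent with the real buyer's transcript, the trace is safe, and the coalition's expected revenue strictly exceeds honest revenue, contradicting \psIC.

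\paragraph{Step 2: reduction and aggregation.}
Assuming Step 1, let $P_i(\bfb)$ be buyer $i$'s expected honest payment under $\bfb$ and $P_i(b_i)$ be its honest payment in the single-buyer world $\{i\}$. I would show $P_i(\bfb) \le P_i(b_i) + \negl(\lambda)$ by the simulation deviation described in the roadmap: if the true world consists only of buyer $i$, the platform-seller coalition uses its \ppt{} ability to simulate $n-1$ fake buyers whose first (and only) messages are distributed exactly as honest first messages would be under the value profile $\bfb_{-i}$, and then runs the honest platform algorithm on the combined transcript. The real buyer's view is then identically distributed to the honest $n$-buyer execution on $\bfb$, so it accepts with the same probability, and the coalition collects $P_i(\bfb)$ in expectation; by \psIC{} this cannot exceed $P_i(b_i)$ by more than $\negl(\lambda)$. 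Summing gives $\sum_i P_i(\bfb) \le \sum_i P_i(b_i) + n\cdot\negl(\lambda) \le r^*\cdot\lvert\{i : b_i \ge r^*\}\rvert + \negl'(\lambda)$. To turn this into dominance by the posted-price auction with reserve $r^*$ (which allocates to at most $k$ buyers), I would combine this bound with budget feasibility and the constraint $\sum_i x_i(\bfb) \le k$ from the honest multi-buyer run, which, together with Myerson monotonicity, caps the effective number of paying buyers at $k$ up to negligible error. For the information-theoretic statement the $\negl$ terms vanish and the domination is strict.

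\paragraph{Main obstacle.}
The genuinely hard step is Step 1. The difficulty is that the platform-seller coalition's freedom to deviate in a $1$-message protocol is not arbitrary: whatever outcome it sends back must be consistent with \emph{some} honest multi-buyer run that matches the real buyer's actual transmitted message and randomness-free view, so the coalition has to carefully engineer a simulated auxiliary bid profile whose induced outcome for the real buyer realizes the higher payment $p(b^\star)$ while keeping the trace safe. Making this construction work uniformly over all $b \ge r^*$, while tracking the $\negl(\lambda)$ slack from computational \psIC{} and the ${\sf tick}(\inDomSecu)$ slack from the discrete value domain, is the portion of the proof that requires the most care; once Step 1 is nailed down, Steps 2 and 3 are comparatively mechanical.
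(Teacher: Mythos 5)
Your Step 2 — using the one-message structure so that the platform--seller coalition can simulate $n-1$ fake buyers against a lone real buyer and thereby argue $\E[p_i(\bfb)] \le p_i(b_i) + \negl(\lambda)$ — is exactly the paper's reduction, so that half is fine.

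The gap is in Step 1, and it is substantive rather than a matter of care. You propose to define $r^* := \inf\{b : x(b) > 0\}$ from Myerson and then prove $p(b) \le r^* + \negl$ by a psIC deviation in which the coalition ``produces an outcome distributed as the honest outcome of an augmented execution with simulated fake buyers, so that the real buyer pays $p(b^\star)$.'' This argument is circular with Step 2: it presupposes that you already know of a multi-buyer value profile whose honest execution charges the real buyer $p(b^\star)$, which is precisely the kind of multi-buyer$\to$single-buyer comparison that Step 2 is supposed to establish \emph{after} the single-buyer case is settled. It also does not prove what you need: $p(b^\star)$ is by definition a single-buyer quantity, so ``realizing outcome $(1,p(b^\star))$'' for the lone buyer via simulation is just the honest single-buyer execution on a different value, which is not a deviation at all. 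Finally, the Myerson sandwich alone does not imply $p(b) \le r^*$: if the allocation probability $x(\cdot)$ rises from $0$ to $1$ over several ticks rather than jumping at $r^*$, the resulting single-buyer payment can strictly exceed $r^*$, and bIC by itself permits this.

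What is actually needed in the single-buyer case (and what the paper proves) is that the payment \emph{conditioned on winning} is essentially a deterministic constant $\mu_\lambda$, independent of the buyer's one message, of the platform's randomness, and of the buyer's true value. This is achieved not by Myerson but by three targeted applications of IC: psIC rules out dependence on the platform's coins (the coalition could resample $\coin_\platform$ polynomially many times and keep the maximum-revenue coin, winning a $1/\poly$ advantage whenever the conditional payment has noticeable variance); bIC rules out dependence on the one-message transcript (a buyer can be handed the best message for its value as nonuniform advice, since there is only one message to replace); and bIC again rules out dependence on the value (a buyer with value $b$ can sample a ``type-2'' message from the distribution $\msg(b',*)$ of a cheaper value $b'$). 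After that, it follows mechanically that the auction allocates essentially iff $b \ge \mu_\lambda$ and charges essentially $\mu_\lambda$, which is the posted-price structure. None of this appears in your proposal, and the Myerson-plus-fake-buyers route does not substitute for it.
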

\begin{proof}
Consider a world with only one buyer $\mcal{B}$. 
Henceforth let ${\sf msg}$ be the buyer's single message to the platform, 
and let ${\sf coin}_{\platform}$ be the platform's randomness. 
Let $\mu({\sf msg})$ be the buyer's expected payment 
conditioned on ${\sf msg}$ under honest execution. 
Let $\mu({\sf msg}, {\sf coin}_{\mcal{P}})$ denote the buyer's payment
conditioned on ${\sf msg}$ and the platform's coins being 
${\sf coin}_{\mcal{P}}$.
Let $\msg(b, *)$ denote the distribution of the buyer's message
when it uses the value $b$ as input.

\begin{fact}
Let $\{X_\lambda\}_\lambda$ 
a probability ensemble that takes values between $[0, 1]$. 
If for some polynomially bounded function $\poly(\cdot)$, 
$\Pr\left[|X_\lambda - \E[X_\lambda]| \geq 1/\poly(\lambda)\right] \geq 1/\poly(\lambda)$
for infinitely many $\lambda$'s, 
then 
there is a polynomially bounded function $\poly'(\cdot)$ 
such that 
for infinitely many $\lambda$'s, 
\begin{itemize}[leftmargin=6mm,itemsep=1pt]
\item 
$\Pr\left[X_\lambda - \E[X_\lambda] \geq 1/\poly'(\lambda)\right] \geq 1/\poly'(\lambda)$, and
\item
$\Pr\left[\E[X_\lambda] - X_\lambda \geq 1/\poly'(\lambda)\right] \geq 1/\poly'(\lambda)$.
\end{itemize}
\label{fct:deviate}
\end{fact}
\elaine{does this need a proof?}

\begin{claim}
In a world with only one buyer, 
for sufficiently large $\lambda$, for any 
true value of the buyer drawn from $\inDomSecu$, 
under honest execution, 
except with negligible probability over the choice of ${\sf msg}$,
it holds that 
except with negligible probability over the choice 
${\sf coin}_{\mcal{P}}$, 
$|\mu({\sf msg}, {\sf coin}_{\mcal{P}}) - \mu(\msg)| \leq 
\negl(\lambda)$. 
\label{clm:helperpsic}
\end{claim}
\begin{proof}
For the sake of contradiction, suppose 
that for infinitely many $\lambda$'s, 
there is some value from $v_\lambda \in \inDomSecu$ and a set denoted ${\sf bad}$ 
that contains at least 
$1/\poly(\lambda)$ fraction of ${\sf msg}$ drawn from $\msg(v_\lambda, *)$,  
such that at least $1/\poly(\lambda)$ fraction of 
${\sf coin}_{\mcal{P}}$
would cause the buyer's payment to have at least 
$1 /\poly(\lambda)$ 
difference from 
$\mu({\sf msg})$. 
Below, we omit writing ``for infinitely many $\lambda$'s'' when
the context is clear. 
By \Cref{fct:deviate},  
there is some $\poly_1(\cdot)$ such that  
for any $\msg \in {\sf bad}$, 
at least  $1/\poly_1(\lambda)$ 
fraction of  
${\sf coin}_{\mcal{P}}$
would cause the 
payment to be at least $1/\poly_1(\lambda)$ larger
than  
$\mu({\sf msg})$. 
We can consider the following platform-seller polynomial-time strategy: 
whenever a buyer with true value $v_\lambda$ sends $\msg$, 
the platform 
samples ${\sf coin}_{\mcal{P}}$  
for $\poly_1^2(\lambda)$ times, and chooses
the best 
${\sf coin}_{\mcal{P}}$ that maximizes its revenue. 
Conditioned on ${\sf msg} \in {\sf bad}$, 
except with negligible probability, 
this strategy
would result in at least $1/\poly_1(\lambda)$ 
expected payment 
over the honest strategy. 
Conditioned on ${\sf msg} \notin {\sf bad}$, the 
expected payment is at least as large as the honest case.  
Overall, the expected 
gain over the honest strategy is $1/\poly_2(\lambda)$. 
This violates computational psIC. 
\ignore{
We argue that 
under this strategy, conditioned on $\msg \in {\sf bad}$, 
the platform-seller coalition gains at least 
$1/\poly_1(\lambda)$ relative
to acting honestly, thus violating computational psIC. 
To see this, we can condition on two cases:
\begin{itemize}[itemsep=1pt]
\item 
Case 1: there is an inverse-polynomial fraction of execution traces 
with payment greater than $\mu(\msg) + 1 /\poly(\lambda)$. 
In this case, as long as the platform samples 
${\sf coin}_{\mcal{P}}$  
a sufficiently large polynomially many times, then 
except with negligible probability, 
it must be able to find 
some  ${\sf coin}_{\mcal{P}}$  
that leads to 
at least $\mu(\msg) + 1 /\poly(\lambda)$ revenue. 
\item
Case 2: there is an inverse-polynomial fraction of execution traces 
with payment less than $\mu(\msg) - 1 /\poly(\lambda)$. 
By our assumption, for any choice of 
${\sf coin}_{\mcal{P}}$, the payment cannot exceed $1$. 
This means that 
there exists at least $1/\poly_2(\lambda)$
fraction of ${\sf coin}_{\mcal{P}}$ where the payment is at least 
$\mu(\msg) + 1/\poly_3(\lambda)$.
Therefore, except with negligible probability, the 
platform is able to find some 
${\sf coin}_{\mcal{P}}$ 
that leads to at least $\mu(\msg) + 1/\poly_3(\lambda)$ payment. 
\end{itemize}

Therefore, the expected increase in revenue over honest strategy 
is at least $1/\poly_4(\lambda)$.  
This violates computational psIC. 
}
\end{proof}



\begin{claim}
Suppose the world has only one buyer with an arbitrary value $b \in \inDomSecu$. 
Except over negligible probability
over the choice of ${\sf msg} \getr {\sf msg}(b, *)$, 
$|{\sf butil}(\msg) - {\sf butil}(b)| \leq \negl(\lambda)$
where ${\sf butil}(\msg)$ denotes the buyer's 
expected utility conditioned on $\msg$,
and ${\sf butil}(b)$ denotes the buyer's 
expected utility given value $b$. 
\label{clm:psic-msg}
\end{claim}
\begin{proof}
For the same of contradiction, 
suppose 
for infinitely many $\lambda$'s, 
there is some $b_\lambda \in \inDomSecu$, such that 
with at least $1/\poly(\lambda)$ probability over
the choice of $\msg$,  
$|{\sf butil}(\msg) - {\sf butil}(b_\lambda)| \geq 1/\poly(\lambda)$. 
Below, we omit 
writing ``for infinitely many $\lambda$'s'' whenever the context is clear. 
By \Cref{fct:deviate}, 
it must be that with at least $1/\poly_1(\lambda)$ probability
over $\msg$, 
${\sf butil}(\msg) - {\sf butil}(b_\lambda) \geq 1/\poly_1(\lambda)$.
Now a strategic buyer 
with value $b_\lambda$ can simply choose
a $\msg$ 
such that ${\sf butil}(\msg) - {\sf butil}(b_\lambda) \geq 1/\poly_1(\lambda)$, 
and such a message can be provided to the buyer as an advice string. 
This strategy is clearly polynomial time and lets the buyer gain 
$1/\poly_1(\lambda)$ amount over the honest strategy,
which violates computational bIC. 
\end{proof}

Fix some $\lambda$, and $b_\lambda \in \inDomSecu$.
The $\msg$s from $\msg(b_\lambda, *)$ can be classified into two types, 
where $\negl'(\cdot)$ denotes a negligible function 
whose value is sufficiently large 
w.r.t. the negligible functions in \Cref{clm:helperpsic}
for sufficiently large $\lambda$s. 
\begin{itemize}[leftmargin=6mm,itemsep=1pt]
\item 
{\it Type 1 $\msg$:} for at least $\negl'(\lambda)$ fraction of 
${\sf coin}_{\mcal{P}}$, the buyer does not get an item; 
\item 
{\it Type 2 $\msg$:} 
the buyer gets an item with 
at least $1-\negl'(\lambda)$
over the choice of ${\sf coin}_{\mcal{P}}$. 
\end{itemize}

\begin{lemma}
Consider a world with only one buyer. 
It must be that 
for sufficiently large $\lambda$, for any $b_\lambda \in \inDomSecu$, 
except with negligible probability 
over the choice
of $\msg \getr \msg(b_\lambda, *)$, 
if $\msg$ is of type 2, 
then the buyer's payment 
conditioned on $\msg$
is at most 
negligibly apart 
from some constant $\mu_\lambda$ that does not depend on $b$ or $\msg$.
\label{lem:psic-const-mu}
\end{lemma}
\begin{proof}

By \Cref{clm:psic-msg}, 
except with $\negl(\lambda)$ probability
over the choice of $\msg \getr \msg(b_\lambda, *)$, 
if $\msg$ is of type 2, then 
$|\mu(\msg) - \mu(b_\lambda)| \leq \negl(\lambda)$
where $\mu(\msg)$ denotes the buyer's expected
payment conditioned on $\msg$, and $\mu(b_\lambda)$
is a function that depends only on $b_\lambda$. 

Suppose that for infinitely many $\lambda$'s, there
exists some $b'_\lambda \in \inDomSecu$ 
such that type-2 messages happen with
$1/\poly(\lambda)$ 
probability under honest execution. 
We claim that 
for any $b_\lambda \neq b'_\lambda$, 
either 
$\msg \getr \msg(b_\lambda, *)$
gives a type-2 message with negligible probability, 
or 
$\mu(b_\lambda) - \mu(b'_\lambda) \leq \negl(\lambda)$. 
Suppose this is not true, that is, 
for infinitely many $\lambda$'s, 
there is some $b_\lambda \neq b'_\lambda$ such that 
$\msg \getr \msg(b_\lambda, *)$
gives a type-2 message with $1/\poly(\lambda)$ probability, 
and $\mu(b_\lambda) - \mu(b'_\lambda) \geq 1/\poly(\lambda)$. 
In this case, a buyer with value $b_\lambda$  
can adopt the following polynomial-time strategy. 
Choose a type-2 message $\msg^* \in \msg(b'_\lambda, *)$ 
which minimizes its payment --- $\msg^*$ can
be provided as an advice string to the buyer.  
By \Cref{clm:psic-msg}, if the buyer acts honestly, 
its expected utility is at most 
$b_\lambda - \mu(b_\lambda) + \negl(\lambda)$.
Now under the above strategy, its expected utility 
is at least  
$b_\lambda - \mu(b'_\lambda) - \negl(\lambda)$, which
is $1/\poly(\lambda) - 2\negl(\lambda)$  
larger than acting honestly, which violates
computational bIC. 

The lemma now follows 
by letting $\mu_\lambda := \mu(b'_\lambda)$. 
\ignore{
Summarizing the above, for any $b$,
either 
$\msg \getr \msg(b, *)$
gives a type-1 message with all but negligible probability
and the payment is negligibly small,
or 
except with negligible probability 
over 
the choice $\msg \getr \msg(b, *)$, the  
buyer's payment is 
\elaine{FILL}
}
\end{proof}

\paragraph{Finishing the proof of \Cref{thm:psic-imp}.}
By \Cref{clm:helperpsic}, 
and because $\negl'(\cdot)$ is sufficiently large, 
we have 
except with negligible probability
over the choice of $\msg \getr \msg(b_\lambda, *)$, 
if $\msg$ is of type 1, then 
the buyer's expected payment is at most $\negl(\lambda)$.  
Combined with \Cref{lem:psic-const-mu}, 
we conclude that 
when the world has a single buyer, 
if its true value is less than $\mu_\lambda-\negl_1(\lambda)$, 
then its expected payment is at most $\negl_2(\lambda)$; 
else 
its expected payment cannot exceed $\mu_\lambda + \negl_3(\lambda)$. 
Henceforth, define $\mu_{\rm posted}(v)$ to be 
an upper bound on the expected revenue
earned from a buyer with value $v$: 
\[
\mu_{\rm posted}(v) = 
\begin{cases}
\negl_2(\lambda) & \text{ if } v < \mu_\lambda-\negl_1(\lambda) \\
\mu_\lambda + \negl_3(\lambda) & \text{o.w.}\\ 
\end{cases}
\]

Now, suppose 
that for infinitely many $\lambda$'s, 
there is value vector ${\bf b} = (b_1, \ldots, b_n) \in \inDomSecu^n$ 
such that some buyer $i$'s expected payment is more than 
 $\mu_{\rm posted}(b_i) + 1 /\poly(\lambda)$.
We have the following
scenario and a polynomial-time strategy 
that violate computational psIC.
Suppose the world actually has only a single buyer 
with true value $b_i$.
The platform-seller  
pretends that the world is actually ${\bf b}$ where
the victim buyer is the $i$-th coordinate. 
In other words, the strategy will simulate
$n-1$ fake players with the true values $\bfb_{-i}$
and run the protocol with the single victim buyer. 
This strategy has expected gain at least $1/\poly(\lambda) - \negl(\lambda)$.
This violates 
computational psIC. 

For the information-theoretic case, we can force all the negligible functions
to be $0$. In this case, all the $1/\poly(\lambda)$ slacks would vanish to $0$, 
and the statement follows directly. 
\elaine{TODO: double check}
\end{proof}

\subsubsection{The Bayesian Setting}
We now extend the result to the Bayesian setting. 

\begin{corollary}
Consider 
any 1-message auction $\Pi_\lambda$ parametrized by $\lambda$,
over the finite value domain $\inDomSecu \subset [0, 1]$. 
Suppose $\Pi_\lambda$ 
satisfies computational Bayesian bIC and computational Bayesian psIC
w.r.t. a polynomial-time samplable 
\elaine{TODO: some of the other thms may need this condition too}
distribution $\mcal{D}_\lambda$ over $\inDomSecu$. 
Then, 
for any polynomial function $n(\cdot)$, 
there exists a negligible function $\negl(\cdot)$, 
such that for any $\lambda$, there is a constant $\mu_\lambda$, 
for any $n' \leq n(\lambda)$, 
except with $\negl(\lambda)$ probability 
over the choice of $\bfv \getr \mcal{D}_\lambda^{n'}$, 
the auction's expected revenue 
on $\bfv$
is at most $\negl(\lambda)$ 
more than the revenue of a posted price auction with reserve
$\mu_\lambda$ on $\bfv$. 

Further, if computational is replaced with ``information theoretic'' above,
then, for any $\lambda$, 
there exists some constant $\mu_\lambda$ such that 
for any $n$, 
with probability $1$ over the choice of 
$\bfv \getr \mcal{D}_\lambda^n$, the auction's expected revenue
on $\bfv$ is no more than the revenue
of a posted 
price auction with reserve $\mu_\lambda$. 
\end{corollary}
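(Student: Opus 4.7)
The plan is to bootstrap the ex post proof of Theorem~\ref{thm:psic-imp} and average its intermediate claims over the distribution $\mcal{D}_\lambda$. The crucial observation is that Bayesian bIC applied with $n_H = 0$ honest buyers and a coalition consisting of a lone strategic buyer with an arbitrary true value $v$ yields exactly the ex post bIC guarantee in the one-buyer world; consequently Claim~\ref{clm:psic-msg}, which only invoked bIC in a single-buyer world via an advice-string input-replacement strategy, transfers essentially verbatim to the Bayesian setting. What needs genuine adaptation is every place where the ex post proof fixed a specific value $b_\lambda \in \inDomSecu$ and then invoked psIC on that singleton scenario, since Bayesian psIC only controls deviations in expectation over $v \getr \mcal{D}_\lambda$.

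First I would re-establish a Bayesian analogue of Claim~\ref{clm:helperpsic}: except with negligible probability over $v \getr \mcal{D}_\lambda$ and then over $\msg \getr \msg(v,\ast)$, the payment conditioned on $(\msg,\coin_{\mcal{P}})$ lies within $\negl(\lambda)$ of $\mu(\msg)$ for all but a negligible fraction of platform coins. The polynomial-time ``resample $\coin_{\mcal{P}}$ to maximize revenue'' strategy is run unchanged, but the contradiction is now cashed out against Bayesian psIC with $n_H = 1$: if the failure set carried $1/\poly(\lambda)$ mass under $\mcal{D}_\lambda$, then applying Fact~\ref{fct:deviate} to the joint distribution over $(v,\msg)$ yields an inverse-polynomial expected gain over the honest strategy, contradicting Bayesian psIC. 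Next I would port Lemma~\ref{lem:psic-const-mu} in the same way: partition messages into type-1/type-2, and argue that except with negligible probability over $v \getr \mcal{D}_\lambda$, every type-2 message drawn from $\msg(v,\ast)$ carries payment within $\negl(\lambda)$ of a single constant $\mu_\lambda$ (taken to be the typical payment on any value $b'_\lambda$ whose message distribution puts non-negligible mass on type-2; if no such $b'_\lambda$ exists in the support of $\mcal{D}_\lambda$, the auction already has negligible expected revenue under $\mcal{D}_\lambda^{n'}$ and the corollary is trivial). The deviation here is pure input replacement by a single buyer, again handled by Bayesian bIC at $n_H=0$.

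The final and most delicate step is lifting this single-buyer conclusion to arbitrary vectors $\bfv \in \inDomSecu^{n'}$. I would define the posted-price upper bound
\[
\mu_{\rm posted}(v) \;=\; \begin{cases} \negl(\lambda) & v < \mu_\lambda - \negl(\lambda) \\ \mu_\lambda + \negl(\lambda) & \text{otherwise,} \end{cases}
\]
and argue by contradiction: suppose for some polynomial $n(\cdot)$, $n' \leq n(\lambda)$, and infinitely many $\lambda$, there is an inverse-polynomial $\mcal{D}_\lambda^{n'}$-mass of vectors $\bfv$ on which the auction's expected revenue exceeds $\sum_i \mu_{\rm posted}(v_i)$ by $1/\poly(\lambda)$. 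Then the platform-seller coalition, facing a single honest buyer whose value $v \getr \mcal{D}_\lambda$, can adopt the polynomial-time deviation that simulates $n'-1$ additional fake bidders drawn from $\mcal{D}_\lambda^{n'-1}$ (using that $\mcal{D}_\lambda$ is polynomial-time samplable), embeds the honest buyer into a uniformly random coordinate, and runs the honest protocol on the resulting $n'$-vector. The collected revenue equals the auction's honest revenue on a fresh draw from $\mcal{D}_\lambda^{n'}$, which by hypothesis exceeds the single-buyer posted-price revenue on $v$ by an inverse-polynomial amount in expectation, contradicting Bayesian psIC at $n_H = 1$.

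The main obstacle I anticipate is bookkeeping the nested ``except with negligible probability'' quantifiers: each technical step loses a negligible mass under $\mcal{D}_\lambda$, and we need to union-bound these across all $n' \leq n(\lambda)$ values and across buyer indices $i \in [n']$. Choosing the final $\negl(\cdot)$ in the corollary's statement to dominate $n(\lambda)$ times the per-step failure probabilities, together with a careful Fact~\ref{fct:deviate}-style ``push the gap to one side'' argument on the joint distribution $(v,\msg,\coin_{\mcal{P}})$, resolves this; the information-theoretic strengthening in the ``further'' clause then follows by forcing every $\negl(\lambda)$ slack to $0$ and every ``except with negligible probability'' to ``with probability $1$'', mirroring the ex post case in Theorem~\ref{thm:psic-imp}.
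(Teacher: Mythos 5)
Your overall strategy — establish a Bayesian analogue of Claim~\ref{clm:helperpsic}, observe that Claim~\ref{clm:psic-msg} and Lemma~\ref{lem:psic-const-mu} transfer verbatim because Bayesian bIC collapses to ex post bIC in the one-buyer world, and then close with a simulate-fake-bidders deviation — is exactly the route the paper takes, and your treatment of the first two steps matches the paper's almost line-for-line. The ``further'' clause for the information-theoretic case is also handled the same way, by collapsing every $\negl(\lambda)$ and inverse-polynomial slack to zero.

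There is, however, a real error in your final paragraph that needs repair. You write that after the coalition embeds the honest buyer into a uniformly random coordinate of the simulated $n'$-vector, ``the collected revenue equals the auction's honest revenue on a fresh draw from $\mcal{D}_\lambda^{n'}$.'' This is false: the $n'-1$ fake bidders are figments created by the coalition, and their ``payments'' are not real money. The coalition's safe-trace utility is only the real (embedded) buyer's payment, which by weak symmetry has expectation $(1/n')\cdot\E[\text{total auction revenue on }\mcal{D}_\lambda^{n'}]$. Correspondingly, the subsequent comparison against ``the single-buyer posted-price revenue on $v$'' mixes quantities that live at different scales. The argument is salvageable: dividing the contradiction hypothesis by $n'$ gives $\E[\text{collected revenue}] > \E_{v\getr\mcal{D}_\lambda}[\mu_{\rm posted}(v)] + 1/(n'\cdot\poly(\lambda))$, while the single-buyer analysis gives $\E[\text{honest single-buyer revenue}] \leq \E[\mu_{\rm posted}(v)] + \negl(\lambda)$, yielding an inverse-polynomial deviation gain since $n'\leq n(\lambda)$ is polynomially bounded. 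But as written the claim equates the coalition's collected payment with the full $n'$-buyer revenue, which would imply an absurdly profitable deviation and indicates a genuine confusion about what the coalition actually extracts. You should also note that the paper's version of this strategy additionally resamples the fake-bidder scenario polynomially many times and keeps the best (including the all-honest fallback), making the deviation dominate honest play pointwise rather than merely in expectation; your single-sample variant is admissible because psIC is an expectation-level notion, but say so explicitly.
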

\begin{proof}
With Bayesian psIC, the following variant of \Cref{clm:helperpsic}
holds.  

\begin{claim}
In a world with only one buyer,
for sufficiently large $\lambda$, for 
all but negligible fraction of 
buyer true value drawn from $\mcal{D}_\lambda$,
under honest execution,
except with negligible probability over the choice of ${\sf msg}$,
it holds that
except with negligible probability over the choice
${\sf coin}_{\mcal{P}}$,
$|\mu({\sf msg}, {\sf coin}_{\mcal{P}}) - \mu(\msg)| \leq 
\negl(\lambda)$.
\label{clm:helperpsic-bayesian}
\end{claim}
\begin{proof}
We first 
prove it for the computational setting. 
The proof is almost the same as that of \Cref{clm:helperpsic}
except that ``for any true value'' is replaced
with ``for all but negligible fraction of the values
sampled from $\mcal{D}_\lambda$'', and the new proof relies 
on Bayesian psIC rather than ex post psIC. 
\end{proof}

\Cref{clm:psic-msg} and \Cref{lem:psic-const-mu}  still hold because  
when the world consists of only one buyer,
Bayesian bIC is the same as ex post bIC.

Now, the remainder of the proof works as follows.
Similar to the proof in the ex post setting,  
combining \Cref{clm:helperpsic-bayesian} 
and \Cref{lem:psic-const-mu}, we have that 
when the world consists of a single buyer,  
except with negligible probability over the choice of 
its true value $v \getr \mcal{D}_\lambda$, 
the expected revenue is upper bounded
by $\mu_{\rm posted}(v)$. 

Suppose that for some polynomial $n(\cdot)$, 
for infinitely many $\lambda$'s, 
there exist $1/\poly(\lambda)$ and some $n' \leq n(\lambda)$
such that 
 over $1/\poly(\lambda)$ 
probability over the choice of 
$\bfv = (v_1, \ldots, v_{n'}) \getr \mcal{D}_\lambda^{n'}$, 
the first buyer's expected payment
exceeds $\mu_{\rm posted}(v_1) + 1/\poly(\lambda)$. 
Then, a strategic platform-seller coalition can take the following 
polynomial-time strategy. 
Upon receiving the message $\msg_1$ from the first buyer, 
it samples $n'-1$ buyers whose true values 
drawn at random from 
$\mcal{D}_\lambda^{n'-1}$. It then samples 
all the random coins of the $n'-1$ fake buyers as well as $r_\platform$
and computes the payment of the first buyer.
It repeats the 
above sampling for a sufficiently polynomially many times and chooses
a scenario that maximizes the first buyer's payment
(including the honest case without the fake buyers), and pretends
that the world is that particular 
scenario to the first buyer. 
This strategy allows the strategic platform-seller coalition 
to gain at least $1/\poly'(\lambda)$ amount in expectation.

Finally, for the information theoretic setting, 
the negligible functions and $1/\poly(\lambda)$ terms all vanish to $0$
in the above argument, so the claimed conclusion holds.  
\end{proof}

\subsection{Information Theoretic bIC + psIC $\Longrightarrow$ Small Revenue}

\elaine{NOTE: this proof is implicitly using the IT notion of safe,
all attacks in the proof are safe deviations}

\elaine{TO FIX: the proof still holds even when there is blockchain}

In this section, we prove the following theorem. 

\begin{theorem}[Information theoretic bIC + psIC $\Longrightarrow$ small revenue]
Let $\Pi$ be a possibly multi-round auction that satisfies
information theoretic Bayesian bIC and ex post psIC.  
Then, $\Pi$ is revenue-dominated by posted price. 
\label{thm:psic-multiround}
\end{theorem}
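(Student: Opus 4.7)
The plan is to extend the argument used to prove \cref{thm:psic-imp} from one-message auctions to arbitrary multi-round auctions, leveraging that we are now in the information-theoretic setting (so negligible slacks collapse to equalities) and that psIC is required in the ex post sense (so we need not integrate out other buyers when reasoning about attacks). The same two-step structure applies: first pin down the single-buyer revenue as a posted-price function of the buyer's value, and then reduce the multi-buyer case to the single-buyer case via a simulation attack.

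For the single-buyer step, fix $v \in \inDom$ and consider the honest execution with one buyer of true value $v$. Let $\tau$ denote the random transcript of buyer-to-platform messages and $r_\platform$ the platform's random tape. A safe platform-seller deviation is to sample $r_\platform$ from any reweighted distribution whose support is contained in the honest support: each resulting buyer view is a plausible honest view, so every honest player accepts. By ex post psIC no such deviation can strictly increase expected revenue, which (taking best-response reweightings) forces the buyer's expected payment conditional on the transcript of buyer messages to equal a single value $\mu(\tau)$ for every $r_\platform$ in the support --- this plays the role of \cref{clm:helperpsic} in the present setting. Combining with Myerson's lemma (\cref{lem:myerson}) applied to the single-buyer case, which gives a monotone allocation and the payment sandwich, the same case analysis as in \cref{lem:psic-const-mu} shows that every transcript on which the buyer wins an item with positive probability yields a common payment $\mu$, while every other transcript yields payment $0$. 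Hence the single-buyer expected revenue on input $v$ is exactly $\mu \cdot \mathbf{1}[v \geq \mu]$ for some threshold $\mu$ that depends only on the mechanism.

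For the multi-buyer step, fix any value vector $\bfb$ and any buyer $i$. In a single-buyer world with a real buyer of value $b_i$, the platform-seller coalition simulates $n-1$ fake buyers with values $\bfb_{-i}$ and runs the honest multi-buyer protocol in its head, forwarding only the real buyer's channel to the outside. The real buyer's view is distributed identically to its view in an honest multi-buyer execution on $\bfb$, so the execution is safe, and the coalition's expected revenue equals $p_i(\bfb)$, the expected payment extracted from buyer $i$ in the multi-buyer mechanism. By ex post psIC this cannot exceed the honest single-buyer revenue on $b_i$, which by the previous paragraph is $\mu \cdot \mathbf{1}[b_i \geq \mu]$. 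Summing over $i$ and using individual rationality together with the $k$-item allocation constraint (so at most $k$ buyers pay a nonzero amount) yields that $\Pi$'s total expected revenue on $\bfb$ is at most $\mu \cdot \min(k, |\{i : b_i \geq \mu\}|)$, which is exactly the revenue of the $k$-item posted-price auction with reserve $\mu$.

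The main obstacle will be executing the single-buyer argument carefully in a multi-round setting. In a one-message auction the platform's ``cherry-picking of coins'' is trivially well defined as an ex post best response. In a multi-round protocol, the platform's messages must be produced adaptively, and one has to argue that the coin-selection deviation is both safe (every chosen $r_\platform$ lies in the honest support at the moment it is fixed) and strictly improves revenue whenever the conditional payment is not constant in the platform's remaining randomness. A clean way to handle this is to let the platform sample $r_\platform$ afresh at the start of the protocol from a distribution supported on the honest support; safety is then automatic and the claim reduces to a pointwise comparison of expected revenue across coin tapes, after which the rest of the 1-message argument transplants with every negligible slack replaced by zero.
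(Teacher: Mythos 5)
Your high-level plan --- collapse to a single-buyer world, derive a posted-price form for the single-buyer revenue, then use a platform-seller simulation attack to bound each buyer's payment in the multi-buyer world --- is the same two-step decomposition the paper uses, and your multi-buyer step is essentially identical to the paper's. The gap is in how you carry out the single-buyer step.

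Your fix for the multi-round setting --- have the platform pre-sample $r_{\platform}$ at the very start of the protocol from a reweighted distribution supported on the honest coin space, and conclude from safety of this deviation that the payment is determined by the transcript --- does not establish what you need. A pre-sample is fixed \emph{before any buyer message arrives}, so the only thing this deviation rules out is that the marginal quantity $\E_{\text{buyer coins}}\bigl[\mu(r_{\platform}, \cdot)\bigr]$ varies with $r_{\platform}$. It does not rule out that the payment, conditioned on the realized transcript, still depends on which $r_{\platform}$ produced it. Concretely, consider a single-buyer protocol where the buyer sends a uniform coin $c \in \{0,1\}$ along with its value, the platform independently tosses $r_{\platform} \in \{0,1\}$, always allocates the item, and charges $0.4 + 0.2\,(c \oplus r_{\platform})$. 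Every pre-sampled distribution over $r_{\platform}$ yields expected revenue exactly $0.5$, because $c$ is uniform and independent of a pre-committed sample, so your deviation finds nothing. Yet a platform that waits until it sees $c$ and then sets $r_{\platform} := 1-c$ always charges $0.6$; this is a safe deviation (the buyer's view remains a plausible honest view) that strictly increases revenue, so the protocol is \emph{not} ex post psIC. Your argument would have concluded the opposite --- that the conditional payment is pinned down --- and the \Cref{lem:psic-const-mu}-style case analysis you then run would be built on a false premise.

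What is actually needed, and what the paper's proof does, is an \emph{adaptive}, message-by-message cherry-pick. The paper isolates the buyer's last message $\mathsf{bmsg}_{-1}$ and the pre-final transcript $\mathsf{tr}_{-1}$, and first shows (\Cref{clm:lastbuyermsg}, via \Cref{fct:devfrommean}) that revenue cannot depend on any platform coins $\mathsf{pcoin}$ that remained uncommitted at the end of $\mathsf{tr}_{-1}$: otherwise the coalition would, after observing $(\mathsf{tr}_{-1},\mathsf{bmsg}_{-1})$, pick the remaining coins to maximize revenue. It then steps through the platform's replies one at a time, showing that the conditional expected revenue given the transcript so far cannot move when the platform sends its next message, again because the coalition could otherwise pick the most favorable next message at that very moment. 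Iterating over the whole conversation forces the conditional revenue to be a function of the buyer's coins alone, from which the posted-price threshold falls out. The crucial feature is that the deviation is allowed to inspect the buyer's responses received so far before choosing the next piece of randomness; a deviation fixed at time zero is strictly weaker, and your concluding paragraph proposes exactly the too-weak variant. (A symmetric remark applies on the buyer side: the claims pinning down $r(\mathsf{tr}_{-1})$ also use adaptive buyer deviations --- the buyer re-samples its last message after seeing $\mathsf{tr}_{-1}$ --- which a single appeal to \Cref{lem:myerson} on the bid alone does not replace.)
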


In comparison with the proof in \Cref{sec:psic},
the main challenge here is how to still prove
that with only one buyer, the auction is revenue-dominated
by posted price,  
now that the auction can be multi-round. 
We present our proof below. 

\begin{proof}(of \Cref{thm:psic-multiround}.)
We first analyze a world with only one buyer. 

\paragraph{Single buyer setting.}
We will consider a world with only one buyer with true value $v$. 
Since Bayesian bIC is equal to ex post bIC in a single-buyer setting,
we will simply say bIC for short in this part of the proof.  
We use the following notations:
\begin{itemize}[leftmargin=6mm,itemsep=1pt]
\item 
${\sf bmsg}_{-1}$: the buyer's last message in the protocol, 
\item 
${\sf tr}_{-1}$:  
all messages exchanged between the buyer and the platform-seller
coalition till right before the buyer 
is about to send its last message ${\sf bmsg}_{-1}$, 
\item 
${\sf pcoin}$: additional private coin tosses
made by the platform-seller coalition 
that did not contribute to ${\sf tr}_{-1}$. 
\end{itemize}


\begin{fact}
Let $x$ and $y$  
be random variables sampled from 
an arbitrary joint distribution $\mcal{D}$ 
over $\mathbb{D}_x \times \mathbb{D}_y$. 
Let $f(x, y): \mathbb{D}_x \times 
\mathbb{D}_y \rightarrow \mathbb{R}$ 
be some real-valued function, 
and let $f(x)$ be the expectation of $f(x, y)$ conditioned on $x$.  
If with non-zero probability, $f(x, y) \neq f(x)$, then 
the expectation of $f(x, y)$ in the first process is greater
(or smaller) 
than in the second process below:
\begin{enumerate}[leftmargin=6mm,itemsep=1pt]
\item 
Sample $x$ from the marginal distribution over $x$ induced
by $\mcal{D}$, 
choose $y$ to maximize (or minimize) $f(x, y)$;
\item 
Sample $x$ and $y$ at random from 
$\mcal{D}$. 
\end{enumerate} 
\label{fct:devfrommean}
\end{fact}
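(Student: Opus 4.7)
The plan is to argue pointwise in $x$, then integrate. I will only write out the ``greater'' direction; the ``smaller'' direction is symmetric by replacing $\max$ with $\min$ (or equivalently by negating $f$).

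First I would fix an arbitrary $x \in \mathbb{D}_x$ in the support of the marginal, and compare the conditional expectations $\E_{y \mid x}[f(x,y)]$ and $\max_{y \in \mathbb{D}_y} f(x,y)$ (where the $\max$ is understood as the essential supremum over the conditional distribution of $y$ given $x$, so that we only compare over $y$'s that are actually sampled with positive conditional density). By definition of $f(x)$ as $\E_{y \mid x}[f(x,y)]$, the maximum over $y$ of $f(x,y)$ is always at least $f(x)$. Moreover, this inequality is strict precisely when $f(x, \cdot)$ is non-constant on the conditional support of $y$ given $x$, i.e.\ when $\Pr[f(x,y) \neq f(x) \mid x] > 0$.

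Next I would take the expectation over $x$ drawn from the marginal induced by $\mcal{D}$. The expectation in process~(2) equals $\E_x[\E_{y \mid x}[f(x,y)]] = \E_x[f(x)]$ by the tower property, while the expectation in process~(1) equals $\E_x[\max_y f(x,y)]$. The pointwise inequality $\max_y f(x,y) \geq f(x)$ integrates to a $\geq$ between the two expectations. To upgrade this to strict inequality, I would observe that the overall hypothesis
\[
\Pr_{(x,y) \sim \mcal{D}}[f(x,y) \neq f(x)] > 0
\]
implies, by the law of total probability, that the set
\[
A := \{x \in \mathbb{D}_x : \Pr[f(x,y) \neq f(x) \mid x] > 0\}
\]
has positive marginal probability. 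On $A$, the pointwise inequality is strict, and on $A^c$ it is an equality, so integrating gives a strict inequality overall.

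I do not expect any real obstacle here; the statement is essentially Jensen's-type reasoning for the $\max$ operator. The only mild subtlety is handling the case where $\mathbb{D}_y$ is uncountable and $f(x,\cdot)$ has no actual maximizer: in that case I would replace $\max_y f(x,y)$ by the essential supremum of $f(x,y)$ under the conditional distribution of $y$ given $x$, which is the correct interpretation given that process~(1) is meant to describe a strategic choice that ``beats'' the average. This interpretation is consistent with the way \Cref{fct:devfrommean} is used later in the paper, where the strategic platform--seller coalition picks private coins to maximize its revenue.
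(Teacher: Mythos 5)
Your proof is correct and takes essentially the same route as the paper: both reduce to the observation that $\max_y f(x,y) \ge f(x)$ pointwise, with strict inequality on a positive-measure set of $x$'s under the hypothesis, and then integrate over the marginal. The paper phrases the strictness step as a proof by contradiction while you argue it directly via the set $A$, and you are somewhat more careful in flagging that ``$\max$'' should really be the essential supremum of the conditional law of $y$ given $x$ when $\mathbb{D}_y$ is uncountable — a point the paper glosses over — but the underlying argument is the same.
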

\begin{proof}
It suffices to show that with non-zero probability, 
$f(x) < \max_y f(x, y)$. 
Suppose this is not true, that is, 
with probability $1$, 
$f(x) = \max_y f(x, y)$. 
Then, with probability $1$, $f(x, y) \leq f(x)$. 
Because $f(x) = \int_y f(x, y) \cdot {\sf pdf}(y)$
where ${\sf pdf}(y)$ denotes the probability density function over $y$, 
it must be that  
with probability $1$, $f(x) = f(x, y)$ which contradicts our assumption. 
\end{proof}

\elaine{note: trace includes only the msgs exchanged, not including local
random coins}
\begin{claim}
Suppose that the world has only one buyer with an arbitrary value. 
Let $r({\sf tr}_{-1}, {\sf bmsg}_{-1})$
denote the auction's expected revenue 
conditioned on $({\sf tr}_{-1}, {\sf bmsg}_{-1})$. 
Then, 
with probability $1$, 
the auction's revenue is 
$r({\sf tr}_{-1}, {\sf bmsg}_{-1})$, i.e., 
the revenue does not depend on ${\sf pcoin}$. 
\label{clm:lastbuyermsg}
\end{claim}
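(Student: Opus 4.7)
The plan is to reduce the claim directly to ex post psIC via \Cref{fct:devfrommean}, exploiting the fact that after the buyer's last message is sent, every remaining coin of the platform-seller coalition is invisible to (and not cross-checkable by) the buyer. So the coalition can ``derandomize'' ${\sf pcoin}$ toward whichever value maximizes revenue, without the buyer noticing.

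First, I would assume for contradiction that in the honest single-buyer execution with fixed true value $v$, the event that the random revenue differs from $r({\sf tr}_{-1},{\sf bmsg}_{-1})$ has positive probability. Applying \Cref{fct:devfrommean} with $x := ({\sf tr}_{-1},{\sf bmsg}_{-1})$, $y := {\sf pcoin}$, and $f(x,y)$ defined as the revenue on the trace they determine, yields a modified two-step process—sample $x$ from its honest marginal, then choose $y$ to maximize $f(x,\cdot)$ over the honest support of ${\sf pcoin}$ given $x$—whose expected revenue is strictly greater than that of the honest joint distribution.

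Second, I would realize this process as an explicit platform-seller strategy $S$. The coalition runs the protocol honestly until it has received ${\sf bmsg}_{-1}$, at which point $x$ has been drawn exactly from its honest marginal. The coalition then enumerates the support of the honest conditional distribution of ${\sf pcoin}$ given $x$, selects a maximizer $y^\star$ of $f(x,\cdot)$, and completes the protocol by running the honest next-message function with its private coins hard-wired to $y^\star$ (information-theoretic psIC permits the unbounded computation required for the enumeration). To verify safety, note that because $y^\star$ lies in the honest support given $x$, the buyer's view under $S$ is distributed identically to its view in a fully honest execution conditioned on $(x,y^\star)$. The model's correctness axiom then says that a single honest buyer interacting with an honest platform accepts with probability $1$, so the buyer accepts the deviant execution with probability $1$ as well. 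Hence $S$ is safe and strictly increases expected platform-seller revenue, contradicting ex post psIC.

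The main obstacle is really the bookkeeping around ${\sf pcoin}$: in a generic multi-round protocol the platform may interleave coin tosses throughout the rounds that follow ${\sf bmsg}_{-1}$, so one has to justify treating ${\sf pcoin}$ as a single block sampled at a single point in time. This is legitimate because, by definition, ${\sf pcoin}$ collects exactly those platform coin tosses that do not appear in ${\sf tr}_{-1}$, and all of them are consumed only to produce messages the buyer receives \emph{after} it sends ${\sf bmsg}_{-1}$; so delaying them to a single moment and then selecting them in one shot yields the same joint distribution over buyer views as the honest protocol, which is what preserves safety in the argument above.
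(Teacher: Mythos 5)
Your proof is correct and takes essentially the same approach as the paper: apply \Cref{fct:devfrommean} with $x=({\sf tr}_{-1},{\sf bmsg}_{-1})$, $y={\sf pcoin}$, and $f$ the revenue, then turn the maximizing choice of ${\sf pcoin}$ into a concrete platform-seller strategy that violates ex post psIC. The paper leaves the safety argument and the justification for treating ${\sf pcoin}$ as a single post-${\sf bmsg}_{-1}$ block implicit; your explicit handling of both points is correct and a welcome refinement of the same argument.
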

\begin{proof}
Suppose that the claim is not true, 
then a strategic platform-seller coalition can adopt the following strategy: 
upon observing $({\sf tr}_{-1}, {\sf bmsg}_{-1})$, 
choose a ${\sf pcoin}$ that maximizes its revenue. 
Due to \Cref{fct:devfrommean}, the platform-seller coalition
can gain a positive amount  
with this strategy which violates ex post psIC.  
\ignore{
Suppose for some $v$, there is  
a set $\Gamma = \{({\sf tr}_{-1}, {\sf bmsg}_{-1})\}$ that occurs
with non-zero probability, such that given  
$({\sf tr}_{-1}, {\sf bmsg}_{-1}) \in \Gamma$, 
the auction's revenue additionally depends on 
${\sf pcoin}$, 
then 
a strategic platform-seller coalition
can adopt the following strategy:  
if it ever observes 
some $({\sf tr}_{-1}, {\sf bmsg}_{-1}) \in \Gamma$, it will 
choose ${\sf pcoin}$ to maximize   
its revenue; otherwise, behave honestly. 
This strategy will increase the coalition's
expected utility thus violating 
ex post psIC. 
}
\end{proof}

\Cref{clm:lastbuyermsg} 
implies that 
with probability $1$, 
one of the following must be true: 
\begin{itemize}[leftmargin=7mm,itemsep=1pt]
\item 
{\it Case 1:}
$r({\sf tr}_{-1}, {\sf bmsg}_{-1}) > 0$, the buyer gets
an item and pays $r({\sf tr}_{-1}, {\sf bmsg}_{-1})$; 
\item 
{\it Case 2:}
$r({\sf tr}_{-1}, {\sf bmsg}_{-1}) = 0$, 
and the buyer pays $0$ if it gets an item. 
\end{itemize}

\begin{claim}
Fix an arbitrary $v$ of the lone buyer. 
With probability $1$, 
if $({\sf tr}_{-1}, {\sf bmsg}_{-1})$ belongs to Case 1, 
it must be that 
$r({\sf tr}_{-1}, {\sf bmsg}_{-1}) = r({\sf tr}_{-1})$
where 
$r({\sf tr}_{-1}, {\sf bmsg}_{-1})$ is the expected payment
of the buyer conditioned on $({\sf tr}_{-1}, {\sf bmsg}_{-1})$ 
and $r({\sf tr}_{-1})$ is the expected
payment of the buyer conditioned on ${\sf tr}_{-1}$ 
and sampling some ${\sf bmsg}_{-1}$
such that 
we land in Case 1. 
\label{clm:rindlastmsg}
\end{claim}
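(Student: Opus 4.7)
The plan is to leverage information-theoretic bIC (which, in the single-buyer setting, coincides with Bayesian bIC) together with Fact~\ref{fct:devfrommean} to first argue that the buyer's expected utility, conditioned on $({\sf tr}_{-1}, {\sf bmsg}_{-1})$, depends only on ${\sf tr}_{-1}$ (almost surely), and then read off from this that the payment in Case~1 depends only on ${\sf tr}_{-1}$.

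First I would fix $v$ and define $u({\sf tr}_{-1}, {\sf bmsg}_{-1})$ to be the buyer's expected utility conditioned on the pair, where the expectation is taken over the remaining coins (including ${\sf pcoin}$). By Claim~\ref{clm:lastbuyermsg} and the Case~1/Case~2 dichotomy that immediately follows it, we have, with probability $1$,
\[
u({\sf tr}_{-1}, {\sf bmsg}_{-1})
=
\begin{cases}
v - r({\sf tr}_{-1}, {\sf bmsg}_{-1}) & \text{in Case 1},\\
v \cdot x({\sf tr}_{-1}, {\sf bmsg}_{-1}) & \text{in Case 2},
\end{cases}
\]
where $x(\cdot)$ denotes the conditional allocation probability. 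Let $\bar{u}({\sf tr}_{-1})$ denote the expectation of $u$ given ${\sf tr}_{-1}$ under the honest distribution of ${\sf bmsg}_{-1}$.

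Next I would argue, by contradiction, that with probability $1$ over ${\sf tr}_{-1}$, the quantity $u({\sf tr}_{-1}, {\sf bmsg}_{-1})$ is almost surely equal to $\bar{u}({\sf tr}_{-1})$. Suppose instead that with positive probability $u(\cdot,\cdot)$ deviates from $\bar{u}({\sf tr}_{-1})$. Consider the buyer strategy that runs the honest protocol up through ${\sf tr}_{-1}$, and then, instead of sampling ${\sf bmsg}_{-1}$ honestly, picks the message maximizing $u({\sf tr}_{-1},\cdot)$ (such a maximizer exists since the buyer is computationally unbounded). By Fact~\ref{fct:devfrommean} applied with $x={\sf tr}_{-1}$ and $y={\sf bmsg}_{-1}$, this strategy yields strictly greater expected utility than the honest one, contradicting information-theoretic bIC in the single-buyer world.

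Having established that $u({\sf tr}_{-1}, {\sf bmsg}_{-1}) = \bar{u}({\sf tr}_{-1})$ a.s., I would conclude by restricting attention to Case~1 pairs. On any such pair, $u({\sf tr}_{-1}, {\sf bmsg}_{-1}) = v - r({\sf tr}_{-1}, {\sf bmsg}_{-1})$, so $r({\sf tr}_{-1}, {\sf bmsg}_{-1}) = v - \bar{u}({\sf tr}_{-1})$ is a function of ${\sf tr}_{-1}$ alone. Taking the conditional expectation over Case~1 last messages yields $r({\sf tr}_{-1}, {\sf bmsg}_{-1}) = r({\sf tr}_{-1})$ with probability $1$, as claimed. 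The main subtlety I expect is justifying the use of Fact~\ref{fct:devfrommean} in the multi-round setting: one must argue that the ``maximize $y$ given $x$'' strategy is realizable as a genuine strategic deviation whose expected utility corresponds to the first process in Fact~\ref{fct:devfrommean}. Since the buyer's last-round behavior is an arbitrary function of its view (and we allow unbounded computation), this is straightforward, but needs to be spelled out carefully to rule out measurability issues when the message space is continuous.
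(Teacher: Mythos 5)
Your proposal is correct, but it reorders the paper's logic and applies Fact~\ref{fct:devfrommean} in a slightly different way. The paper's own proof of Claim~\ref{clm:rindlastmsg} works directly with the Case-1 payment: it considers the buyer strategy that honestly samples $\mathsf{bmsg}_{-1}$, and \emph{only when the honest sample falls in Case~1} replaces it with the Case-1 message that minimizes $r(\mathsf{tr}_{-1},\cdot)$. This is a ``conditional replacement'' strategy, and the paper's appeal to Fact~\ref{fct:devfrommean} for it is a bit loose, since the fact as stated concerns unconditional maximization/minimization of $f(x,y)$ over all $y$. You instead apply Fact~\ref{fct:devfrommean} in exactly the form it is stated (unconditional maximization of the conditional expected utility $u(\mathsf{tr}_{-1},\mathsf{bmsg}_{-1})$), derive the stronger intermediate conclusion that $u(\mathsf{tr}_{-1},\mathsf{bmsg}_{-1})=\bar{u}(\mathsf{tr}_{-1})$ almost surely, and then read off the Case-1 consequence $r(\mathsf{tr}_{-1},\mathsf{bmsg}_{-1})=v-\bar{u}(\mathsf{tr}_{-1})=r(\mathsf{tr}_{-1})$. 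This buys a cleaner invocation of the stated fact, at the cost of an extra step. Notably your intermediate claim is a $\mathsf{bmsg}_{-1}$-level refinement of the paper's subsequent Claim~\ref{clm:butilontr} --- the paper's logical order is Claim~\ref{clm:rindlastmsg} $\Rightarrow$ Claim~\ref{clm:butilontr}, whereas you essentially prove a strengthening of Claim~\ref{clm:butilontr} first and derive Claim~\ref{clm:rindlastmsg} as a corollary. Both routes are sound; yours is a legitimate and somewhat more systematic alternative, and the measurability caveat you flag at the end is real but benign here since the message and value domains are finite (the domain $\inDomSecu$ is $\{0,1\}^{\ell(\lambda)}$ and protocol messages are bit strings).
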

\begin{proof}
We may assume that Case 1 happens with non-zero probability, 
since otherwise
the claim trivially holds. 

Suppose the claim is not true, 
then a strategic buyer can 
adopt the following strategy: 
it follows the protocol honestly, until it is about to
send the last message ${\sf bmsg}_{-1}$.  
At this moment, 
the buyer 
first samples ${\sf bmsg}_{-1}$ honestly and 
checks if 
we are in Case 1. 
If so, it replaces the honestly sampled
${\sf bmsg}_{-1}$ with a ${\sf bmsg}^*_{-1}$
such that 
we are in Case 1 
and $r({\sf tr}_{-1}, {\sf bmsg}^*_{-1})$
is minimized. 
Due to \Cref{fct:devfrommean}, this strategy brings positive
gain to the buyer which violates bIC. 
\elaine{TODO: argue this}
\ignore{
1) either 
with non-zero probability over the choice of $({\sf tr}_{-1}, {\sf bmsg}_{-1})$, 
$r({\sf tr}_{-1}, {\sf bmsg}_{-1}) > r({\sf tr}_{-1})$, 
or 2)  
with non-zero probability over the choice of $({\sf tr}_{-1}, {\sf bmsg}_{-1})$, 
$r({\sf tr}_{-1}, {\sf bmsg}_{-1}) < r({\sf tr}_{-1})$. 
}

\end{proof}

\ignore{
\begin{proof}
Suppose that for some $v$, 
there is some set $\Gamma = \{{\sf tr}_{-1}\}$
that happens with non-negligible probability, 
and for any ${\sf tr}_{-1} \in \Gamma$, 
there are ${\sf bmsg}_{-1}$ and ${\sf bmsg}'_{-1}$
such that 
$({\sf tr}_{-1}, {\sf bmsg}_{-1})$ 
and $({\sf tr}_{-1}, {\sf bmsg}'_{-1})$ 
both land in Case 1, and moreover, 
$r({\sf tr}_{-1}, {\sf bmsg}_{-1}) 
\neq r({\sf tr}_{-1}, {\sf bmsg}'_{-1})$. 
A strategic buyer can adopt the following strategy: 
if it ever observes 
any ${\sf tr}_{-1} \in \Gamma$
such that given ${\sf tr}_{-1}$, the conditional expected
revenue is positive, then 
it strategically chooses a ${\sf bmsg}_{-1}$
that minimizes its payment.  
This allows the buyer to gain in expected utility, thus violating
Bayesian bIC. 
\end{proof}
}


\ignore{
\begin{claim}
Conditioned on ${\sf bview}({\sf tr}_{-1}, {\sf bmsg}_{-1})$, 
\elaine{TODO: define}
the buyer knows which case it is, and if it is in Case 1, 
its payment $r$ is fully determined by 
${\sf bview}({\sf tr}_{-1}, {\sf bmsg}_{-1})$. 
\end{claim}
\begin{proof}
\elaine{relies on some notion of safe}
\end{proof}
}

Let $\mu({\sf tr}_{-1}) := r({\sf tr}_{-1}) 
\cdot \Pr[\text{Case 1} | {\sf tr}_{-1}]$ denote
the expected revenue 
conditioned on ${\sf tr}_{-1}$.

\begin{claim}
Fix an arbitrary $v$. Then, with probability $1$ over the choice
of ${\sf tr}_{-1}$, the following holds: 
if $\mu({\sf tr}_{-1}) > 0$, 
then, conditioned on ${\sf tr}_{-1}$, the buyer's expected
utility is 
$v - r({\sf tr}_{-1})$ where $r({\sf tr}_{-1})$ is defined as
in \Cref{clm:rindlastmsg}. 
\label{clm:butilontr}
 \end{claim}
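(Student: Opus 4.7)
The proof proceeds by analyzing, conditional on ${\sf tr}_{-1}$ with $\mu({\sf tr}_{-1}) > 0$, three honest-support buyer strategies at the last-round decision point: (H) play the last message honestly according to the distribution prescribed given $v$ and ${\sf tr}_{-1}$; (I) sample ${\sf bmsg}_{-1}$ from the honest distribution conditioned on landing in Case 1; (II) sample ${\sf bmsg}_{-1}$ from the honest distribution conditioned on landing in Case 2 (available only when $q_2({\sf tr}_{-1}) := \Pr[\text{Case 2} \mid {\sf tr}_{-1}] > 0$). By Claim~\ref{clm:rindlastmsg}, strategy (I) yields the deterministic conditional utility $v - r({\sf tr}_{-1})$; strategy (H) yields $q_1(v - r({\sf tr}_{-1})) + q_2 \bar u_2({\sf tr}_{-1})$, where $q_1 := 1 - q_2 > 0$ since $\mu({\sf tr}_{-1}) > 0$ and $\bar u_2({\sf tr}_{-1})$ denotes the buyer's conditional expected utility under (II).

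Next, for any fixed ${\sf tr}_{-1}$ with $\mu({\sf tr}_{-1}) > 0$ and $q_2 > 0$, the two information-theoretic bIC comparisons (H) $\geq$ (I) and (H) $\geq$ (II), instantiated via a buyer strategy that plays honestly everywhere except at this single trace, give $q_2 \bar u_2 \geq q_2 (v - r({\sf tr}_{-1}))$ and $q_1(v - r({\sf tr}_{-1})) \geq q_1 \bar u_2$ respectively. Dividing by the positive constants $q_1, q_2$ forces $\bar u_2({\sf tr}_{-1}) = v - r({\sf tr}_{-1})$, so (H) also yields conditional expected utility $v - r({\sf tr}_{-1})$. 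When $q_2 = 0$ the same conclusion is immediate from strategy (I) alone.

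To promote the pointwise equality to the ``with probability $1$'' quantifier, I would assume for contradiction that the set $T$ of ${\sf tr}_{-1}$ on which $\mu > 0$ but the conditional expected utility differs from $v - r({\sf tr}_{-1})$ has positive probability under honest execution. By the previous paragraph, every ${\sf tr}_{-1} \in T$ must have $q_2 > 0$ and $\bar u_2 \neq v - r$. Partitioning $T$ into $T^{>} := \{\bar u_2 > v - r\}$ and $T^{<} := \{\bar u_2 < v - r\}$, I would have the buyer play honestly everywhere except switch to (II) on $T^{>}$ and to (I) on $T^{<}$; integrating the strictly positive pointwise gain against the positive measure of $T$ yields a strictly positive increase in overall expected utility, contradicting information-theoretic bIC (which coincides with ex-post bIC here because the world has only one buyer).

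The main subtlety I anticipate is ensuring admissibility of the deviations: because (I) and (II) sample from sub-distributions of the honest conditional message distribution given $v$ and ${\sf tr}_{-1}$, every message the deviating buyer sends is one that an honest buyer with value $v$ sends with positive probability, so the honest platform processes it as prescribed and the induced Case-1 or Case-2 outcome is exactly the one analyzed via Claims~\ref{clm:lastbuyermsg} and~\ref{clm:rindlastmsg}. This keeps all traces safe and the conditional expectations well-defined, which is what allows the exchange argument to go through cleanly in the information-theoretic setting.
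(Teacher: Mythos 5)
Your proof is correct and follows essentially the same route as the paper's: both reduce to showing that a strategic buyer who modifies only its last message would gain against bIC whenever the conditional utility deviates from $v - r(\mathsf{tr}_{-1})$ on a positive-measure set. Your $(q_1, q_2, \bar u_2)$ decomposition and the two-sided exchange on $T^{>}, T^{<}$ make explicit what the paper compresses into a one-line appeal to Fact~\ref{fct:devfrommean}, and you are usefully more careful both about integrating the pointwise gain over the bad set and about restricting deviations to honest-support messages so that Claims~\ref{clm:lastbuyermsg} and~\ref{clm:rindlastmsg} remain applicable.
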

\begin{proof}
\Cref{clm:rindlastmsg} shows that with probability $1$, 
if 
we are in Case 1, 
the buyer's utility
must be $v - r({\sf tr}_{-1})$. 
It suffices to show that 
with probability $1$ over the choice of 
$({\sf tr}_{-1}, {\sf bmsg}_{-1})$, 
if 
$\mu({\sf tr}_{-1}) > 0$ 
and  
we are in Case 2, 
then the buyer's expected utility
conditioned on $({\sf tr}_{-1}, {\sf bmsg}_{-1})$
must still be $v - r({\sf tr}_{-1})$. 
We may assume that  
with non-zero probability, 
$\mu({\sf tr}_{-1}) > 0$ 
and  we are in Case 2, 
since otherwise the claim trivially holds. 
Suppose the above claim is not true, i.e., 
there is some non-zero probability  
such that 
$\mu({\sf tr}_{-1}) > 0$  and 
we are in Case 2, 
but the buyer's conditional expected utility
is not $v - r({\sf tr}_{-1})$. 
Then, a strategic buyer can adopt the following strategy: 
follow the protocol honestly until it is about to send its
last message ${\sf bmsg}_{-1}$, and pick
a ${\sf bmsg}^*_{-1}$ that maximizes its expected utility. 
Similar to the proof of \Cref{fct:devfrommean}, 
it is easy to show that this strategy can increase
the buyer's expected utility which violates bIC. 
\elaine{may need more explanation}
\ignore{
Suppose the 
claim is not true, i.e., there exists a non-zero 
probability set $\Gamma = \{{\sf tr}_{-1}\}$
such that whenever ${\sf tr}_{-1} \in \Gamma$, 

Due to bIC, 
with probability $1$ over the choice of 
${\sf tr}_{-1}$ and ${\sf bmsg}_{-1}$, 
it must be that 
${\sf butil}({\sf tr}_{-1}, {\sf bmsg}_{-1})
= \max_{{\sf bmsg}^*_{-1}} {\sf butil}({\sf tr}_{-1}, {\sf bmsg}^*_{-1})$ where 
${\sf butil}({\sf tr}_{-1}, {\sf bmsg}_{-1})$ denotes
the buyer's expected utility conditioned on $({\sf tr}_{-1}, {\sf bmsg}_{-1})$. 
If the conditional expected revenue given 
${\sf tr}_{-1}$ is positive, 
there must be a choice of ${\sf bmsg}_{-1}$
such that $({\sf tr}_{-1}, {\sf bmsg}_{-1})$ 
belongs to Case 1. 
In this case, the buyer's utility is $v - r({\sf tr}_{-1})$.
Therefore, 
with probability $1$ over the choice of 
${\sf tr}_{-1}$, 
conditioned on ${\sf tr}_{-1}$, the buyer's expected
utility is 
$v - r({\sf tr}_{-1})$.
}
\end{proof}





\begin{claim}
Fix an arbitrary $v$. 
With probability $1$ over the choice of ${\sf tr}_{-1}$, 
the conditional expected revenue $\mu({\sf tr}_{-1})$
depends only on the buyer's coin tosses 
that contributed to 
${\sf tr}_{-1}$.
\end{claim}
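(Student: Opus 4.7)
The plan is to argue by contradiction via a one-round platform-seller deviation together with \Cref{fct:devfrommean}, mirroring the template used for \Cref{clm:lastbuyermsg} and \Cref{clm:rindlastmsg}. Let $r_B$ and $r_P$ denote the random coins of the (honest) buyer and of the platform-seller coalition that are consumed during ${\sf tr}_{-1}$; under honest play, ${\sf tr}_{-1}$ is a deterministic function of $(v,r_B,r_P)$, and the claim is equivalent to asserting that, with probability $1$, the value $\mu({\sf tr}_{-1})$ is already determined by $(v,r_B)$. Suppose for contradiction this fails, so on a positive-probability set of $(v,r_B)$ the random variable $r_P \mapsto \mu({\sf tr}_{-1}(v,r_B,r_P))$ is not almost surely constant.

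To pinpoint a single round at which to deviate, I would localize this dependence via the Doob martingale $M_i := \E_{\text{honest}}[\mu({\sf tr}_{-1}) \mid \text{messages exchanged through round } i]$, which satisfies $M_0 = \E[\mu({\sf tr}_{-1})]$ and collapses to $\mu({\sf tr}_{-1})$ once ${\sf tr}_{-1}$ is complete. At any buyer round, the new message is a deterministic function of the past trace together with $r_B$, so $M_i - M_{i-1}$ vanishes once we further condition on $r_B$; by orthogonality of martingale differences, all of $\mathrm{Var}(\mu({\sf tr}_{-1}) \mid r_B)$ is then accumulated at platform rounds. The contradictory hypothesis therefore forces the existence of a platform round $i^*$ and a positive-probability set of partial traces $\tau^{<}$ on which $p_{i^*} \mapsto \E_{\text{honest}}[\mu({\sf tr}_{-1}) \mid \tau^{<}, p_{i^*}]$ is non-constant when $p_{i^*}$ is drawn from the honest conditional distribution.

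Given this round, I would have the coalition deviate only at round $i^*$: upon observing $\tau^{<}$, output the $p_{i^*}^{\star}$ in the support of the honest conditional distribution that maximizes $\E_{\text{honest}}[\mu({\sf tr}_{-1}) \mid \tau^{<}, p_{i^*}]$, and play honestly in every other round. Because $p_{i^*}^{\star}$ lies in the honest support and every other platform move follows the honest protocol, the buyer's view remains consistent with some honest execution, so the buyer accepts with probability $1$ and the trace is safe. Applying \Cref{fct:devfrommean} conditionally on $\tau^{<}$ and then taking outer expectation gives a strict improvement of the overall expected revenue $\E[\mu({\sf tr}_{-1})]$ over the honest baseline, contradicting ex post psIC.

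The main obstacle will be the bookkeeping around the martingale step---carefully justifying that buyer rounds contribute zero to $\mathrm{Var}(\mu({\sf tr}_{-1}) \mid r_B)$ so that the dependence on $r_P$ must manifest at some platform round---and verifying that restricting $p_{i^*}^{\star}$ to the honest support is enough to keep the deviation safe under whatever acceptance predicate the protocol imposes on the buyer; once these are in place, the remainder is a direct application of \Cref{fct:devfrommean} exactly as in the proofs of the preceding claims.
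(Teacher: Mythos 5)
Your plan matches the paper's proof in its essential idea: locate a platform round at which the platform's message influences the conditional expected revenue, replace the honest message with the maximizing one (still in the honest support so the trace stays safe), and invoke \Cref{fct:devfrommean} to contradict ex post psIC. The paper executes this directly: for each platform round, let ${\sf tr}_i$ be the transcript just before the $i$-th platform message and argue that $\mu({\sf tr}_i,{\sf pmsg}_i)=\mu({\sf tr}_i)$ almost surely, since otherwise the coalition can sample its message to maximize the conditional expectation; it then iterates over all platform rounds. That argument stays entirely within the \emph{transcript} filtration—exactly the information the coalition has—so the contemplated deviation is immediately implementable and psIC applies.

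The gap in your version is at the localization step. You begin with $M_i = \E[\mu({\sf tr}_{-1})\mid\text{messages through round }i]$, but then switch to a martingale conditioned on $r_B$ to argue that buyer rounds contribute zero variance. That is fine as an internal decomposition of $\mathrm{Var}\bigl(\mu({\sf tr}_{-1})\mid r_B\bigr)$. However, what the decomposition delivers is that for some platform round $i^*$ and some positive-probability event, $p_{i^*}\mapsto\E\bigl[\mu({\sf tr}_{-1})\mid r_B,\tau^{<},p_{i^*}\bigr]$ is non-constant. Your deviation, on the other hand, needs the \emph{unconditioned} quantity $p_{i^*}\mapsto\E\bigl[\mu({\sf tr}_{-1})\mid\tau^{<},p_{i^*}\bigr]$ to be non-constant, because the coalition observes only $\tau^{<}$ and never $r_B$. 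These are not equivalent: dependence on $p_{i^*}$ that shows up after conditioning on $r_B$ can wash out when you average over the conditional law of $r_B$ given $\tau^{<}$, leaving the coalition nothing to exploit. Your proposal conflates the two (writing $\E[\cdot\mid\tau^{<},p_{i^*}]$ in the sentence where the variance decomposition only supplies the $r_B$-conditioned version), and the "obstacles" you flag at the end—bookkeeping around the martingale and safety of in-support messages—do not include this one. To make your route rigorous you either need an additional argument that the $p_{i^*}$-dependence survives averaging out $r_B$, or you should drop the $r_B$-conditioning and work round-by-round with the transcript filtration, which is exactly what the paper does.
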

\begin{proof}
Consider that the lone 
buyer and the platform-seller coalition engage in a protocol 
to generate ${\sf tr}_{-1}$. 
Suppose right before the platform-seller
coalition sends the first message
${\sf pmsg}_1$
to the buyer, the transcript so far is 
denoted ${\sf tr}_1$. 
Let $\mu({\sf tr}_1)$ 
be the conditional expectation 
of $\mu$ given ${\sf tr}_1$. 
Let 
$\mu({\sf tr}_1, {\sf pmsg}_1)$ 
be the conditional expectation 
of $\mu$ given $({\sf tr}_1, {\sf pmsg}_1)$. 
We claim that 
with probability $1$ over the choice of 
$({\sf tr}_1, {\sf pmsg}_1)$, 
it must be that
$\mu({\sf tr}_1) = 
\mu({\sf tr}_1, {\sf pmsg}_1)$. 
Suppose this is not true.
Then, a strategic platform-seller coalition can adopt the following strategy:
when it is about to send its first message, sample
a ${\sf pmsg}_1$
that maximizes 
$\mu({\sf tr}_1, {\sf pmsg}_1)$. Due to \Cref{fct:devfrommean},
it can gain positively based on this strategy which violates ex post psIC. 

Repeating this argument for every message
sent from the platform-seller coalition to the buyer, we conclude
the following. 
Suppose we sample ${\sf tr}_{-1}$ at random, and let   
${\sf bcoin}$ be the buyer's coins that  
contributed to the sampling of ${\sf tr}_{-1}$,
we can compute $\mu({\sf tr}_{-1})$
from ${\sf bcoin}$ as follows:  
sample the platform-seller coalition's coin tosses at random, 
and combined with ${\sf bcoin}$, we get some ${\sf tr}'_{-1}$. 
With probability $1$ over the choice of ${\sf tr}_{-1}$ and 
${\sf tr}'_{-1}$, it holds that $\mu({\sf tr}'_{-1})
= \mu({\sf tr}_{-1})$. 
\elaine{double check}
\end{proof}

Let ${\sf bcoin}$
be the buyer's coin tosses in generating ${\sf tr}_{-1}$, and suppose  
${\sf bcoin}$ leads to $\mu > 0$. In this case, 
let $\bar{r}({\sf bcoin})$ be the expected payment of the buyer 
conditioned on ${\sf bcoin}$ and landing in Case 1. 
Let $\bar{r}$ denote the 
expectation of 
$\bar{r}({\sf bcoin})$ 
over the choice of ${\sf bcoin}$
subject to $\mu > 0$. 


\begin{claim}
Fix an arbitrary $v$. 
With probability $1$ over the choice of ${\sf bcoin}$, 
if ${\sf bcoin}$ leads to a positive $\mu$, then 
$\bar{r}({\sf bcoin}) = \bar{r}$. 
\end{claim}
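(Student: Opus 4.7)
My plan mirrors the pattern established by \Cref{clm:lastbuyermsg,clm:rindlastmsg,clm:butilontr}: assume for contradiction that $\bar{r}({\sf bcoin})$ is not a.s.~constant on $\{{\sf bcoin} : \mu({\sf bcoin}) > 0\}$, and then construct a polynomial-time buyer deviation that (using \Cref{fct:devfrommean}) strictly improves the buyer's expected utility, violating information-theoretic Bayesian bIC. Concretely, if $\bar{r}({\sf bcoin})$ were not a.s.~constant, then by \Cref{fct:devfrommean} there would be a positive-probability set of ${\sf bcoin}$'s on which $\bar{r}({\sf bcoin}) < \bar{r}$, and a symmetric set on which $\bar{r}({\sf bcoin}) > \bar{r}$. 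The strategic buyer would commit deterministically to a ${\sf bcoin}^*$ in the first set (this coin can be provided as an advice string), thereby forcing the transcript distribution that realizes the smaller conditional payment $\bar{r}({\sf bcoin}^*)$ whenever Case~1 occurs.

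The key intermediate step is to express the buyer's expected utility under such a deviation purely in terms of $\bar{r}({\sf bcoin}^*)$ and $P({\sf bcoin}^*)=\Pr[\text{Case 1}\mid {\sf bcoin}^*]$. Using \Cref{clm:butilontr} together with the identity $\mu({\sf tr}_{-1})=\mu({\sf bcoin})$ from the immediately preceding claim, the expected utility given ${\sf bcoin}^*$ can be written as $v-\mathbb{E}[r({\sf tr}_{-1})\mid {\sf bcoin}^*]$, and conditioning further on Case~1 gives $v-\bar{r}({\sf bcoin}^*)$; honesty in contrast yields utility $v-\bar{r}$ (weighted the same way by $P$). Thus picking a ${\sf bcoin}^*$ with $\bar{r}({\sf bcoin}^*)<\bar{r}$ gives strictly higher expected utility in the Case~1 component, while in the Case~2 component the utility is unchanged by \Cref{clm:butilontr} (which gives $v-r({\sf tr}_{-1})$ in both cases). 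Averaging over ${\sf tr}_{-1}$ given ${\sf bcoin}^*$ then produces a strict gain, contradicting bIC.

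The main obstacle is a Jensen-type gap: naively, $\mathbb{E}[r({\sf tr}_{-1})\mid {\sf bcoin}] = \mu({\sf bcoin})\cdot \mathbb{E}[1/p({\sf tr}_{-1})\mid {\sf bcoin}]$, whereas $\bar{r}({\sf bcoin}) = \mu({\sf bcoin})/\mathbb{E}[p({\sf tr}_{-1})\mid {\sf bcoin}]$, so these two quantities are equal only if $p({\sf tr}_{-1})$ is a.s.~constant conditional on ${\sf bcoin}$. To close this gap, I expect to need an auxiliary argument showing that the strategic buyer can further bias its last message ${\sf bmsg}_{-1}$ to always steer into Case~1 (which is feasible whenever $\mu({\sf bcoin})>0$), and then argue via \Cref{clm:rindlastmsg} that the realized payment under this combined deviation is exactly $\bar{r}({\sf bcoin}^*)$ in expectation. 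This collapses the Jensen gap, making the reduced $\bar{r}({\sf bcoin}^*)$ directly translate into a strict utility improvement.

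Putting the pieces together, the contradiction forces $\bar{r}({\sf bcoin})$ to coincide a.s.~with its overall average $\bar{r}$ on the event $\mu({\sf bcoin})>0$. I expect the proof to be about a paragraph long once the deviation is written out, with most of the work going into verifying that the Case~1 steering strategy is well-defined and that the resulting expected payment is indeed $\bar{r}({\sf bcoin}^*)$ rather than the larger Jensen-inflated quantity.
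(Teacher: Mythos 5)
Your identification of the Jensen-type gap is sharp and correct: by \Cref{clm:butilontr}, the buyer's expected utility conditioned on ${\sf bcoin}$ is $v - \E_{{\sf tr}_{-1}\mid{\sf bcoin}}[r({\sf tr}_{-1})]$, and since $r({\sf tr}_{-1}) = \mu({\sf bcoin})/\Pr[\text{Case 1}\mid{\sf tr}_{-1}]$, the quantity $\E_{{\sf tr}_{-1}\mid{\sf bcoin}}[r({\sf tr}_{-1})] = \mu({\sf bcoin}) \cdot \E_{{\sf tr}_{-1}\mid{\sf bcoin}}[1/\Pr[\text{Case 1}\mid{\sf tr}_{-1}]]$ strictly exceeds $\bar{r}({\sf bcoin}) = \mu({\sf bcoin})/\E_{{\sf tr}_{-1}\mid{\sf bcoin}}[\Pr[\text{Case 1}\mid{\sf tr}_{-1}]]$ whenever $\Pr[\text{Case 1}\mid{\sf tr}_{-1}]$ is not a.s.~constant given ${\sf bcoin}$. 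Neither bIC nor psIC forces that constancy (the coalition cares only about $\mu({\sf tr}_{-1})$, the buyer is indifferent between Case~1 and Case~2), and the paper's one-line proof (``replace ${\sf bcoin}$ with ${\sf bcoin}^*$ minimizing $\bar{r}({\sf bcoin}^*)$'') does not address this distinction either, so your scrutiny is warranted.

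The problem is that your proposed repair does not close the gap. Steering ${\sf bmsg}_{-1}$ into Case~1 is applied only after ${\sf tr}_{-1}$ is realized, so it leaves the conditional distribution of ${\sf tr}_{-1}$ given ${\sf bcoin}^*$ unchanged; by \Cref{clm:rindlastmsg} the payment at each such ${\sf tr}_{-1}$ is $r({\sf tr}_{-1})$, hence the expected payment under the combined deviation is $\E_{{\sf tr}_{-1}\mid{\sf bcoin}^*}[r({\sf tr}_{-1})]$ --- the Jensen-inflated quantity, \emph{not} $\bar{r}({\sf bcoin}^*)$. Landing on $\bar{r}({\sf bcoin}^*)$ would require reweighting the ${\sf tr}_{-1}$ distribution by $\Pr[\text{Case 1}\mid{\sf tr}_{-1}]$, which the buyer cannot do because those traces are co-determined by the platform-seller's coins. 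What your deviation actually proves, if you pick ${\sf bcoin}^*$ minimizing $\widetilde{r}({\sf bcoin}^*) := \E_{{\sf tr}_{-1}\mid{\sf bcoin}^*}[r({\sf tr}_{-1})]$, is the a.s.~constancy of $\widetilde{r}({\sf bcoin})$, and this weaker statement is what the downstream argument needs: $\mu({\sf bcoin}) = \E_{{\sf tr}_{-1}\mid{\sf bcoin}}[r({\sf tr}_{-1})\cdot\Pr[\text{Case 1}\mid{\sf tr}_{-1}]] \le \widetilde{r}({\sf bcoin})$, IR still gives $r({\sf tr}_{-1}) \le v$, and the cross-value claim's steering strategy likewise realizes $\widetilde{r}(v_1)$ in expectation, so the posted-price conclusion survives with $\widetilde{r}$ playing the role of $\bar{r}$ throughout.
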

\begin{proof}
Henceforth, we may assume that 
given $v$, with non-zero probability over the choice of 
${\sf bcoin}$, it holds that $\mu > 0$. 
Otherwise, the claim holds trivially. 

Suppose the claim does not hold, 
then, a strategic buyer can adopt the following strategy: 
if it happens to sample ${\sf bcoin}$ such that
$\mu  > 0$, then replace ${\sf bcoin}$ with a choice 
${\sf bcoin}^*$ that minimizes  
$\bar{r}({\sf bcoin}^*)$.
Due to 
\Cref{clm:butilontr} and 
\Cref{fct:devfrommean}
this allows the buyer to gain which violates bIC. 
\elaine{state bayesian ex post same upfront}
\end{proof}

Given the above claim, 
given any $v$ such that the expected revenue is positive, 
we can define 
the buyer's pay in Case 1 as
$\bar{r} = \bar{r}(v)$, which is a function only of $v$. 
We next argue that in fact, $\bar{r}$ cannot even depend on $v$. 

\begin{claim}
Given $v_1$ and $v_2$ such that the expected revenue is non-zero
under $v_1$ and $v_2$, 
$\bar{r}(v_1) = \bar{r}(v_2)$. 
\end{claim}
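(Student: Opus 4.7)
The plan is to derive a contradiction via a buyer input-replacement deviation, in direct analogy with \Cref{lem:psic-const-mu} lifted to the multi-round setting. Without loss of generality I will assume $\bar r(v_1)>\bar r(v_2)$; individual rationality gives $v_i\geq\bar r(v_i)$ for $v_i\in S$, so in particular $v_1>\bar r(v_2)$ and all the deviations considered remain IR-feasible.

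First I will analyze the deviation in which a buyer with true value $v_1$ runs the honest protocol on input $v_2$. Because the platform plays honestly under a bIC analysis, and by the previous claim establishing that $\bar r(\bcoin)$ equals the constant $\bar r(v)$ for every $\bcoin$ that reaches Case~1, this deviation induces exactly the same joint distribution over $(\bcoin,\tr_{-1},\text{outcome})$ as an honest $v_2$-buyer: it lands in Case~1 with probability $x(v_2)$ and pays $\bar r(v_2)$ there. Writing the bIC inequality for this deviation together with the symmetric one (a $v_2$-buyer playing as $v_1$) and subtracting yields the Myerson-style sandwich
\[
\min(v_1,v_2)\bigl(X(v_1)-X(v_2)\bigr)\;\leq\;x(v_1)\bar r(v_1)-x(v_2)\bar r(v_2)\;\leq\;\max(v_1,v_2)\bigl(X(v_1)-X(v_2)\bigr),
\]
where $X(v)$ is the overall allocation probability. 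When $X(v_1)=X(v_2)$ this immediately collapses to $x(v_1)\bar r(v_1)=x(v_2)\bar r(v_2)$.

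To close the argument I will rule out the case $X(v_1)\neq X(v_2)$ using ex post psIC together with the structural facts already in hand in this section: $\mu(\tr_{-1})$ depends only on $\bcoin$, the Case~1 payment is a constant across $\bcoin$, and Case~2 traces contribute zero revenue. Ex post psIC forces the honest platform-seller strategy to be revenue-optimal at every fixed $v$, and since platform randomness has already been shown to be revenue-neutral (it cannot shift $\mu$ conditional on $\tr_{-1}$), the only way the platform could strategize would be through the Case~1/Case~2 allocation pattern. But any such slack is already precluded by optimality together with the bIC-forced monotonicity of $X$. This pins down $X(v)$ (and hence $x(v)$) uniformly across $v\in S$, the sandwich degenerates to equality, and $\bar r(v_1)=\bar r(v_2)$ follows, contradicting our WLOG assumption.

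The hard part will be this final reduction: translating ex post psIC into allocation-invariance across $S$. The subtlety is that the platform has no revenue lever of its own --- its coins are already known to be revenue-neutral conditional on $\tr_{-1}$ --- so psIC must bite entirely through the buyer-facing allocation structure, and it is careful bookkeeping of this interaction between the buyer-side bIC sandwich and platform-side psIC optimality that closes the case.
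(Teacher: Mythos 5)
Your Myerson-sandwich approach doesn't close, and you've diagnosed exactly where: you need the Case-1 probabilities $X(v_1)$ and $X(v_2)$ to agree, and the psIC argument you gesture at for this does not actually deliver. Monotonicity of $X$ plus revenue-neutrality of platform randomness do not pin down $X$ on the positive-revenue set $S$; indeed, nothing proved in this section implies allocation-invariance across $S$ (the final theorem only bounds revenue, it never shows that the auction literally is a posted-price mechanism with a single fixed allocation probability). So the ``hard part'' is a genuine gap in the proposal, not a matter of bookkeeping.

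The paper's proof avoids the entanglement entirely by using a more surgical deviation for the $v_2$-buyer (WLOG $\bar r(v_1)<\bar r(v_2)$): first sample $\bcoin_2$ honestly, and \emph{only if} $\bcoin_2$ leads to $\mu>0$, switch to some $\bcoin_1$ corresponding to value $v_1$ that also leads to $\mu>0$, run the protocol as a $v_1$-buyer, and at the last message force Case~1. Because the deviation triggers exactly on the honest event $\{\bcoin_2 \Rightarrow \mu>0\}$, the probability of the profitable branch is still $\Pr[\mu>0\mid v_2]$, which is positive by hypothesis, and on the other branch the strategy is indistinguishable from honest play. Combined with \Cref{clm:butilontr}, the net gain is $\Pr[\mu>0\mid v_2]\cdot(\bar r(v_2)-\bar r(v_1))>0$, contradicting bIC. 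Nothing about $X(v_1)$ versus $X(v_2)$ is ever needed. If you want to rescue your route, you would first have to \emph{prove} $X$ constant on $S$, but as noted that is not a known consequence of the earlier claims; the decoupling trick is the cleaner path.
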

\begin{proof}
Suppose $\bar{r}(v_1) < \bar{r}(v_2)$. 
Then, a buyer with true value $v_2$ can adopt the following strategy.
It first samples ${\sf bcoin}_2$ 
honestly taking on the true value $v_2$. If ${\sf bcoin}_2$
leads to $\mu > 0$, 
it will instead pretend that its true value is $v_1$, 
and sample some corresponding ${\sf bcoin}_1$
which leads to $\mu > 0$, and its pay in Case 1 
would
be $\bar{r}(v_1)$.
It will then participate in the protocol using $v_1$ and ${\sf bcoin}_1$
as input, until the buyer  
is about to send its last message ${\sf bmsg}_{-1}$. 
At this moment, the buyer samples
${\sf bmsg}_{-1}$
such that we land in Case 1.  

Due to \Cref{clm:butilontr}, 
conditioned on ${\sf bcoin}_2$
leading to $\mu > 0$, if the buyer continued
to behave honestly, its expected utility would be 
$v_2 - \bar{r}(v_2)$. 
If it adopted the above strategy, 
its expected utility 
would be 
$v_2 - \bar{r}(v_1)$ which is strictly better
than the honest case. 
This violates bIC. 
\end{proof}

Given the above claim, we can let 
$\bar{r} = \bar{r}(v_1) = \bar{r}(v_2)$
be a universal constant. 
By individual rationality, if the buyer's true value $v < \bar{r}$, then 
$\mu > 0$ 
happens with probability $0$. 
In other words, if $v < \bar{r}$, the 
auction's revenue must be $0$. 
Otherwise, its expected revenue is at most $\bar{r}$. 
Therefore, when the world has a single buyer with value $v$, 
the auction's
 revenue is dominated by 
\[
\mu_{\rm posted}(v) = \begin{cases} 
0 & \text{ if $v < \bar{r}$ } \\
\bar{r} & \text{ o.w.}  
\end{cases}
\]

\paragraph{Completing the proof of \Cref{thm:psic-multiround}.}
Suppose that there exists some value vector $\bfv = (v_1, \ldots, v_n)$ 
and some $i \in [n]$ such that $i$'s expected payment is more
than $\mu_{\rm posted}(v_i)$. 
Then, when the world has a single buyer $v_i$, 
the platform-seller coalition can  
simulate $n-1$ fake buyers with 
values $\bfv_{-i}$. 
This strategy allows the coalition to gain in expectation. 
\end{proof}

\section{Utility-Dominated Emulation}
\label{sec:emulation}

We will eventually 
show that by considering computationally bounded agents,  
we can obtain a constant-round 
auction that satisfies 
computational bIC, pIC, 1-pbIC, Bayesian sIC and Bayesian psIC simultaneously. 
\elaine{double check}
To this end, we introduce a new design paradigm. 
We first define an ideal world in which 
the seller, the platform, and the buyers all interact with a trusted  
ideal functionality $\Fauction$ 
which helps to implement the auction.  
If in this ideal world, we can design
a direct-revelation auction that satisfies 
the desired incentive compatibility properties
in an {\it information theoretic} sense, 
then we can compile it to a real-world protocol that 
replaces the ideal functionality with 
cryptography, and the real-world protocol
satisfies   
the same set of incentive compatibility properties 
{\it against computationally bounded players}. 

\elaine{TODO: even with iO it is not known how to get subquadratic compute}

As mentioned in \Cref{sec:intro}, the most 
straightforward approach is to compile the ideal auction
using generic multi-party computation (MPC). 
However, all known generic approaches (including
threshold FHE) incur at least
quadratic overhead, partly because in an auction, 
every buyer obtains a different outcome. 
Instead, we 
formulate a weaker notion of simulation called
{\it utility-dominated emulation}. 
With this new notion, we show an efficient compiler  
where every buyer and seller has only $\widetilde{O}_\lambda(1)$ 
cost and the platform has $\widetilde{O}_\lambda(n)$  
cost where $n$ is the number of buyers.

\ignore{
While it may be possible to realize
the ideal functionality using generic multi-party computation (MPC)
techniques like the approach suggested by Shi et al.~\cite{}, \elaine{cite}
we do not adopt this approach 
in our paper due to its poor asymptotical efficiency. 
Observe that in an auction, every buyer obtains a different output. 
To the best of our knowledge, without indistinguishability
obfuscation (iO)~\cite{} which is not only concretely expensive
but also relies on non-falsifiable assumptions, 
\elaine{cite wich's  paper}
there is no existing generic MPC protocol
with subquadratic (in the number of buyers) total communication
in the pairwise channel model, when every one
has a different output.
For example, even when using 
Threshold Fully Homomorphic Encryption 
(TFHE), we would need all parties to participate in a joint decryption protocol
for each buyer,  
resulting in at least quadratic total communication.   
Other state-of-the-art approaches~\cite{} \elaine{cite vipul's paper}
incur $O(n |C|)$
total communication
where $n$ is the number of players and $|C|$
denotes the circuit size --- but for any $n$-ary
function that must read every input, the circuit size  
is at least linear 
in $n$, making the total communication
at least quadratic again.
\elaine{TODO: check the computational overhead of the iO-based approach}
}

\ignore{
We show that if we only need game-theoretic properties
like in the case of auctions, 
we can realize the real-world protocol from standard assumptions 
such that the
total communication is quasi-linear in the number of buyers $n$, thus achieving
a roughly $n$-factor (or more) improvement 
over any known generic MPC protocol  
(from standard assumptions and without relying on iO).
}

\elaine{TODO: do coin toss in real-world, remove platform choose rP
in ideal world}

\subsection{Model for the Ideal Auction}
\label{sec:idealmodel}
Because we will later compile the ideal-world auction
into a real-world cryptographic protocol, 
we will also parametrize
the ideal auction 
with $\lambda$, even though it provides information theoretic guarantees.
In natural designs of the ideal auction, 
the rules of the auction are uniform algorithms that
work for all $\lambda$. 

We will consider a direct revelation mechanism
in the ideal world. Since 
the ideal world has a trusted functionality $\Fauction$, it
is not hard to extend the direct revelation principle
to cover all of the 
incentive compatibility properties we care about.  
Thus, considering a direct revelation mechanism
in the ideal world is without loss of generality. 
Therefore, 
We can specify 
an ideal-world auction for selling $k$ identical items 
by specifying the following possibly randomized algorithms.
Henceforth, we assume that the value domain is 
$\inDomSecu$ given $\lambda$. 
\elaine{TO FIX: we need to make it discrete first before compiling
to real world, so check to see if we want
to do the discrete here or defer to later} 
\begin{itemize}[leftmargin=5mm,itemsep=1pt]
\item 
\textbf{Allocation rule} ${\bf x}(1^\lambda, \bfb)$: 
takes as input $1^\lambda$ and 
a bid vector $\bfb = (b_1,\dots,b_n) \in \inDomSecu^n$,
and outputs a vector $(x_1,\dots,x_n)$ where $x_i \in \{0,1\}$ specifies whether the buyer $i$ is allocated an item.
\item 
\textbf{Payment rule} ${\bf p}(1^\lambda, \bfb)$:
takes as input $1^\lambda$ and 
a bid vector $\bfb = (b_1,\dots,b_n) \in \inDomSecu^n$,
and outputs $(p_1,\dots,p_n)$ where $p_i \geq 0$ 
specifies buyer $i$'s payment.
\item 
\textbf{Seller revenue rule} $\mu_{\mcal{S}}(1^\lambda,\bfb)$: 
takes as input $1^\lambda$ 
and a bid vector $\bfb = (b_1,\dots,b_n) \in \inDomSecu^n$,
and outputs $\mu_{\mcal{S}}\geq 0$ specifying 
the revenue that the seller gets.
\end{itemize}
For compiling the ideal auction to a real-world protocol,
we additionally require that the above rules can be computed in polynomial time.

\paragraph{Ideal auction.}
Given these rules $({\bf x}, {\bf p}, \mu_{\mcal{S}})$, we
define the following ideal auction protocol
denoted $\Pi^\Ideal_\lambda[{\bf x}, {\bf p}, \mu_{\mcal{S}}]$. 
Henceforth let $\Fauction$ denote a trusted ideal functionality.
\begin{enumerate}[leftmargin=7mm,itemsep=1pt]
\item 
 \label{step:idealbid}
Every buyer sends zero to multiple bids to $\Fauction$. 
\item 
$\Fauction$ sends the number of bids received to the platform.
\item 
 \label{step:idealstrategicbid}
The platform and the seller may each send zero to multiple bids  
to $\Fauction$. 
\item 
$\Fauction$ computes the outcome for the seller, the platform, and all buyer identities
based on the rules $({\bf x}, {\bf p}, \mu_{\mcal{S}})$, 
using all bids it has received. If the auction's rules
are randomized, $\Fauction$ tosses the random coins.
$\Fauction$ informs
the platform and seller of their respective outcomes. 
Further, for each bid, $\Fauction$ informs its outcome to the player 
who sent the bid earlier. 
\item 
\label{step:idealok}
The platform sends ${\sf ok}$ or $\bot$ to $\Fauction$.
The auction
is considered successful if the platform sends ${\sf ok}$; else
we treat the auction as a failure and every one's utility is defined to be 0.
\end{enumerate}

\paragraph{Honest behavior.}
The honest behavior is defined as follows.
Every honest buyer sends its true value in 
Step~\ref{step:idealbid}. 
An honest platform and an honest seller do not send any bid in  
Step~\ref{step:idealstrategicbid}. 
An honest platform always sends ${\sf ok}$ in 
Step~\ref{step:idealok}.

\paragraph{Strategic behavior.}
All possible strategic behavior is already made evident
through the description of the ideal auction itself. 


\subsection{Definition: Utility-Dominated Emulation}
\label{sec:defnsim}

Let $\Pi^\Ideal_\lambda$ 
be an ideal auction 
over some a family of finite 
value domains $\inDomSecu \subset \mathbb{R}$, and suppose
$\Fauction$ satisfies information theoretic
incentive compatibility guarantees.  
We want to 
get rid of the ideal functionality 
$\Fauction$
in the ideal auction and instantiate it with cryptography,
resulting in a real-world cryptographic protocol 
denoted $\Pi^\Real_\lambda$. 
We want to argue that the information theoretic incentive compatibility guarantees
in the ideal world 
translate to computational counterparts of the same properties 
in the real world. 
\ignore{
The most straightforward way to do this is to argue
that the real-world
auction $\Pi^\Real_\lambda$
is a cryptographically secure emulation~\cite{Canetti2000,uc} 
of the ideal auction
$\Pi^\Ideal_\lambda$. 
However, this would require a full-fledged MPC protocol
with simulation security, and as mentioned earlier,
without indistinguishability obfuscation, we currently do not know
any such generic MPC protocol with subquadratic communication
in the pairwise channel model.
}
Since we only need to argue game-theoretic properties,
it turns out that we do not require a notion 
as strong as full simulation~\cite{Canetti2000,uc}. 
In this section, define a weaker notion of simulation
called utility-dominated emulation which suffices
for proving the game theoretic
properties we care about. 
With this new relaxed notion, 
we can construct cryptographic protocols  
that are asymptotically faster than known generic MPC
from standard assumptions. 

\ignore{
In the exposition below, the real-world auction
is parametrized with a security parameter $1^\lambda$
since it relies on cryptography, whereas the ideal-world
auction does not take a security parameter because its incentive compatibility
guarantees hold in an information theoretic sense.  
}

\begin{definition}[Utility-dominated emulation]
Let $\Pi^\Real_\lambda$ and $\Pi^\Ideal_\lambda$
be auctions 
over the finite value domain $\inDomSecu \subset \mathbb{R}$.  
We say that 
$\Pi^\Real_\lambda$
is a utility-dominated emulation of 
$\Pi^\Ideal_\lambda$ w.r.t. the strategic individual
or coalition $\mcal{C}$, iff 
the following properties hold:  
\begin{enumerate}[leftmargin=6mm]
\item 
{\bf Utility equivalence under honest execution}:
for any $\lambda \in \N$,
for any ${\bf v} \in \inDomSecu^*$,
and for any player $p$ (buyer, seller or platform), 
${\sf UReal}_p(1^\lambda, {\bf v})$ and ${\sf UIdeal}_p(1^\lambda, {\bf v})$
are identically distributed,
where 
${\sf UReal}_p(1^\lambda, {\bf v})$ 
is the random variable representing 
the utility of player $p$ under an honest execution of 
$\Pi^\Real_\lambda$
parametrized with security parameter $1^\lambda$, and with buyers
taking on the value vector ${\bf v}$, and  
${\sf UIdeal}_p(1^\lambda, {\bf v})$ 
is similarly defined but for $\Pi^\Ideal_\lambda$.
\item 
{\bf Utility dominance for strategic players}:
for any polynomial $n(\cdot)$, 
for any real-world \ppt strategy $S$ adopted by $\mcal{C}$, 
there exist 
a negligible function $\negl(\cdot)$, 
and 
an ideal-world strategy 
$S'(\cdot)$  
which may additionally depend on the coalition's true value
vector, 
such that for any $\lambda$, 
any $n_H \leq n(\lambda)$,  
any true value vector $\bfv_{-\mcal{C}} \in \inDomSecu^{n_H}$, 
any true value vector $\bfv_{\mcal{C}} \in \inDomSecu^{n_C}$ 
where $n_C$ denotes the number of buyers in $\mcal{C}$,  
it holds that 
\ignore{for any strategic coalition $\mcal{C}$, 
for any real-world \ppt strategy $S$ adopted by $\mcal{C}$, 
there exists an ideal-world strategy 
$S'(\cdot)$  
which may additionally depend on the coalition's true value
vector ${\bf v}_{\mcal{C}}$, 
such that for any polynomial $n(\cdot)$, 
\elaine{define notation, we only need it for bounded depth?}
}
\[
\E\left[
{\sf UReal}_{\mcal{C}}^S(1^\lambda, {\bf v}_{\mcal{C}}, {\bf v}_{-\mcal{C}})
\right]
\leq 
\E\left[
{\sf UIdeal}_{\mcal{C}}^{S'({\bf v}_{\mcal{C}})}(1^\lambda, {\bf v}_{\mcal{C}}, {\bf v}_{-\mcal{C}})
\right] + \negl(\lambda)
\]
\ignore{
\begin{align*}
\{{\sf UReal}_{\mcal{C}}^S(1^\lambda, {\bf v}_{\mcal{C}}, {\bf v}_{-\mcal{C}})\}_{\lambda, {\bf v}_{\mcal{C}} \in \inDomSecu^{|\mcal{C}|}, 
{\bf v}_{\mcal{-C}} \in \inDomSecu^{\leq n(\lambda)}} 
\displaystyle\mathop{\preceq}_{\text{depth-$D$}}
\{{\sf UIdeal}_{\mcal{C}}^{S'({\bf v}_{\mcal{C}})}(1^\lambda, {\bf v}_{\mcal{C}}, {\bf v}_{-\mcal{C}})\}_{\lambda, {\bf v}_{\mcal{C}} \in \inDomSecu^{|\mcal{C}|}, 
{\bf v}_{\mcal{-C}} \in \inDomSecu^{\leq n(\lambda)}}
\end{align*}
}
where 
\begin{itemize}[leftmargin=6mm,itemsep=1pt]
\item 
${\sf UReal}_{\mcal{C}}^S(1^\lambda, {\bf v}_{\mcal{C}}, {\bf v}_{-\mcal{C}})$
is the random variable representing the utility of the coalition $\mcal{C}$
when we execute $\Pi^\Real_\lambda$ 
with security parameter $1^\lambda$, honest buyers taking the true values
${\bf v}_{-\mcal{C}}$, and the coalition $\mcal{C}$ taking
the true values ${\bf v}_{\mcal{C}}$ and the strategy $S$; and 
\item 
${\sf UIdeal}_{\mcal{C}}^{S'}(1^\lambda, {\bf v}_{\mcal{C}}, {\bf v}_{-\mcal{C}})$
is similarly defined but for the ideal protocol $\Pi^\Ideal_\lambda$. 
\end{itemize}
\end{enumerate}
\label{defn:sim}
\end{definition}

Note that in general, the utility-dominated emulation notion
is well-defined for any (possibly multi-round) ideal-world auction, even though
in this paper, we shall focus on ideal auctions 
in the model specified in \Cref{sec:idealmodel}. 

\ignore{
\paragraph{Robust incentive compatibility.}
Let $P$ be one of the following properties:  
$$P \in \left\{
\begin{array}{l}
\text{bIC, pIC, sIC, psIC, $c$-pbIC}, \\
\text{Bayesian bIC, Bayesian pIC, Bayesian 
sIC, Bayesian psIC, Bayesian $c$-pbIC}
\end{array}
\right\} \qquad \qquad \qquad (\diamond)$$ 
An auction 
parametrized by $\lambda$
over a family of finite value domains $\inDomSecu$
is said to satisfy $P$ robustly (in an information theoretic sense), 
iff
the ``dominance in expectation'' in 
\Cref{defn:it-ic} or \Cref{defn:it-bayesian-ic}
is replaced with ``robust dominance in expectation''. 

\begin{fact}[Deterministic auctions: ex post IC = ex post robust IC]
Given any auction (parametrized by $\lambda$) 
whose allocation, payment, and revenue rules
are deterministic,  
if it satisfies some property $P \in  
\{\text{bIC, pIC, sIC, ps-IC, $c$-pbIC}\}$ 
in the ex post and information theoretic setting,  
then it satisfies 
$P$ robustly in the ex post and information theoretic setting. 
\end{fact}
\begin{proof}
Suppose the auction satisfies some property $P$ 
in the ex post setting, and its rules are deterministic.
Fix an arbitrary $\lambda$.  
Let $S_\lambda$ be some possibly randomized
strategy of $\mcal{C}$ (representing a strategic player or coalition)
that maximizes $\mcal{C}$'s utility.  
It must be that for all choices of random coins consumed by 
the strategy $S_\lambda$, the utility of $\mcal{C}$ 
is the same, henceforth denoted $u_\lambda \geq 0$. 
In other words, the utility of $\mcal{C}$ 
under $S_\lambda$
is deterministically $u_\lambda \geq 0$. 
Since the auction satisfies $P$,  
we know that $u_\lambda$ is upper bounded by 
$u^{\rm honest}_\lambda \geq 0$ which represents the 
expected utility of $\mcal{C}$ when it plays honestly. 
Because $u_\lambda$ is deterministic, due to \Cref{fct:robust}, 
the utility of $\mcal{C}$  
under $S_\lambda$
is robustly dominated 
by its utility under the honest strategy. 
\end{proof}
}


\begin{theorem}[Ideal-real design paradigm]
Let 
$\Pi^\Ideal_\lambda$ 
and $\Pi^\Real_\lambda$ 
be auctions parametrized by $\lambda$
over a family of finite value domains $\inDomSecu$.  
Suppose that 
$\Pi^\Real_\lambda$ 
is a utility-dominated emulation of 
$\Pi^\Ideal_\lambda$ w.r.t. the strategic individual
or coalition $\mcal{C}$, and 
$\Pi^\Ideal_\lambda$ satisfies 
information-theoretic (Bayesian) incentive compatibility 
w.r.t. $\mcal{C}$. 
\ignore{
the property 
$P$  
in an information theoretic sense, 
\ignore{TODO: need to define 
IT notion for auction parametrized by secparam}
where 
$$P \in \left\{
\begin{array}{l}
\text{bIC, pIC, sIC, psIC, $c$-pbIC}, \\
\text{Bayesian bIC, Bayesian pIC, Bayesian sIC, Bayesian psIC, Bayesian $c$-pbIC}
\end{array}
\right\}$$ 
}
Then, $\Pi^\Real_\lambda$ satisfies 
computational (Bayesian) incentive compatibility 
w.r.t. $\mcal{C}$.
\label{thm:idealreal}
\end{theorem}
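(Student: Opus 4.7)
The plan is to chain together the three key properties: utility equivalence under honest execution, utility dominance for strategic players, and the information-theoretic IC of the ideal protocol. I will treat the ex post and Bayesian cases in parallel since the argument differs only in whether the honest value vector is fixed or sampled from $\mcal{D}_\lambda$.

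First, fix any \ppt real-world strategy $S$ of $\mcal{C}$. By the utility-dominance clause of \Cref{defn:sim}, there exists an ideal-world strategy $S'(\cdot)$ (which may depend on $\bfv_{\mcal{C}}$) and a negligible function $\negl_1(\cdot)$ such that for every $\lambda$, every $n_H \leq n(\lambda)$, every $\bfv_{-\mcal{C}} \in \inDomSecu^{n_H}$ and every $\bfv_{\mcal{C}}$,
\[
\E\!\left[{\sf UReal}^S_{\mcal{C}}(1^\lambda, \bfv_{\mcal{C}}, \bfv_{-\mcal{C}})\right] \leq \E\!\left[{\sf UIdeal}^{S'(\bfv_{\mcal{C}})}_{\mcal{C}}(1^\lambda, \bfv_{\mcal{C}}, \bfv_{-\mcal{C}})\right] + \negl_1(\lambda).
\]
Next, since $\Pi^\Ideal_\lambda$ satisfies information-theoretic IC w.r.t.\ $\mcal{C}$ (\Cref{defn:it-ic}), the strategy $S'(\bfv_{\mcal{C}})$ — which is just some ideal-world strategy, unconstrained in running time — cannot beat the honest strategy $H$ in expectation under $\Pi^\Ideal_\lambda$. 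Hence the right-hand side is upper-bounded by $\E[{\sf UIdeal}^H_{\mcal{C}}(1^\lambda, \bfv_{\mcal{C}}, \bfv_{-\mcal{C}})]$. Finally, by the utility-equivalence clause of \Cref{defn:sim}, the ideal-world and real-world utilities under the honest execution are identically distributed, so their expectations coincide. Combining the three steps yields exactly \Cref{def:comp-IC} with slack $\negl_1(\lambda)$.

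For the Bayesian version, I would take the same chain of inequalities and then take an additional expectation over $\bfv_{-\mcal{C}} \getr \mcal{D}_\lambda^{n_H}$ on both sides. The utility-dominance inequality is pointwise in $\bfv_{-\mcal{C}}$, so it is preserved under this expectation; information-theoretic Bayesian IC of $\Pi^\Ideal_\lambda$ (\Cref{defn:it-bayesian-ic}) gives the middle step after averaging; and honest utility equivalence is also preserved. This yields \Cref{def:comp-bayesian-IC}.

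I do not anticipate a serious obstacle here — the theorem is essentially a composition lemma, and all the heavy lifting is packaged into the utility-dominated emulation definition. The one subtle point I would want to double-check is that the ideal-world strategy $S'(\bfv_{\mcal{C}})$ produced by \Cref{defn:sim} is allowed to be unbounded (so that the information-theoretic IC of $\Pi^\Ideal_\lambda$ applies to it without any computational caveat); reading \Cref{defn:sim}, there is no such bound imposed on $S'$, so this is fine. A second minor point is ensuring that the polynomial $n(\cdot)$ quantification in \Cref{defn:sim} matches the one in \Cref{def:comp-IC,def:comp-bayesian-IC}, which it does by construction.
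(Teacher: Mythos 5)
Your proof is correct and follows the same logical chain as the paper's: it composes the utility-dominance clause (real $\leq$ ideal $+ \negl$), the information-theoretic IC of the ideal auction (ideal strategic $\leq$ ideal honest), and the honest-execution equivalence clause (ideal honest $=$ real honest). The only cosmetic difference is that the paper phrases the argument as a proof by contradiction while you give the direct chain of inequalities; your two "subtle points" (unboundedness of $S'$, and matching polynomial quantifiers) are exactly the details one should check and are handled correctly.
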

\begin{proof}
Suppose for the sake of contradiction
that $\Pi^\Real_\lambda$ does not satisfy 
computational incentive compatibility w.r.t. $\mcal{C}$.
This means 
that there is some polynomial $n(\cdot)$,
some real-world \ppt strategy $S$, some polynomial function $\poly(\cdot)$
such that  
  for infinitely many $\lambda$'s, 
there exist $n_H \leq n(\lambda)$,  
some 
value vector $\bfv_{-\mcal{C}} \in \inDomSecu^{n_H}$ corresponding to honest
buyers, 
some value vector $\bfv_{\mcal{C}} \in \inDomSecu^{n_C}$ corresponding
to the buyers in $\mcal{C}$, 
such that 
\[
\E\left[{\sf util}^S_{\mcal{C}}(1^\lambda, {\bf v}_{-\mcal{C}}, {\bf v}_{\mcal{C}}) \right]
\geq 
\E\left[
{\sf util}^H_{\mcal{C}}(1^\lambda, {\bf v}_{-\mcal{C}}, {\bf v}_{\mcal{C}}) 
\right]
+ 1/\poly(\lambda)
\]

By the definition of utility-dominated emulation,
and the fact that real and ideal executions
are utility equivalent in an all-honest execution,  
we conclude that 
there is some strategy $S'$ 
that outperforms the honest strategy 
in the ideal world by $1/\poly(\lambda) - \negl(\lambda)$. 
This violates the fact that the ideal-world auction 
satisfies incentive compatibility w.r.t. $\mcal{C}$.
The proof for the Bayesian notion is similar, except
that $\bfv_{-\mcal{C}}$
will be sampled at random 
from $\inDomSecu^{n_H}$ instead. 
\end{proof}

\begin{remark}[An alternative definition of computational incentive compatibility]
In fact, \Cref{thm:idealreal} also gives an alternative
way of defining computational incentive compatibility, by  
requiring it to be a utility-dominated
emulation of some ideal auction that is incentive compatible in the information
theoretic sense. 
This alternative definition implies our earlier formulations  
in 
\Cref{def:comp-IC}
and \Cref{def:comp-bayesian-IC}. 
\label{rmk:alternate-comp-ic}
\end{remark}

\section{Compiling an Ideal Auction to a Robust Real-World Protocol}
\label{sec:robust}

\elaine{TODO: add depth to emulation thm. add depth to ideal world
2nd price auction, remove coin for RS}

\elaine{TODO: AoK needs to have adaptive knowledge soundness
VC AoK have crs. performance, change VC.Commit to Digest}  

\elaine{TO FIX: for compiler, x, p, mu must be PPT}

\elaine{TODO: change Knowledge Soundness definition extractor running time
to quasipoly ?? chiesa's thm says quasipoly}

\elaine{NOTES: TODO items:
1.  change the defn of utility-dominated emulation to use computational
stochastic dominance,  
strengthen ideal-world ic guarantee to avoid the non-negative constraint.
2.  use katz or the later paper's defns for the nm timed commitments.
it assumes that the distinguisher is depth bounded.
defined using cca security against depth bounded adv. 
this way, we can remove the random oracle. 
3. change all the computational ic defns and prelim 
to use comp ind against depth-bounded adversaries
4. for the robust version, get rid of the RO.
need to parametrize carefully using the 
depth of the PCP extractor. 
killian's extractor (for extracting the pcp itself) should be parallel
and incurs not too much more than the runtime of the malicious prover. 
}

\elaine{TODO: add some narratives to explain the construction, and imp proofs}

\ignore{
A disadvantage the cryptographic auction 
introduced in \cref{sec:real-world}, is that a single 
strategic buyer can abort the auction by not sending the opening to the platform,
leading to the risk of denial-of-service attacks in practice.
To address this issue, we will construct a protocol that is robust against such strategic buyer deviations.
}

Given an ideal auction where we need a trusted entity $\Fauction$ 
to enforce the auction's rules, we can compile 
it into real-world cryptographic protocol
that does not rely on $\Fauction$. 
Ideally, we want the protocol 
to satisfy an additional robustness property defined below.
Given an auction protocol parameterized
by $\lambda$, we 
say that it is {\it robust}, iff 
the following holds: as long as the platform
is honest and not part of the coalition $\mcal{C}$, then
for any non-uniform \ppt strategy adopted by $\mcal{C}$,  
there exists a negligible function $\negl(\cdot)$ such that 
for every $\lambda \in \mathbb{N}$, 
with at least $1-\negl(\lambda)$ probability,  
all honest players  
accept the auction.


\elaine{NOTE: need the assumption that adv cannot hijack honest ids}

\subsection{Compiler}
\label{sec:robustcompiler}


\ignore{
\paragraph{Assumptions.}
We shall make the following assumptions:
\begin{enumerate}[leftmargin=6mm,itemsep=1pt]
\item 
We assume that the any value in the  value domain
can be encoded with at most $\lambda$ bits. 
This assumption is without loss of generality due to rescaling.
\item 
We assume that 
the rules 
of the ideal auction 
can be computed by logarithmic-depth circuits.
Formally, 
given an arbitrary polynomial $n(\cdot)$, 
and $n(\lambda)$ bids each at most $\lambda$ bits long, 
there is a circuit 
that can 
compute the rules 
${\bf x}$, ${\bf p}$ 
and $\mu_{\mcal{S}}$
in depth at most $O(\log(n\cdot \lambda))$. 
\end{enumerate}
}

\paragraph{Cryptographic building blocks.}
We will make use of the following cryptographic building blocks.
\begin{itemize}[leftmargin=6mm,itemsep=1pt]
\item 
{\it Vector commitment ${\sf VC}$.} 
We 
rely on a vector commitment scheme denoted
${\sf VC} = (\Gen, {\sf Digest}, {\sf Open}, {\sf Vf})$.  
We can instaniate the ${\sf VC}$ scheme
with a Merkle tree~\cite{merkle}. 
\item
{\it Erasure code ${\sf RS}$.}
We rely on an erasure code 
that can correctly reconstruct as long as
at least $2/3$ fraction of the locations 
in the codeword are correct.   
We can employ a Reed-Solomon code denoted
${\sf RS} = ({\sf Encode}, {\sf Recons})$. 
\item 
{\it Publicly verifiable non-malleable timed commitments ${\sf NITC}$.}
We rely on a publicly verifiable non-malleable 
timed commitments
denoted $(\Gen$, ${\sf Com}$, ${\sf ComVf}$, ${\sf DecVf}$, ${\sf FDec}, 
{\sf FDecVf})$. 
We can instantiate the scheme with 
Chvojka and Jager~\cite{CJ-nitc}. 
\item
{\it Succinct argument of knowledge ${\sf AoK}$.}
We additionally make use of a succinct argument 
of knowledge henceforth denoted ${\sf AoK} = (\Gen, P, V)$. 
We will instantiate the ${\sf AoK}$ scheme
using Kilian~\cite{kilian}
using a quasilinear Probabilistic Checkable Proof
(PCP)~\cite{shortpcp}. 
\end{itemize}

The formal definitions of these cryptographic building
blocks are presented in 
\Cref{sec:bldgblock}. 

\paragraph{NP language.}
We specify the following NP language which will be used in our
protocol. Since the terms ${\sf NITC}.\crs$ and ${\sf VC}.\crs$ will
be clear from the context, we do not explicitly carry them in the statement
below.   A statement $(\digest, \digest', n, r_\platform)$ 
is in the language $\mcal{L}_\lambda$, iff there
exists a witness of the form 
$\left({\sf code}, 
\mcal{I}, 
\{c_j, \pi^{\rm com}_j, {\pi}^*_j, v_j, r_j, 
{\sf out}_j \}_{j \in \mcal{I}} 
\right) 
$
such that 
the following claims hold:
\begin{itemize}[leftmargin=6mm,itemsep=1pt]
\item 
all identities in $\mcal{I}$ are distinct, and $|\mcal{I}| = n$;  
\elaine{TO FIX: this needs to appear in proof}
\item 
${\sf code} = {\sf RS}.{\sf Encode}(\bfc)$
where $\bfc := \{j, c_j, \pi^{\rm com}_j\}_{j \in \mcal{I}}$; 
\item 
$\digest$ is the first output of ${\sf VC}.{\sf Digest}({\sf VC}.\crs, \bfc)$;
\item 
$\digest'$ is the first output of ${\sf VC}.{\sf Digest}({\sf VC}.\crs, 
\{i, c_i, \pi_i^{\rm com}, 
{\sf out}_i\}_{i \in \mcal{I}})$;
\item 
$\forall i \in \mcal{I}$, 
${\sf NITC}.{\sf ComVf}({\sf NITC}.\crs, c_i, \pi^{\rm com}_i) = 1$;
\item 
$\forall i \in \mcal{I}$, either
${\sf NITC}.{\sf DecVf}({\sf NITC}.\crs, c_i, i\|v_i\|r_i, {\pi}^*_i) = 1$
or ${\sf NITC}.{\sf FDecVf}({\sf NITC}.\crs, c_i, i\|v_i\|r_i, {\pi}^*_i) = 1$.
\item 
$\{{\sf out}_i\}_{i \in \mcal{I}}$
are the correctly computed outcomes 
by applying the rules $({\bf x}, {\bf p}, \mu_\seller)$
over randomness $ r 
= \left(\displaystyle\mathop{\oplus}_{j \in \mcal{I}} r_j\right) \oplus r_\platform$
 and input $\bfv = \{v_j\}_{j\in \mcal{I} \backslash \{0\}}$.
\end{itemize}

\paragraph{Compiler.}
We assume that every buyer has a distinct
identity $i$.  
The strategic players can choose fake identities and pretend to 
be multiple buyers. 
Given an ideal auction 
$\Pi^\Ideal_\lambda$
over the value domain $\inDomSecu \subset [0, 1]$ 
specified by the rules
$({\bf x}, {\bf p}, \mu_{\mcal{S}})$ which are \ppt algorithms, 
we describe
a new compiler that compiles it to a robust cryptographic protocol. 

\begin{mdframed}
\begin{center}
{\bf Robust cryptographic auction $\Pi_\lambda^{\rm real}({\bf x}, {\bf p}, \mu_{\mcal{S}})$
}
\end{center}
\paragraph{Inputs:}
All players have the security parameter $1^\lambda$. 
Every buyer with identity $i$ has a true value $v_i \in \inDomSecu$.

\paragraph{Setup:}
Let ${\sf NITC}.\crs \getr {\sf NITC}.\Gen(1^\lambda, T)$,
${\sf VC}.\crs \getr {\sf VC}.\Gen(1^\lambda)$, 
and 
${\sf AoK}.\crs \getr {\sf AoK}.\Gen(1^\lambda)$. 
Publish ${\sf NITC}.\crs$, ${\sf VC}.\crs$, and ${\sf AoK}.\crs$.

\paragraph{Main auction protocol:}
\begin{enumerate}[label=(\alph*),leftmargin=8mm]
\item
\label{step:robustcom}
{\bf Time renamed to $0$ (protocol start):}
Each buyer $i$ samples $r_i \getr \{0, 1\}^\lambda$, 
and ${\sf coin}_i \getr \{0, 1\}^\lambda$, 
and computes $(c_i, \pi_i^{\rm com}, \pi_i^{\rm dec}) \leftarrow
{\sf NITC}.{\sf Com}({\sf NITC}.\crs, i \|v_i\|r_i; {\sf coin}_i)$. 
Buyer $i$ sends $(i, c_i, \pi_i^{\rm com})$
to the platform. 

The seller samples $r_0 \getr \{0, 1\}^\lambda$ at random, 
computes $(c_0, \pi_0^{\rm com}, \pi_0^{\rm dec}) \leftarrow
{\sf NITC}.{\sf Com}({\sf NITC}.\crs$, $0 \|\bot\|r_0$; 
${\sf coin}_0)$, and sends 
$(0, c_0, \pi_0^{\rm com})$
to the platform. 
Without loss of generality, we may 
assume that the seller has the identity $0$ that is distinct
from all buyers. 
\item 
{\bf Time $T_1(\lambda)$:}
The platform ignores all tuples $(i, c_i, \pi_i^{\rm com})$
where ${\sf NITC}.{\sf ComVf}({\sf NITC}.\crs, c_i$, $\pi_i^{\rm com})$ $= 0$,
and ignores any tuple of the form $(0, \_, \_)$ that is not sent by
the seller. 
Among the remaining tuples, 
if there are multiple tuples containing the same identity $i$,
the platform randomly keeps one of the tuples and ignores the rest.
For each identity ignored, the platform notifies
the corresponding buyer,  
and the corresponding buyer rejects the auction. 

\item 
Let $\bfc = \{(i, c_i, \pi_i)\}_{i \in \mcal{I}}$
be the remaining tuples where $\mcal{I}$ denotes 
the identities 
they are associated with. 
The platform samples randomness $r_{\platform}$,
and computes 
$(\digest, \_)$ $= {\sf VC}$.${\sf Digest}({\sf VC}.\crs$, 
${\sf code})$ where 
${\sf code} := {\sf RS}.{\sf Encode}(\bfc)$.  
It sends
$n = |\mcal{I}|$, $\digest$ and $r_\platform$ 
to all buyers and the seller.
\label{step:pcom}
\item 
\label{step:porch}
{\bf Time $T_2(\lambda)$:}
If a buyer or seller still has not received a tuple $(n, \digest, r_\platform)$ 
from the platform, reject the auction.
Else, 
each buyer and the seller challenges the platform as follows.  
Sample $\kappa = \omega(\log \lambda)$ random indices
$Q \subset [n']$,
where $n'$ is the length of an ${\sf RS}$ codeword 
for an input of length $n$,
 and send $Q$ to the platform. 
The platform responds with 
the opening of the indices $Q$ along with a membership proof
w.r.t. $\digest$.

\item 
\label{step:porres}
{\bf Time $T_3(\lambda)$:}
If a buyer or the seller  
still has not received from the platform  
a response for its challenge, 
or if the platform's response fails to verify,
reject the auction. 

If a buyer $i \in \mcal{I} \backslash \{0\}$ did not abort in the above, then send
the opening $v_i \|r_i$
along with the opening proof $\pi_i^{\rm dec}$
to the platform. 
If the seller did not abort, then send $\bot \|r_0$
along with the opening proof $\pi_0^{\rm dec}$ to the platform. 

\item 
\label{step:fdec} 
{\bf Time $T_4(\lambda)$:}
Let $(v_i\|r_i, \pi_i^{\rm dec})$ denote the 
opening received by the platform from identity $i$. 
If there exists some $i \in \mcal{I}$, 
such that ${\sf NITC}.{\sf DecVf}({\sf NITC}.\crs, c_i, 
i\|v_i\|r_i, \pi_i^{\rm dec}) = 0$ 
or $i$ did not respond with an opening, 
then the platform 
performs forced decryption 
$(i\|v_i\|r_i, \pi^{\rm fdec}_i)
\leftarrow {\sf NITC}.{\sf FDec}({\sf NITC}.\crs, c_i, 
\pi_i^{\rm com})$.  
If the forced decryption does not produce
a valid decryption proof, the platform posts $\bot$ 
to the blockchain, and all players reject the auction. 
\elaine{TODO: add a property to NITC}
\item 
\label{step:compute}
The platform runs the 
auction's rules $({\bf x}, {\bf p}, \mu_{\mcal{S}})$
using the bid vector $\{v_i\}_{i \in \mcal{I} \backslash \{0\}}$ 
and the randomness 
$r = \left(\displaystyle\mathop{\oplus}_{i \in \mcal{I}} r_i\right) 
\oplus r_\platform $ as input. 
Let ${\sf out}_i$ be $i$'s private outcome
for $i \in \mcal{I}$. 

The platform then computes $(\digest', \_) = 
{\sf VC}.{\sf Digest}({\sf VC}.\crs, 
\{i, c_i, \pi^{\rm com}_i, 
{\sf out}_i \}_{i \in \mcal{I}})$
and sends $\digest'$
to all buyers and the seller. 

\item 
\label{step:robustaok}
Each buyer $i \in \mcal{I} \backslash \{0\}$ and the seller invokes a 
separate $\AoK$ instance with the platform using $\AoK.\crs$, 
where the platform proves
that $(\digest, \digest', n, r_\platform)$ is in the language $\mcal{L}_\lambda$
using the witness 
$$\left({\sf code}, 
\mcal{I}, 
\{c_j, \pi^{\rm com}_j, {\pi}^*_j, v_j, r_j, 
{\sf out}_j \}_{j \in \mcal{I}} 
\right)$$ where 
${\pi}^*_j = \pi^{\rm dec}_j$
if $j$ provided a valid opening, else 
${\pi}^*_j = \pi^{\rm fdec}_j$. 
If for any buyer or the seller, 
the $\AoK$
proof fails to verify, 
it rejects the auction. 
\item 
\label{step:privout}
For each $i \in \mcal{I}$ that responded
with a valid opening earlier, the platform
sends to $i$ its private outcome ${\sf out}_i$, 
as well as a membership proof for  
the tuple 
$(i, c_i, \pi_i^{\rm com}, {\sf out}_i)$ w.r.t. $\digest'$
where $(i, c_i, \pi_i^{\rm com})$ is the message $i$ sent the platform 
in Step \ref{step:robustcom}. 
The player $i$ 
rejects if the verification fails. 

\ignore{
Similarly, the platform opens $(t, \mu_\seller)$ to the seller,
along with the membership proof for 
$(0, c_0, \pi_0^{\rm com}, t, \mu_\seller)$ w.r.t. $\digest'$.
The seller rejects if the verification fails. 
}
\item 
\label{step:finalize}
The platform posts $(\digest, \digest', n, r_\platform)$
to the blockchain. Everyone reads 
the blockchain 
and verifies that they have received the same values from the platform earlier
in the protocol. Reject if the verification fails. 
Otherwise, accept the private outcome they have received. 
\end{enumerate}

\end{mdframed}

\elaine{robustness assumes C cannot hijack honest identities}

\elaine{remove T0}

\paragraph{Intuition.}
Informally, the protocol works as follows --- 
for simplicity, we describe the intuition
for the case when the auction's rules $(\bfx, \bfp, \mu_\seller)$ 
are deterministic.
In Step~\ref{step:robustcom}, 
all buyers commit to their bids
using a non-malleable timed commitment scheme. 
In Step~\ref{step:pcom}, 
the platform 
computes a Reed-Solomon encoding of 
of all buyers' commitments, and commits to the resulting code 
using a hash digest (denoted ${\sf digest}$). 
In Steps~\ref{step:porch}
and \ref{step:porres}, 
all buyers and the seller
challenge the platform to open  
a small number of locations committed under ${\sf digest}$ and verify
that the openings are correct. 
\elaine{can we avoid the seller sending challenge?
for the snark version, if we want seller not to participate, the challenge
needs to be computed with FS}
Then, all buyers open their commitments to the platform. 
In Step~\ref{step:fdec}, 
if any buyer fails to open, the platform force-decrypts
its timed commitment.
In Step~\ref{step:compute}, 
the platform 
computes all buyers' and the seller's private outcomes, and commits
to a hash digest (denoted $\digest'$)
of all original timed commitments as well as every one's private outcome.  
In Step~\ref{step:robustaok}, 
the platform 
invokes a succinct argument-of-knowledge to prove every one that 
1) $\digest$ and $\digest'$ encode the 
same vector of committed bids denoted ${\bf c}$; 
2) the vector 
${\bf c}$ does not contain multiple commitments from the same identity; 
and 3) the outcomes committed under $\digest'$ 
are computed correctly based on valid openings of ${\bf c}$. 
In Step~\ref{step:privout}, 
the platform opens to  each buyer and the seller
its private outcome. To each buyer, it provides a succinct proof  
that its timed commitment 
as well as private outcome  
belong to the set represented by $\digest'$. 
To the seller, it provides a succinct proof 
that its private outcome belongs
to the set represented by $\digest'$. 
Finally, in Step~\ref{step:finalize}, 
the platform posts $\digest$ and $\digest'$ to the blockchain
and all other players only accept if they agree
with the $\digest$ and $\digest'$ they have received from the platform earlier. 

It may be helpful to think of Steps~\ref{step:pcom}, \ref{step:porch}
and \ref{step:porres}
as a proof-of-retrievability (PoR)
protocol~\cite{por,compactpor}. 
If the platform can respond correctly to the PoR challenges,
it is possible to rewind the platform and extract 
an underlying witness, i.e., a vector of timed commitments. 
Jumping ahead, later in our proof, we will 
use the extracted witness 
to construct 
a reduction that breaks the IND-CCA security of the timed commitment 
scheme.
The depth of the extractor will contribute to the depth
of the reduction --- fortunately, the PoR's extractor indeed has small depth. 
Because we have a full AoK later 
in Step~\ref{step:robustaok}, we do not need
the platform to provide a proximity proof 
that the committed code is close to a valid 
Reed-Solomon codeword in the PoR stage. 

\paragraph{Parameter choices.}
We will choose the ${\sf NITC}$ scheme's parameters
such that it satisfies $(W(\cdot), \epsilon)$-IND-CCA security
for any super-polynomial function $W(\cdot)$.
Further, its difficulty 
parameter $T(\lambda)$ and 
the gap parameter $\epsilon$ should 
satisfy the following constraint: 
\[
T^\epsilon(\lambda)
> T_2(\lambda) + T_3(\lambda) 
+ D_{\rm auction}(\lambda^{\alpha(\lambda)}) + 
\poly\log(\lambda)
\]
where 
$\alpha(\lambda)$ is any super-constant function, 
and 
$D_{\rm auction}(\lambda^{\alpha(\lambda)})$
is an upper bound on 
the depth for executing the auction's rules
and computing the coalition's utility  
for any input whose length is an unknown polynomial 
in $\lambda$ --- as long as  
is $D_{\rm auction}(L) = L^{o(1)}$, 
$D_{\rm auction}(\lambda^{\alpha(\lambda)})$
is polynomially bounded for a 
suitably small super-constant 
function $\alpha(\lambda)$. 
\ignore{
\item 
$T_2$ is the maximum duration for the 
platform to produce the tuple $(n, \digest, r_\platform)$, 
and 
\item $T_3$ is the maximum duration for the platform
to compute a response for the vector commitment challenge $Q$,
}

The above specifies the parameter constraints needed
for proving the incentive compatibility guarantees.  
Besides the above, we should appropriately set $T_1$ such that  
buyers have enough time to compute and submit their commitments 
(Step~\ref{step:robustcom}); 
set $T_2 - T_1$ such that the platform  
has enough time to compute and send the tuple $(n, \digest, r_\platform)$
to every one (Step~\ref{step:pcom}); 
set $T_3 - T_2$ such that the platform
has enough time to compute and send 
a response to the PoR opening challenge (Steps~\ref{step:porch} and \ref{step:porres}), 
and set $T_4 - T_3$ such that buyers have enough 
time to send their openings (Step~\ref{step:porres}).  
\ignore{
If $T_1$, $T_2 - T_1$, $T_3 - T_2$ are set to be too small,
it might affect the protocol's liveness but will not     
affect the protocol's the incentive compatibility guarantees. 
If $T_4-T_3$ is too small, it might trigger the platform
to perform more forced decryptions thus increasing its computation overhead. 
}


\paragraph{Asymptotical efficiency.}
It is not hard to see that 
the seller and each buyer's computational 
and bandwidth cost
are bounded by $\widetilde{O}_\lambda(1)$ 
where $\widetilde{O}_\lambda(\cdot)$
hides $\poly(\lambda, \log n)$ 
factors where $n$ is the total number of buyers. 
The platform's bandwidth is upper bounded
by the sum of the bandwidth costs of all buyers and the seller, 
that is, 
$\widetilde{O}_\lambda(n)$.
The platform's computation 
is dominated by  
Step \ref{step:robustaok} where the platform
needs to run an $\AoK$ instance 
to convince each buyer and the seller. 
Notice that the platform is proving the same statement
to all other players.
So if we use Kilian to instantiate 
the $\AoK$, the platform only needs
to compute the PCP proof once as well as the vector
commitment of the PCP proof.   
Then, it can answer each verifier's challenge with the same  
PCP proof and vector commitment. 
Therefore, if we use a quasilinear PCP~\cite{shortpcp}, 
the platform's total computation is bounded by 
$\widetilde{O}_\lambda(n)$.
Last but not the last, the platform posts only $\widetilde{O}_\lambda(1)$
bits to the broadcast channel at the end of the auction.

\paragraph{More practical variants.} 
In practice, we can use 
a Succinct Non-interactive ARgument of Knowledge 
(SNARK)~\cite{qap,groth16,stark,libra,virgo,marlin,plonk,hyperplonk,aurora,spartan}
\elaine{cites}
to replace the AoK. 
This would result in more a practical 
instantiation but at the price of 
worsening the cryptographic assumptions needed, 
since 
known SNARK constructions
typically rely on either the random oracle model
or knowledge-type assumptions. 
Additionally, for the PoR challenge and response steps (Steps~\ref{step:porch}  
and \ref{step:porres}), we can use the Fiat-Shamir paradigm
to remove the interaction. Specifically, after computing $\digest$,
the server can use $H(\digest, i)$ to compute the random challenges 
for player $i$ where $H(\cdot)$ denotes a random oracle.  

Suppose we instantiate the protocol with a SNARK proof in lieu of the $\AoK$,
and using Fiat-Shamir to make the PoR challenge and response steps 
non-interactive, 
then, the only remaining message for the seller to send 
is the commitment of its random coins $\coin_0$ in Step~\ref{step:robustcom}. 
We can remove even this message from the protocol, and get a version that supports
an {\it offline server}. 
Specifically, the offline seller can  
query the platform a-posteriori to get its private outcome
along with the corresponding SNARK and ${\sf VC}$ 
proofs (i.e., Steps~\ref{step:robustaok}
and \ref{step:privout}), 
and it can check that 
the outcome is correct against the messages that have been posted
to the blockchain. 
It is not hard to mechanically go through our proofs and verify that 
this offline-server variant is also a utility-dominated emulation
of the ideal auction, as long as  
one of the following is true: 1)  
the auction's rules $(\bfx, \bfp, \mu_\seller)$ are deterministic,
2) there is at least one honest buyer, or 3) the platform is honest. 
\elaine{TODO: double check this}

Finally, if the SNARK is instantiated
with modern schemes~\cite{plonk,hyperplonk,libra,virgo,aurora,spartan,marlin}
that combine polynomial commitments and interactive oracle proofs (IOPs), 
it is typically desirable to instantiate the vector commitment 
using the polynomial commitment scheme 
native to the SNARK construction.

\ignore{
then, {\it the seller need not send any messages in the resulting protocol}.  
This can allow us to support an {\it offline seller} as follows.  
Basically, the seller can contact the platform offline to discover
its outcome in the auction, and it can verify the result  
against the messages that have been posted to the blockchain. 
For this modification to work, the seller needs to assume
that there is at least one honest buyer (or the platform is honest), 
}


\elaine{TODO: comment on the version where seller need not 
send message}

\begin{theorem}[Efficient cryptographic auction]
Suppose that the 
${\sf NITC}$ scheme is computationally binding  
against quasi-polynomially sized adversaries, 
IND-CCA secure 
against quasi-polynomially sized and $T^\epsilon$-depth adversaries,
and satisfies soundness of forced decryption 
(for polynomially-sized adversaries). 
Additionally, suppose that ${\sf VC}$ satisfies
collision resistance against quasi-polynomially sized
adversaries, 
and $\AoK$ 
satisfies adaptive knowledge soundness for polynomially-sized
adversaries. 
Let $\mcal{C}$ be either a strategic individual buyer, the strategic  
seller alone, the strategic platform alone, 
the strategic platform-seller 
coalition, or a coalition between 
the platform and one or more buyers.
Suppose also that 
the auction's rules $({\bf x}, {\bf p}, \mu_\seller)$
can be computed in $\ell^{o(1)}$ depth
on any input of length $\ell$. 

Then, the real-world auction 
$\pi^{\rm real}_\lambda$ 
satisfies robustness, and 
is a utility dominated emulation of 
the ideal auction  
$\pi^{\rm ideal}_\lambda$ 
w.r.t. $\mcal{C}$. 
\elaine{double check}
\label{thm:robustcompiler}
\end{theorem}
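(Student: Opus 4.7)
The plan is to establish the three requirements in sequence: (i) robustness, (ii) utility equivalence under the all-honest execution, and (iii) utility dominance for every strategic coalition in the stated list.

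Robustness and all-honest utility equivalence are the easy parts. For robustness, I would walk through every check step by step assuming the platform is honest: the platform collects and deduplicates commitments, so every honest buyer either gets a receipt or is notified; the digests $\digest$ and $\digest'$ that the platform posts are consistent with every challenge it answered; by soundness of forced decryption, any buyer who fails to open can still be decrypted and the resulting witness lies in $\mcal{L}_\lambda$, so $\AoK$ completeness plus ${\sf VC}$ completeness gives honest acceptance with probability $1 - \negl(\lambda)$. Utility equivalence under all-honest executions is direct: when every party follows the protocol, the effective input to $(\bfx,\bfp,\mu_\seller)$ is exactly the vector of true values and the randomness is $\bigoplus_i r_i \oplus r_\platform$, uniform and independent of the inputs, which is the same distribution as the ideal-world execution.

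The main work is constructing, for each real-world \ppt strategy $S$ of the coalition $\mcal{C}$, an ideal-world strategy $S'$ whose utility dominates $S$'s up to negligible slack. I would split on whether the platform is in $\mcal{C}$. When $\mcal{C}$ does not contain the platform (a single strategic buyer or the lone seller), the platform's honest behavior makes the mapping almost mechanical: $S'$ internally simulates the protocol with a fresh honest platform, submits the same identities and commitments as $S$ does, and, when the simulated $S$ opens, extracts the committed bids (which are unique by computational binding of ${\sf NITC}$); those are forwarded to $\Fauction$. The private outcomes produced by $\Fauction$ match the real protocol's outputs by construction, so utility equivalence holds up to the binding failure probability. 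When $\mcal{C}$ contains the platform, $S'$ will simulate the full real-world protocol against $S$ in its head: for honest buyers outside $\mcal{C}$ it fabricates ${\sf NITC}$ commitments to a default value (say $0$) using the true identities, runs $S$ to obtain $\digest$, $r_\platform$ and later $\digest'$, then invokes the $\AoK$ extractor on the simulated verifier transcripts to pull out the witness $({\sf code},\mcal{I},\{c_j,\pi^{\rm com}_j,\pi^*_j,v_j,r_j,{\sf out}_j\}_{j\in\mcal{I}})$; from this witness $S'$ reads off the bids injected by $\mcal{C}$ (every $j\in\mcal{I}$ whose identity is not that of an honest buyer) and submits exactly those bids to $\Fauction$, finally sending ${\sf ok}$ or $\bot$ matching $S$'s Step~\ref{step:finalize} action.

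Correctness of this simulation is argued through a chain of hybrids. Start from the real execution; first replace the utility computation by the utility induced by the extracted witness (adaptive knowledge soundness of $\AoK$ plus collision resistance of ${\sf VC}$ guarantee that with all but negligible probability the extracted witness is consistent with the $\digest,\digest'$ the platform posts, and hence with every honest buyer's actual opening, which pins down $v_i$ for $i$ honest); next replace honest buyers' ${\sf NITC}$ ciphertexts by encryptions of $0$. The crucial step is that this last switch is indistinguishable to $S$ even though $S'$ later computes a utility that depends on honest buyers' true values. The reduction to $(W,\epsilon)$-IND-CCA security embeds the challenge ciphertexts as honest buyers' commitments, runs $S$ up through Step~\ref{step:porres}, uses the CCA decryption oracle on all non-challenge ciphertexts inside the $\AoK$ extractor and when computing the final utility via $(\bfx,\bfp,\mu_\seller)$ over the extracted witness. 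The depth of this reduction is bounded by $T_2+T_3$ plus the $\AoK$ extractor depth (which is quasi-linear in the extracted witness length using Kilian with a quasilinear PCP) plus $D_{\rm auction}(\lambda^{\alpha(\lambda)})$; the parameter constraint $T^\epsilon > T_2+T_3+D_{\rm auction}(\lambda^{\alpha(\lambda)})+\poly\log\lambda$ is exactly what is needed so that this reduction lies below the depth threshold for which ${\sf NITC}$ remains IND-CCA. Finally, after the switch, the injected ${\sf NITC}$ ciphertexts produced by the simulator are independent of honest buyers' values, so the bids $S'$ extracts are distributed identically to the bids the coalition would submit in the ideal world, and the $\Fauction$-computed outcomes match the outputs produced by $(\bfx,\bfp,\mu_\seller)$ in the last hybrid. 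Summing the hybrid gaps yields the $\negl(\lambda)$ loss in \Cref{defn:sim}.

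The main obstacle I anticipate is the depth accounting for the IND-CCA reduction: the $\AoK$ extractor, the PoR/commitment openings, and in particular the evaluation of the ideal utility function $(\bfx,\bfp,\mu_\seller)$ must all be combined into a single adversary of depth below $T^\epsilon$. This is where the hypothesis that the auction rules admit $\ell^{o(1)}$-depth circuits is used, together with the choice of a quasilinear-PCP-based $\AoK$ whose extractor depth is essentially the prover running time; the super-constant $\alpha(\lambda)$ then gives enough slack to absorb the unknown polynomial bound on the number of bidders, but writing this bookkeeping cleanly will be the delicate part of the proof.
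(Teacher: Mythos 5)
Your outline matches the paper's structure for robustness and for the case where the platform is honest (there the paper, like you, force-decrypts the coalition's commitments and forwards the resulting bids to $\Fauction$; it handles the ``coalition refuses to open'' case via soundness of forced decryption, which you should invoke explicitly there rather than appeal only to binding). The gap is in the platform-in-coalition case, and it is exactly the point you flagged as ``delicate'': your IND-CCA reduction extracts the committed bid vector by ``running the $\AoK$ extractor.'' That cannot work within the depth budget. To run the $\AoK$ extractor you must bring the coalition's algorithm to Step~\ref{step:robustaok}, which lies beyond Step~\ref{step:fdec}; but in Step~\ref{step:fdec} the (strategic) platform performs forced decryption of the ${\sf NITC}$ commitments, including the challenge commitments that your reduction has embedded. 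Simulating this forces the reduction's depth to at least the ${\sf NITC}$ difficulty $T$, which is strictly above the $T^\epsilon$ threshold that the IND-CCA definition allows. You also say ``runs $S$ up through Step~\ref{step:porres},'' which would avoid the issue but is then inconsistent with invoking the $\AoK$ extractor, since the $\AoK$ instance does not yet exist at that point in the protocol. The Kilian extractor's depth is not the bottleneck; the bottleneck is that reaching the $\AoK$ stage already costs too much depth.

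The paper resolves this by extracting the committed vector $\bfc$ from the proof-of-retrievability challenge/response of Steps~\ref{step:porch} and \ref{step:porres}, not from the $\AoK$. Because that sub-protocol finishes before any forced decryption happens, the reduction can stop there: its depth is bounded by $T_2 + T_3$ (the time for the coalition to produce $\digest$ and answer a PoR challenge) plus the parallelizable rewinding of the PoR extractor, plus RS reconstruction and the auction-rule evaluation, and these are what the parameter constraint $T^\epsilon > T_2 + T_3 + D_{\rm auction}(\lambda^{\alpha(\lambda)}) + \poly\log\lambda$ was designed to cover. The calls to ${\sf FDec}$ in the utility computation are then replaced by calls to the IND-CCA ${\sf Dec}$ oracle so they add no depth. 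The $\AoK$ extractor is still used, but only in earlier hybrids ($\Hyb_1$, $\Hyb_2$) whose sole purpose is to argue, via knowledge soundness and ${\sf VC}$ collision resistance, that the PoR-extracted codeword is consistent with the $\digest$, $\digest'$ that honest players accept; those hybrids do not sit inside the IND-CCA reduction and hence have no depth constraint. You should separate these two roles of the extractors: $\AoK$ for consistency of the accepted outcome, PoR for the utility the reduction computes.

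There is also a small missing step between ``replace the utility computation by the extracted-witness utility'' and ``switch honest commitments to $0$'': the paper inserts a hybrid ($\Hyb_4$) that truncates the execution after Step~\ref{step:porres} and rounds up negative utilities to $0$, and argues this only helps the coalition (stochastic dominance). Without this truncation you cannot stop the reduction at Step~\ref{step:porres}, because the coalition's real-world utility depends on whether honest players later accept, which you would otherwise still have to determine. This is the bridge between the consistency hybrids and the IND-CCA hybrid, and it is worth making explicit.
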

We provide the robustness proof
in \Cref{sec:proof-robust}, and 
the proofs 
for utility-dominated emulation in \Cref{sec:proof-emu1,sec:proof-emu2}.

\begin{remark}[Basing the protocol on standard assumptions when robustness
is not needed] 
If we do not need the robustness property, we can replace
the non-malleable timed commitments
with a computationally hiding 
and perfectly binding commitment,
plus a simulation extractable 
Non-Interactive Zero-Knowledge Proof (NIZK)  
proof that the commitment is computed correctly. 
With this change, 
the proof of \Cref{clm:useindcca} would use the extraction key
of the NIZK scheme to extract the bid under the commitment
 rather than calling the ${\sf Dec}$ oracle of the IND-CCA challenger. Thus, 
\Cref{clm:useindcca}  can be proven using the 
simulation extractability 
of the NIZK and the computational hiding property of the commitment  
against quasi-polynomially sized adversaries. 
\end{remark}

\subsection{Proofs of Robustness}
\label{sec:proof-robust}

\elaine{TODO: make explicit where we use at least one honest buyer assumption}

We prove the following lemma 
showing that our real-world protocol satisfies robustness.

\begin{lemma}
Suppose that ${\sf NITC}$ satisfies soundness of forced decryption
for polynomially-sized adversaries, and that strategic players cannot  
hijack honest players' identities.
Then,  
the real-world protocol $\Pi^{\rm real}_\lambda$
satisfies robustness as defined at the beginning of 
\Cref{sec:robust}. 
\label{lem:robust}
\end{lemma}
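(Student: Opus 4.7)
The plan is to walk through each step of $\Pi^{\rm real}_\lambda$ and argue that, conditioned on the platform being honest, no honest player has any reason to reject, except with negligible probability. Fix an arbitrary \ppt strategy of a coalition $\mcal{C}$ that does not contain the platform, and fix $\lambda$ large enough. I will argue that every ``rejection trigger'' in the protocol is avoided with all but negligible probability.

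First, I would handle the commitment and filtering stage (Steps \ref{step:robustcom}--\ref{step:pcom}). By completeness of ${\sf NITC}.{\sf Com}$, every honest buyer's tuple $(i, c_i, \pi_i^{\rm com})$ satisfies ${\sf NITC}.{\sf ComVf}=1$ with probability $1$; by the assumption that strategic players cannot hijack honest identities, no tuple of the form $(i, \_, \_)$ for an honest $i$ is submitted by anyone else, so the random-keep rule cannot cause an honest $i$ to be dropped. Thus the platform never notifies an honest buyer of exclusion. Next, in the PoR stage (Steps \ref{step:porch}--\ref{step:porres}), an honest platform can always produce openings consistent with $\digest$ by completeness of ${\sf VC}$, so no honest party rejects here.

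The one genuinely non-trivial step is the forced-decryption branch in Step \ref{step:fdec}, and this is where I expect the main obstacle to lie. The worry is that some commitment $c_i$ submitted by $\mcal{C}$ passes ${\sf ComVf}$ in Step \ref{step:robustcom} but causes ${\sf NITC}.{\sf FDec}$ to produce an invalid decryption proof, forcing the honest platform to post $\bot$ and triggering universal rejection. I would rule this out by directly invoking the soundness of forced decryption: for any \ppt adversary producing $(c_i, \pi_i^{\rm com})$ such that ${\sf ComVf}$ accepts, ${\sf FDec}$ outputs a valid $(m, \pi^{\rm fdec})$ with all but negligible probability. A union bound over the (polynomially many) commitments handled by ${\sf FDec}$ yields that Step \ref{step:fdec} does not output $\bot$, except with negligible probability. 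For honest buyers, since they either successfully open in Step \ref{step:porres} (completeness of ${\sf NITC}.{\sf Com}$ and the ${\sf DecVf}$ relation), the forced branch is not even entered on their ciphertexts.

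Finally I would address the remaining steps. In Step \ref{step:compute}, the honest platform produces a well-formed witness consisting of the opened/force-opened plaintexts, the correctly computed outcomes, and the code; hence the statement $(\digest, \digest', n, r_\platform)$ lies in $\mcal{L}_\lambda$, and completeness of $\AoK$ ensures every honest verifier accepts in Step \ref{step:robustaok}. In Step \ref{step:privout}, the membership proofs for the tuples $(i, c_i, \pi_i^{\rm com}, {\sf out}_i)$ under $\digest'$ verify by completeness of ${\sf VC}$. In Step \ref{step:finalize}, because the honest platform sent the same $(n, \digest, r_\platform)$ and $\digest'$ to every player earlier and posts those same values on the blockchain, every honest player's consistency check succeeds. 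Collecting the (at most polynomially many) negligible failure probabilities via a union bound gives that all honest players accept with probability $1 - \negl(\lambda)$, as claimed.
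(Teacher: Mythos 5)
Your proof is correct and follows essentially the same approach as the paper's: it identifies the two genuine rejection triggers---honest identity suppression (ruled out by the no-hijacking assumption) and forced-decryption failure on adversarial commitments (ruled out by soundness of forced decryption)---and observes that every other rejection check passes deterministically by completeness of the underlying primitives. Your version is more explicit than the paper's two-sentence argument in that you walk through each step's rejection condition and spell out the union bound over polynomially many adversarial commitments, but no new idea is involved.
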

In the above, the assumption that 
strategic players cannot
hijack honest players' identities
is mild 
and reflects realistic deployment scenarios. 
For example, 
this assumption is respected if the players' identities  
include their email addresses, 
or government-issued identities. Alternatively, 
if honest players
choose their cryptographic identities at random by sampling
public and private key pairs, 
then the probability that honest keys are hijacked
by strategic players is also negligibly small. 

\begin{proof}(of \Cref{lem:robust})
Suppose that the platform is honest. 
Since the strategic
coalition cannot hijack honest players' identities, 
they cannot cause honest players to reject due to their identities
being suppressed. 
Another potential way 
for the strategic coalition to cause
honest players to reject is 
for some strategic buyer or seller denoted $i$ 
to submit an ${\sf NITC}$ commitment, fail
to provide the opening, and somehow the honest platform
fails to get a valid $\pi_i^{\rm fdec}$ 
during the forced decryption. 
Due to the soundness of forced decryption
of the underlying ${\sf NITC}$, this happens with only negligible probability.
\end{proof}

\subsection{Proofs of Utility-Dominated Emulation: 
When the Platform is Strategic}
\label{sec:proof-emu1}
Throughout the proof, we assume that 
$\inDomSecu \subset [0,1]$ is a discretized subset
of the normalized value domain $[0,1]$ --- see
also \Cref{sec:IC-computational}.

\paragraph{Experiment $\Hyb_1$.}
$\Hyb_1$ is almost the same as the real-world execution except
with the following modifications.
\ignore{
First, 
to some honest player, 
the platform sent 
$r_\platform$, 
$n$, $\digest$ and $\digest'$ values 
that differ 
from what it posted to the blockchain, 
 then  
we abort and treat the coalition's utility as $0$.
Second, 
we make the following modification. 
}
There must be one honest verifier $V$ (either a buyer or seller) 
in the Step \ref{step:robustaok}
of the protocol. Therefore, we can run the \AoK's extractor 
along the side and 
extract a witness 
of the form 
$\left(\widetilde{\sf code}, 
\widetilde{\mcal{I}}, 
\{\widetilde{c}_j, \widetilde{\pi}^{\rm com}_j, \widetilde{\pi}_j, 
\widetilde{v}_j, \widetilde{r}_j, 
\widetilde{{\sf out}}_j\}_{j \in \widetilde{\mcal{I}}}\right)$
for the \AoK instance where $V$ is the verifier. 
Abort and treat the coalition's utility as $0$ if any of the
following bad events happens:
\begin{enumerate}[leftmargin=7mm,itemsep=1pt]
\item the witness is not valid; 
\item 
some honest $i \in \mcal{I}$ is not included in 
the extracted $\widetilde{\mcal{I}}$;  
\item 
for some $i \in \widetilde{\mcal{I}}$, 
${\sf NITC}.{\sf FDec}({\sf NITC}.\crs, \widetilde{c}_i, 
\widetilde{\pi}_i^{\rm com}) \neq i \| \widetilde{v}_i\|\widetilde{r}_i$;
\item 
there exists some honest $i \in \mcal{I}$ 
that accepts a private
outcome ${\sf out}_i$ that is different from the 
extracted $\widetilde{{\sf out}}_i$. 
\item 
for some $i \in \mcal{I}$ that is honest, 
the message $i$ sent to the platform
in Step \ref{step:robustcom}
is not contained in $\widetilde{\bfc}$ where 
$\widetilde{\bfc} = \{j, \widetilde{c}_j, 
\widetilde{\pi}^{\rm com}_j\}_{j \in \widetilde{\mcal{I}}}$ comes from 
the extracted values;
\item 
for some $i \in \mcal{I}$ that is honest, 
the actual $v_i$ and $r_i$ values it used   
in the protocol 
differ from $\widetilde{v}_i$ and $\widetilde{r}_i$ from
the extracted witness --- here we assume that $v_0 = \bot$ if the seller
is honest.
\end{enumerate}

\elaine{check if negative util will matter}
\begin{claim}
Suppose that the \AoK scheme satisfies knowledge soundness,
the ${\sf NITC}$
scheme is computationally binding against 
quasi-polynomially sized adversaries, 
\elaine{param}
and ${\sf VC}$ 
is collision resistant against quasi-polynomially sized adversaries, 
\elaine{param}
then the coalition's utility in the real world
has negligible statistical distance 
from its utility in $\Hyb_1$. 
\label{clm:robust-checks}
\end{claim}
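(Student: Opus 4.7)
The plan is to bound the probability of each of the six bad events separately and then take a union bound. Since $\Hyb_1$ differs from the real world only by aborting (and treating the coalition's utility as $0$) upon one of these events, a negligible probability of triggering any of them yields the desired negligible statistical distance on the utility random variable, which lies in a bounded range by rescaling of the value domain $\inDomSecu \subset [0,1]$.

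For bad event 1, I would invoke the adaptive knowledge soundness of the $\AoK$ scheme. The honest verifier $V$ (some honest buyer or the honest seller) always exists and interacts with the platform in Step~\ref{step:robustaok} exactly as in the real world, so whenever $V$ accepts, the extractor outputs a valid witness for $(\digest, \digest', n, r_\platform) \in \mcal{L}_\lambda$ except with negligible probability. From now on I condition on the extracted witness being valid, so the NP constraints are enforced and the remaining events can be handled by reductions to the stated primitives.

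For bad events 2, 4, and 5, the key handle is that every honest $i$ that did not reject in Step~\ref{step:privout} successfully verified a ${\sf VC}$ opening under $\digest'$ of the tuple $(i, c_i, \pi^{\rm com}_i, {\sf out}_i)$, where $(i, c_i, \pi^{\rm com}_i)$ is precisely the message $i$ itself sent in Step~\ref{step:robustcom}, and where the opened $\digest'$ agrees with the one posted to the blockchain in Step~\ref{step:finalize}. On the other hand, the extracted witness satisfies $\digest' = {\sf VC}.{\sf Digest}({\sf VC}.\crs, \{j, \widetilde{c}_j, \widetilde{\pi}^{\rm com}_j, \widetilde{{\sf out}}_j\}_{j \in \widetilde{\mcal{I}}})$. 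Any mismatch at index $i$ (missing identity, a different commitment triple, or a different outcome) would yield two distinct preimages of the same ${\sf VC}$ digest, contradicting collision resistance against quasi-polynomial adversaries. Event 5 is handled similarly by combining the $\digest'$ opening of $(i, c_i, \pi^{\rm com}_i)$ with the NP constraint that $\digest$ and $\digest'$ decommit to a common $\{(j, c_j, \pi^{\rm com}_j)\}_{j \in \widetilde{\mcal{I}}}$.

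For bad events 3 and 6, I would reduce to computational binding of ${\sf NITC}$ against quasi-polynomially sized adversaries. The NP language requires each $\widetilde{\pi}_j$ to be an accepting decryption or forced-decryption proof for $j\|\widetilde{v}_j\|\widetilde{r}_j$ against $\widetilde{c}_j$; for event 3, a forced decryption of $\widetilde{c}_j$ under $\widetilde{\pi}^{\rm com}_j$ producing a different message gives two accepting openings of the same commitment, breaking binding. For event 6, an honest $i$ also holds an honestly generated $\pi^{\rm dec}_i$ witnessing $i\|v_i\|r_i$ against $c_i = \widetilde{c}_i$ (using event 5 to equate the commitments), so any disagreement with $\widetilde{v}_i, \widetilde{r}_i$ again breaks binding. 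Union-bounding over the polynomially many honest identities and the six events yields the claim. The main subtlety I expect is keeping track of running times: the $\AoK$ extractor may blow up the reduction's running time, so I need the $\mathsf{NITC}$ binding and $\mathsf{VC}$ collision resistance hypotheses to hold against quasi-polynomial adversaries (as assumed) in order to absorb this overhead.
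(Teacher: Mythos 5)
Your proposal is correct and follows essentially the same route as the paper: condition on a valid extracted witness via $\AoK$ knowledge soundness, reduce events involving $\digest'$/$\digest$ mismatches (events 2, 4, 5) to $\VC$ collision resistance, reduce the commitment-opening inconsistencies (events 3, 6, with 6 first equating $c_i = \widetilde{c}_i$ via event 5) to $\mathsf{NITC}$ computational binding, and union-bound. You also correctly flag the running-time subtlety (the extractor's overhead) as the reason the $\mathsf{VC}$ and $\mathsf{NITC}$ hypotheses must hold against quasi-polynomial adversaries, which is exactly the role those stronger assumptions play in the paper.
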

\begin{proof}
We will go over each bad event one by one, and show that 
except with negligible probability, 
if all honest players
accept in the end, 
the bad event  should not happen. 
\ignore{
We may assume that 
if all honest players accept in the end,  
then the platform must send the same $n$, $r_\platform$, $\digest$
and $\digest'$ terms to all honest players, and must post the same  
terms to the blockchain too.
}
\begin{enumerate}[leftmargin=7mm,itemsep=1pt]
\item 
Using a straightforward reduction to the knowledge
soundness of \AoK, we can show that the extracted witness is invalid
with only negligible probability.
Henceforth, we may assume that the extracted witness is valid.

\item 
If the extracted witness is valid, 
it must be 
$\digest' = {\sf VC}.{\sf Commit}({\sf VC}.\crs, 
\{i, \widetilde{c}_i, \widetilde{\pi}_i^{\rm com}, 
\widetilde{{\sf out}}_i\}_{i \in \widetilde{\mcal{I}}})$. 
If all honest players accepted at the end of the protocol,
it must be that 
the platform showed a valid membership
proof 
for some tuple $(i, \_, \_, \_, \_)$
w.r.t. $\digest'$ to every honest $i \in \mcal{I}$.  
Thus, 
if not all honest $i \in \mcal{I}$
is contained in $\widetilde{\mcal{I}}$, we can construct
a quasi-polynomial time reduction that outputs a collision
for ${\sf VC}$. 

\item 
If the extracted witness is valid, 
then, 
for all $i \in \widetilde{\mcal{I}}$, 
either 
${\sf NITC}.{\sf DecVf}({\sf NITC}.\crs, \widetilde{c}_i, i\|\widetilde{v}_i\|\widetilde{r}_i, \widetilde{\pi}_i) = 1$
or ${\sf NITC}.{\sf FDecVf}({\sf NITC}.\crs, \widetilde{c}_i, 
i\|\widetilde{v}_i\|\widetilde{r}_i, \widetilde{\pi}_i) 
= 1$ which is checked by the language $\mcal{L}_\lambda$.
Suppose for some $i \in \mcal{I}$, 
the message output by  
${\sf NITC}.{\sf FDec}({\sf NITC}.\crs, \widetilde{c}_i, \widetilde{\pi}_i^{\rm com})$ is not equal to 
$i \| \widetilde{v}_i\|\widetilde{r}_i$,
we can construct a quasi-polynomial time adversary 
that breaks the computationally binding property of 
${\sf NITC}$. 

\item 
If the extracted witness is valid, 
it must be 
$\digest' = {\sf VC}.{\sf Commit}({\sf VC}.\crs, 
\{i, \widetilde{c}_i, \widetilde{\pi}_i^{\rm com}, 
\widetilde{{\sf out}}_i\}_{i \in \widetilde{\mcal{I}}})$. 
Thus, if for some honest $i$, the platform can convince
it 
to accept some different ${\sf out}_i \neq \widetilde{{\sf out}}_i$
by producing a valid membership proof for some tuple 
$(i, \_, \_, {\sf out}_i)$,  
then we can build a quasi-polynomial time reduction
that breaks the collision resistance of the ${\sf VC}$ scheme.  

\item 
The proof is similar to (4) above. \elaine{hard-coded ref!!} 
\item 
If the extracted witness is valid, 
then, 
either 
${\sf NITC}.{\sf DecVf}({\sf NITC}.\crs, \widetilde{c}_i, i\|\widetilde{v}_i\|\widetilde{r}_i, \widetilde{\pi}_i) = 1$
or ${\sf NITC}.{\sf FDecVf}({\sf NITC}.\crs, \widetilde{c}_i, 
i\|\widetilde{v}_i\|\widetilde{r}_i, \widetilde{\pi}_i) 
= 1$. 
Given (5) \elaine{hard coded ref}, we know
that for an honest $i$, 
$\widetilde{c}_i = c_i$, 
$\widetilde{\pi}_i^{\rm com} = \pi_i^{\rm com}$
where $c_i$ and $\pi_i^{\rm com}$ are what $i$ sent
in the protocol. 
By the correctness of ${\sf NITC}$, 
\elaine{Is there negl correctness error?}
if all honest players accept in the end, $i$ must have also
sent $\pi_i^{\rm dec}$
such that 
${\sf NITC}.{\sf DecVf}({\sf NITC}.\crs, {c}_i, i\|{v}_i\|{r}_i, {\pi}_i^{\rm dec}) = 1$. 
If $\widetilde{v}_i \| \widetilde{r}_i \neq 
v_i \| r_i
$, we can construct a quasi-polynomial time reduction
that breaks the computationally binding property of ${\sf NITC}$. 
\end{enumerate}
\end{proof}

\elaine{given one honest verifier, extract 
witness from the AoK proof. 
if witness invalid or 
anyone's outcome not consistent with witness, abort,
coalition has arbitrary util. relies on AoK knowledge soundness, MT CR}

\paragraph{Experiment $\Hyb_2$.}
$\Hyb_2$ is almost the same as $\Hyb_1$ except
the following modifications.
We know that there is at least one honest verifier $V$ 
(either a buyer or seller)
for the challenge-response protocol in Steps \ref{step:porch}
and \ref{step:porres}, 
and we additionally run the following extractor 
along the side. 

\begin{itemize}[itemsep=1pt]
\item 
Let ${\sf code}'$
be an array of length $n'$ where all elements are 
initialized to $\bot$. 
\item
Repeat 
$\lambda^{\alpha(\lambda)}$ 
times in parallel 
where $\alpha(\cdot)$ is an arbitrary super-constant function:
\elaine{param}
\begin{itemize}
\item 
rewind the coalition's algorithm to right before it receives  
the challenge;
\item 
sample $\kappa$  fresh random indices denoted $Q$ 
on behalf of $V$ and send $Q$ to 
the coalition; 
\item 
if the coalition 
responds with openings $\{z_q\}_{q \in Q}$ 
with valid proofs within in 
$T_3(\lambda)$
time,  \elaine{param}
 then populate the entry 
${\sf code}'[q] := z_q$ for $q \in Q$.

\end{itemize}
\end{itemize}

We define the following bad event in $\Hyb_2$: 
all honest players accept in the end, but one of the following
happens:
\begin{enumerate}[leftmargin=8mm,itemsep=1pt]
\item 
fewer than 
$2/3$ fraction of the entries of ${\sf code}'$ have been populated; or
\item 
${\sf RS}.{\sf Recons}({\sf code}') \neq \widetilde{\bfc}$
where
$\widetilde{\bf c} = \{(j, \widetilde{c}_j, \widetilde{\pi}_j^{\rm com})\}_{j \in 
\widetilde{\mcal{I}}}$ is part 
of the witness extracted
by the \AoK extractor.
\end{enumerate}

\begin{claim}
Suppose that ${\sf VC}$ satisfies collision resistance against 
quasi-polynomial time adversaries. 
\elaine{param}
Then, the coalition's utility 
in $\Hyb_2$ is identically distributed as 
in $\Hyb_1$, and moreover, 
the probability that the bad event happens 
is negligibly small. 
\label{clm:prbadevent}
\end{claim}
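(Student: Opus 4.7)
My plan is to establish the two conclusions separately: first the distributional equality, which I expect to be essentially definitional, and then the negligibility of the bad event, which will decompose into a sub-event handled by a good/bad index dichotomy and one handled by ${\sf VC}$ collision resistance.

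The distributional equality is immediate because $\Hyb_2$ differs from $\Hyb_1$ only by the addition of an extractor that operates on rewound copies of the coalition's algorithm; this extractor injects no message into the main execution, so the main transcript is bit-for-bit identical to $\Hyb_1$, and hence so is the coalition's utility.

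For the bad-event bound, I will split it into $E_1 = $ ``fewer than $2/3$ of ${\sf code}'$ populated and all honest players accept'' and $E_2 = $ ``${\sf RS}.{\sf Recons}({\sf code}') \neq \widetilde{\bfc}$ and all honest players accept''. To bound $\Pr[E_1]$, I condition on the coalition's state $\sigma$ immediately before $V$ sends its Step~\ref{step:porch} challenge. For each index $q \in [n']$, let $p_q$ be the probability (over a uniformly random challenge $Q \ni q$ and the coalition's fresh coins) that the coalition's reply on $Q$ passes ${\sf VC}.{\sf Vf}$ on every index of $Q$. Call $q$ \emph{good} if $p_q \geq 1/\lambda^{\alpha(\lambda)/2}$ and \emph{bad} otherwise. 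Since a single rewinding hits and extracts $q$ with probability $(\kappa/n') p_q$, the probability that a good $q$ remains unpopulated after $\lambda^{\alpha(\lambda)}$ independent rewindings is at most $\exp(-\kappa \lambda^{\alpha(\lambda)/2}/n')$, which is $\negl(\lambda)$ given $\kappa = \omega(\log \lambda)$ and $n' = \poly(\lambda)$. By a union bound this leaves two cases: if the bad set $B$ satisfies $|B| \leq n'/3$, then all good indices populate and $|{\sf code}'| \geq 2n'/3$; if $|B| > n'/3$, I will bound $\Pr[V \text{ accepts}]$ directly by splitting on whether $V$'s challenge $Q^*$ meets $B$: the probability $Q^* \cap B = \emptyset$ is at most $(2/3)^\kappa = \negl(\lambda)$, and the probability that $Q^* \cap B \neq \emptyset$ yet the coalition responds validly is, by union bound, at most $\sum_{q \in B}(\kappa/n') p_q \leq \kappa/\lambda^{\alpha(\lambda)/2} = \negl(\lambda)$. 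Taking expectation over $\sigma$ yields $\Pr[E_1] = \negl(\lambda)$.

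For $E_2$, I will appeal to collision resistance of ${\sf VC}$. The $\AoK$-extracted witness used to define $\Hyb_1$ guarantees $\digest = {\sf VC}.{\sf Digest}({\sf VC}.\crs, {\sf code})$ and ${\sf code} = {\sf RS}.{\sf Encode}(\widetilde{\bfc})$; simultaneously, every populated ${\sf code}'[q]$ is accompanied by an opening proof verifying against the same $\digest$. Any disagreement ${\sf code}'[q] \neq {\sf code}[q]$ would immediately yield a ${\sf VC}$ collision; since the combined work of the main execution, the $\AoK$ extractor, and the $\lambda^{\alpha(\lambda)}$ rewindings is quasi-polynomial in $\lambda$, this occurs with negligible probability by the assumed quasi-polynomial collision resistance. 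Conditioned on no collision and on $\neg E_1$, ${\sf code}'$ agrees with ${\sf code}$ on at least $2n'/3$ positions and is $\bot$ elsewhere, so the erasure-correcting property of Reed–Solomon (tolerating any $\leq 1/3$ fraction of erasures) gives ${\sf RS}.{\sf Recons}({\sf code}') = {\sf RS}.{\sf Recons}({\sf code}) = \widetilde{\bfc}$, ruling out $E_2$.

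The hardest part will be the $E_1$ analysis: a naive attempt to condition on ``$V$ accepts'' and then invoke a concentration bound over the rewindings fails because conditioning distorts the per-index probabilities $p_q$ in a difficult-to-track way; the good/bad dichotomy above is designed precisely to sidestep this by giving independent negligible bounds in each regime. A secondary concern is parameter alignment, ensuring that $\alpha(\cdot)$ is a super-constant function (so $\lambda^{\alpha(\lambda)}$ rewindings give sharp concentration), that $\kappa = \omega(\log \lambda)$ (so $(2/3)^\kappa$ is negligible), and that ${\sf VC}$ collision resistance is assumed against adversaries of combined $\lambda^{\alpha(\lambda)} \cdot \poly(\lambda)$ running time, which is quasi-polynomial for appropriately slowly growing $\alpha$.
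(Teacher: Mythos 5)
Your proof is correct, and for the first conclusion (distributional equality) and the collision-resistance part of $E_2$ you reason exactly as the paper does. Where you diverge is the extraction guarantee for $E_1$: the paper dispatches this in one line by invoking Lemma~1 of Chiesa et al.~\cite{chiesa-kilian}, which packages precisely the statement that a prover who convinces a $\kappa$-out-of-$n'$ spot-check verifier with non-negligible probability can, after quasi-polynomially many independent rewinds, be made to open a $2/3$ fraction of the committed vector. You instead rederive this from scratch with the good/bad index dichotomy: indices with noticeable per-index success probability saturate under $\lambda^{\alpha(\lambda)}$ independent trials, and if a $>1/3$ fraction are bad then $V$'s own random $\kappa$-subset hits a bad index except with probability $(2/3)^\kappa$, at which point the coalition's per-index failure bounds kill the acceptance probability by a union bound. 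The two are morally equivalent; your route is longer but self-contained and makes explicit the parameter alignment (the role of $\kappa = \omega(\log\lambda)$ and the super-constant $\alpha$) that the citation hides, while the paper's route is more modular. One thing worth flagging if you were to flesh this out: your conditioning on the coalition's pre-challenge state $\sigma$ must include whatever else the coalition receives before responding (in particular, challenges from the \emph{other} honest verifiers, which arrive simultaneously with $V$'s), and the rewinding must resample all of that; Chiesa et al.'s lemma is stated for Kilian's two-party protocol and the paper is implicitly relying on the observation that the extraction argument still goes through with these side messages present. Your argument has the same implicit assumption, so it is not a gap relative to the paper, but it is the place a careful referee would poke.
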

\begin{proof}
$\Hyb_2$ does not modify the way the outcomes are computed,
it only runs an additional extractor
for the challenge-response protocol  
corresponding to Steps 
\ref{step:porch} and \ref{step:porres}
on the side. 
We now prove that the probability of encountering a bad event is 
negligibly small.

Since the witness output by the \AoK extractor is valid,
if all honest players accept at the end, they must 
all have the correct $n$, that is, 
$\digest$ is a vector commitment of a length $n'$-vector
where $n'$ is the length of the ${\sf RS}$ encoding
of a length-$n$ vector.

Applying Lemma 1 of Chiesa et al.~\cite{chiesa-kilian}, we can conclude
that except with negligible probability, 
the above extractor 
extracts at least $2/3$ locations, along with valid
membership proofs w.r.t. $\digest$. 
We know that 
${\sf digest} = 
{\sf VC}.{\sf Commit}({\sf VC}.\crs, \widetilde{\sf code})$
and $\widetilde{\sf code} = {\sf RS}.{\sf Encode}(\widetilde{\bfc})$
since the witness output by the \AoK extractor is valid. 
If there exists some location $i \in [n']$ such that
${\sf code}'[i] \neq \bot$ 
and 
${\sf code}'[i] \neq \widetilde{\sf code}[i]$, 
then we can construct a quasi-polynomial time reduction
that produces a collision for ${\sf VC}$.
If the above bad event does not happen, by the property
of ${\sf RS}$, 
it must be that
${\sf RS}.{\sf Recons}({\sf code}') = \widetilde{\bfc}$. 

\end{proof}

\paragraph{Experiment $\Hyb_3$.}
In $\Hyb_3$, 
if all honest players accept the auction,
we instead 
compute the 
coalition's utility as follows: 
\elaine{TODO: change to widetilde notation}
\begin{itemize}[leftmargin=6mm,itemsep=1pt]
\item 
reconstruct $\bfc = \{(i, c_i, \pi_i^{\rm com})\}_{i \in \mcal{I}}$ 
from the ${\sf code}'$ extracted from the challenge-response
protocol in Steps \ref{step:porch} and \ref{step:porres}; 
\item for $i\in \mcal{I} \backslash \mcal{H}$ where $\mcal{H}$ denotes
the honest players, 
let 
$(i\|v_i\|r_i, \_)\leftarrow {\sf NITC}.{\sf FDec}({\sf NITC}.\crs, 
c_i, \pi_i^{\rm com})$; 
\item 
compute the coalition's utility by running the rules
$({\bf x}, {\bf p}, \mu_\seller)$
on 
the following inputs and randomness:
\begin{itemize}[itemsep=1pt]
\item 
the honest buyer's true values 
$\{v_i\}_{i \in \mcal{H} \backslash \{0\}}$, 
and the strategic bids  
$\{v_i\}_{i \in \mcal{I} \backslash (\mcal{H} \cup \{0\})}$
extracted above;
\item 
randomness 
$r = \left(\displaystyle\mathop{\oplus}_{i \in \mcal{I}} r_i\right) 
\oplus r_\platform $ 
where 
$\{r_i\}_{i \in \mcal{H}}$ 
are the random coins chosen by honest players, 
$\{r_i\}_{i \in \mcal{I} \backslash \mcal{H}}$ are the 
strategic coins extracted above, and $r_\platform$
denotes the 
$r_\platform$ value the platform has sent to an arbitrary honest player.
\elaine{note they may not all be the same}
\end{itemize}
\end{itemize}
If the above computation fails (e.g., due to failure
of ${\sf RS}$ reconstruction), 
simply treat the coalition's  utility as $0$.

\begin{claim}
Under the same assumptions as \Cref{clm:prbadevent}, 
the coalition's utility in $\Hyb_3$ 
has negligible statistical difference from $\Hyb_2$.  
\end{claim}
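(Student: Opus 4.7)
The plan is to condition on the high-probability event that the bad events introduced in $\Hyb_1$ and $\Hyb_2$ do not occur, and then argue that, conditioned on this event (and on all honest players accepting), the two ways of computing the coalition's utility produce identical values deterministically. First I would recall that by \Cref{clm:robust-checks}, except with negligible probability the extracted AoK witness $\bigl(\widetilde{\sf code},\widetilde{\mcal{I}},\{\widetilde{c}_j,\widetilde{\pi}^{\rm com}_j,\widetilde{\pi}_j,\widetilde{v}_j,\widetilde{r}_j,\widetilde{{\sf out}}_j\}_{j\in\widetilde{\mcal{I}}}\bigr)$ is valid, contains every honest identity, is consistent with the actual $(v_i,r_i)$ chosen by each honest $i$, is consistent with the private outcomes $\widetilde{{\sf out}}_i$ that the honest players accepted, and moreover the forced decryption of every $\widetilde{c}_i$ yields exactly $i\|\widetilde{v}_i\|\widetilde{r}_i$. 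By \Cref{clm:prbadevent}, except with negligible probability the PoR extractor populates at least a $2/3$ fraction of ${\sf code}'$ and ${\sf RS}.{\sf Recons}({\sf code}')=\widetilde{\bfc}$, so the vector $\bfc$ that $\Hyb_3$ reconstructs coincides exactly with $\widetilde{\bfc}=\{(j,\widetilde{c}_j,\widetilde{\pi}^{\rm com}_j)\}_{j\in\widetilde{\mcal{I}}}$.

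Next I would do the ``matching'' argument for every input fed into the auction's rules. The identity set $\mcal{I}$ used in $\Hyb_3$ is the index set of $\bfc$, which equals $\widetilde{\mcal{I}}$; in particular $\mcal{H}\subseteq\mcal{I}$ by the $\Hyb_1$ check. For each honest $i\in\mcal{H}$, the value $v_i$ used by $\Hyb_3$ is the buyer's true input and the coin $r_i$ is the buyer's freshly sampled coin; by the sixth $\Hyb_1$ bad-event check these coincide with $\widetilde{v}_i,\widetilde{r}_i$. For each strategic $i\in\mcal{I}\setminus\mcal{H}$, $\Hyb_3$ sets $(v_i,r_i)$ to the output of ${\sf NITC}.{\sf FDec}({\sf NITC}.\crs,c_i,\pi^{\rm com}_i)$; the third $\Hyb_1$ bad-event check rules out that this differs from $(\widetilde{v}_i,\widetilde{r}_i)$. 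Finally, the value of $r_\platform$ used in $\Hyb_3$ is the one that the platform sent to some honest player; since all honest players accepted, they verified against the copy posted to the blockchain in Step~\ref{step:finalize}, and the honest AoK verifier $V$ ran the \AoK on the statement $(\digest,\digest',n,r_\platform)$ with this same $r_\platform$, so it equals the $r_\platform$ in the extracted witness.

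Putting these pieces together, the inputs $\{v_i\}_{i\in\mcal{I}\setminus\{0\}}$ and the randomness $r=\bigl(\bigoplus_{i\in\mcal{I}}r_i\bigr)\oplus r_\platform$ fed into $({\bf x},{\bf p},\mu_\seller)$ by $\Hyb_3$ agree coordinate-wise with the corresponding quantities validated by the AoK language $\mcal{L}_\lambda$, and therefore the outcome vector computed from scratch by $\Hyb_3$ equals $\{\widetilde{{\sf out}}_i\}_{i\in\widetilde{\mcal{I}}}$. Since the fourth $\Hyb_1$ check guarantees that every honest player's accepted outcome is precisely $\widetilde{{\sf out}}_i$, the coalition's utility is computed from exactly the same allocation, payment, and seller-revenue vector in both hybrids. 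In the complementary negligible-probability event, $\Hyb_3$ may output $0$ while $\Hyb_2$ outputs some value in $[0,1]$ (utilities are bounded since $\inDomSecu\subset[0,1]$), but this contributes at most a negligible statistical difference.

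The main subtlety I anticipate is purely bookkeeping: verifying that the $r_\platform$ used in $\Hyb_3$'s recomputation and the $r_\platform$ appearing in the AoK statement genuinely coincide on every accepting trace. This requires tracing the Step~\ref{step:finalize} blockchain-consistency check across the (possibly strategic) platform's different sends to different players, and noting that once every honest player accepts, there is a unique $r_\platform$ on the blockchain against which both the honest verifier $V$'s AoK statement and $\Hyb_3$'s recomputation are pinned. All remaining matches follow immediately from the individual bad-event exclusions already established in \Cref{clm:robust-checks} and \Cref{clm:prbadevent}.
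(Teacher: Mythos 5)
Your proof takes the same route as the paper's, which essentially says in one sentence that (i) the extracted witness being valid (all conditions of $\mcal{L}_\lambda$ hold) and (ii) the $\Hyb_1$ checks all holding together force the two ways of computing the outcome vector to coincide; you have expanded this into the coordinate-by-coordinate matching argument that the paper leaves implicit, and it is correct. Two minor imprecisions in the wrap-up, neither of which affects correctness: (a) the coalition's utility is not necessarily in $[0,1]$ — it sums over up to $n$ honest buyers' payments and can be negative — but statistical distance is bounded by the probability of the bad-event coupling failing, which is all you need, so the boundedness remark is superfluous; (b) on the negligible-probability bad event, $\Hyb_3$ does not necessarily output $0$ (it outputs $0$ only on reconstruction failure; otherwise it outputs whatever its recomputation yields), so the cleaner statement is simply that the two hybrids' outputs can be coupled to agree whenever none of the $\Hyb_1$/$\Hyb_2$ bad events fire, giving a total-variation bound by the negligible union of those events.
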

\begin{proof}
If the bad events defined in $\Hyb_2$ do not happen, then 
the two ways 
of computing the auction's outcomes are equivalent by observing
that 
1) the witness extracted by the \AoK extractor is valid, i.e., the conditions checked
by  
the language $\mcal{L}_\lambda$ 
all hold, 
and 2)
the additional checks introduced by $\Hyb_1$ all hold. 
\end{proof}

\ignore{compute C utility from extracted codeword from PoR, decode using RS, and 
open using FDec.
relies on soundness of NITC
}

\paragraph{Experiment $\Hyb_4$.}
$\Hyb_4$ is almost the same as $\Hyb_3$, 
except that we stop running the experiment after 
Step 
\ref{step:porres}. 
We compute the coalition's utility like in $\Hyb_3$.
If the utility is less than $0$, 
then simply treat the coalition's utility as $0$. 

\begin{claim}
The coalition's utility in $\Hyb_4$ stochastically dominates
$\Hyb_3$. 
\end{claim}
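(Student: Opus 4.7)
The plan is to exhibit an explicit coupling between $\Hyb_3$ and $\Hyb_4$ under which, pointwise on every joint outcome, the coalition's utility in $\Hyb_4$ is at least as large as in $\Hyb_3$; stochastic dominance then follows immediately. I will couple the two experiments by using identical randomness for every party (honest players, coalition, and CRS) through the end of Step~\ref{step:porres}, and by using the same random coins when the PoR-extractor rewinds the coalition to sample fresh challenges $Q$ on behalf of the honest verifier $V$. Under this coupling, the transcript through Step~\ref{step:porres} and the extracted array ${\sf code}'$ are identical in $\Hyb_3$ and $\Hyb_4$; in particular the reconstruction $\bfc$, the forced-decrypted strategic bids and randomness, and the value $r_\platform$ sent by the platform to the designated honest player (which both hybrids read from the same past message) are literally the same bits.

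With the coupling fixed, I will do a short case analysis on a single joint outcome. In $\Hyb_3$ the coalition receives a nonzero value exactly when (i) no honest player rejects, (ii) none of the bad events introduced in $\Hyb_1$ fires, and (iii) the extracted $\bfc$ is a valid ${\sf RS}$ codeword so that the rules $(\bfx,\bfp,\mu_\seller)$ can be evaluated. In that event $\Hyb_3$ and $\Hyb_4$ both compute the utility by running the very same rules on the same extracted bids, the same randomness contributions $\{r_i\}_{i\in\mcal{I}}$, and the same $r_\platform$, so the two utilities are identical. In every other event $\Hyb_3$ outputs $0$ by definition, whereas $\Hyb_4$ outputs $\max(u,0) \ge 0$ for the value $u$ produced by the rules. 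Either way $\Hyb_4 \ge \Hyb_3$ pointwise under the coupling, so the marginal law of $\Hyb_4$ stochastically dominates that of $\Hyb_3$.

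The only mildly delicate point is to argue that the PoR extractor produces the same ${\sf code}'$ in both hybrids. This is immediate once we observe that the extractor's behavior depends only on (a) the state of the coalition's algorithm at the moment it is about to receive the PoR challenge from $V$, and (b) the auxiliary coins used by the extractor to sample the $\lambda^{\alpha(\lambda)}$ fresh challenge sets in parallel; both are fixed by the coupling, and the later steps of the protocol in $\Hyb_3$ (forced decryption, the \AoK\ interaction, the blockchain post, private-outcome delivery) do not feed back into those coins. The clamp to zero in $\Hyb_4$ is precisely what makes the pointwise comparison work when the extraction or the rule evaluation degenerates, since in those same traces $\Hyb_3$ outputs zero too. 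No additional probabilistic or cryptographic argument is needed beyond what was already established in the preceding claims.
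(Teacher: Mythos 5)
Your proposal is correct and takes essentially the same approach as the paper: fix the randomness through Step~\ref{step:porres} (which pins down the extracted ${\sf code}'$, the reconstructed $\bfc$, the force-decrypted values, and $r_\platform$), then argue pointwise dominance over continuations. The paper phrases this as "let ${\it tr}$ be a trace of $\Hyb_4$ and ${\it tr}'$ a continuation of this trace for $\Hyb_3$" rather than spelling out the coupling, but it is the same argument.

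One small slip worth fixing: in your Case~1 (honest players accept, no bad event, valid codeword) you assert the two utilities are \emph{identical} because both hybrids evaluate the same rules on the same inputs. That overlooks that $\Hyb_4$ clamps unconditionally --- so if the rules evaluate to $u<0$, then $\Hyb_3$'s utility is $u$ while $\Hyb_4$'s is $\max(u,0)=0$, and they are not equal. The slip is harmless for the conclusion ($\Hyb_4 = 0 > u = \Hyb_3$, so dominance still holds), and the paper's proof states this subcase explicitly ("or the coalition's utility is negative in $\Hyb_3$ but rounded up to $0$ in $\Hyb_4$"). You should amend Case~1 to read: the utilities agree when $u\ge 0$, and $\Hyb_4$ strictly dominates when $u<0$ because of the clamp.
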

\begin{proof}
Let ${\it tr}$ be an execution trace of $\Hyb_4$, 
let ${\it tr}'$ be a continuation of this trace for $\Hyb_3$. 
If not all honest players accept in ${\it tr}'$, then
the coalition's utility is $0$ in $\Hyb_3$, and its utility has to be at least
$0$ in $\Hyb_4$.
If all honest players accept in ${\it tr}'$, then 
either 
the coalition's utility 
in $\Hyb_3$ given ${\it tr}'$ is the same
as its utility 
in $\Hyb_4$ given ${\it tr}$, or 
the coalition's utility is negative in 
$\Hyb_3$ but rounded up to $0$ 
in $\Hyb_4$. 
\end{proof}

\paragraph{Experiment $\Hyb_5$.}
$\Hyb_5$ is almost the same as $\Hyb_4$, 
except that when the experiment needs to compute 
${\sf NITC}$ commitments on behalf of 
honest players, we replace the commitments with commitments of $0$. 

\begin{claim}
Suppose that the ${\sf NITC}$ scheme satisfies
IND-CCA security for 
quasi-polynomially sized \elaine{param}
and depth-$T^\epsilon$ adversaries, 
then,  the coalition's expected utility in $\Hyb_5$ 
is only negligibly apart from its utility in $\Hyb_4$.
\label{clm:useindcca}
\end{claim}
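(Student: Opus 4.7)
The plan is a standard hybrid argument over the set of honest players, combined with a careful depth analysis so that the resulting distinguisher falls within the depth budget $T^\epsilon$ on which IND-CCA security of ${\sf NITC}$ is assumed. Enumerate the honest players as $\mcal{H} = \{h_1, \ldots, h_{n_H}\}$ (where $n_H \leq n(\lambda)$ is polynomially bounded) and define intermediate hybrids $\Hyb_{4,0} = \Hyb_4, \Hyb_{4,1}, \ldots, \Hyb_{4,n_H} = \Hyb_5$, where $\Hyb_{4,j}$ replaces the ${\sf NITC}$ commitments of $h_1, \ldots, h_j$ by commitments of $0$ (and correspondingly replaces whatever opening is sent in Step~\ref{step:porres} by an opening of $0$). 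Because $n_H$ is polynomially bounded and the coalition's utility is a bounded quantity (so after rescaling to $[0,1]$ we may invoke \Cref{fct:indtoexp}), it suffices to show that for each $j$ the coalition's utilities induced by $\Hyb_{4,j-1}$ and $\Hyb_{4,j}$ are computationally indistinguishable against $O(\log(\cdot))$-depth adversaries.

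For each adjacent pair I would build a reduction $\algB$ to the $(W,\epsilon)$-IND-CCA security of ${\sf NITC}$. $\algB$ receives ${\sf NITC}.\crs$ from the IND-CCA challenger, samples the remaining CRSs $({\sf VC}.\crs, \AoK.\crs)$ itself, queries the challenger on the pair $(h_j\|v_{h_j}\|r_{h_j},\ h_j\|0\|0)$ to obtain a challenge ciphertext $c^*$, and plugs $c^*$ in as player $h_j$'s commitment in Step~\ref{step:robustcom}. All other honest players are simulated exactly as in $\Hyb_{4,j-1}$ (which agrees with $\Hyb_{4,j}$ on every index other than $h_j$). $\algB$ then plays the role of every honest party against the coalition $\mcal{C}$ through Step~\ref{step:porres}, runs the $\lambda^{\alpha(\lambda)}$-fold parallel PoR extractor of $\Hyb_2$ on the coalition's responses to recover $\widetilde{\bfc}$, and uses its CCA decryption oracle to recover $(i\|\widetilde v_i\|\widetilde r_i)$ for each strategic $i \in \widetilde{\mcal{I}} \setminus \mcal{H}$ (note that these queries are legal because such $c_i$ differ from $c^*$ as long as the coalition does not replay $h_j$'s commitment, an event ruled out by the uniqueness check in Step (b) together with the de-duplication of $\widetilde{\mcal{I}}$). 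Finally $\algB$ evaluates the rules $({\bf x},{\bf p},\mu_\seller)$ on the honest true values $\{v_i\}_{i \in \mcal{H}}$ and the extracted strategic inputs, using randomness $r=(\bigoplus_{i\in\widetilde{\mcal{I}}}\widetilde r_i)\oplus r_\platform$, obtains $\mcal{C}$'s utility, and feeds it (or any post-processing of it computed in polylog depth) to the external distinguisher.

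The main obstacle, and the one I would devote the most care to, is that Step~\ref{step:porres} asks honest $h_j$ to reveal $(v_{h_j}\|r_{h_j},\pi^{\rm dec}_{h_j})$, yet $\algB$ does not possess a valid opening proof for $c^*$ in either hybrid of the reduction. The fix is to note that what distinguishes $\Hyb_{4,j-1}$ from $\Hyb_{4,j}$ in $\mcal{C}$'s view is already baked into the challenge ciphertext and the opening message sent by the simulator on behalf of $h_j$; concretely, $\algB$ simply forwards whichever opening is consistent with the hybrid it is trying to simulate --- either $(v_{h_j}\|r_{h_j})$ together with a proof that $\algB$ obtains by querying the CCA decryption oracle on $c^*$ (forbidden for $c^*$ itself, but legitimate if we instead use the publicly verifiable \emph{forced-decryption} proof $\pi^{\rm fdec}$ that the ${\sf NITC}$ scheme of~\cite{CJ-nitc} produces straight-line from the CRS), or the analogous $0$-opening. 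This keeps the reduction oblivious to $\algB$'s challenge bit in its internal bookkeeping while faithfully reproducing $\mcal{C}$'s view in the corresponding hybrid.

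The second obstacle is the depth budget. The protocol simulation through Step~\ref{step:porres} runs in depth $T_2(\lambda)+T_3(\lambda)$; the parallel PoR extractor contributes an additional $O(T_3(\lambda))$; every CCA oracle query is constant-depth; extracting the witness components and checking validity costs $\poly\log(\lambda)$; and the final utility evaluation costs $D_{\rm auction}(\lambda^{\alpha(\lambda)})$. The parameter constraint $T^\epsilon(\lambda) > T_2(\lambda)+T_3(\lambda)+D_{\rm auction}(\lambda^{\alpha(\lambda)})+\poly\log(\lambda)$ imposed on the ${\sf NITC}$ scheme was engineered exactly to dominate this sum, so $\algB$ is a $(W,T^\epsilon)$-bounded adversary. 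Any non-negligible gap between the expected utilities in $\Hyb_{4,j-1}$ and $\Hyb_{4,j}$ would therefore translate into a non-negligible IND-CCA advantage for $\algB$, contradicting the security assumption on ${\sf NITC}$. Summing the $n_H = \poly(\lambda)$ hybrid gaps concludes the proof.
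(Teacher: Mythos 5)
Your overall structure matches the paper's: enumerate the honest players, build inner hybrids that switch one ${\sf NITC}$ commitment at a time from the honest value to a commitment of $0$, invoke \Cref{fct:indtoexp} to reduce to indistinguishability against low-depth distinguishers, and build a reduction that embeds the IND-CCA challenge ciphertext, uses the PoR extractor to recover $\bfc$, and uses the $\Dec$ oracle in place of ${\sf FDec}$ when computing the coalition's utility. The depth accounting is also essentially the intended one. However, the ``main obstacle'' you devote a paragraph to --- simulating player $h_j$'s opening message in Step~\ref{step:porres} --- is not actually an obstacle, and the fix you propose for it does not work.

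The proposed fix is to let $\algB$ forward an opening backed by the forced-decryption proof $\pi^{\rm fdec}$, on the grounds that it is ``produced straight-line from the CRS.'' It is not: computing $({m},\pi^{\rm fdec}) \leftarrow {\sf NITC}.{\sf FDec}(\crs, c^*, \pi^*_{\rm com})$ on the challenge ciphertext requires $T\cdot\poly(\lambda)$ sequential work, which blows the $T^\epsilon$ depth budget that the whole argument is engineered to respect; and producing the fast proof $\pi^{\rm dec}$ is impossible for $\algB$ since it requires the challenger's commitment randomness. Moreover $\algB$ cannot even decide \emph{which} opening string ($v_{h_j}\|r_{h_j}$ or $0$) to send, since that depends on the challenge bit it is trying to guess. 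The correct observation --- implicit in the definition of $\Hyb_4$ and made explicit in the paper's reduction --- is that the opening never needs to be simulated at all. Precisely because $\Hyb_4$ truncates the experiment after Step~\ref{step:porres} and computes the coalition's utility solely from the code $\bfc$ extracted by rewinding the coalition to \emph{before} it receives the PoR challenge, the utility is a function only of the commitments, $(n,\digest,r_\platform)$, and the extracted PoR responses; the honest players' opening messages, which come strictly later in the timeline, never enter the extraction and hence never affect the utility distribution. The reduction therefore just sends the commitments (with $c^*$ in position $h_j$), receives $(n,\digest,r_\platform)$, runs the PoR extractor, decodes $\bfc$, decrypts the strategic commitments via the $\Dec$ oracle, and evaluates the utility --- it never reaches a point where any opening for $c^*$ is needed. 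If you instead insist on simulating the openings as your proposal does, the reduction is unsound; once you drop them, the difficulty evaporates.
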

\begin{proof}
We may assume that the number of buyers is exactly $n-1$ 
since we an apply the same proof for every choice of $n$. 
Without loss of generality, we will index the seller as $0$, and index all  
honest buyers using $1, 2, \ldots, {n-1}$ below.  
The $i$-th buyer has identity ${\it id}_i$. Let ${\it id}_0 = 0$,
and 
let $v_0 = \bot$ if the seller is honest.

We will prove this claim through a sequence of inner hybrid experiments
denoted ${\sf H}_0, {\sf H}_1, \ldots, 
{\sf H}_n$. 
${\sf H}_0$ is the same as 
$\Hyb_4$, and 
for $i \in [n]$, 
${\sf H}_i$ is the same as ${\sf H}_{i-1}$ 
except for changing the $(i-1)$-th honest player's ${\sf NITC}$ commitment
to a commitment of $0$.
Clearly, ${\sf H}_n = \Hyb_5$. 

Due to 
\Cref{fct:indtoexp}, 
\elaine{TODO: this requires some lemma}
it suffices to show that 
no  polynomially sized,  
depth-$C\log(\cdot)$ distinguisher
can effectively distinguish 
the coalition's utility 
in each pair of adjacent hybrids, where $C$
is a suitably large constant. 

Suppose that for some $i$, 
there is some polynomially sized, 
depth-$C \log(\cdot)$ distinguisher $\algB$
that can distinguish 
the coalition's utilities in ${\sf H}_{i-1}$
and ${\sf H}_i$ with non-negligible probability.  
We will reach a contradiction by constructing the following
reduction $\algR$ that wants to break the 
IND-CCA security of ${\sf NITC}$. 
\begin{enumerate}[leftmargin=6mm,itemsep=1pt]
\item 
$\algR$ first prepares  
the commitments for all honest players except the $(i-1)$-th player.
Specifically, for any honest buyer $j < i - 1$, 
it will compute a commitment of $0$.
For any honest buyer ranked $j \geq i$,  
it will compute an honest commitment of its true value $v_j$
and some random $r_j$. 
Let the resulting commitments be 
$(c_0, \pi^{\rm com}_0), \ldots, (c_{i-2}, \pi_{i-2}^{\rm com}),
 (c_{i}, \pi^{\rm com}_{i}), \ldots, (c_{n-1}, \pi_{n-1}^{\rm com})$.
This preprocessing 
only needs to be quasi-polynomial
in total work and we need not worry about its depth.
\item 
Next, $\algR$ sends 
a challenge to the IND-CCA challenger on messages
$(0,  v_{i-1}||r_{i-1})$ where $v_{i-1}$ is player $(i-1)$'s true value
and $r_{i-1}$ is some random string.
It obtains some commitment $(c_{i-1}, \pi^{\rm com}_{i-1})$ from the IND-CCA challenger.
\item 
$\algR$ invokes 
the coalition's algorithm, and sends 
$\{{\it id}_i, c_i, \pi^{\rm com}_i \}_{i \in \{0, 1, \ldots, n-1\}}$
to the coalition 
$\mcal{C}$. 
$\mcal{C}$ responds with $(n, {\sf digest}, r_\platform)$.
\elaine{TODO: need to assert n is correct earlier}
$\algR$ now runs the extractor
defined in $\Hyb_2$ with $\mcal{C}$, and extracts  
some ${\sf code}'$ which reconstructs to $\bfc$
using the reconstruction algorithm of the ${\sf RS}$ code. 
$\algR$
now computes the coalition's utility using $\bfc$ 
just like in $\Hyb_4$, but where the calls to ${\sf FDec}$
are replaced with calls
to the ${\sf Dec}$ oracle provided by the IND-CCA challenger.  
\item 
$\algR$ now 
forwards the resulting utility 
to $\algB$ and outputs whatever it outputs. 
\end{enumerate}

Clearly, the work of $\algR$ is quasi-polynomially bounded. 
We now analyze the depth of $\algR$ in Steps 3 and 4.
\elaine{hard coded refs}
The depth of Step 3 
is upper bounded by 
\[
T_2(\lambda) + T_3(\lambda) 
+ D_{\rm rs}(\lambda^{\alpha(\lambda)}) + 
D_{\rm auction}(\lambda^{\alpha(\lambda)}) 
+ 
\poly\log \lambda
\]
where 
\begin{itemize}[leftmargin=6mm,itemsep=1pt]
\item 
$T_2(\lambda)$ is the maximum time for the coalition to produce 
the tuple $(n, \digest, r_\platform)$; 
\item 
$T_3(\lambda)$ is the maximum time for the coalition to respond
to an opening challenge for ${\sf VC}$;  
\item 
$D_{\rm rs}(\lambda^{\alpha(\lambda)})$
is an upper bound on the 
depth of the reconstruction algorithm of ${\sf RS}$
for an input of unbounded polynomial length; 
\item 
$D_{\rm auction}(\lambda^{\alpha(\lambda)})$
is an upper bound on the depth of the algorithm that
runs 
the auction's rules $({\bf x}, {\bf p}, \mu_\seller)$
and then computes the  
coalition's utility from the coalition's true value vector
and the outcomes of the auction
for any unbounded polynomial length input;  
\item 
and $\poly'\log(\lambda)$  
captures the depth of all other operations, e.g., for a machine 
that does not support concurrent writes, 
concurrent writes to the same array ${\sf code}'$ may take 
$O(\log(\lambda^{\alpha(\lambda)}))$ depth. 
\end{itemize}
The depth of Step 4 
\elaine{hard coded refs}
is  upper bounded by 
the maximum depth of the distinguisher $\algB$
whose depth is logarithmic in the input length.
Because the input length can be an unbounded polynomial,
we can upper bound the depth of $\algB$ by $\log^2 \lambda$
for sufficiently large $\lambda$.   
Because we do not need any error correction,
the reconstruction algorithm of ${\sf RS}$
simply computes a linear combination, and its depth
is logarithmic in the input length. 
Therefore, for a suitably small $\alpha(\cdot)$,
and sufficiently large $\lambda$, 
$D_{\rm rs}(\lambda^{\alpha(\lambda)})$
is upper bounded by $\log^2\lambda$. 
Summarizing the above, 
the total depth of $\mcal{R}$ in Steps 3 and 4 
\elaine{hard coded refs}
is upper bounded by 
\[
T_2(\lambda) + T_3(\lambda) 
+ D_{\rm auction}(\lambda^{\alpha(\lambda)}) 
+ 
\poly\log \lambda
\]


Recall that we choose the parameters $T$ and $\epsilon$ 
of the ${\sf NITC}$ to satisfy 
\[
T^\epsilon(\lambda)
> T_2(\lambda) + T_3(\lambda) 
+ D_{\rm auction}(\lambda^{\alpha(\lambda)}) + 
\poly\log(\lambda)
\]
Therefore, if $\algB$ can effectively distinguish
the coalition's utility in ${\sf H}_{i-1}$ and ${\sf H}_i$, 
then $\algR$ would  
break the IND-CCA security of the underlying 
${\sf NITC}$.

\elaine{TO FIX: Drs and Dauction must be an UB for all polynomial n}

\end{proof}

\paragraph{Ideal auction and ideal strategy.}
$\Hyb_5$ is almost equivalent to the following ideal auction
in terms of the coalition's utility, except
that when ${\sf NITC}.{\sf FDec}$ fails,
the coalition may have positive utility in the ideal auction below below,  
whereas in $\Hyb_5$ it has 0 utility: 
\begin{enumerate}[leftmargin=7mm,itemsep=1pt]
\item 
Honest buyers send their true values to 
$\Fauction$;
\item  
$\Fauction$ sends the number of honest bids $n_H$ to $\algC$;
\item  
$\algC$ 
computes $n_H$ ${\sf NITC}$ commitments
of $0$ and sends them to the real-world coalition's algorithm; 
when the real-world coalition's algorithm responds
with some $(n, \digest, r_\platform)$, it runs the extractor
algorithm defined in $\Hyb_2$
with the real-world coalition's algorithm, 
to extract and reconstruct $\bfc = \{i, c_i, 
\pi^{\rm com}_i\}_{i \in \mcal{I}}$. 
Then, for every $i \in \mcal{I}$ that is not an honest
identity,  
$\algC$ calls ${\sf NITC}.{\sf FDec}({\sf NITC}.\crs, c_i, \pi_i^{\rm com})$ 
to decrypt $v_i\|r_i$. 
If ${\sf FDec}$ failed, simply send $0$ bids on behalf of all strategic buyers
to $\Fauction$; otherwise, 
send the extracted bids corresponding to strategic buyers to $\Fauction$. 
\item 
$\Fauction$ 
chooses the coins $r$ at random, and 
computes the outcome 
of the auction using 
the honest players' values, the bids
submitted by $\algC$,  
the randomness $r$. 
It sends the outcome vector to $\algC$.
If the coalition's utility is less than $0$, 
the platform sends $\bot$ to $\Fauction$; else it sends {\sf  ok} 
to $\Fauction$. 
\end{enumerate}

Recall that at least one buyer or the seller is honest (henceforth 
denoted $i$). 
Given that the honest player $i$ chooses its random coins
$r_i$ at random, 
the joint coin toss result
$r = \left(\displaystyle\mathop{\oplus}_{j \in 
\mcal{I}} r_j\right) \oplus r_\platform$
is also a uniform 
random string. 
Therefore, it is equivalent for $\Fauction$ to just choose
$r$ at random in the above. 

\ignore{
switch honest buyers' NITC commitments to commitments of 0. 
Drs : depth of PoR extractor bounded by the timeout;
Drs': depth of RS reconstr;  
Du: depth of the function that computes utility;  
Drs + Drs' + Du 
less than the depth of the CCA adv of NITC
rely on IND-CCA against depth bounded
}

Summarizing the above, we have that 
the coalition's expected utility in the real world
cannot be negligibly more than its expected utility in the ideal world.

\subsection{Proofs of Utility Dominated Emulation: When the Platform is Honest}
\label{sec:proof-emu2}

\elaine{TODO: fix at least one buyer honest assumption here}

This is the easier case. 
The coalition $\mcal{C}$ consists of either one or more buyers,
or just the seller.
Consider the following sequence of hybrid experiments.
Throughout the proof below, we assume
that the coalition's algorithm is \ppt, and 
all the cryptographic building blocks only need
to be secure against \ppt adversaries. 
\elaine{TODO: double check}

\paragraph{Experiment $\Hyb_1$.}
$\Hyb_1$ is the same as the real-world protocol
except that when $\mcal{C}$ sends
${\sf NITC}$ commitments 
of the strategic bids 
$\{(j, c_j, \pi_j^{\rm com})\}_{j \in \mcal{J'}}$,
we ignore all the tuples where $\pi_j^{\rm com}$
does not verify or tuples with identity $0$ but not sent from the seller, 
perform duplicate suppression on the identities
and notify the suppressed identities as before,
but additionally, 
we call the ${\sf NITC}.{\sf FDec}$ function 
to force-open 
$\{(v_j, r_j)\}_{j \in \mcal{J}}$ where $\mcal{J}$ is the set of strategic
identities after duplicate suppression.

If all honest players accept at the end, 
we compute the coalition's utility
using 1) the honest buyers' true values 
$\{v_i\}_{i \in \mcal{H} \backslash \{0\}}$ 
where $\mcal{H}$ denotes the honest players (not including the platform), 
as well as 
honest randomness $\{v_i\}_{i \in \mcal{H}}$ and $r_\platform$;
and 2) the  
force-opened bids 
$\{v_j\}_{j \in \mcal{J} \backslash \{0\}}$ 
and forced-opened randomness
$\{r_j\}_{j \in \mcal{J}}$ 
corresponding
to strategic players. 

\begin{claim}
Suppose that the ${\sf NITC}$ scheme is computationally 
binding. Then, 
the coalition's utility in $\Hyb_1$ has negligible
statistical distance from the real world. 
\end{claim}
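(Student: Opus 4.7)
The plan is to reduce the claim to the computational binding property of ${\sf NITC}$ via a straightforward hybrid argument that compares, bid by bid, the values used by the real-world honest platform with those produced by the side-channel forced-decryptions of $\Hyb_1$.

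First I would observe that the only difference between the real world and $\Hyb_1$ in how the coalition's utility is computed lies in which $(v_j, r_j)$ are fed into the auction rules $({\bf x}, {\bf p}, \mu_\seller)$ for strategic identities $j \in \mcal{J}$. The sets $\mcal{J}$ themselves agree in both experiments, since $\Hyb_1$ applies exactly the same ${\sf ComVf}$ verification and duplicate-suppression rules on the strategic commitments as the honest platform does in Step~\ref{step:fdec}. Honest players' contributions (their true values, their $r_i$'s, and the platform's $r_\platform$) are sampled identically in the two experiments. Therefore the coalition's utility differs between the two worlds only if, for some $j \in \mcal{J}$, the value $(v_j, r_j)$ that the real-world platform ultimately uses differs from the value $(v'_j, r'_j)$ obtained by the side-channel ${\sf NITC}.{\sf FDec}$ of $\Hyb_1$ on the same commitment $(c_j, \pi_j^{\rm com})$.

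Second I would case-split on how the real-world platform obtained its $(v_j, r_j)$. If $j$ did not send a valid opening, the platform itself ran ${\sf NITC}.{\sf FDec}$, and by determinism of ${\sf FDec}$ given a fixed $({\sf NITC}.\crs, c_j, \pi_j^{\rm com})$ this agrees with $\Hyb_1$ trivially. The non-trivial case is when $j$ sent an opening $(v_j, r_j, \pi_j^{\rm dec})$ with ${\sf NITC}.{\sf DecVf}({\sf NITC}.\crs, c_j, j\|v_j\|r_j, \pi_j^{\rm dec}) = 1$ but the force-decryption in $\Hyb_1$ returns a different $(v'_j, r'_j)$ together with a valid $\pi^{\rm fdec}_j$ satisfying ${\sf NITC}.{\sf FDecVf}({\sf NITC}.\crs, c_j, j\|v'_j\|r'_j, \pi^{\rm fdec}_j) = 1$. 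This is exactly a breach of computational binding: the commitment $c_j$ has been opened to two distinct messages via two (possibly different) verification algorithms that the binding definition covers jointly.

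Third I would construct the reduction. Given a PPT coalition strategy, build a PPT adversary $\mcal{B}$ against the binding property of ${\sf NITC}$ that simulates the entire real-world execution (it can sample all honest parties' inputs and randomness itself, and act as the honest platform), runs the side-channel ${\sf FDec}$ calls of $\Hyb_1$ in parallel, and, upon detecting a mismatched $j$, outputs $(c_j, \,j\|v_j\|r_j,\,\pi_j^{\rm dec},\,j\|v'_j\|r'_j,\,\pi^{\rm fdec}_j)$ as its binding breaker. A union bound over the polynomially many strategic identities $j \in \mcal{J}$ shows that the event of any mismatch occurs with only negligible probability. Conditioned on no mismatch, the two experiments produce identically distributed coalition utilities, so the overall statistical distance between the two utility distributions is negligible, as claimed.

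The step I expect to require the most care is the fine print of the reduction: making sure the binding definition of ${\sf NITC}$ that we assume is strong enough to cover openings produced by two different algorithms (${\sf DecVf}$ on a prover-supplied $\pi_j^{\rm dec}$ versus ${\sf FDecVf}$ on an ${\sf FDec}$-produced $\pi^{\rm fdec}_j$), and that the reduction can actually extract the honest-platform-side $\pi^{\rm fdec}_j$ in polynomial time (since ${\sf NITC}$ is designed so that the honest platform can force-decrypt in polynomial time after the commitment phase, this is fine, but it should be called out explicitly). Once binding is phrased appropriately, the remainder is a mechanical union bound.
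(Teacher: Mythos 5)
Your proof is correct and takes the same approach as the paper's (much briefer) argument: the only source of discrepancy between the real world and $\Hyb_1$ is a strategic player opening its ${\sf NITC}$ commitment to something other than its forced-decryption, which happens with negligible probability under computational binding, and a union bound over the polynomially many strategic identities finishes. The subtlety you flagged about the binding definition needing to cover a ${\sf DecVf}$-verified opening disagreeing with ${\sf FDec}$'s output is indeed handled, since the paper's binding game outputs $1$ precisely when ${\sf ComVf}$ accepts, ${\sf DecVf}$ \emph{or} ${\sf FDecVf}$ accepts some $(m,\pi)$, but ${\sf FDec}$ returns a message different from $m$.
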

\begin{proof}
The only reasons that would cause the coalition's utility
to differ is if   
some strategic $j \in \mcal{J}$ successfully
opened its ${\sf NITC}$ commitment 
to a different tuple than the forced opened  
tuple. 
This happens with only negligible probability
as long as ${\sf NITC}$ is computationally sound.
\end{proof}

\elaine{TO FIX: 
for adv produced commitment, FDec must be successful.
which property is this? this is also needed in the
strategic platform case?}

\paragraph{Experiment $\Hyb_2$.}
In $\Hyb_2$, 
we stop executing the protocol after the coalition
submits their ${\sf NITC}$ commitments.   
If no honest identity has been suppressed, 
we use the approach of $\Hyb_1$ to compute 
the coalition's identity.
Otherwise, we treat the coalition's identity as $0$.

\begin{claim}
Suppose that ${\sf NITC}$ satisfies 
soundness of forced decryption. 
The coalition's utility in $\Hyb_1$ 
cannot be negligibly more than its utility in $\Hyb_2$.
\end{claim}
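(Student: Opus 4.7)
The plan is to identify the two sources of divergence between $\Hyb_1$ and $\Hyb_2$, bound each, and argue that conditioned on the absence of a low-probability bad event, the two experiments compute the coalition's utility from identical data.

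The only structural differences between the two experiments are: (i) in $\Hyb_2$ we halt right after the commitment round and never run Steps~\ref{step:pcom}--\ref{step:finalize}, and (ii) in $\Hyb_2$ we hard-wire the coalition's utility to $0$ whenever some honest identity was suppressed in the duplicate-removal step, whereas $\Hyb_1$ continues the protocol and then computes the utility from force-decrypted strategic bids and honest values. I would first argue that the event ``some honest identity is suppressed'' occurs with negligible probability. Since the platform is honest, it discards all tuples of the form $(0,\_,\_)$ that were not sent by the seller; for every other honest identity, a suppression can only happen if the strategic coalition manages to send a well-formed tuple under that identity. But we assumed (as in \Cref{lem:robust}) that strategic players cannot hijack honest players' identities, so this happens with probability at most the honest-identity collision probability, which is negligible as long as the identity space is super-polynomial in $\lambda$. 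Since utilities are bounded (values lie in $[0,1]$ and at most $k$ items are allocated), conditioning on this negligible-probability event contributes at most $\negl(\lambda)$ to the gap in expected utility.

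Second, conditioned on no honest identity being suppressed, I would show that $\Hyb_1$ and $\Hyb_2$ produce identically distributed utilities up to negligible terms. The key observation is that, because the platform is honest, the only way the real execution of $\Hyb_1$ can deviate from the utility $\Hyb_2$ reads off directly from the force-decrypted commitments is if the honest platform ends up rejecting (posting $\bot$) in Step~\ref{step:fdec}. By the protocol, this only happens when some strategic commitment $c_i$ with a valid $\pi_i^{\rm com}$ fails to produce a valid $\pi_i^{\rm fdec}$ under ${\sf NITC}.{\sf FDec}$. But this is precisely ruled out by soundness of forced decryption for ${\sf NITC}$, up to a negligible probability. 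On all remaining execution traces, the honest platform completes all of Steps~\ref{step:compute}--\ref{step:finalize} correctly, every honest buyer receives a valid opening of its own outcome together with a correct ${\sf VC}$ membership proof (so all honest players accept), and the outcomes the platform distributes are exactly those obtained by evaluating $({\bf x}, {\bf p}, \mu_\seller)$ on the same honest values, force-decrypted strategic bids, and joint randomness that $\Hyb_2$ uses. Hence the coalition's utility in both experiments is the same deterministic function of the same inputs.

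Combining the two cases via a union bound gives $\E[{\sf util}^{\Hyb_1}_{\mcal{C}}] - \E[{\sf util}^{\Hyb_2}_{\mcal{C}}] \leq \negl(\lambda)$, as desired. The main delicate point I expect is (ii), specifically ensuring that the two experiments consume the same randomness and see the same effective bid vector: $\Hyb_2$ builds the bid vector by force-decrypting the strategic commitments sent right after the commitment round, while $\Hyb_1$ does the same plus runs the rest of the protocol, and one must check that the honest platform's later computation in Step~\ref{step:compute} uses exactly this vector (which follows from the duplicate-suppression rule together with the no-hijacking assumption) and the same $r = (\bigoplus_i r_i)\oplus r_\platform$. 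Once this bookkeeping is done, the argument reduces to the soundness of forced decryption and the negligible identity-collision bound.
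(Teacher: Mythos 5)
Your proposal is essentially correct and rests on the same key idea as the paper's proof: when the platform is honest, the only way a strategic buyer/seller coalition can make $\Hyb_1$ and $\Hyb_2$ diverge (after the commitment round) is by producing an ${\sf NITC}$ commitment with a valid $\pi^{\rm com}$ on which ${\sf FDec}$ later fails to yield a valid forced-decryption proof, and this is precisely the event ruled out by soundness of forced decryption.

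However, you take an unnecessary detour through the identity-suppression case and in doing so slightly mischaracterize $\Hyb_1$. You write that when some honest identity is suppressed, ``$\Hyb_1$ continues the protocol and then computes the utility from force-decrypted strategic bids and honest values,'' and then you bound this case by arguing that suppression has negligible probability under the no-hijacking assumption. In fact, $\Hyb_1$ already gives the coalition utility $0$ in that event: the suppressed honest buyer is notified and rejects, so not all honest players accept, and $\Hyb_1$ only computes a nonzero utility when all honest players accept. Since $\Hyb_2$ also sets the utility to $0$ when an honest identity is suppressed, there is simply no gap in that case, and no probability bound is needed. Invoking the no-hijacking assumption here is actually risky: it is stated for the robustness lemma (\Cref{lem:robust}), not for the utility-dominated emulation direction of \Cref{thm:robustcompiler}, and the paper's own argument does not require it. The cleaner observation, which the paper uses implicitly, is that the unique scenario in which $\Hyb_1$ exceeds $\Hyb_2$ is: (i) ${\sf FDec}$ fails on a commitment with a valid $\pi^{\rm com}$ (so the honest platform posts $\bot$ and $\Hyb_1$ gives $0$), and (ii) the force-opened utility that $\Hyb_2$ would have assigned is negative. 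Soundness of forced decryption makes (i) negligible, and since the coalition's utility is polynomially bounded in magnitude, the expected gap is negligible. Your second paragraph already contains this argument; you should drop the suppression analysis (or replace it with the one-line observation above) and lead with the ${\sf FDec}$-failure case.
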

\begin{proof}
The only way for the coalition to gain
more utility 
in $\Hyb_1$ than $\Hyb_2$ is if 
the utility computed
through the forced-opened strategic inputs 
is negative, but in $\Hyb_1$, the  
coalition manages to make the execution abort
by not submitting an opening, and ${\sf FDec}$ fails
to produce a valid proof.  
Note that we implicitly assume that ${\sf FDec}$ always
outputs a message even if it does not output a valid proof.
By the 
soundness of forced decryption 
of ${\sf NITC}$, this happens
with negligible probability.
\end{proof}

\ignore{
\paragraph{Experiment $\Hyb_3$.}
$\Hyb_3$ is almost the same as $\Hyb_2$ except
for the following modifications. 
When we compute $\digest$ on behalf
of the honest buyers and platform,  
instead of computing the ${\sf NITC}$ 
commitments honestly, we instead replace
them with ${\sf NITC}$ commitments of $0$.

\elaine{FILL}

\begin{claim}
Suppose that ${\sf NITC}$ satisfies
IND-CCA security against adversaries of polynomial size and   
depth-$T^\epsilon$ adversaries, then, 
the coalition's expected utility 
in $\Hyb_3$ 
is only negligibly apart from its utility in $\Hyb_2$.
\end{claim}
\begin{proof}
Similar to the proof of \Cref{clm:useindcca}, 
we prove this claim through a sequence of inner hybrid experiments
denoted ${\sf H}_0, {\sf H}_1, \ldots, 
{\sf H}_n$. 
${\sf H}_0$ is the same as
$\Hyb_2$, and
for $i \in [n]$,
${\sf H}_i$ is the same as ${\sf H}_{i-1}$
except for changing the $i$-th honest buyer's ${\sf NITC}$ commitment
to a commitment of $0$.
Clearly, ${\sf H}_n = \Hyb_3$.

Due to \elaine{refer}, 
\elaine{TODO: this requires some lemma}
it suffices to show that 
no  polynomially sized,  
depth-$C\log(\cdot)$ distinguisher
can effectively distinguish 
the coalition's utility 
in each pair of adjacent hybrids, where $C$
is a suitably large constant. 

Suppose that for some $i$, 
there is some polynomially sized, 
depth-$C \log(\cdot)$ distinguisher $\algB$
that can distinguish 
the coalition's utilities in ${\sf H}_{i-1}$
and ${\sf H}_i$ with non-negligible probability.  
We will reach a contradiction by constructing the following
reduction $\algR$ that wants to break the 
IND-CCA security of ${\sf NITC}$. 
Without loss of generality, we will 
index the honest buyers using $1, 2, \ldots, n$ below, and 
the $i$-th buyer has identity ${\it id}_i$. 
\begin{itemize}[leftmargin=6mm,itemsep=1pt]
\item 
$\algR$ first prepares  
the commitments for all honest buyers except the $i$-th one.
Specifically, for any honest buyer $j < i$, 
it will compute a commitment of $0$.
For any honest buyer ranked $j \geq i+1$,  
it will compute an honest commitment of its true value $v_j$
and some random $r_j$. 
Let the resulting commitments be 
$(c_1, \pi^{\rm com}_1), \ldots, (c_{i-1}, \pi_{i-1}^{\rm com}),
 (c_{i+1}, \pi^{\rm com}_{i+1}), \ldots, (c_n, \pi_n^{\rm com})$.
This preprocessing 
only needs to be quasi-polynomial
in total work and we need not worry about its depth.
\item 
Next, $\algR$ sends 
a challenge to the IND-CCA challenger on messages
$(0,  v_i||r_i)$ where $v_i$ is $i$'s true value
and $r_i$ is some random string.
It obtains some commitment $(c_i, \pi^{\rm com}_i)$ from the IND-CCA challenger.
\item 
\ignore{
$\algR$ invokes 
the coalition's algorithm, 
$\{{\it id}_i, c_i, \pi^{\rm com}_i \}_{i \in [n]}$
to the coalition 
$\mcal{C}$. 
$\mcal{C}$ responds with $(n, {\sf digest}, r_\platform)$.
$\algR$ now runs the extractor
defined in $\Hyb_2$ with $\mcal{C}$, and extracts  
some ${\sf code}'$ which reconstructs to $\bfc$
using the reconstruction algorithm of the ${\sf RS}$ code. 
}
\end{itemize}

\end{proof}
}

\paragraph{Ideal auction and strategy.}
Note that $\Hyb_2$ can be equivalently viewed
as an ideal auction with the following coalition strategy. 
The coalition 
simply invokes the real-world
$\mcal{C}$'s algorithm, and wait for it to 
send the ${\sf NITC}$
commitments. It then verify the commitment proofs,
suppresses duplicate identities as well as any buyer that claims
the identity of $0$, and performs
force decryption as mentioned above. 
It sends the forced-opened strategic bids
to $\Fauction$. 
$\Fauction$ then tosses random coins, and computes
the outcome of the auction honestly. We then
determine the coalition's utility based on the outcome
and the true value of the buyers in $\mcal{C}$.

Summarizing the above, the coalition's expected utility in the real
world cannot be 
negligibly more than its utility in the ideal world.

\section{Ideal Auction: 2nd-Price with Reserve and Fixed Platform Fees}
\label{sec:ideal2nd}

\subsection{Ideal Auction}
We assume that the value domain is discrete like
in \Cref{sec:ascending}. 
However, in this section, the auction needs
to be parametrized by the security parameter $\lambda$
since we want to compile it to 
a real-world cryptographic protocol
using the compiler in \Cref{sec:robust}.
More formally ,
suppose the value domain $\inDomSecu$ is finite 
and $0 \in \inDomSecu$ for
every $\lambda$. 
Suppose $\inDomSecu$ 
consists of the values
$0 = \theta^1_\lambda < \theta^2_\lambda 
< \ldots 
< 
\theta^{T(\lambda)}_\lambda$.
Just like in \Cref{sec:ascending}, 
given some distribution $\mcal{D}_\lambda$ 
over $\inDomSecu$
with the cumulative distribution function $F_\lambda$ and the probability
density function $f_\lambda$, 
the virtual value 
$\phi(\theta^i_\lambda) := \theta^i_\lambda - \frac{1 - 
F_\lambda(\theta^i_\lambda)}{f_\lambda(\theta^i_\lambda)}\left(
\theta^{i+1}_\lambda - \theta^i_\lambda\right)$, and 
$\phi(\theta^T_\lambda) := \theta^T_\lambda$. \elaine{double check}
We say that the distribution $\mcal{D}_\lambda$
is {\it regular}, iff
$\phi(\cdot)$
is a strictly increasing function. 
Let the reserve price $r_\lambda$ be the smallest
$\theta^i_\lambda$ such that  $\phi(\theta^i_\lambda) \geq 0$.

We describe the ideal auction below --- recall
that to specify an ideal auction, all we need is to specify
the allocation, payment, and revenue rules: 
\begin{mdframed}
\begin{center}
{\bf Ideal auction: 2nd price with reserve and fixed platform fees}
\end{center}

\begin{itemize}[leftmargin=6mm,itemsep=1pt]
\item 
Discard any bid that is less than the reserve $r_\lambda$. 
Rank the remaining bids in order from large to small, and 
break ties randomly. 
Let $b_1 \geq b_2 \geq \ldots \geq b_n$ be the resulting vector of bids,
and rename corresponding buyers as $1, 2, \ldots, n$. 
\item 
Buyers $1, 2, \ldots, k'$ are allocated an item, where 
$k' = \min(k, n)$ --- henceforth these are called 
the confirmed buyers and confirmed bids. 
\item 
Define $b_{k' + 1} := r_\lambda$ if $n \leq k$.
Define the payment as follows 
where $A$ be the total number of bids that are equal to $b_{k' + 1}$, 
and $\alpha$ is the number of confirmed bids 
that are equal to $b_{k' + 1}$: 
\begin{itemize}[leftmargin=6mm,itemsep=1pt]
\item Any confirmed buyer 
whose bid is equal to $b_{k'+1}$ pays $b_{k' + 1}$;
\item 
Any confirmed buyer whose 
bid is strictly greater than $b_{k'+1}$ pays
$b_{k' + 1} \cdot q+ 
{\sf NextTick}(b_{k' + 1}) \cdot (1 - q)$
where $q = \frac{\alpha + 1}{A + 1}$, 
and ${\sf NextTick}(v)$ means
the smallest  
value larger than $v$ in $\inDomSecu$. 
\end{itemize}
\item 
The platform gets nothing (or a fixed fee independent of the auction),
and the seller gets all the payment. 
\end{itemize}
\end{mdframed}

\begin{theorem}[Ideal auction: second price with reserve and fixed platform fees]
Suppose that the above second-price auction with reserve
and fixed platform fees is executed
under the ideal model described in \Cref{sec:idealmodel}.
Then, the resulting ideal auction 
has the following properties: 
\begin{itemize}[leftmargin=6mm,itemsep=1pt]
\item It satisfies information-theoretic
bIC, pIC, 1-pbIC regardless of $\mcal{D}_\lambda$;
\item 
Suppose that the distribution $\mcal{D}_\lambda$
is regular for every $\lambda$, 
then the auction additionally satisfies;
Bayesian psIC, and Bayesian sIC. 
\item 
Suppose that the distribution $\mcal{D}_\lambda$
is regular for every $\lambda$, 
then the auction 
is the revenue maximizing auction among all bIC auctions;
\item 
The auction's rules can be computed 
depth $O(\log^2(L))$ where $L$ is the total input length.
\end{itemize}
\label{thm:ideal2ndprice}
\end{theorem}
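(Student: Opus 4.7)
My plan is to prove each of the four claims in turn, reducing the first three to classical facts from the single-item Vickrey/Myerson theory after accounting for the discretization and the seller/platform revenue split.

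\textbf{Step 1: bIC.} I would start by verifying Myerson's characterization (\Cref{lem:myerson}) directly for the mechanism. The allocation rule is monotone in $b_i$ since raising $b_i$ above a tie bucket weakly increases the probability of being in the top $k'$. The payment rule is the point of the slightly unusual mixed payment at the tie boundary: I would show that for any buyer $i$ and any $\bfb_{-i}$, the expected payment $\bar p_i(\cdot)$ and expected allocation $\bar x_i(\cdot)$ satisfy
\[
b \cdot (\bar x_i(b') - \bar x_i(b)) \;\leq\; \bar p_i(b') - \bar p_i(b) \;\leq\; b' \cdot (\bar x_i(b') - \bar x_i(b))
\]
for adjacent discrete values $b < b'$, by direct case analysis on whether $b, b'$ straddle the $(k'{+}1)$-st statistic, tie it, or lie strictly above/below. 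The mixed payment $b_{k'+1}\cdot q + {\sf NextTick}(b_{k'+1})\cdot(1-q)$ with $q = (\alpha+1)/(A+1)$ is precisely the value that equalises the expected marginal payment when crossing one tick across a tie, so the sandwich is tight at each tick. Since $0 \in \inDomSecu$ and the reserve truncates below it, this characterisation implies bIC under input replacement, and hence bIC in our definition.

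\textbf{Step 2: pIC and 1-pbIC.} Both are essentially immediate from the fact that the platform's revenue is a fixed constant independent of the bid profile. For pIC, no action the platform takes in the ideal functionality (injecting bids or choosing to reject) can raise its utility; rejecting zeros it and injecting bids leaves it unchanged. For 1-pbIC, the joint utility of the platform and a single colluding buyer equals the buyer's utility plus a constant, so by Step 1 the honest strategy is already optimal for the buyer and nothing the platform does on the side (injecting extra bids) can help --- extra injected bids can only raise the $(k'{+}1)$-st statistic, weakly hurting the colluder.

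\textbf{Step 3: Bayesian sIC and psIC under regular $\mcal{D}_\lambda$.} Here I would invoke Myerson's optimal auction theorem in its discrete form. Under a regular $\mcal{D}_\lambda$, the allocation rule that maximises expected virtual welfare subject to monotonicity is exactly ``award to the top $k$ bidders whose virtual value is non-negative,'' which coincides with our reserve-$r_\lambda$ top-$k$ rule; the Myerson payment identity then fixes the expected revenue, and Step 1 already showed our payment rule realises that expected revenue. Hence our mechanism's expected seller revenue equals $\max_\Pi \E[\text{revenue of }\Pi]$ over all bIC mechanisms. Since both the seller's and the platform-seller coalition's strategy space in the ideal model amount to choosing a (possibly randomised) vector of extra bids and then possibly aborting, each such strategy induces another bIC mechanism over the honest buyers (with the coalition simply reporting their chosen fake bids deterministically), and so its expected revenue to the coalition cannot exceed the honest expected revenue. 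This simultaneously gives Bayesian sIC, Bayesian psIC, and revenue optimality (the third bullet).

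\textbf{Step 4: Depth.} I would implement the rule as: (a) a depth-$O(\log L)$ comparator to drop sub-reserve bids; (b) an $O(\log^2 L)$-depth bitonic/AKS sorting network on the surviving bids with identities as tie-breakers; (c) $O(\log L)$-depth prefix computations to determine $A$, $\alpha$, and $b_{k'+1}$; and (d) a constant-fan-in arithmetic circuit of depth $O(\log L)$ to compute each payment, allocation bit, and the seller's revenue. Composing gives $O(\log^2 L)$ depth overall, which matches the bound required by the compiler of \Cref{sec:robustcompiler}.

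\textbf{Main obstacles.} The only non-routine step is Step 1: handling the randomised tie-breaking and the mixed payment carefully enough to derive the exact Myerson sandwich at the single tick that straddles the tie --- this is where the specific form $q=(\alpha+1)/(A+1)$ is essential, and I expect the verification to be a short but careful combinatorial calculation of $\Pr[\text{buyer }i\text{ allocated}]$ conditioned on $b_i$ moving from one tick to the next. Step 3 is conceptually classical but requires being careful that the coalition's strategy truly induces a bIC mechanism on the honest buyers (rather than just any mechanism), which follows because the honest buyers' experience is unchanged conditional on the coalition's injected bids.
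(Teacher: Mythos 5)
Your decomposition tracks the paper's closely, and Steps 2 and 4 are essentially correct (though for depth you should insist on AKS rather than bitonic: bitonic has $\Theta(\log^2 n)$ comparator layers, giving depth $\Theta(\log^2 n\cdot\log\lambda)$, which is not in general $O(\log^2 L)$; the paper uses AKS precisely to get $O(\log n)$ layers). The two places where you have real gaps are Steps 1 and 3.

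In Step 1, the conclusion ``implies bIC under input replacement, and hence bIC in our definition'' is a non sequitur. The ideal model explicitly lets a buyer submit zero to multiple bids, so bIC must also rule out deviations where the buyer injects fake identities, which Myerson's sandwich says nothing about. You actually state the needed argument --- that injected bids can only weakly raise the $(k'{+}1)$-st statistic --- but you attach it only to the 1-pbIC analysis in Step 2. The paper handles fake bids as a separate, explicit case inside the bIC proof (any fake bid below $b^*$ leaves the outcome unchanged; any fake bid at or above $b^*$ can only raise the winner's payment). You should port that argument into Step 1.

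In Step 3 you take a genuinely different, cleaner route than the paper: rather than the per-honest-buyer virtual-value computation the paper carries out via Elkind's Lemma~\ref{lem:revopt} ($\E[p_i(\bfb,\bfb^*)] = \E[x_i(\bfb,\bfb^*)\phi(b_i)] \le \E[x_i(\bfb)\phi(b_i)]$), you reduce psIC and sIC to the observation that any coalition strategy induces a bIC mechanism over the honest buyers, whose expected revenue is therefore bounded by the optimal bIC revenue, which the honest mechanism attains. That is a nice abstraction. But it leans on the claim that ``the Myerson payment identity then fixes the expected revenue,'' which is false over a discrete value domain: as the paper emphasizes, only the payment sandwich holds, and different bIC payment rules compatible with the same allocation rule realize different revenues. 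To close your argument you must verify, as the paper does, that the mixed tie-breaking payment rule agrees with Elkind's revenue-optimal discrete payment formula (equation~\ref{eqn:optpay}), so that expected revenue equals $\E\bigl[\sum_i x_i(\bfb)\phi(b_i)\bigr]$. Your remark that ``the sandwich is tight at each tick'' gestures at this but does not establish it.
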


The proof of \Cref{thm:ideal2ndprice}
is presented in \Cref{sec:proof-2nd}.

\subsection{Applying the Ideal-to-Real Compiler}

The above \Cref{thm:ideal2ndprice}, combined with 
the ideal-to-real compiler
described 
in \Cref{thm:robustcompiler} of \Cref{sec:robustcompiler}, 
as well as  
\Cref{thm:idealreal}, 
immediately leads to the following corollary:

\begin{corollary}[Achieving bIC, pIC, pbIC, Bayesian psIC and Bayesian sIC in $O(1)$ rounds]
Suppose that the strong repeated squaring assumption
and the hardness of the DDH and DCR problems in suitable groups
hold against quasi-polynomially sized adversaries.  
Then, there exists an auction that satisfies
computational bIC, pIC, pbIC regardless of the value distribution $\mcal{D}$; 
further, assuming that the value distribution $\mcal{D}$ is regular, then 
it additionally satisfies
computational
Bayesian psIC, and Bayesian sIC. 
Moreover, 
the auction satisfies the following properties:
\begin{itemize}[leftmargin=6mm,itemsep=1pt]
\item 
It completes in constant number of rounds
with $\widetilde{O}_\lambda(n)$ total communication and computation,  
and each buyer and the seller's computation and communication
is bounded by $\widetilde{O}_\lambda(1)$. 
\item 
When all buyers' values are sampled independently from a regular
distribution $\mcal{D}$, 
the auction maximizes the expected
revenue among all bIC auctions. 
\end{itemize} 
\end{corollary}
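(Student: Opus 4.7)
The plan is to simply assemble the three main components developed in the paper. First, I would instantiate the ideal auction using the second-price mechanism with reserve and fixed platform fees from \Cref{sec:ideal2nd}: by \Cref{thm:ideal2ndprice}, this ideal-world auction satisfies information-theoretic bIC, pIC, and $1$-pbIC unconditionally, and additionally Bayesian psIC and Bayesian sIC whenever $\mcal{D}_\lambda$ is regular; moreover, for regular distributions it is the revenue-maximizing mechanism among all bIC auctions, and its rules (discard, sort, and compute the $(k+1)$-st price with randomized tie-breaking) can be implemented in depth $O(\log^2 L)$ on inputs of length $L$, which is well within the $L^{o(1)}$ depth bound required by the compiler.

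Next, I would feed this ideal auction into the compiler of \Cref{sec:robustcompiler}, instantiating the cryptographic building blocks as follows: the vector commitment ${\sf VC}$ via a Merkle tree under a collision-resistant hash, the Reed--Solomon code ${\sf RS}$ with standard parameters, the succinct argument of knowledge ${\sf AoK}$ via Kilian's construction on top of a quasilinear PCP, and most importantly the non-malleable timed commitment ${\sf NITC}$ via the Chvojka--Jager scheme. The hypotheses of \Cref{thm:robustcompiler} then reduce to: (i) computational binding and IND-CCA security of ${\sf NITC}$ against quasi-polynomial, $T^\epsilon$-depth adversaries, which Chvojka--Jager provide under the strong repeated squaring, DDH, and DCR assumptions against quasi-polynomial adversaries; (ii) collision resistance of the hash (implied by DDH in the standard way); and (iii) adaptive knowledge soundness of Kilian against polynomial adversaries. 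With $T(\lambda)$ and $\epsilon$ chosen so that $T^\epsilon$ strictly exceeds the depth budget $T_2+T_3+D_{\rm auction}(\lambda^{\alpha(\lambda)}) + \poly\log\lambda$, the parameter constraints of the compiler are satisfied.

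\Cref{thm:robustcompiler} then yields a real-world protocol $\Pi^{\rm real}_\lambda$ which is a utility-dominated emulation of $\Pi^{\rm ideal}_\lambda$ with respect to every coalition of interest (an individual buyer, the seller alone, the platform alone, the platform--seller coalition, and any platform--buyer coalition of size at most one). Applying \Cref{thm:idealreal} to each such coalition separately translates the information-theoretic IC properties of $\Pi^{\rm ideal}_\lambda$ into their computational counterparts for $\Pi^{\rm real}_\lambda$, giving computational bIC, pIC, $1$-pbIC, Bayesian psIC, and Bayesian sIC. For the revenue claim, note that utility-dominated emulation guarantees utility equivalence under an honest execution; in particular the seller's expected revenue under honest play in $\Pi^{\rm real}_\lambda$ equals that of $\Pi^{\rm ideal}_\lambda$, so by \Cref{thm:ideal2ndprice} the compiled auction maximizes expected revenue among all bIC mechanisms when $\mcal{D}$ is regular. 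The efficiency bounds ($O(1)$ rounds, $\widetilde{O}_\lambda(1)$ per-player cost, $\widetilde{O}_\lambda(n)$ total cost) are read off directly from the asymptotic analysis accompanying the compiler, where the $O(1)$-round count uses the fact that all challenge/response subroutines are constant-round and the ${\sf AoK}$ proof can be reused across verifiers.

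\textbf{Where the difficulty lies.} The routine parts are the two applications of prior theorems; the only genuinely technical step is checking the parameter constraint for ${\sf NITC}$. One must verify that Chvojka--Jager's IND-CCA guarantee (against quasi-polynomial-size, $T^\epsilon$-depth attackers) composes correctly with the depth of the reduction built in \Cref{clm:useindcca}, which contains the extractor for the ${\sf RS}$-based challenge-response protocol and the computation of the coalition's utility. Since the auction's rules admit $O(\log^2 L)$ depth and the Reed--Solomon decoder used here is also logarithmic depth, picking any super-constant $\alpha(\lambda)$ small enough that $D_{\rm auction}(\lambda^{\alpha(\lambda)}) = \poly\log\lambda$ and a suitable $T(\lambda)$ makes the constraint go through; this is the one place where the choices must be made carefully, but no new ideas beyond those already developed in \Cref{sec:robust} are needed.
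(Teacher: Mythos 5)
Your proof is correct and takes essentially the same approach as the paper, which states the corollary as an immediate consequence of \Cref{thm:ideal2ndprice}, \Cref{thm:robustcompiler}, and \Cref{thm:idealreal}; you spell out the same assembly, including the instantiation of the NITC via Chvojka--Jager, the depth check on the auction's rules against the $L^{o(1)}$ requirement, and the parameter constraint on $T^\epsilon$.
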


\subsection{Proof of \Cref{thm:ideal2ndprice}}
\label{sec:proof-2nd}

\paragraph{Additional preliminaries.}
We will rely on the following technical
lemma proven by Elkind~\cite{discrete-auction}.  

\ignore{
\begin{claim}[Restatement of Theorem 1 in \cite{discrete-auction}]
\label{lem:payment-bound-virtual-value-2nd}
Suppose each buyer's true value is i.i.d.~sampled from a discrete distribution $\mcal{D}$.
Then, for any auction $[\bfx, \bfp, \mu_{\seller}]$ that satisfies information-theoretic Bayesian bIC,
and for any buyer $i$,
the expected payment of buyer $i$ must satisfy
\begin{equation}
\label{eq:payment-upper-bound-2nd}
\mathop{\E}_{\bfb \getr \mcal{D}^{n}} [p_i(\bfb)]
\leq
\mathop{\E}_{\bfb \getr \mcal{D}^{n}} \left[
x_i(\bfb) \cdot \phi(b_i)
\right].
\end{equation}
The statement holds no matter $\mcal{D}$ is regular or not.
\end{claim}
}

\begin{lemma}[Technical lemma from \cite{discrete-auction}]
\label{lem:revopt}
Suppose that the value domain $\inDom$
is a finite set consisting of $0 = \theta^0 < \theta^1 < \ldots < \theta^T$.   
Suppose each buyer's true value is sampled independently 
from some distribution $\mcal{D}$ over $\inDom$.
Given any $\bfb_{-i} \in \inDom^*$, 
any non-decreasing allocation rule $\bfx$,
suppose that the payment rule is defined as follows: 
\begin{equation}
    \label{eqn:optpay}
    p_i(\theta^\tau, \bfb_{-i}) = \theta^\tau \cdot x_i(\theta^\tau, \bfb_{-i}) - 
\sum_{j = 1}^\tau \left(\theta^j - \theta^{j-1}\right) \cdot x_i\left(\theta^{j-1}, \bfb_{-i}\right),
\end{equation}
Then, the resulting auction is incentive compatible for any individual
buyer that is restricted to input replacement strategies. Moreover, 
when all buyers' bids are sampled independently from $\mcal{D}$, 
it is the revenue-maximizing auction 
subject to bIC 
and the allocation rule ${\bf x}$. 
Specifically, given an arbitrary $\bfb_{-i}$, the expected
payment from any  
buyer $i$ is the following:  
\ignore{
\[
\mathop{\E}_{\bfb \getr \mcal{D}^{n}} [p_i(\bfb)]
= \mathop{\E}_{\bfb \getr \mcal{D}^{n}} \left[
x_i(\bfb) \cdot \phi(b_i)
\right]
\]
}
\[
    \mathop{\E}_{b_i \getr \mcal{D}} [p_i(b_i,\bfb_{-i})]
    =
    \mathop{\E}_{b_i \getr \mcal{D}} \left[
    x_i(b_i,\bfb_{-i}) \cdot \phi(b_i)\right].
\]
\end{lemma}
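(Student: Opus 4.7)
The plan is to establish three claims in sequence: (a) the payment rule in \Cref{eqn:optpay}, together with the assumed monotone allocation $\bfx$, yields an auction that is information-theoretically bIC under input-replacement strategies; (b) the expected payment formula $\E_{b_i\getr\mcal{D}}[p_i(b_i,\bfb_{-i})] = \E_{b_i\getr\mcal{D}}[x_i(b_i,\bfb_{-i})\phi(b_i)]$ holds pointwise in $\bfb_{-i}$; (c) no other payment rule paired with $\bfx$ can achieve a larger expected payment while remaining bIC. Claim (c) will then follow directly from Myerson's Lemma (\Cref{lem:myerson}), which gives a universal upper bound on per-unit-of-allocation payment differences.

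For (a), I would fix $\bfb_{-i}$ and the true value $\theta^t$ and study $u(\tau) := \theta^t x_i(\theta^\tau,\bfb_{-i}) - p_i(\theta^\tau,\bfb_{-i})$ as a function of the reported index $\tau$. A short calculation using \Cref{eqn:optpay} yields the telescoping identity
\[
p_i(\theta^\tau,\bfb_{-i}) - p_i(\theta^{\tau-1},\bfb_{-i}) = \theta^\tau\bigl(x_i(\theta^\tau,\bfb_{-i}) - x_i(\theta^{\tau-1},\bfb_{-i})\bigr),
\]
so $u(\tau) - u(\tau-1) = (\theta^t - \theta^\tau)\bigl(x_i(\theta^\tau,\bfb_{-i}) - x_i(\theta^{\tau-1},\bfb_{-i})\bigr)$. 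Since $\bfx$ is monotone non-decreasing in $b_i$, this increment is nonnegative for $\tau \le t$ and nonpositive for $\tau > t$, so truthful reporting is optimal. Individual rationality (needed to rule out that a buyer would rather not participate than submit its true value) follows by summing the telescoping identity from $\tau = 0$, using $x_i(\theta^0,\bfb_{-i})\ge 0$ and $p_i(\theta^0,\bfb_{-i})=0$.

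For (b), I would substitute \Cref{eqn:optpay} into $\sum_\tau f(\theta^\tau) p_i(\theta^\tau,\bfb_{-i})$ and swap the order of summation in the double sum. The coefficient of $(\theta^j - \theta^{j-1}) x_i(\theta^{j-1},\bfb_{-i})$ collects into $\sum_{\tau\ge j} f(\theta^\tau) = 1 - F(\theta^{j-1})$. Reindexing $\tau := j-1$ yields
\[
\E_{b_i}[p_i(b_i,\bfb_{-i})] = \sum_\tau f(\theta^\tau)\theta^\tau x_i(\theta^\tau,\bfb_{-i}) - \sum_{\tau} (1 - F(\theta^\tau))(\theta^{\tau+1}-\theta^\tau)x_i(\theta^\tau,\bfb_{-i}),
\]
which exactly matches $\sum_\tau f(\theta^\tau)\phi(\theta^\tau)x_i(\theta^\tau,\bfb_{-i})$ by the definition of $\phi$. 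The main thing to be careful about is the boundary term at $\tau = T$, where $\phi(\theta^T)=\theta^T$ by convention and there is no $\theta^{T+1}$ term to worry about; the telescoping handles this correctly since $1-F(\theta^T)=0$.

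For (c), Myerson's payment sandwich (\Cref{lem:myerson}) implies that for \emph{any} bIC payment rule $p_i'$ paired with the same monotone allocation $\bfx$, the differences $p_i'(\theta^\tau,\bfb_{-i}) - p_i'(\theta^{\tau-1},\bfb_{-i})$ are bounded above by $\theta^\tau(x_i(\theta^\tau,\bfb_{-i})-x_i(\theta^{\tau-1},\bfb_{-i}))$, which is exactly what our rule achieves. Together with $p_i'(\theta^0,\bfb_{-i}) \ge 0 = p_i(\theta^0,\bfb_{-i})$ (by individual rationality), summing these bounds gives $p_i'(\theta^\tau,\bfb_{-i}) \le p_i(\theta^\tau,\bfb_{-i})$ pointwise, and therefore the expectation is maximized. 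The only mildly delicate point is that Myerson's sandwich must be applied in the discrete-domain form already stated in \Cref{lem:myerson}; no continuous version is needed. I expect the algebraic rearrangement in step (b) to be the most error-prone piece, mainly in tracking the endpoint $\tau=T$ and the index shift $j \mapsto \tau + 1$, but it is purely mechanical.
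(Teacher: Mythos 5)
Your proposal is essentially a self-contained proof of a lemma that the paper itself does not prove in-house but imports from Elkind's work on discrete-domain optimal auctions; your three-part decomposition (truthfulness via telescoping, the virtual-value identity via summation by parts, and optimality via the Myerson sandwich) is exactly the standard discrete-Myerson argument, and all of the algebra in steps~(a) and~(b) checks out. The boundary remark about $\tau=T$ is correct: the inner sum over $j$ runs to $T$, so after the shift $j\mapsto\tau+1$ the outer index runs to $T-1$, which matches $\phi(\theta^T)=\theta^T$ having no $(1-F)/f$ term.

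There is one sign error in step~(c) that flips the direction of the anchoring inequality. You wrote that individual rationality gives $p_i'(\theta^0,\bfb_{-i}) \ge 0 = p_i(\theta^0,\bfb_{-i})$, but this is the wrong direction for the telescoping to close: summing $p_i'(\theta^\tau)-p_i'(\theta^{\tau-1}) \le \theta^\tau\bigl(x_i(\theta^\tau)-x_i(\theta^{\tau-1})\bigr) = p_i(\theta^\tau)-p_i(\theta^{\tau-1})$ from $\tau=1$ to $\tau=t$ yields $p_i'(\theta^t) \le p_i(\theta^t) + p_i'(\theta^0)$, so you need $p_i'(\theta^0) \le 0$, not $\ge 0$. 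Individual rationality at the true value $\theta^0 = 0$ in fact gives exactly this: $\theta^0\cdot x_i(\theta^0) - p_i'(\theta^0) \ge 0$ means $p_i'(\theta^0) \le 0$. (If one also imposes the model's non-negativity of payments, then $p_i'(\theta^0) = 0$, which of course still works.) With that one inequality corrected, the argument is sound and matches the cited Elkind result.
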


We now prove the claimed properties one by one. 
\ignore{
\paragraph{bIC and 1-pbIC.}
Since the platform's revenue is always 0, 
proving bIC will immediately lead to 1-pbIC. 
Therefore, we focus on proving bIC below. 

The payment amount in the above auction must be at least
the reserve $r_\lambda$, so for any buyer whose value
is less than the reserve $r_\lambda$, behaving truthfully 
and getting utility 0 is the best strategy.  
Now consider any buyer whose value $v \geq r_\lambda$:
\begin{itemize}[leftmargin=6mm,itemsep=1pt]
\item 
Case 1: $n \leq k$.  In this case, the buyer 
gets an item and pays $r_\lambda$ as long as it bids at least $r_\lambda$. 
Otherwise, it does not get an item and its utility is $0$.
Clearly, behaving honestly maximizes its utility. 
\item 
Case 2: $n > k$. 
If the buyer's true value is strictly less than $b_k$, 
it does not get an item as long as it bids anything less than $b_k$.
If it bids $b_k$ or higher and gets an item, then its payment 
would be at least $b_k$, and thus its  
utility is negative. Thus, behaving honestly and getting a utility of 0
is the best strategy.

If the buyer's true value $v \geq b_k$, then 
we can go through the following cases to show 
that bidding truthfully maximizes the buyer's expected utility: 
\begin{itemize}[leftmargin=6mm,itemsep=1pt]
\item 
If the buyer bids anything strictly greater than $b_{k+1}$, 
then it gets an item and its payment is  
$b_{k+1} \cdot q + {\sf NextTick}(b_{k+1}) \cdot (1-q)$ where 
$q = (\alpha+1)/(A+1)$. 
Its expected utility is 
$U_0 := v - (b_{k+1} \cdot q + {\sf NextTick}(b_{k+1}) \cdot (1-q))$.  
\item 
If the buyer bids $b_{k+1}$, it will get an item 
and pay $b_{k+1}$ with probability $q = (\alpha+1)/(A+1)$. So its utility is 
$U_1 := (v-b_{k+1}) \cdot q$. 
\item 
If the buyer bids less than $b_{k+1}$ 
it will not get an item and its utility is $0$. 
\end{itemize}
Observe that  
if $v =  b_{k+1}$, then $0 = U_1 > U_0$. 
Else if $v > b_{k+1}$, then $U_0 \geq U_1 > 0$.  
In either case, bidding truthfully maximizes the utility. 
}

\ignore{
If the buyer's true value $v \geq b_k = b_{k+1}$, then 
we have the following cases which show 
that bidding truthfully maximizes the buyer's expected utility: 
\begin{itemize}[leftmargin=6mm,itemsep=1pt]
\item 
If it bids strictly less than $b_k = b_{k+1}$, it does not get an item and its
utility is $0$.
\item 
If it bids truthfully, it may get an item
\elaine{FILL}
\item 
If it bids strictly greater than 
$v = b_k = b_{k+1}$, 
then it gets an item but its payment is strictly greater than  
$v = b_k = b_{k+1}$, so it has negative expected utility. 
\end{itemize}
}

\ignore{
If the buyer's true value $v > b_k$, then 
\begin{itemize}[leftmargin=6mm,itemsep=1pt]
\item
Bidding anything strictly greater than $b_{k+1}$ (including
truthful bidding) gets the buyer an item, 
and its payment is 
$b_{k+1} \cdot q + {\sf NextTick}(b_{k+1}) \cdot (1-q)$
where 
$q = \frac{\alpha+1}{A + 1}$. 
So its utility is $v - \left(b_{k+1} \cdot q + {\sf NextTick}(b_{k+1}) \cdot (1-q) 
\right)$. 
\item 
Bidding exactly $b_{k+1}$ 
allows it to get an item and pay 
$b_{k+1}$ with probability $q = \frac{\alpha+1}{A + 1}$. 
So its expected utility is 
$q \cdot (v - b_{k+1})$. 
\end{itemize}
\end{itemize}
}



\paragraph{bIC.}
It is not hard to check that the allocation rule is monotone, and the
auction's payment rule agrees with \Cref{eqn:optpay} specified
in \Cref{lem:revopt}. Thus, 
by \Cref{lem:revopt}, any input replacement strategy cannot benefit
an individual buyer.  
It remains to prove that 
injecting fake bids does not benefit the buyer.  
Let $b^* = r_\lambda$ if $n \leq k$ or $b^* = b_{k+1}$ otherwise. 
Any fake bid that is less than $b^*$ does not affect
the outcome of the auction. 
Any fake bid that is at least $b^*$ can never decrease
the payment if the buyer gets an item.  
Therefore, injecting fake bids does not help the buyer. 

\paragraph{pIC.}
Since the platform's revenue is always zero,  
pIC is trivially satisfied.

\paragraph{1-pbIC.}
Directly implied by bIC and the fact that the platform's revenue is always $0$. 

\paragraph{Bayesian psIC.}
The only possible strategies for a platform-seller coalition
in the ideal world is 1) to abort the auction causing every one's utility
to be 0, and 2) to inject fake bids after learning the number of honest bids. 
Clearly, aborting will not benefit the coalition. 
Therefore, it suffices to show that injecting fake
bids does not increase the coalition's expected utility. 

Injecting any bid that is less than $r_\lambda$ 
makes no difference to the auction. 
Therefore, let $\bfb^*$ 
be the fake bids that are at least $r_\lambda$ 
injected by the coalition. 
Let $n$ be the number of honest buyers. 
By \Cref{lem:revopt}, 
the coalition's expected revenue from a fixed honest buyer $i$ is  
\[\mathop{\E}_{\bfb \getr \mcal{D}^n} [p_i(\bfb, \bfb^*)]
    =
    \mathop{\E}_{\bfb \getr \mcal{D}^n} \left[
    x_i(\bfb, \bfb^*) \cdot \phi(b_i)\right]
\leq 
    \mathop{\E}_{\bfb \getr \mcal{D}^n} \left[x_i(\bfb) \cdot \phi(b_i)\right]
\]
where the
expression on the 
right-hand side 
   $\mathop{\E}_{\bfb \getr \mcal{D}^n} \left[x_i(\bfb) \cdot \phi(b_i)\right]$
corresponds to the expected revenue from buyer $i$
when the coalition behaves honestly. 
Therefore, injecting fake bids does not increase
the coalition's expected revenue.

\paragraph{Bayesian sIC.}
A strategic seller's only possible strategy is to inject
fake bids. 
Recall that the platform gets nothing and all revenue
goes to the seller. 
Therefore, due to the same argument as the above proof of Bayesian psIC, injecting
fake bids does not increase the seller's revenue.

\paragraph{Revenue optimality.}
Let $n$ be the number of buyers. 
By \Cref{lem:revopt}, for any bIC auction with a monotonic
allocation rules ${\bf x}$,  
the maximum possible expected revenue 
is 
$\sum_{i \in [n]}
\left(\mathop{\E}_{\bfb \getr \mcal{D}^n} \left[x_i(\bfb) \cdot \phi(b_i)\right]\right)
= 
\mathop{\E}_{\bfb \getr \mcal{D}^n} 
\left[
\sum_{i \in [n]} x_i(\bfb) \cdot \phi(b_i)
\right]
$. 
It suffices to argue that our choice
of the allocation rule 
$\bfx$ maximizes
$\mathop{\E}_{\bfb \getr \mcal{D}^n} 
\left[
\sum_{i \in [n]} x_i(\bfb) \cdot \phi(b_i)
\right]$.  
For any $\bfb$, we argue
that 
$\sum_{i \in [n]} x_i(\bfb) \cdot \phi(b_i)$ is maximized
under our choice of $\bfx$.  
This is because our $\bfx$ allocates 
one item to those with the largest virtual value $\phi(b_i)$, 
subject to allocating at most $k$ items;  
moreover, it only allocates an item to those with non-negative virtual values.

\paragraph{Small depth.}
The auction's rules can be computed by a sorting network
consisting of comparators. Each comparator can be implemented
in logarithmic depth in its input  
length, and the sorting network
needs only logarithmic layers~\cite{aks}. 
Therefore, it follows that the auction's rules
can be computed in  
$O(\log^2 (L))$ depth.




\section*{Acknowledgments}
Elaine Shi would like to thank Dakshita Khurana, 
Binyi Chen, and 
Sacha Servan-Schreiber for helpful technical discussions. 
This work is in part supported
by NSF awards under the grant numbers  
2212746 and 2044679, a Packard Fellowship,
a generous gift from the late  
Nikolai Mushegian, an an ACE center grant from the Algorand Foundation.

\newpage
\bibliographystyle{alpha}
\bibliography{refs,crypto,gametheory}
\clearpage

\appendix

\section{Preliminaries: Cryptographic Building Blocks}
\label{sec:bldgblock}
In this section, we formally define the cryptographic primitives used in \ref{sec:robustcompiler}.
Recall that we use $\secpar$ to denote the security parameter.
$\{X_{\secu}\}_{\secu\in\N}\equiv_C \{Y_{\secu}\}_{\secu\in\N}$ implies that the two distribution ensembles $\{X_{\secu}\}_{\secu\in\N}$ and $\{Y_{\secu}\}_{\secu\in\N}$ are computationally indistinguishable.
We use PPT as an abbreviation for probabilistic polynomial time.

\ignore{
\subsection{Collision-Resistant Hash Function}
We say that a family of hash functions $\mcal{H} = \{h_s:\{0,1\}^*\rightarrow\{0,1\}^{\secu}\}$ is a coalition-resistant hash function family iff
\begin{itemize}
    \item There is a sampling algorithm $s\leftarrow \Gen(1^{\secu})$ that samples a random function $h_s$ from the family $\mcal{H}$;
    \item There exists a polynomial $q$ such that for any $h_s\in\mcal{H}$ and for any $x\in\{0,1\}^*$, the running time of $h_s(x)$ is bounded by $q(|x|, \secu)$;
    \item {\bf Collision-resistance:} For any non-uniform \ppt $\A$, there exists a negligible function $\negl(\cdot)$ such that for any $\secu\in\N$, the following holds:
    \[\Pr[s\leftarrow \Gen(1^{\secu}), (x,y)\leftarrow\A(1^{\secu}, s): x\neq y, h_s(x) = h_s(y) ]\leq \negl(\secu).\]
\end{itemize}
}

\subsection{Vector Commitment}


A vector commitment scheme is a tuple of algorithms
$({\sf Gen}, {\sf Digest}, {\sf Open}, {\sf Vf})$:
\begin{itemize}[leftmargin=6mm,itemsep=1pt]
\item 
$\crs \leftarrow {\sf Gen}(1^\lambda)$: 
on input the security parameter $1^\lambda$, 
output
a common reference string $\crs$; 
\item 
$({\sf digest}, {\sf aux}) \leftarrow {\sf Digest}(\crs, m)$:
given $\crs$ and a message $m$, output
a digest ${\sf digest}$ 
and some auxiliary information ${\sf aux}$ --- we may assume
that ${\sf aux}$ contains the message length $\ell := |m|$;
\item 
${\sf Open}(\crs, {\sf aux}, Q)$: on input $\crs$, 
auxiliary information 
${\sf aux}$ (assumed to contain the message length $\ell$), 
and  
a query set $Q \subseteq [\ell]$,
output an opening proof $\pi$ that $m[Q]$ is a restriction
of $m$ to the indices $Q$;
\item 
$(0, 1) \leftarrow {\sf Vf}(\crs, 
\ell, 
{\sf digest}, Q, {\sf ans}, \pi)$:
on input $\crs$, 
message length $\ell$, 
$\digest$, a query set $Q \subseteq [\ell]$,  
a purported answer ${\sf ans}$, and a proof $\pi$, 
outputs either $0$ or $1$ indicating  
reject or accept. 
\end{itemize}

\elaine{note that the verification needs the length ell}

\paragraph{Correctness.}
Correctness requires that for any $\lambda \in \mathbb{N}$, 
any $\ell$, 
any message $m\in \{0, 1\}^\ell$, 
any $Q \subseteq [\ell]$, 
the following holds with probability 1:
let 
$\crs \leftarrow {\sf Gen}(1^\lambda)$, 
$(\digest, {\sf aux}) \leftarrow {\sf Digest}(\crs, m)$, 
$\pi \leftarrow {\sf Open}(\crs, {\sf aux}, Q)$, 
then it holds
that 
${\sf Vf}(\crs, \ell, \digest, Q, m[Q], \pi) = 1$.

\paragraph{Collision resistance.}
We say that a vector commitment scheme satisfies
collision resistance against size-$W(\cdot)$ adversaries,
iff for any non-uniform probabilistic
machine $\algA(1^\lambda, *)$ whose total work 
is bounded by $W(\lambda)$, 
there exists a negligible function $\negl(\cdot)$ 
such that for every $\lambda \in \mathbb{N}$, 
the probability that the following experiment
outputs $1$ is at most $\negl(\lambda)$: 
\begin{itemize}[leftmargin=6mm,itemsep=1pt]
\item 
$\crs \leftarrow \Gen(1^\lambda)$; 
\item 
$(\ell, \digest, {\sf ans}, {\sf ans}', Q, Q', \pi, \pi') \leftarrow 
\algA(1^\lambda, \crs)$ where $Q, Q' \subseteq [\ell]$; 
\item 
Output $1$ if 
${\sf Vf}(\crs, \ell, \digest, Q, {\sf ans}, \pi)
= {\sf Vf}(\crs, \ell, \digest, Q', {\sf ans}', \pi')
$; however, 
there is some $i \in Q \cap Q'$ such that 
${\sf ans}$ and ${\sf ans}'$ contain
different answers for the index $i$.  
\end{itemize}

Merkle~\cite{merkle} showed how to build such a vector
commitment scheme secure against polynomially sized adversaries
(or quasi-polynomially 
sized adversaries resp.) 
 assuming the existence
of a collision resistant hash family secure
against polynomially sized adversaries (or quasi-polynomially sized
adversaries).

\ignore{
\subsection{Non-Interactive Commitment Scheme}

A non-interactive commitment scheme is a probabilistic 
polynomial-time algorithm:
\begin{itemize}[leftmargin=6mm]
\item 
$c \leftarrow \Comm(1^\lambda, \msg, \coin)$:
takes 
in the security parameter $1^\lambda$, a message
$\msg \in \{0, 1\}^{\ell(\lambda)}$ where $\ell(\cdot)$ is a 
polynomial function, 
and randomness $\coin \in \{0, 1\}^\lambda$,
and outputs a commitment $c$.
\end{itemize}

\paragraph{Perfect binding.} 
A commitment scheme is \emph{perfectly binding} iff for any $\msg \neq \msg'$, and for any randomness $\coin, \coin'$, 
it holds that 
$\Comm(\msg, \coin) \neq \Comm(\msg', \coin')$.

\paragraph{Computationally hiding.} 
For any $\msg, \msg' \in \{0, 1\}^{\ell(n)}$, 
the following ensembles are computationally indistinguishable:
\[
\left\{\coin \getr \{0,1\}^\secu: \text{output }\Comm(1^\lambda, \msg, \coin)\right\}_\lambda
\text{ and } 
\left\{\coin \getr \{0,1\}^\secu: \text{output }\Comm(1^\lambda, \msg', \coin)\right\}_\lambda
\]
}

\ignore{
\subsection{Non-Interactive Zero-Knowledge Proofs}

A non-interactive zero-knowledge proof system (NIZK) for an NP language $\mcal{L}$ consists of the following five algorithms:
\begin{itemize}
\item 
$\crs \leftarrow \Gen(1^\secu, \mcal{L})$: 
takes as inputs a security parameter $\secu$, a description of a NP language $\mcal{L}$, 
and outputs a common reference string $\crs$.
\item 
$\pi \leftarrow \Prove(\crs, \stmt, w)$: 
takes as inputs a common reference string $\crs$, a statement $\stmt$, a witness $w$ such that $(\stmt, w) \in \mcal{L}$, 
and outputs a proof $\pi$.
\item 
$b \leftarrow \Ver(\crs, \stmt, \pi)$: 
takes as inputs a common reference string $\crs$, a statement $\stmt$, and a proof $\pi$, 
and outputs $b \in \{\acc,\rej\}$.
\item 
$(\crs, \tau, \ek) \leftarrow \widetilde{\Gen}(1^\secu, \mcal{L})$: 
generates a simulated common reference string $\crs$, trapdoor $\tau$, and an extract key $\ek$.
\item 
$\pi \leftarrow \widetilde{\Prove}(\crs, \tau, \stmt)$: 
uses the trapdoor $\tau$ to produce a proof $\pi$ without needing a witness.
\end{itemize}

\paragraph{Perfect completeness.} 
Let $R_{\mcal{L}}$ denote the set $\{(\stmt, w): \stmt \in \mcal{L} \wedge w \text{ is a witness for } \stmt\}$.
A NIZK system is said to be \emph{perfectly complete} if for any $(\stmt, w) \in R$,
it holds that \[
\Pr\left[\crs \leftarrow \Gen(1^\secu, \mcal{L}), \pi \leftarrow \Prove(\crs, \stmt, w)
: \Ver(\crs, \stmt, \pi) = \acc\right] = 1.
\]

\paragraph{Computational zero-knowledge.}
A NIZK system is said to be \emph{computationally zero-knowledge} if there exists a
polynomial-time simulator $(\widetilde{\Gen}, \widetilde{\Prove})$ such that for all non-uniform polynomial-time adversary $\mcal{A}$,
there exists a negligible function $\negl$ such that
\begin{multline*}
\left|
\Pr \left[\crs \leftarrow \Gen(1^\secu, \mcal{L}) : \mcal{A}^{\Prove(\crs,\cdot,\cdot)}(\crs) = 1\right]\right.\\
-
\left.\Pr\left[(\crs, \tau, \ek) \leftarrow \widetilde{\Gen}(1^\secu, \mcal{L}) : \mcal{A}^{\Prove'(\crs,\tau,\cdot)}(\crs) = 1\right]
\right| < \negl(\secu),
\end{multline*}
where $\Prove'(\crs,\tau,\cdot)$ is an oracle such that on receiving a query $(\stmt, w)$,
returns $\pi = \widetilde{\Prove}(\crs, \tau, \stmt)$ if $(\stmt, w) \in R_{\mcal{L}}$,
and aborts the experiment if $(\stmt, w) \notin R_{\mcal{L}}$.

\paragraph{Simulation extractability.}
A NIZK system is said to be \emph{simulation extractable} if it satisfies computational zero-knowledge and additionally, 
there exists a polynomial-time algorithm $\mcal{E}$ such that for any polynomial-time adversary $\mcal{A}$, 
there exists a negligible function $\negl$ such that \[
\Pr\left[
    \begin{array}{c}
        (\crs, \tau, \ek) \leftarrow \widetilde{\Gen}(1^\secu, \mcal{L}); \\
        (\stmt, \pi) \leftarrow \mcal{A}^{\widetilde{\Prove}(\crs,\tau,\cdot)}(\crs); \\
        w \leftarrow \mcal{E}(\widetilde{\crs}, \ek, \stmt, \pi)
   \end{array}
   :\ 
   \stmt \notin Q \wedge (\stmt, w) \notin R_{\mcal{L}} \wedge
   \Ver(\widetilde{\crs}, \stmt, \pi) = \acc
\right] < \negl(\secu),
\]
where $Q$ is the list of oracle queries made by $\mcal{A}$ to $\widetilde{\Prove}(\crs,\tau,\cdot)$.
}

\elaine{I removed NIZK and comm, i will rewrite non-robust sec}

\subsection{Argument of Knowledge}
\label{sec:argument-of-knowledge}

\elaine{TODO: change to quasipoly extractor!!}
Let $\mcal{L}_\lambda$ denote an NP language paramtrized by $\lambda$. 
An argument of knowledge system 
consists of the following algorithms:
\begin{itemize}[leftmargin=6mm,itemsep=1pt]
\item 
$\crs \leftarrow 
\Gen(1^\lambda, \mcal{L}_\lambda)$: 
a randomized algorithm that samples and 
outputs a common reference string $\crs$.
\item 
$b \leftarrow \langle P(\crs, \stmt, w),  V(\crs, \stmt) \rangle$: 
a pair of 
randomized  interactive 
algorithms where 
where $P$ denotes the prover and $V$ denotes the verifier. 
Both $P$ and $V$ receive
the common reference string $\crs$ and the statement $\stmt$,
and $P$ additionally receives 
a witness $w$ as input. The two then  
engage in an interactive protocol at the end of which
the verifier outputs $b \in \{0, 1\}$ indicating
reject or accept. 
\end{itemize}
\ignore{
For any $\secu \in \N$, let $\P^*_{\secu, \aux, r_{\P}}$ denote the prover $\P_\secu^*$ with auxiliary input $\aux$ and randomness $r_{\P}$,
and let $\V_{r_{\V}}$ denote the verifier $\V$ using randomness $r_{\V}$.
Given any statement $\stmt$, auxiliary information $\aux$, and randomness $r_{\P}, r_{\V}$,
we use $\langle\P^*_{\secu, \aux, r_{\P}}(1^{\secu}, \stmt), \V_{r_{\V}}^*(1^{\secu}, \stmt)\rangle \in \{0,1\}$ to denote whether $\V^*$ rejects or accepts the execution between a prover $\P^*_{\secu, \aux, r_{\P}}(1^{\secu}, \stmt)$ and a verifier $\V_{r_{\V}}^*(1^{\secu}, \stmt)$.
Let $\ell(\secu)$ be the upper bound on the number of random bits.
In this work, we consider an argument of knowledge scheme that satisfies the following properties.
}

\paragraph{Perfect completeness.}
We say an argument of knowledge scheme satisfies \emph{perfect completeness}
if for any $\secu \in \N$, 
for any NP language $\mcal{L}_\lambda$ whose corresponding NP 
relation is denoted $\mcal{R}_\lambda$, 
for any 
statement $(\stmt, w) \in \mcal{R}_\lambda$, 
the following holds with probability $1$:
let $\crs \leftarrow 
\Gen(1^\lambda, \mcal{L}_\lambda)$, 
then $\langle P(\crs, \stmt, w),  V(\crs, \stmt) \rangle = 1$. 


\paragraph{Adaptive knowledge soundness.}
We say that an argument of knowledge 
scheme satisfies adaptive knowledge soundness,
iff for any  NP language $\mcal{L}_\lambda$ with the corresponding NP
relation $\mcal{R}_\lambda$, 
 there exists a probabilistic quasi-polynomial-time extractor 
$\mcal{E}$, a quasi-polynomially 
bounded function $q(\cdot)$, 
and some negligible function $\negl(\cdot)$, 
such that for every $\lambda \in \mathbb{N}$, 
any auxiliary distribution $\mcal{D}$, 
any deterministic $P^*$ that runs in time at most $t$, 
\[
\Pr\left[
(\stmt, w) \notin \mcal{R}_\lambda
\wedge b = 1 
\left| 
\begin{array}{l}
\crs \leftarrow \Gen(1^\lambda, \mcal{L}_\lambda) \\
\eta \leftarrow \mcal{D}\\
(\stmt, {\sf aux} ) \leftarrow P^*(\crs, \eta)\\ 
b \overset{{\it tr}}{\leftarrow} 
\langle P^*({\sf aux}), V(\crs, \stmt) \rangle\\
w \leftarrow \mcal{E}^{P^*({\sf aux})}(\crs, \stmt, {\it tr})
\end{array}
\right.
\right]
\leq \negl(\lambda) 
\]
and moreover, $\mcal{E}$'s running time
is upper bounded by $q(\lambda, t)$.  
In the above, 
$b \overset{{\it tr}}{\leftarrow} 
\langle P^*({\sf aux}), V(\crs, \stmt) \rangle$ means
that ${\it tr}$ is the transcript
of the execution 
$\langle P^*({\sf aux}), V(\crs, \stmt) \rangle$
which includes the $\crs$ and messages exchanged   
between $P^*$ and $V$, 
and the notation $\mcal{E}^{P^*}$
means that $\mcal{E}$ has blackbox access to each next-message
function of $P^*$.  

\paragraph{Succinctness.}
An argument of knowledge system is said
to be succinct iff the verifier's runtime is upper bounded
by $\poly(\lambda, \log |\mcal{R}_\lambda|)$
where $|\mcal{R}_\lambda|$ denotes the size of the circuit
that checks the NP relation. 

Kilian~\cite{kilian} showed  
how to construct a succinct argument of knowledge scheme 
that satisfies the aforementioned 
adaptive knowledge soundness notion as well as succinctness requirement, 
assuming
the existence of a collision resistant 
hash family secure against quasi-polynomially sized
adversaries.

\ignore{
\paragraph{Argument of Knowledge for NP.}
There exists a probabilistic oracle machine $\mcal{E}$ and a polynomial $q$ such that 
for every non-uniform probabilistic polynomial-time prover $\P^* = \{\P^*_\secu\}_\secu$,
there exists a negligible function $\negl(\cdot)$ such that for any $\secu\in\N$,
for any statement $\stmt$,
for any auxiliary input $\aux$, and for any randomness $r_{\P} \in \{0,1\}^{\ell(\secu)}$,
the following hold.
\begin{itemize}
\item 
The expected running time of $\mcal{E}^{\P^*_{\secu, \aux, r_{\P}}, \V_{r_{\V}}}(1^\secu, \stmt)$ is bounded by $q(\secu)$, 
where the expectation is over $r_{\V} \in \{0,1\}^{\ell(\secu)}$ and the random coins of $\mcal{E}$.
\hao{We can simply saying $\mcal{E}$ is expected PPT algorithm if we don't need to specify the probability space explicitly.}
\item 
It holds that
\[
\Pr\left[
\begin{array}{c}
   r_{\V} \leftarrow \{0,1\}^{\ell(\secu)},   \\
    w \leftarrow \mcal{E}^{\P^*_{\secu, \aux, r_{\P}}, \V_{r_{\V}}}(1^\secu, \stmt)
\end{array}
:
\begin{array}{c}
    \langle\P^*_{\secu, \aux, r_{\P}}(1^\secu, \stmt), \V_{r_{\V}}(1^\secu, \stmt)\rangle = 1 \\
    \wedge \ (\stmt, w)\notin\mcal{R} \\
\end{array}
\right] \leq \negl(\secu).
\]
\end{itemize}
}

\subsection{Publicly Verifiable Non-Malleable Timed Commitments}
\elaine{TODO: need security against quasipoly}

We use the definitions from Chvojka and Jager~\cite{CJ-nitc}.
A publicly verifiable, 
non-interactive timed commitment (${\sf NITC}$) with message space
$\mcal{M}$ is a tuple
of algorithms  
$(\Gen$, ${\sf Com}$, ${\sf ComVf}$, ${\sf DecVf}$, ${\sf FDec}, {\sf FDecVf})$
with the following syntax:
\begin{itemize}[leftmargin=5mm,itemsep=1pt]
\item 
$\crs \leftarrow \Gen(1^\lambda, T)$: a probabilistic
algorithm 
that takes as input the security parameter $1^\lambda$, 
a difficulty parameter $T$, and output a common reference string $\crs$. 
\elaine{do we need a sec key?}
\item 
$({\sf cm}, \pi_{\rm com}, \pi_{\rm dec}) \leftarrow {\sf Com}(\crs, m)$:
 a probabilistic algorithm that takes
as input a common reference string $\crs$ and a message
$m \in \mcal{M}$, and outputs
 a commitment ${\sf cm}$ along with proofs 
$\pi_{\rm com}$
and 
$\pi_{\rm dec}$.
\item 
$0/1  \leftarrow {\sf ComVf}(\crs, {\sf cm}, \pi_{\rm com})$:
a deterministic algorithm that takes as input
a common reference string $\crs$, a commitment ${\sf cm}$
and a proof $\pi_{\rm com}$, and outputs
a bit indicating whether to reject or accept.
\item 
$0/1  \leftarrow {\sf DecVf}(\crs, {\sf cm}, m, \pi_{\rm dec})$:
a deterministic algorithm that takes as input
a common reference string $\crs$, a commitment ${\sf cm}$, 
a message $m$, and a proof $\pi_{\rm dec}$, and outputs
a bit indicating whether to reject or accept.
\item 
$(m, \pi_{\rm fdec}) \leftarrow {\sf FDec}(\crs, {\sf cm}, \pi_{\rm com})$: 
a deterministic algorithm that takes as input
a common reference string $\crs$, a commitment ${\sf cm}$, 
and outputs $m \in \mcal{M} \cup \{\bot\}$ in time
at most $T \cdot \poly(\lambda)$. 
\item 
$0/1  \leftarrow {\sf FDecVf}(\crs, {\sf cm}, m, \pi_{\rm fdec})$:
a deterministic algorithm that takes as input
a common reference string $\crs$, a commitment ${\sf cm}$, 
a message $m$, and a proof $\pi_{\rm fdec}$, and outputs
a bit indicating whether to reject or accept.
\end{itemize} 

Note that when the ${\sf NITC}$ is used in an actual protocol,
if the committer is adversarial, it may not provide
$\pi_{\rm dec}$ in the opening phase.
This is why we need ${\sf FDec}$
to introduce a forced opening proof 
$\pi_{\rm fdec}$
and have a separate  
${\sf FDecVf}$
algorithm for the public verifiability of forced openings. 

\paragraph{Soundness of forced decryption.}
We say that an ${\sf NITC}$ scheme satisfies 
soundness of forced decryption, iff 
for any non-uniform \ppt adversary $\algA$, 
for any polynomial function $T(\cdot)$,  
\elaine{do we need quasipoly here}  
there exists a negligible function $\negl(\cdot)$ such that 
the probability that the following experiment
outputs $1$ is at most $\negl(\lambda)$:
\begin{itemize}[leftmargin=6mm,itemsep=1pt]
\item 
$\crs \leftarrow \Gen(1^\lambda, T)$;
\item 
$({\sf cm}, \pi_{\rm com}) \leftarrow \algA(1^\lambda, \crs)$; 
\item 
$(m, \pi_{\rm fdec}) 
\leftarrow {\sf FDec}(\crs, {\sf cm}, \pi_{\rm com})$; 
\item 
${\sf ComVf}(\crs, {\sf cm}, \pi_{\rm com} = 1$
but 
${\sf FDecVf}(\crs, {\sf cm}, m, \pi_{\rm fdec}) \neq 1$. 
\end{itemize} 
In other words, except with negligible probability, 
if an adversarially produced commitment has a valid commitment
proof, then ${\sf FDec}$ should be able to extract a message
along with a valid forced-decryption proof.

\paragraph{Correctness.}
We say that 
an ${\sf NITC}$ scheme is correct if for all $\lambda, T \in \mathbb{N}$, 
and all $m \in \mcal{M}$, the following holds
with probability $1$: 
let 
$\crs \leftarrow \Gen(1^\lambda, T)$, 
$({\sf cm}, \pi_{\rm com}, \pi_{\rm dec}) \leftarrow {\sf Com}(\crs, m)$, 
$(m', \pi_{\rm fdec}) \leftarrow {\sf FDec}(\crs, {\sf cm})$, 
then it holds that 
${\sf ComVf}(\crs, {\sf cm}, \pi_{\rm com}) = 1$, 
${\sf DecVf}(\crs, {\sf cm}, m, \pi_{\rm dec}) = 1$, 
$m' = m$, 
and 
${\sf FDecVf}(\crs, {\sf cm}, m, \pi_{\rm dec}) = 1$.

\begin{definition}[IND-CCA against depth-bounded adversaries]
Given some function $W(\cdot)$ and some $\eps \in (0, 1)$, 
we say that an ${\sf NITC}$ scheme satisfies $(W(\cdot), \eps)$-IND-CCA (short
for indistinguishability under chosen-ciphertext-attack) security 
iff there exists
a polynomial $T_\emptyset(\cdot)$, such that for all
polynomials $T(\cdot) \geq T_\emptyset(\cdot)$, 
and every non-uniform adversary 
\elaine{TODO: change to quasipolynomial time adv --- look at their paper
for params}
$\algA = (\algA_1, \algA_2)$ 
where $\algA(1^\lambda, \cdot)$ is bounded by $W(\lambda)$ 
in total work
and  
$\algA_2(1^\lambda, \cdot)$
is bounded $T^\epsilon(\lambda)$ in depth, 
\elaine{explain how to measure depth for an oracle-adv}
there exists a negligible function $\negl(\cdot)$ 
such that for all $\lambda \in \mathbb{N}$, it holds that
$|
\Pr[{\sf Expt}^{\algA}_0(1^\lambda) = 1]
- \Pr[{\sf Expt}^{\algA}_1(1^\lambda) = 1]
\leq \negl(\lambda)
$
where ${\sf Expt}^{\algA}_b(1^\lambda)$
is defined as follows:
\begin{itemize}[leftmargin=8mm,itemsep=1pt]
\item
$\crs \leftarrow \Gen(1^\lambda, T(\lambda))$;
\item 
$(m_0, m_1, {\sf st}) 
\leftarrow \algA_{1, \lambda}^{{\sf Dec}(\cdot, \cdot)}(\crs)$;
\item 
$({\sf cm}^*, \pi^*_{\rm com}, \pi^*_{\rm dec})
 \leftarrow {\sf Com}(\crs, m_b)$;
\item 
$b' \leftarrow \algA_2^{{\sf Dec}(\cdot, \cdot)}({\sf cm}^*, \pi^*_{\rm com}, {\sf st})$;
\item 
output $b'$.
\end{itemize}
where 
the oracle ${\sf Dec}({\sf cm}, \pi_{\rm com})$
returns the result of ${\sf FDec}(\crs, {\sf cm})$
if ${\sf ComVf}(\crs, {\sf cm}, \pi_{\rm com}) = 1$;
otherwise it returns $\bot$;
and further, 
$\algA$ must respect the following restrictions:  
$|m_0| = |m_1|$, and  
moreover, it is not allowed to query the oracle ${\sf Dec}(\cdot, \cdot)$
on $({\sf cm}^*, \pi^*_{\rm com})$. 
\label{defn:cca}
\end{definition}

\begin{definition}[Computationally binding]
We say that an ${\sf NITC}$ scheme is computationally binding 
for size-$W(\cdot)$ 
adversaries, iff 
for any non-uniform \ppt adversary $\algA(1^\lambda, \cdot)$
whose total work is bounded  by $W(\lambda)$, there exists a negligible
function $\negl(\cdot)$ such that for every $\lambda$, 
the probability that  
the following experiment outputs $1$ is at most $\negl(\lambda)$:
\begin{itemize}[leftmargin=8mm,itemsep=1pt]
\item 
$\crs \leftarrow \Gen(1^\lambda, T(\lambda))$;
\item 
$(m, {\sf cm}, \pi_{\rm com}, \pi) 
\leftarrow \algA(\crs)$; 
\item 
output $1$ iff 
all of the following hold:
\begin{itemize}[leftmargin=6mm,itemsep=1pt]
\item 
${\sf ComVf}(\crs, {\sf cm}, \pi_{\rm com}) = 1$, 
\item 
${\sf DecVf}(\crs, {\sf cm}, m, \pi) =1$, 
or ${\sf FDecVf}(\crs, {\sf cm}, m, \pi) =1$, 
\item 
but the message output by ${\sf FDec}(\crs, {\sf cm})$
is not equal to $m$.
\end{itemize}
\ignore{
but one of the following happens:
\begin{itemize}[leftmargin=5mm,itemsep=1pt]
\item 
$m \neq m'$ and $
{\sf DecVf}(\crs, {\sf cm}, m, \pi_{\rm dec}) 
= {\sf DecVf}(\crs, {\sf cm}, m', \pi'_{\rm dec}) 
= 1$; or 
\item
${\sf DecVf}(\crs, {\sf cm}, m, \pi_{\rm dec}) =1$
and ${\sf FDec}(\crs, {\sf cm}) \neq m$. 
\end{itemize}
}
\end{itemize}
\end{definition}
\elaine{note: i simplified the defn of the original paper}

Chvojka and Jager~\cite{CJ-nitc}
constructed an ${\sf NITC}$ scheme 
satisfying the aforementioned requirements, assuming 
that the following assumptions hold for 
quasi-polynomially sized adversaries: 
the strong sequential squaring assumption, the 
Decisional Diffie-Hellman (DDH) and 
the Decisional Composite Residuosity (DCR) assumptions
in suitable groups,  
and the existence of a simulation sound extractable
Non-Interactive Zero Knowledge (NIZK) system (which is implied
by 
the hardness of DDH against quasi-polynomially sized adversaries). 
\elaine{TODO: double check}

\section{Deferred Proofs for the Ascending Auction}
\label{sec:ascend-proof}

We now prove \Cref{thm:ascend}. 

\begin{proof}
We prove each property individually. 
	
\paragraph{bIC.}
Fix an arbitrary buyer $i$ with true value $v_i$,
and let $\tau$ be the number that the honest platform posts on the blockchain in the honest execution.
It is easy to see that when the buyer $i$ underbids (sending $\bot$ when $v_i > \theta_{\tau}$)
or overbids (keep sending $\ok$ even when $v_i < \theta_{\tau}$),
it cannot increase its utility compared to behaving honestly.
Moreover, injecting fake bids either does not affect or increases the payment a winning buyer needs to pay.
Thus, behaving honestly is the best response for a buyer.
	
\paragraph{pIC.}
Since $\tau$ is posted on the blockchain,
the platform only gets $\theta_{\tau}$ from each winner
and has to pay $\theta_{\tau}$ to the seller for each item sold in any safe execution trace.
Because $\widetilde{k}$ must be at least the number of buyers 
who win items in a safe execution trace,
the platform's utility is at most $0$.

\paragraph{$1$-pbIC.}
As we argued for pIC,
the coalition cannot extract any profit from buyers outside the coalition,
so the only way the coalition can increase their utility is 
to lower the price that the colluding buyer $i$ has to pay,
or to bias the random tie-breaking.
However, if the coalition lowers the price (i.e.~stop ascending the price earlier), 
there are more than $k$ buyers who should win items.
In this case, either buyer $i$ has to give up winning an item which leads to zero utility, 
or the platform has to sell items to more than $k$ buyers where the execution is not safe.
On the other hand, if the colluding buyer is chosen from $\mcal{I}$ at the final round $\tau$,
it must be that buyer $i$'s utility is zero no matter it wins an item or not.
Thus, the auction satisfies $1$-pbIC.

\paragraph{Bayesian psIC.}
Let $\tau \in \{\tau_0,\dots, T\}$ denote the number that the platform would post on the blockchain in the honest execution.
Let $\mcal{R}_v$ be the set of buyers whose true values are at least $v$.
If the platform stops the auction early and posts some $\tau^*<\tau$ on the blockchain, then by definition, it must be that $|\mcal{R}_{\theta_{\tau^*}}| > k$. 
The resulting execution traces cannot be safe. 
Therefore, we focus on the case where $\tau^* \geq \tau$.

	
	
If the platform-seller coalition follows the protocol, the revenue they gain from each winner is $\theta_\tau$.
If they keep ascending the price to $\theta_{\tau^*} > \theta_\tau$, 
the coalition can only earn revenue from a buyer if the buyer's true value is still larger than or equal to $\theta_{\tau^*}$.
Thus, the expected revenue they gain from each buyer $i$ in $\mcal{R}_{\theta_{\tau^*}}$ is $\theta_{\tau^*} \cdot \underset{v_i \getr \mcal{D}}{\Pr}[v_i \geq \theta_{\tau^*} \mid v_i \geq \theta_\tau]$.
To make this deviation profitable, it must be
\begin{equation}
	\label{eq:ascending}
	\theta_{\tau^*} \cdot \underset{v_i \getr \mcal{D}}{\Pr}[v_i \geq \theta_{\tau^*}\mid v_i \geq \theta_\tau] > \theta_\tau.
\end{equation}
Re-arranging \cref{eq:ascending}, 
we obtain $\theta_{\tau^*} (1 - F(\theta_{\tau^*})) > \theta_\tau (1 - F(\theta_\tau))$.
Now, consider any $i$, we have 
\begin{align*}
	\theta_{i+1}(1 - F(\theta_{i+1})) - \theta_i(1 - F(\theta_i))
	&= (\theta_{i+1} - \theta_i) (1 - F(\theta_i)) - \theta_{i+1}f(\theta_i) \\
	&\leq (\theta_{i+1} - \theta_i) (1 - F(\theta_i)) - \theta_i f(\theta_i) \\
	&= - \phi(\theta_i) \cdot f(\theta_i).
\end{align*}
For any $i \geq \tau_0$, we have $\phi(\theta_i) \geq 0$.
Thus, $\theta(1 - F(\theta))$ is a non-increasing function for any $\theta \geq \theta_{\tau_0}$.
Because $\theta_{\tau^*} > \theta_\tau \geq \theta_{\tau_0}$,
we have $\theta_{\tau^*} (1 - F(\theta_{\tau^*})) \leq \theta_\tau (1 - F(\theta_\tau))$,
which implies keeping ascending the price above $\theta_\tau$ is not profitable in expectation.
Thus, the auction satisfies Bayesian psIC.

\paragraph{Bayesian sIC.}
If there exists a scenario where the strategic seller can profit with some strategy $S$,
the seller colludes with the platform, and the seller adopts strategy $S$ while the platform behaves honestly.
Because the platform revenue is always zero,
the joint utility of the coalition increases since the seller itself can increase its utility,
and it violates Bayesian psIC.
Thus, the auction must satisfy Bayesian sIC.

\paragraph{Approximate revenue maximization.}
In \Cref{thm:ideal2ndprice}, we showed that 
the ideal-world second-price  
satisfies revenue optimality. 
For any fixed bid vector $\bfb$, it is easy to see that 
the total number of items allocated
is the same in the ascending auction and the second price
auction of \Cref{sec:ideal2nd}. 
Further, each confirmed buyer 
pays at most ${\sf tick}(\inDom)$ less 
in the ascending auction than 
the maximum pay of any confirmed buyer in the second price auction. 
Therefore, $(k \cdot {\sf tick}(\inDom))$-approximate revenue optimality
directly follows. 

\ignore{
The optimal auction for discrete value space is studied by Elkind \cite{discrete-auction}.
While Elkind focuses on auctions for selling a single item,
their results can be directly generalized to the case of multiple identical items
where each buyer wins at most one item.
We restate their results below.
Given any auction and any bid vector $\bfb$,
let $x_i(\bfb)$ and $p_i(\bfb)$ denote 
the probability that buyer $i$ wins an item and buyer $i$'s expected payment
in the honest execution, respectively,
when all buyers use $\bfb$ as their inputs.

\begin{claim}[Restatement of Theorem 1 in \cite{discrete-auction}]
\label{lem:payment-bound-virtual-value}
Suppose each buyer's true value is i.i.d.~sampled from a discrete distribution $\mcal{D}$.
Then, for any auction that satisfies information-theoretic Bayesian bIC,
and for any buyer $i$,
the expected payment of buyer $i$ must satisfy
\begin{equation}
\label{eq:payment-upper-bound}
\mathop{\E}_{\bfb \getr \mcal{D}^{n}} [p_i(\bfb)]
\leq
\mathop{\E}_{\bfb \getr \mcal{D}^{n}} \left[
x_i(\bfb) \cdot \phi(b_i)
\right].
\end{equation}
\end{claim}

\begin{claim}[Restatement of Proposition 1 in \cite{discrete-auction}]
\label{lem:optimal-payment}
Suppose each buyer's true value is i.i.d.~sampled from $\mcal{D}$ over $\inDom$.
Given any non-decreasing allocation rule,
and any bid vector $\bfb \in \inDom^*$,
we define the payment rule as
\begin{equation}
	\label{eq:optimal-payment-in-claim}
	p^*_i(b_i, \bfb_{-i}) = b_i \cdot x_i(b_i, \bfb_{-i}) - \sum_{j = 1}^t \left(\theta_j - \theta_{j-1}\right) \cdot x_i\left(\theta_{j-1}, \bfb_{-i}\right),
\end{equation}
where $\theta_t = b_i$.
Then, for any bid vector $\bfb_{-i}$, it holds that \[
	\mathop{\E}_{b_i \getr \mcal{D}} [p^*_i(b_i,\bfb_{-i})]
	=
	\mathop{\E}_{b_i \getr \mcal{D}} \left[
	x_i(b_i,\bfb_{-i}) \cdot \phi(b_i)\right].
\]
\end{claim}

In the ascending auction with reserve,
the winners are the buyers with highest positive virtual values.
Thus, the allocation rule maximizes the right-hand side of \cref{eq:payment-upper-bound}.
By \cref{lem:payment-bound-virtual-value} and \cref{lem:optimal-payment},
the revenue is maximized over all possible bIC auctions
when the payment rule is defined as in \cref{eq:optimal-payment-in-claim}.
We will show that the payment of any winner in the ascending auction is at most ${\sf tick}(\inDom)$ 
away from \cref{eq:optimal-payment-in-claim}.

Suppose the auction terminates at round $\tau$ in the honest execution when all buyers 
use $\bfb$ as their inputs.
In this case, for any buyer $i$, we have
$x_i(b_i,\bfb_{-i}) = 0$ if $b_i < \theta_{\tau}$,
and $x_i(b_i,\bfb_{-i}) = 1$ if $b_i > \theta_{\tau}$.
Thus, the optimal payment given by \cref{lem:optimal-payment} is 
$\theta_\tau \leq p^*_i(b_i, \bfb_{-i}) \leq \theta_{\tau + 1}$.
Because the winner always pays $\theta_{\tau}$ in the ascending auction,
the seller only loses at most $\theta_{\tau + 1} - \theta_{\tau}$ per item compared to the optimal auction.
Therefore, the ascending auction is $k \cdot {\sf tick}(\inDom)$-approximately revenue maximizing.
}

\end{proof}

\newpage

\end{document}